\documentclass[11pt,reqno,UTF8,twoside]{amsart}
\usepackage{mathrsfs}
\usepackage{amsfonts,amssymb,amsmath,amsthm}
\usepackage{cite}
\usepackage[colorlinks=true,citecolor=red]{hyperref}
\usepackage{titletoc}
\usepackage{geometry}
\usepackage{bm}
\usepackage{indentfirst}
\usepackage{graphicx}
\usepackage{float} 
\usepackage{booktabs}
\usepackage{longtable}
\usepackage{subfigure}
\usepackage{stmaryrd}
\usepackage{enumerate}
\usepackage{tikz}
\usepackage{ulem}
\usepackage{color,soul}
\usepackage{setspace}
\usepackage{stmaryrd}
\usepackage[pagewise]{lineno} 
\allowdisplaybreaks[2]
\usepackage{appendix}
\usepackage{cases}
\usepackage{todonotes}
\usepackage{enumitem}
\hypersetup{linkcolor=blue}

\makeatletter
\@namedef{subjclassname@2020}{\textup{2020} Mathematics Subject 
	Classification}
\makeatother
\usepackage{amssymb}
\usepackage[dvips]{epsfig}
\usepackage{graphicx}
\usepackage{color}
\usepackage{ulem}
\usepackage{tikz}
\usepackage{amsmath}
\usepackage{amssymb}
\usepackage{amsfonts}
\usepackage{amsthm}
\usepackage{tikz}
\usepackage{bm}
\usetikzlibrary{calc}
\usepackage{changes}

\newcommand{\R}{\mathbb R}

\newcommand{\Z}{\mathbb Z}

\newcommand{\C}{\mathbb C}

\usepackage{float}

\newcommand{\N}{\mathbb{N}}

\newcommand{\T}{\mathbb{T}}

\newtheorem{thm}{Theorem}[section]
\newtheorem{hp}{\bf Hypothesis}

\newtheorem{lem}[thm]{Lemma}
\newtheorem{prop}[thm]{Proposition}
\newtheorem{cor}[thm]{\bf Corollary}

\newtheorem{rem}[thm]{\bf Remark}

\theoremstyle{definition}
\newtheorem{defn}[thm]{Definition}

\newtheorem{claim}[thm]{\bf Claim}
\newtheorem{cd}[thm]{\bf Condition}
\theoremstyle{statement}

\numberwithin{equation}{section}

\usepackage{caption}

\usepackage{bm}
\begin{document}
	\title[Analytic quasiperiodic Schr\"odinger operators]{Anderson localization and H\"older continuity of  the  integrated density of states  for  analytic  quasiperiodic Schr\"odinger operators}
	\author[Cao]{Hongyi Cao}
	\address[H. Cao] {School of Mathematical Sciences,
		Peking University,
		Beijing 100871,
		China}
	\email{chyyy@math.pku.edu.cn}
	\author[Shi]{Yunfeng Shi}
	\address[Y. Shi] {School of Mathematics,
		Sichuan University,
		Chengdu 610064,
		China}
	\email{yunfengshi@scu.edu.cn}
	
	\author[Zhang]{Zhifei Zhang}
	\address[Z. Zhang] {School of Mathematical Sciences,
		Peking University,
		Beijing 100871,
		China}
	\email{zfzhang@math.pku.edu.cn}
	\date{\today}
	\keywords{Anderson localization, H\"older continuity of   the  integrated density of states, Quasi-periodic Schr\"odinger operators, Analytic potentials,   Green's function,  Multi-scale analysis} 
	\maketitle
	\begin{abstract}
		We establish both Anderson localization and H\"older continuity of the integrated density of states for quasiperiodic Schr\"odinger operators on $\Z^d$ with any non-constant analytic potential and any Diophantine frequency in the perturbative regime. Our proof is based on a new method for controlling Green's functions and eliminating double resonances, in the spirit of multi-scale analysis. To the best of our knowledge, this is the first multi-scale analysis approach that works for fixed Diophantine  frequencies and potentials beyond the cosine type.
	\end{abstract} 
	\section{Introduction and main results} 
	We  study     quasiperiodic  Schr\"odinger operators     on $\Z^d$   \begin{equation}\label{model} 
 	H(\theta)=\varepsilon \Delta+v(\theta+ x\cdot{\omega})\delta_{x, y},\quad x,y\in\Z^d,
 \end{equation}
  where  $\varepsilon\geq0$ is the coupling constant, $\theta\in \T:=\R/\Z$ is  the phase,  $\omega\in (\R\setminus\mathbb{Q})^d$ is  the frequency,   $v:\T\to \R$ is a
  $1$-periodic  potential function,  $x\cdot\omega:=\sum_{i=1}^dx_i \omega_i $        and    $\Delta$ is the  discrete Laplacian on $\Z^d$  defined by 
$$
  	\Delta(x, y)=\delta_{{\|x- y\|_{1}, 1}},\quad \| x\|_{1}:=\sum_{i=1}^{d}\left|x_{i}\right|.
  $$ 
  
Ever since the fundamental works of Sinai \cite{Sin87} and Fr\"ohlich-Spencer-Wittwer \cite{FSW90}, which first independently proved Anderson localization (i.e., pure point spectrum with exponentially decaying eigenfunctions) for the operator \eqref{model} with $d=1$,  $C^2$ cosine-type $v$ and  Diophantine $\omega$ in the perturbative regime (requiring   $\omega$-dependent smallness of  $\varepsilon$), Anderson localization has been one of the central themes in the study of quasiperiodic Schr\"odinger operators. Subsequently, extending the results of \cite{Sin87, FSW90} to more general settings has become one of the main motivations for the community. In \cite{Eli97}, Eliasson made the first major contribution, establishing spectral localization (i.e., pure point spectrum) for the operator \eqref{model} with $d=1$,  transversal Gevrey $v$ and  Diophantine $\omega$ in the perturbative regime; however, his proof did not provide the exponential decay of eigenfunctions. In \cite{Jit94, Jit99}, Jitomirskaya achieved a breakthrough, introducing the nonperturbative approach and proving the localization transition for the almost Mathieu operator (i.e., the operator \eqref{model} with $d=1$ and $v(\theta)=\cos(2\pi\theta)$) with  Diophantine $\omega$. The nonperturbative approach of \cite{Jit94,Jit99} was later significantly extended in the seminal work of Bourgain-Goldstein \cite{BG00} to multi-frequency analytic $v$, which was further developed for $d\geq 2$ in the perturbative regime by the breakthroughs of Bourgain-Goldstein-Schlag \cite{BGS02} and Bourgain \cite{Bou07} (see also \cite{BK19,JLS20} for recent progress). 
Indeed, the Anderson localization results of \cite{BG00,BGS02,Bou07,BK19,JLS20} hold for any fixed $\theta$ and most $\omega$ (the permitted set of $\omega$ depends on this given $\theta$). To the best of our knowledge, however, all known Anderson localization results with fixed Diophantine $\omega$ are restricted to special classes of potentials such as cosine type; \textbf{it remains largely open whether Anderson localization can hold for the operator  \eqref{model} with general analytic $v$ and fixed Diophantine $\omega$}.

 In this paper, we prove Anderson localization for the operator \eqref{model} with \textbf{any $d\geq 1$, any non-constant analytic $v$ and any Diophantine $\omega$, provided that $\varepsilon$ is sufficiently small}, which addresses this long-standing open problem in the perturbative regime. Another important topic in the study of quasiperiodic Schr\"odinger operators is the H\"older continuity of the integrated density of states (IDS). Under the same assumptions, we also establish a new H\"older continuity result for the IDS of such analytic quasiperiodic Schr\"odinger operators, obtaining a quantitative H\"older exponent at energy $E^*$ related to the number of roots of $v(\theta)-E^*$.

   \subsection{Main results} Let $H(\theta)$ be the operator given by \eqref{model}, and let the frequency $\omega$ and the potential function $v$ satisfy the following assumptions:
   
   \underline{Assumption on $\omega$:}  We   assume that $\omega$ satisfies a   {\bf  Diophantine condition}   (denoted by $\omega\in 	{\rm DC}_{\tau, \gamma}$ for some $\tau>d,\gamma>0$), where 
   \begin{equation*}
   	{\rm DC}_{\tau, \gamma}=\left\{\omega \in (\R\setminus\mathbb{Q})^d:\ \|x\cdot\omega\|_\T:=\inf_{l\in\mathbb{Z}}|l- x\cdot\omega|\geq \frac{\gamma}{\| x\|_1^{\tau}},\ \forall\  x\in \Z^d\setminus\{o\}\right\}
   \end{equation*} 
  and  $o$ is  the origin of $\Z^d$. 
   
   \underline{Assumptions on $v$:} We  assume that  $v$ is a {\bf  real analytic} function on $\T$, and  for some $\eta>0, C_v>0$, it   has  an analytic extension in $\T_{2\eta}:=\{z\in \C/\Z :\ |\Im z|\leq 2\eta\} $ with 
   \begin{equation}\label{tiao1}
   \sup_{\theta\in  \T_{\eta}}|v(\theta)|\leq C_v. 
   \end{equation}
   In addition, we impose the following Condition \ref{con1} or Condition \ref{con2} on  $v$. It is easy to prove  that Condition \ref{con1} is fulfilled for any non-constant analytic function and Condition \ref{con2} is fulfilled for any analytic function with $1$ as its shortest period. We include the proofs in the Appendix  for the reader's convenience. 
   
   \begin{cd}[Root]\label{con1}
   	There exists some integer  $M\in [2,+\infty)$  such that for any $E^*\in \{v(\theta):\  \theta\in \T_{\eta/4} \} $, $v(\theta)-E^*$ has no more than $M
   	$ roots (counted with multiplicity) in $\T_{\eta/2}$. We denote  these roots by  $\theta_0^{(1)}(E^*),\cdots,\theta_0^{(m_0)}(E^*)$, where \begin{equation}\label{ges}
   		\text{	$m_0=m_0(E^*)\in [1,M]$  is the number of roots of $v(\theta)-E^*$ in $\theta\in \T_{\eta/2}$.}
   	\end{equation}  Moreover,  there exists some $\widetilde{C}_v\geq 1$ such that for any $E^*\in \{v(\theta):\  \theta\in \T_{\eta/4} \} $,  
   	\begin{equation*}
   		\widetilde{C}_v^{-1}\prod_{i=1}^{m_0}\|\theta- \theta_0^{(i)}(E^*)\|_\T \leq |v(\theta)-E^*|\leq \widetilde{C}_v\prod_{i=1}^{m_0}\|\theta- \theta_0^{(i)}(E^*)\|_\T, \quad  \forall    \theta\in \T_{\eta/4}. 
   	\end{equation*}
   \end{cd}
   \begin{cd}[Transversality]\label{con2}
   	There exist some $s\in \N$ and $c>0$  such that 
   	$$\max _{0 \leq l \leq s}|\partial_\theta^l(v(\theta+\phi)-v(\theta))| \geq c \|\phi\|_\T,  \quad \forall \theta , \phi\in \T .$$
   \end{cd}
   Under the above  assumptions,  we prove   Anderson localization (for Lebesgue a.e. $\theta$) and   H\"older continuity of the IDS for $H(\theta)$ given by \eqref{model}, provided that $\varepsilon$ is sufficiently small. 
 \begin{thm}\label{maina}  
 	Assume that $\omega\in 	{\rm DC}_{\tau, \gamma}$ and the  real analytic function $v$ satisfies  \eqref{tiao1},  Conditions \ref{con1} and  \ref{con2}.	Then there exists some  $\varepsilon_0=\varepsilon_0(\eta,C_v,M, \widetilde{C}_v, c, s  ,d,\tau,\gamma)>0$ such that for all $0\leq \varepsilon\leq \varepsilon_0$, $H(\theta)$ satisfies  Anderson localization  for Lebesgue a.e. $\theta\in \T$.	
 \end{thm}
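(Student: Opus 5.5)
The plan is a multi-scale analysis of the Green's functions $G_\Lambda(\theta,E)=(R_\Lambda(H(\theta)-E)R_\Lambda)^{-1}$. Resonances are tracked through the roots of $v(\theta)-E$ supplied by Condition \ref{con1}, and double resonances are eliminated using Condition \ref{con2}. Since $\theta$ is a single phase on $\T$ and $\omega$ is Diophantine, a site $x$ is resonant at scale $k$ when $\theta+x\cdot\omega$ is close to one of the at most $M$ roots $\theta_0^{(i)}(E)$. Throughout we work with complexified phases $\theta\in\T_{\eta/4}$, so that analyticity is available at every step.

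\emph{Step 1 (initial scale).} For $\varepsilon\le\varepsilon_0$, define $\delta_0=\varepsilon^{1/(2M)}$. If $\min_i\|\theta+x\cdot\omega-\theta_0^{(i)}(E)\|_\T\ge\delta_0$ for all $x\in\Lambda$, then Condition \ref{con1} gives $|v(\theta+x\cdot\omega)-E|\gtrsim\delta_0^{M}=\varepsilon^{1/2}$. A Neumann series then yields $\|G_\Lambda\|\lesssim\varepsilon^{-1/2}$ and $|G_\Lambda(x,y)|\le e^{-\gamma_0\|x-y\|_1}$ with $\gamma_0\sim|\log\varepsilon|$.

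\emph{Step 2 (inductive scales $N_k$).} Choose scales $N_{k+1}=N_k^{C}$ and thresholds $\delta_{k+1}=\delta_k^{C'}$. By the Diophantine condition, the $\delta_k$-resonant sites, i.e.\ the $x$ with $\theta+x\cdot\omega$ within $\delta_k$ of a root, are separated by at least $(\gamma/\delta_k)^{1/\tau}$. Around each resonant site we place a block $B_k$ of size $\sim N_k$ and study $\det(H_{B_k}(\theta)-E)$ as an analytic function of $\theta$. The inductive hypothesis is that, near each root, this determinant is a Weierstrass-type polynomial of degree at most $M$ in $\theta$, with perturbed roots $\theta_k^{(i)}(E)$ that move by at most $\delta_{k-1}^{10}$. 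We prove this by Schur complement reduction onto the resonant block together with Rouché's theorem, where the root count is controlled by the Condition \ref{con1} product bound at the previous scale.

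The Green's function on a nonresonant good box, meaning one whose resonant blocks satisfy $\prod_i\|\theta+x\cdot\omega-\theta_k^{(i)}\|\ge\delta_k^{M}$, is then controlled by the resolvent identity. The bound is $\|G\|\le\delta_k^{-CM}$ with off-diagonal decay rate $\gamma_{k+1}\ge\gamma_k(1-N_k^{-1/2})$. The resulting Hölder-type lower bound on each block determinant is exactly the quantitative input that also feeds the IDS result.

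\emph{Step 3 (eliminating double resonances; main obstacle).} To deduce localization via Shnol's lemma, consider a generalized eigenfunction at energy $E$ that is polynomially bounded. We must show that for a.e.\ $\theta$, and large $k$, no two far-apart blocks $x$ and $y$ with $N_k\le\|x-y\|\le N_{k+1}^{C}$ are simultaneously resonant for the same $E$. Equivalently, we need $\|\theta_k^{(i)}(E)-x\cdot\omega-(\theta_k^{(j)}(E)-y\cdot\omega)\|$ to stay away from $0$ uniformly in $E$. For a general analytic $v$ the curves $E\mapsto\theta_k^{(i)}(E)$ may coincide up to shifts, so the cosine symmetry trick is unavailable. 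Condition \ref{con2} supplies the substitute. Set $\phi=(y-x)\cdot\omega$, which satisfies $\|\phi\|\ge\gamma N_{k+1}^{-C\tau}$. Then $v(\cdot+\phi)-v(\cdot)$ has some derivative of order at most $s$ of size at least $c\|\phi\|$. Transferring this, with small error, to the scale-$k$ block determinants shows that the set of $\theta$ for which both $|\det(H_{B_k+x}-E)|$ and $|\det(H_{B_k+y}-E)|$ are small for a common $E$ is, via resultant/semialgebraic elimination of $E$, of measure at most $N_{k+1}^{Cd}\delta_k^{c/s}$. This measure is summable in $k$, so Borel–Cantelli gives a full-measure set $\Theta$. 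The delicate part is keeping the transversality quantitative through the perturbed roots, which requires $\delta_k$ to be much smaller than $\|\phi\|^{s}$; I expect this choice of scales to be the core difficulty.

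\emph{Step 4 (conclusion).} Fix $\theta\in\Theta$ and a generalized eigenvalue $E$. Because double resonances are excluded, for large $k$ at most one block near the origin can be resonant, and annuli $N_k\le\|x\|\le N_{k+1}$ are covered by good boxes. The Poisson formula $\psi(x)=-\varepsilon\sum G_\Lambda(x,z)\psi(z')$, combined with polynomial boundedness of $\psi$, then yields exponential decay of $\psi$. Hence the spectrum is pure point with exponentially decaying eigenfunctions.
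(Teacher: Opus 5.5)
You have the right overall architecture (Schur complement and Rouch\'e at each scale, resultant elimination of $E$, Condition~\ref{con2}, Borel--Cantelli, Shnol). However, Step~2 rests on a false claim, and that claim is exactly where most of the paper's work goes. The Diophantine condition only separates sites resonant to the \emph{same} root. If $\theta+x\cdot\omega\approx\theta_0^{(i)}(E)$ and $\theta+y\cdot\omega\approx\theta_0^{(j)}(E)$ with $i\neq j$, then $(y-x)\cdot\omega\approx\theta_0^{(j)}(E)-\theta_0^{(i)}(E)$, which is unconstrained, and $m_0(E)\geq 2$ for essentially every $E$ in the spectrum. As $E$ varies, $\theta_0^{(j)}(E)-\theta_0^{(i)}(E)$ sweeps an interval. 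So for any prescribed scales $N_{k+1}=N_k^{C}$ there are open sets of energies for which two roots (and later, two scale-$k$ roots) are shifted onto each other by some $x_0$ with $\|x_0\|_1$ comparable to the block size. The block around a resonant site then has a second resonance at its boundary, there is no nonresonant annulus, and neither the Schur complement reduction nor the resolvent-identity step goes through. This is a property of the energy, not of the phase. Your Step~2 is a fixed-energy statement for all phases near the roots, and for such $E$ it fails for all of them, so the phase exclusion of Step~3 cannot repair it. The paper handles this as follows:
\begin{itemize}
\item For each $E^*$ it chooses the scale $l_{n+1}\in[l_n^4,l_n^8]$ and $\bar\delta_n$ by pigeonhole (Proposition~\ref{scalen}). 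Then any two roots are either shifted to within $\bar\delta_n$ of each other by some $x$ in the core $Q_{l_{n+1}^{1/\alpha}}$, or stay $g^{(M^4)}(\bar\delta_n)$ apart for all $x\in Q_{l_{n+1}^{\alpha}}$.
\item The roots are grouped into equivalence classes, and the Schur complement is taken on the recursively built multi-site set $A_{n+1}^{(i)}$ (\S\ref{Bn}).
\item Remark~\ref{TY} shows these $E^*$-dependent choices are locally constant on small energy windows.
\end{itemize}

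Step~3 is also not justified as stated. Inside a cluster the eigenvalue branches behave like $\tfrac{v_1+v_2}{2}\pm\sqrt{(\tfrac{v_1-v_2}{2})^2+\rho^2}$ with a tiny effective coupling $\rho$, whose second derivative at the crossing is of order $\rho^{-1}$. Only symmetric functions of a cluster are analytic on the relevant neighborhoods. Their deviation from any expression built directly from $v$ is of fixed size (a power of $\varepsilon$, as at the first step, Proposition~\ref{coe}), whereas the lower bounds you need shrink with $k$. So Condition~\ref{con2} cannot be transferred ``with small error'' to scale $k$ at fixed order $s$ with constant $c\|\phi\|_\T$, and the bound $N_{k+1}^{Cd}\delta_k^{c/s}$ has no support. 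The transfer has to be inductive, and the paper does it in these steps:
\begin{itemize}
\item It takes the $E$-resultant of the local cluster polynomials $P_n^{(i)}(\theta,E)=\prod_j(E-E_n^{(i,j)}(\theta))$, of degree at most $M$. This needs a Rouch\'e/Weierstrass analysis in $E$, not only in $\theta$ (Proposition~\ref{sen}).
\item It shows that the coefficients of $P_{n+1}^{(i)}$ are $e^{-\frac34 l_n}$-close to those of a factor $Q_{n+1}^{(i)}$ of $\prod_{x\in\Omega_n^{(i)}}P_n^{(i_x)}(\theta+x\cdot\omega,E)$ (Proposition~\ref{coen}). The resultant of that product factors into previous-step resultants.
\item It then applies Eliasson's product lemma (Lemma~\ref{eli}). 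Each step raises the lower bound to a power $J^{N+1}$ with $J\leq M^2$ and $N=sM^{2t_{n+1}}$. The counting bound $t_n\leq M^2+1+\ln n$ in Proposition~\ref{h4n}, again derived from the Diophantine condition across scales, is what keeps this compatible with $\ln\delta_n^{-1}=l_n^{2/3}$.
\end{itemize}
Finally, the $E$-polynomial structure only exists on energy windows of length $2h(\delta_{n-1})$, so uniformity in $E$ requires bounding the bad phases per window and summing. Lemmas~\ref{baohan} and \ref{transha} give measure $h(\delta_{n-1})^2$ per window, a factor $h(\delta_{n-1})$ below the window length, and the windows have total length at most $3^n\operatorname{Leb}(J_0)$. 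Your estimate has no counterpart of this step.
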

 \begin{rem}
 	If $v$ does not satisfy Condition \ref{con2} but  has a shorter positive period,  we can scale the period 	to become $1$ and then apply  Theorem \ref{maina}. In particular, the theorem applies to any  non-constant	real analytic $v$.  We thus obtain  the following corollary.
 \end{rem}
 \begin{cor}
 	Assume that $\omega\in 	{\rm DC}_{\tau, \gamma}$ and $v$ is a non-constant real  analytic function on $\T$.	Then there exists some $\varepsilon_0=\varepsilon_0(v,d,\tau,\gamma)>0$ such that for all $0\leq \varepsilon\leq \varepsilon_0$, $H(\theta)$ satisfies Anderson localization for Lebesgue a.e. $\theta\in \T$.	
 \end{cor}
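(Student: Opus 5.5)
The corollary follows from Theorem \ref{maina} once we reduce an arbitrary non-constant real analytic $v$ to a potential satisfying \eqref{tiao1} together with Conditions \ref{con1} and \ref{con2}. The plan has three steps.

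\textbf{Step 1: reduce to minimal period $1$.} Since $v$ is non-constant, continuous and $1$-periodic, its set of periods is a closed subgroup of $\R$ that is not all of $\R$. It is therefore $\frac1q\Z$ for some integer $q\geq 1$ (the group contains $1$). Define $\tilde v(\theta):=v(\theta/q)$. Then $\tilde v$ is real analytic and $1$-periodic with $1$ as its shortest period. Moreover,
\[
v(\theta+x\cdot\omega)=\tilde v(q\theta+x\cdot(q\omega)).
\]
Hence $H(\theta)$ for $(v,\omega)$ coincides with $\tilde H(q\theta)$ for $(\tilde v,q\omega)$. We also need $q\omega$ to be Diophantine. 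Since $\|x\cdot q\omega\|_\T=\|(qx)\cdot\omega\|_\T$ and $qx\neq o$ for $x\neq o$, we get
\[
\|x\cdot q\omega\|_\T\geq \gamma\, q^{-\tau}\|x\|_1^{-\tau},
\]
so $q\omega\in{\rm DC}_{\tau,\gamma q^{-\tau}}$.

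\textbf{Step 2: verify the hypotheses of Theorem \ref{maina} for $\tilde v$.} Because $v$ is real analytic on the compact set $\T$, it extends analytically to some strip. Choose $\eta>0$ so that the extension exists on $\T_{2q\eta}$. Then $\tilde v$ extends to $\T_{2\eta}$, and $C_v$ is finite there by compactness, which gives \eqref{tiao1}. By the statements preceding the theorem, proved in the Appendix, Condition \ref{con1} holds for any non-constant analytic function, and Condition \ref{con2} holds for analytic functions whose shortest period is $1$. Both therefore apply to $\tilde v$, with constants $M,\widetilde C_v,c,s$ depending only on $v$.

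\textbf{Step 3: apply the theorem.} Theorem \ref{maina} yields $\varepsilon_0$ depending on $(v,d,\tau,\gamma)$, with $q$ and $\eta$ themselves determined by $v$. For $0\le\varepsilon\le\varepsilon_0$ it gives Anderson localization for $\tilde H(\vartheta)$ for a.e. $\vartheta\in\T$. Let $B\subset\T$ be the exceptional null set. The preimage of $B$ under the map $\theta\mapsto q\theta \bmod 1$ is a null set, since this map is a $q$-to-$1$ local isometry. Therefore $H(\theta)$ has Anderson localization for a.e. $\theta$.

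The only points needing care are the closed-subgroup argument giving the minimal period and the measure-zero preimage. Neither is a genuine obstacle.
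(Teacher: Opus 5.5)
Your proposal is correct and follows essentially the paper's route. The paper's remark after Theorem~\ref{maina} rescales the period of $v$ to $1$ so that Condition~\ref{con2} holds, and Condition~\ref{con1} holds for any non-constant analytic $v$ by the appendix, and then applies the theorem. You supply the details that remark leaves implicit: the minimal period $1/q$, $q\omega\in{\rm DC}_{\tau,\gamma q^{-\tau}}$, and the null preimage under $\theta\mapsto q\theta$. (A harmless slip: if $v$ extends to $\T_{2q\eta}$, then $\tilde v$ extends to the even larger strip $\T_{2q^2\eta}\supset\T_{2\eta}$, so your conclusion still holds.)
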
  
\begin{rem}
	Our analysis yields dynamical localization as well (see \cite[\S 4.2]{CSZ24} for the argument).  
\end{rem}
 
 To state our second main result, we first briefly  recall   the definition of the IDS  $\mathcal{N}(E)$. Let $Q_N:=\{x\in \Z^d:\ \|x\|_1\leq N\}.$ Denote by ${\rm Spec(\cdot)}$ the spectrum of an operator. For $\Lambda\subset\Z^d$ and an  operator $H$ on $\Z^d$,  we let $H_\Lambda=R_\Lambda H R_\Lambda$ be the Dirichlet restriction of $H$ to $\Lambda$. We   define 
 \begin{align}
 	\label{count}	\mathcal{N}_{Q_N}(E , \theta)&:=\frac{1}{|Q_N|} |\left\{\lambda \in \operatorname{Spec}\left(H_{Q_N}(\theta)\right): \lambda \leq E\right\}|,\\
 	\mathcal{N}(E)&:=\lim _{N\rightarrow +\infty} \mathcal{N}_{Q_N}(E ,\theta), \quad \forall \theta\in \T,\label{ids}
 \end{align}
 where  the eigenvalues are  counted with multiplicity in \eqref{count}.   It is well known that the limit in \eqref{ids} exists and is independent of $\theta$ since $H(\theta)$ is unique ergodic and continuous.  
 \begin{thm}\label{mainb}
 	Assume that $\omega\in 	{\rm DC}_{\tau, \gamma}$ and the  real analytic function  $v$ satisfies  \eqref{tiao1}  and Condition \ref{con1}.	Then there exists some $\varepsilon_0=\varepsilon_0(\eta,C_v,M, \widetilde{C}_v,  d,\tau,\gamma)>0$ such that for all $0\leq \varepsilon\leq \varepsilon_0$, the IDS $\mathcal{N}(E)$ satisfies, for any  $\mu>0$, 
 	$$\mathcal{N}(E^*+\beta)-\mathcal{N}(E^*-\beta)\leq \beta^{\frac1{m_0(E^*)} -\mu} $$
 	for all $E^*\in \R$ and all $0<\beta<\beta_0=\beta_0(\widetilde{C}_v , m_0(E^*), \mu) $, where $m_0(E^*)$ is defined in \eqref{ges}.   In particular, for any such $\varepsilon$ and any $\mu>0$,   $\mathcal{N}(E)$ is $(\frac{1}{M_0}-\mu )$-H\"older continuous on any interval $I\subset\R$ such that   $m_0(E^*)\leq M_0$ holds for all $E^*\in I$. 
 \end{thm}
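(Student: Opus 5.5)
Our plan is to derive Theorem \ref{mainb} from the multi-scale Green's function estimates built for Theorem \ref{maina}, with the difference that transversality (Condition \ref{con2}) is not used: only the root structure from Condition \ref{con1} controls the measure of resonant phases. The basic reduction is standard. For a finite box $Q_N$ one has
\[
\mathcal{N}(E^*+\beta)-\mathcal{N}(E^*-\beta)\le \liminf_{N\to\infty}\int_{\T}\frac{1}{|Q_N|}\#\{\lambda\in{\rm Spec}(H_{Q_N}(\theta)):\ |\lambda-E^*|\le\beta\}\,d\theta ,
\]
using uniform convergence of $\mathcal{N}_{Q_N}$ and averaging in $\theta$. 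By translation covariance $H_{x+\Lambda}(\theta)\cong H_\Lambda(\theta+x\cdot\omega)$, the right side equals, up to boundary errors $O(L/N)$, the quantity $|Q_L|^{-1}\int_\T \#\{\lambda\in{\rm Spec}(H_{Q_L}(\theta)) : |\lambda-E^*|\le \beta\}\,d\theta$ on an intermediate scale $L$ chosen as a power of $\log(1/\beta)$. Equivalently, we bound the density of sites $x$ for which a localized eigenvector with energy in $[E^*-\beta,E^*+\beta]$ has its center near $x$.

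Step one: at scale $0$, i.e. for $\varepsilon=0$, $x$ is $\beta$-resonant iff $|v(\theta+x\cdot\omega)-E^*|\lesssim\beta$. By Condition \ref{con1}, $\prod_i\|\theta+x\cdot\omega-\theta_0^{(i)}(E^*)\|_\T\lesssim \widetilde C_v\beta$, which forces $\min_i\|\theta+x\cdot\omega-\theta_0^{(i)}\|\lesssim \beta^{1/m_0}$, so the phase set has measure $\lesssim m_0\beta^{1/m_0}$. This already gives the exponent $1/m_0(E^*)$. Step two: propagate through the multi-scale scheme. At each scale $k$, the resonant blocks are described by a "generalized eigenvalue" function $E_k(\theta)$, close to $v$ in analytic norm on a reduced strip, and $E_k(\theta)-E^*$ again has at most $m_0$ roots in $\T_{\eta/2}$ near $\theta_0^{(i)}(E^*)$. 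These come from Rouché and the self-similar structure established in the paper's inductive construction, which I assume provides $\widetilde C_v^{-1}2^{-k}\prod\|\theta-\theta_k^{(i)}\|\le|E_k-E^*|$-type bounds after a Weierstrass-preparation argument. Then at scale $k$ a block around $x$ can carry an eigenvalue within $\beta$ of $E^*$ only if $|E_k(\theta+x\cdot\omega)-E^*|\lesssim\beta+e^{-cL_k}$. Away from these phases the Green's function is controlled, so there are no eigenvalues in the window. Choosing $k$ with $e^{-cL_k}\approx\beta$, the count of eigenvalues per site is $\lesssim \beta^{1/m_0}$ times a factor $L_k^{C}\le \beta^{-\mu}$ that accounts for the block sizes and the number of eigenvalues per resonant block, which is at most $m_0$ times a polynomial in $L_k$. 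Diophantine $\omega$ ensures the orbit equidistributes at scale $L_k$, so the phase-measure bound transfers to a site-density bound.

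The main obstacle is step two. We must show that the number of eigenvalues of $H_{Q_L}(\theta)$ in the window is bounded by a polynomial in $L$ times the number of resonant sites. We also must keep the uniform root-count and the lower product bound for $E_k-E^*$ across scales without invoking Condition \ref{con2}. To bound the eigenvalue count we would use Schur complements: an eigenvalue in the window forces $\det$ of the small resonant block matrix to be small, and the determinant is analytic in $\theta$ of controlled degree, so Cartan/root-counting gives the measure bound. The assumption $\beta<\beta_0(\widetilde C_v,m_0,\mu)$ absorbs all constants, since $L_k^C\le\beta^{-\mu}$ once $\beta$ is small enough.
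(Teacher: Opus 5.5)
There is a genuine gap. It sits in the counting step you defer, and in the reduction that was supposed to make that step local.

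First, the reduction to a box $Q_L$ with $L$ a power of $\log(1/\beta)$ does not follow from translation covariance. Covariance identifies $\int_\T\#\{\lambda\in{\rm Spec}(H_{x+Q_L}(\theta)):|\lambda-E^*|\le\beta\}\,d\theta$ with the same quantity for $Q_L$. But comparing $H_{Q_N}$ with the direct sum of its restrictions to a tiling by translates of $Q_L$ means cutting about $N^d/L$ bonds. That is a perturbation of rank about $N^d/L$, so the only general bound on the change in the normalized window count is $O(1/L)$, not $O(L/N)$. Indeed, an $O(L/N)$ error would force the windowed count of $H_{Q_L}$ to be the same for every $L$, which is false in general. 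Since $1/L\gg\beta^{1/m_0(E^*)}$, this loses everything.

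For the same reason, equidistribution ``at scale $L_k$'' cannot turn a phase-measure bound $\delta\approx\beta^{1/m_0}$ into a site density in $Q_{L_k}$: the discrepancy of $\{x\cdot\omega\}_{x\in Q_{L_k}}$ is at least of order $L_k^{-d}\gg\delta$. You may only alter $Q_N$ on a set of density $\lesssim\beta^{1/m_0-\mu}$ plus $o(N^d)$ boundary sites. Equidistribution has to be used as $N\to\infty$, which is how the paper gets $|\mathcal S|\lesssim N^d\delta$.

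Second, the step that turns Green's function bounds into an eigenvalue count is missing. Your Schur complement/Cartan suggestion bounds the measure of bad phases for one block. It says nothing about how many eigenvalues a large box has in the window.

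The paper uses Lemma \ref{HOl}: if $\|(H_{\Lambda'}-E^*)^{-1}\|\le(2\beta)^{-1}$ and $|\Lambda\triangle\Lambda'|\le L$, then $H_\Lambda$ has at most $3L$ eigenvalues in $[E^*-\beta,E^*+\beta]$. It takes $\delta=\beta^{1/m_0(E^*)-\mu/2}$ and chooses $n$ with $\beta\in[h(\delta_n),h(\delta_{n-1})]$. It then extends $Q_N$ to an $n$-regular set (Remark \ref{gouzao}) and deletes only the cores $A_n^{(i_y)}+y$, each of size at most $M^n$, at the centers $y$ with $\theta+y\cdot\omega$ within $\delta$ of some $\theta_n^{(i,j)}(E^*)$.
\begin{itemize}
\item On what remains, \eqref{123321} with $|\mathcal C_{n-1}^{(i)}|\le m_0(E^*)$ and the resolvent identity give $\|G^{\theta,E^*}_{Q_N'}\|\le\delta_{n-1}^{-5}\delta^{-m_0(E^*)}\le(2\beta)^{-1}$.
\item The deleted set is small: $|Q_N\triangle Q_N'|\lesssim M^nN^d\delta+l_nN^{d-1}$.
\item The losses $\delta_{n-1}^{-5}$ and $M^n$ are $\beta^{-o(1)}$ exactly because $n$ is tied to $\beta$ this way.
\end{itemize}

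Finally, the object carrying the $m_0$-root structure is the fixed-energy determinant $s_{A_n^{(i)}}(\theta,E^*)\overset{\delta_{n-1}}{\sim}\prod_j(\theta-\theta_n^{(i,j)}(E^*))$ from \eqref{SS}. It is not a single function $E_k$ close to $v$. Near double resonances the block eigenvalues are not close to any individual $v(\theta+x\cdot\omega)$, so a Rouch\'e argument on one $E_k$ would not go through.
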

 From Theorem \ref{mainb},  we immediately obtain the following corollary.  
 \begin{cor}\label{coor}
 	Assume that $\omega\in 	{\rm DC}_{\tau, \gamma}$. Let  $v_0(\theta)=\sum_{k=-m}^{m} \widehat{v}_0(k) e^{2\pi i k\theta}$ be  a  non-constant trigonometric polynomial with  $\widehat{v}_0(k)=\overline{\widehat{v}_0(k)}$ for $-m \leq k \leq m$ and $u(\theta)$ be a real analytic function on $\T$.   
 	Consider the operator \begin{align}\label{model2}
 		\widetilde{H}(\theta)=\varepsilon_1 \Delta+\left(v_0(\theta+ x\cdot{\omega})+\varepsilon_2u(\theta+ x\cdot{\omega})\right)\delta_{x, y},\quad  x,y\in\Z^d.
 	\end{align}
 	Then there exists some $\varepsilon_0=\varepsilon_0(v_0,u,  d,\tau,\gamma)>0$ such that for all $0\leq \varepsilon_1,  \varepsilon_2\leq \varepsilon_0$, the IDS  $\widetilde{\mathcal N}(E)$ of the operator  \eqref{model2}  satisfies, for any  $\mu>0$, 
 	$$\widetilde{\mathcal N}(E^*+\beta)-\widetilde{\mathcal N}(E^*-\beta)\leq \beta^{\frac1{2m}  -\mu} $$
 	for all $E^*\in \R$ and all $0<\beta<\beta_0=\beta_0(v_0,\mu)$.  In particular, for any such $\varepsilon_1,\varepsilon_2$ and any $\mu>0$, $\widetilde{\mathcal N}(E)$ is $(\frac{1}{2m}-\mu)$-H\"older continuous. 
 \end{cor}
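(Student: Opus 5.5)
The plan is to deduce Corollary \ref{coor} directly from Theorem \ref{mainb}, applied with $v:=\varepsilon_1^{-1}$-free potential $v=v_0+\varepsilon_2 u$ and $\varepsilon=\varepsilon_1$. The only real work is to check that the hypotheses of Theorem \ref{mainb} hold for all $v_0+\varepsilon_2 u$ with $0\le\varepsilon_2\le\varepsilon_0$, with constants $\eta, C_v, M,\widetilde C_v$ uniform in $\varepsilon_2$. We also need $m_0(E^*)\le 2m$ for every $E^*$, so that $\beta^{1/m_0(E^*)-\mu}\le \beta^{1/(2m)-\mu}$ for $\beta<1$.

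First, fix $\eta>0$ so that $u$ extends analytically to $\T_{2\eta}$. Since $v_0$ is entire, $v=v_0+\varepsilon_2u$ is analytic in $\T_{2\eta}$, and \eqref{tiao1} holds with $C_v=\sup_{\T_\eta}|v_0|+\sup_{\T_\eta}|u|$ for $\varepsilon_2\le1$.

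Second, I count roots. With $z=e^{2\pi i\theta}$, the function $v_0(\theta)-E^*$ equals $z^{-m}P_{E^*}(z)$, where $P_{E^*}$ is a polynomial of degree exactly $2m$ when $\widehat v_0(m)\neq0$; we may assume the top index $m$ is the true degree. Hence $v_0-E^*$ has at most $2m$ roots in the whole cylinder, counted with multiplicity. I would shrink $\eta$ if needed so that for $E^*$ in the compact set $\{v(\theta):\theta\in\T_{\eta/4}\}$ (uniformly bounded) the count in $\T_{\eta/2}$ is robust under the perturbation $\varepsilon_2u$. The tool is Rouché's theorem on a slightly larger strip, or on small discs around the roots, together with a lower bound for $|v_0-E^*|$ on the boundary. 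This gives $m_0(E^*)\le 2m$, so $M=2m$ in Condition \ref{con1}; the Appendix statement for general $v$ gives some $M$, but here we need the explicit bound $2m$.

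The main obstacle is the two-sided product estimate of Condition \ref{con1} with $\widetilde C_v$ uniform in $\varepsilon_2$ and $E^*$. Roots may lie near the boundary $|\Im\theta|=\eta/2$ and drift across it as $E^*$ or $\varepsilon_2$ varies, so a root counted for one parameter can leave the strip for a nearby one. To handle this I would choose $\eta'$ in a fixed range such that $v_0-E^*$ has no roots within a fixed distance $\delta$ of $|\Im\theta|=\eta'/2$. Here I would use a pigeonhole argument over at most $2m$ roots, though this choice depends on $E^*$; it is cleaner to invoke Condition \ref{con1} as proved in the Appendix via compactness, which gives uniform constants for a compact family of analytic functions.

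For the quotient bound, $(v(\theta)-E^*)/\prod_i(\theta-\theta_0^{(i)})$ (suitably periodized) is analytic and zero-free on the strip. It is bounded above and below by the maximum principle and the minimum principle on its reciprocal, and these bounds hold uniformly by compactness of the parameter set $(E^*,\varepsilon_2)\in[-C,C]\times[0,\varepsilon_0]$ together with the uniform root separation from the boundary. Once Condition \ref{con1} holds uniformly, Theorem \ref{mainb} gives $\varepsilon_0$ depending only on $v_0,u,d,\tau,\gamma$ and $\beta_0(\widetilde C_v,m_0,\mu)$. Taking the minimum over $m_0\le 2m$ gives the claim, and Hölder continuity follows.
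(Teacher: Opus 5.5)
Your proposal is correct and follows the paper's own proof. The paper checks that $v=v_0+\varepsilon_2u$ satisfies \eqref{tiao1} and Condition \ref{con1} with $M=2m$ and with $\eta,C_v,\widetilde C_v$ uniform in $\varepsilon_2$, by rerunning the compactness argument of Lemma \ref{genz} with $(E^*,\varepsilon_2)$ as the compact parameter. It then applies Theorem \ref{mainb} with $\varepsilon=\varepsilon_1$. Since Lemma \ref{genz} is stated for a single function, it must be rerun with the extra parameter rather than simply cited, which is what your last paragraph does.
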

 
 \begin{rem}
 	Our approach  also   applies to  band operators on  $\Z^d\times\{1,\cdots,b\}$ (see  \cite{BJ00} for the definition)    and    exponentially decaying  long-range operators, namely,  the   discrete Laplacian $\Delta$ in \eqref{model} replaced    by a Toeplitz operator $S$ satisfying, for some $\rho>0$, 
 	$$|S(x,y)|\leq e^{-\rho\|x-y\|_1}.$$
 \end{rem}

 \subsection{Related work} 
Let us review  some of the results in the literature related to
this paper: 
 \subsubsection{Anderson localization}
We start  with the important  almost Mathieu operator (AMO), which has attracted considerable interest   in the study of quasiperiodic  Schr\"odinger  operators.   
The perturbative proofs of Anderson localization for AMO with  Diophantine frequencies $\omega$  are due to Sinai \cite{Sin87} and  Fr\"ohlich-Spencer-Wittwer \cite{FSW90}; both approaches use intricate multi-step iterations and rely  heavily on eigenvalue and eigenfunction parametrization and variation arguments, which  are robust in the sense that they apply to more general  $C^2$ cosine-type potentials $v$.   	 
 The common  feature of  the perturbative results   is that the smallness of the coupling constant $\varepsilon$ depends on  $\omega$ at least through the
 constants in the Diophantine characterization of $\omega$. 
In \cite{Jit94,Jit99}, Jitomirskaya established    the nonperturbative result  and localization transition for    AMO with Diophantine $\omega$  in the sense that Anderson localization holds in the positive Lyapunov exponent regime with $\omega$-independent  estimate on   $\varepsilon$, which  was  later   extended  to long-range operators by  Bourgain-Jitomirskaya \cite{BJ02};      more importantly, the nonperturbative approach she introduced  shed  new light on future studies.  
In  \cite{AYZ17},  Avila-You-Zhou established    the   sharp  localization transition  for  AMO         via   reducibility and  Aubry duality methods.
In \cite{JL18,JL24}, Jitomirskaya-Liu   further  developed the nonperturbative approach to obtain  arithmetic localization   (i.e., explicit arithmetic descriptions on  $\omega$ and  phases $\theta$ for which  Anderson localization holds). 
  A  natural analytic universality class  to consider for  similar results of  AMO    is  operators of   Type I (see \cite{GJ24} for the definition).  In \cite{HS26}, Han-Schlag established  Anderson localization for   even Type I operators with Diophantine $\omega$ (see also  \cite{Han24} for some recent progress).   It turns out that the evenness assumption on  $v$ in \cite{FSW90,HS26} plays a crucial role in eliminating  double resonances and it is significantly more difficult to obtain Anderson  localization for non-even $v$;  
for contributions in this direction, we refer to the works of  Forman-VandenBoom \cite{FV25} (for $C^2$ cosine-type quasiperiodic Schr\"odinger  operators)  and Ge-Jitomirskaya \cite{GJ24} (for Type I operators).
 We mention that \cite{GJ24}  also  established the  arithmetic localization transition for Type I operators. 
	   Note that  all the   results we mentioned  above are  restricted to operators defined on the   one-dimensional lattice $\Z$;  there has also been localization results for 	    multi-dimensional AMO and $C^2$  cosine-type quasiperiodic  operators, cf.  \cite{CD93, Sur96,JK16,GY20,GYZ23,CSZ22,CSZ23,CSZ24} just for a few.     We remark that all these  multi-dimensional results are perturbative   and  in fact it is  shown in \cite{Bou02} that the nonperturbative localization is typically not expected for dimension higher than two. 

The first  localization result for analytic quasiperiodic  operators beyond special classes (such as the aforementioned   cosine type and Type I) is due to  Eliasson \cite{Eli97}, where he proved spectral localization for  one-dimensional  non-constant  analytic quasiperiodic operators  (actually,  more general transversal Gevrey ones) with   Diophantine $\omega$  in the perturbative regime via KAM methods.  Eliasson has also suggested that the argument could be  modified to obtain exponential decay of the eigenfunctions, but he has not provided a proof of it (cf.  the discussions in \S 7 of \cite{Eli97}). 
 In \cite{BG00}, Bourgain-Goldstein established the nonperturbative Anderson  localization  for one-dimensional analytic quasiperiodic Schr\"odinger  operators 
by combining the nonperturbative approach with large deviation estimate  and  semi-algebraic  elimination methods, which was significantly extended to multiple dimensions  (which can mean  operators defined on $\Z^d$, as well as multi-frequency  $v$ defined on $\T^d$)  in the perturbative regime  by   Bourgain-Goldstein-Schlag \cite{BGS02}  and  Bourgain \cite{Bou07} (see also  \cite{BK19,JLS20} for recent progress)   by  further incorporating the perturbative  multi-scale analysis approach.  
   It turns out that the Anderson localization  results  of \cite{BG00,BGS02,Bou07,BK19,JLS20} hold  for any fixed  $\theta$  and   most $\omega$ (the permitted set of $\omega$ depends on  the given $\theta$) 
since  in their proofs,  $\omega$  serves  as the ``random parameter''  when   eliminating double resonances;  \textbf{it is even unknown if one can fix some Diophantine $\omega$ to prove Anderson localization for a.e. $\theta$ for the operator  \eqref{model} with non-constant analytic $v$}.   For instance,   in the nonperturbative regime,  Jitomirskaya \cite{Jit23} mentioned  that  ``so far even an arithmetic version of localization for the Diophantine case has not been established for the general analytic family, the current state-of-the-art result by Bourgain-Goldstein  (2000) being measure theoretic in $\alpha$''.     It  was also  mentioned by Bourgain \cite{Bou00}  that ``It indeed turns out to be easier to exploit a certain randomness in the frequency  than in the phase.'' 	   Here we provide an affirmative answer to this problem in the perturbative regime.    
Finally, we would like to mention that   shortly after the posting of this paper, Wang-You-Zhou \cite{WYZ26} also established  Anderson localization  for the operator \eqref{model} with  $v$ given by a non-constant  trigonometric polynomial, using  dynamical methods combined  with the spectral localization result from  \cite{Eli97}.

\subsubsection{H\"older continuity of the  IDS} 
 It is shown  in \cite{CS831, CS832, BK13} that the IDS  is at least log-H\"older continuous for any ergodic Schr\"odinger operator; 
 proving the  modulus of   continuity  of the IDS beyond log-H\"older ones (such as the  H\"older continuity) has also attracted  a lot of interest  in the study of quasiperiodic  Schr\"odinger  operators.     
   In \cite{GS01}, Goldstein-Schlag  proved the H\"older continuity   for  one-dimensional  analytic quasiperiodic  Schr\"odinger operators with   Diophantine $\omega$  in the nonperturbative regime   via an approach based on   large deviation estimates  for the Lyapunov exponents; moreover, they obtained the weak  H\"older continuity\footnote{The IDS $\mathcal N(E)$ satisfies, for  some $\zeta\in(0,1)$,    $$|\mathcal{N}(E)-\mathcal{N}(E')|\leq  e^{-\left(\ln\frac{1}{|E-E'|}\right)^\zeta} \text{ if   $|E-E'|$ is  sufficiently small.}$$ } of the IDS for those with  multi-frequency analytic $v$.  
Later in \cite{Bou00}, Bourgain introduced a  method based on Green's function estimates  to obtain the  $(\frac12-)$-H\"older continuity \footnote{The IDS $\mathcal N(E)$ satisfies, for any  $\mu >0$,   $$|\mathcal{N}(E)-\mathcal{N}(E')|\leq |E-E'|^{\frac{1}{2}-\mu} \text{ if   $|E-E'|$ is sufficiently small.} $$} of the IDS   for     AMO  with  Diophantine $\omega$  in the perturbative regime; indeed, some parts of our argument, like the idea of  fixed-energy Green's function estimates explained in \S \ref{Bo}, echo some of the arguments developed therein.  In  \cite{Amo09}, Amor   obtained  the  $\frac12$-H\"older continuity of the IDS    for  one-dimensional analytic  quasiperiodic  Schr\"odinger operators (actually,  more general multi-frequency ones) with  Diophantine $\omega$  in the perturbative regime (small $v$)  via  the reducibility method.  Amor’s result in the one-frequency case was extended  by Avila-Jitomirskaya \cite{AJ10} to the nonperturbative regime  via    reducibility and localization approaches.   In \cite{GS08},   Goldstein-Schlag  proved the  nonperturbative $(\frac{1}{2m}-)$-H\"older continuity of the IDS for one-dimensional  quasiperiodic  Schr\"odinger operators with $v$  given by an  analytic perturbation of a  certain trigonometric polynomial of degree $m\geq1$ and Diophantine $\omega$. 
For one-dimensional analytic quasiperiodic  Schr\"odinger operators,  You \cite{You18} proposed 	 a conjecture  that for Diophantine $\omega$ and in the positive Lyapunov exponent regime,    the H\"older exponent of the   IDS is at least  $\frac{1}{2\omega(E)}$, where $\omega(E)$ is the  acceleration of the Schr\"odinger  cocycle associated with $E$.   
This conjecture was partially solved by Ge-Jitomirskaya \cite{GJ24}  for Type I operators and by 	 Han-Schlag \cite{HS26}  with  an arbitrarily small loss on  the H\"older exponent. 

Finally, we mention some multi-dimensional results (again, only perturbative): 
The  weak H\"older continuity of the  IDS for multi-dimensional  analytic quasiperiodic Schr\"odinger operators with  $\omega$ outside  a small-measure set   has  been established in  \cite{Sch01,Bou07,Liu20}   via  Green's function estimates;       the  $(\frac{1}{2m}-)$-H\"older continuity of the IDS   for  multi-dimensional  quasiperiodic  Schr\"odinger operators  with $v$  given by a trigonometric polynomial of degree $m\geq1$ and  Diophantine $\omega $  has been  established in  \cite{GYZ22}  via the reducibility method;   the $(\frac12-)$-H\"older continuity of the IDS for  multi-dimensional $C^2$ cosine-type  quasiperiodic Schr\"odinger operators with  Diophantine $\omega $   has  also been proved by the authors \cite{CSZ24}  via  Green's function estimates. Here we establish a new  H\"older continuity  result for  multi-dimensional  analytic quasiperiodic Schr\"odinger operators  with   Diophantine $\omega $.

	\subsection{Main ideas and new ingredients of the proof}\label{new}
Our  proof is  based on  Green's function estimates  via the multi-scale analysis approach  originating from \cite{FS83} in the perturbative regime.  The new ingredients are: 
\begin{itemize}
	\item[\textbf{(1)}] We extend the Schur complement argument in \cite{Bou00} for the $\theta$-variable {\bf  beyond cosine potential} to control Green's functions for any fixed energy $E^*$. 
	\item[\textbf{(2)}] We obtain Rellich functions in the spirit of \cite{FV25}   via   a different method based on  {\bf Schur complement and Rouch\'e arguments for the  $E$-variable}. By Rellich function analysis, we  control Green's functions for  varying energies $E$ close to $E^*$. 	As such,  we can control   locally in the  phase-energy space the Green's functions via some polynomials in both phase and energy parameters.  Moreover,  we  establish quantitative transversality estimates for  the resultant of  Rellich functions  via a {\bf  new Weierstrass preparation argument}.
	\item[\textbf{(3)}] We identify double resonant phases as those where the resultant takes small values and {\bf  eliminate double resonances via the transversality of the resultant}.  
\end{itemize}
Although the general scheme of our proof seems expected,
almost all of the  usual  elements are not available and require new ideas or a more complicated setup. We explain the difficulties and our methods to overcoming  them below.

	\subsubsection{Extension of \cite{Bou00}'s argument beyond cosine potential}\label{Bo} In \cite{Bou00}, Bourgain proved the  $(\frac12-)$-H\"older continuity of the IDS   perturbatively  for   AMO by using the   Schur complement argument for the $\theta$-variable to control Green's functions for any fixed energy $E^*$.
The key argument of \cite{Bou00} is to locally analyze the $\theta$-roots of   $\operatorname{det}(H_{B_n}(\theta)-E^*)$ in some  $n$-step phase neighborhood  for any fixed energy $E^*$, where $B_n\subset\Z$ is a proper  $n$-step  interval nearly centered at the origin  with $\operatorname{Diameter}(B_n)\sim l_n$ satisfying   $\ln\ln l_n\sim n$.  Resonances at  step  $n$ depend on the pair of roots $\{{\theta_{n}^{(1)}(E^*),\theta_{n}^{(2)}(E^*)}\}$ (there are at most two such roots at each step  due  to the  cosine potential) and the Green's function
 $$G_{\Lambda}^{\theta,E^*}:= (H_\Lambda(\theta)-E^*)^{-1}$$ for a properly saturated set $\Lambda\subset\Z$ can be controlled as 
$$\|G_{\Lambda}^{\theta,E^*}\|\leq \delta_{n-1}^{-5} \max_{x\in \Lambda}\left(\|\theta+x\cdot\omega-\theta_n^{(1)}(E^*)\|_\T^{-1}\cdot  \|\theta+x\cdot\omega-\theta_n^{(2)}(E^*)\|_\T^{-1}\right),$$ provided 
$$\min_{x\in \Lambda}\left(\|\theta+x\cdot\omega-\theta_n^{(1)}(E^*)\|_\T,  \|\theta+x\cdot\omega-\theta_n^{(2)}(E^*)\|_\T\right) \geq \delta_n,$$ 
 where $ \ln \delta_n^{-1}=l_n^{2/3}$.     
 
When dealing with general analytic potentials, there arises a significant difficulty: the  proper choice  of the  scale parameter $l_n$.   To carry out  the Schur complement argument, we require two key structural properties of $B_n$ (cf. Figure \ref{F3}):  \textbf{(1)} The $(n-1)$-resonant points in $B_n$  (a point $x\in \Z^d$ is called $(n-1)$-resonant if $\theta +x\cdot\omega$ is close to one of the roots $\{\theta_{n-1}^{(i)}(E^*)\}_{i=1}^{m_{n-1}}$)  lie in the core of $B_n$, forming a nonresonant annulus; \textbf{(2)} Aside from  at most $M$ resonant points in the core (i.e., resonant within a distance $\sim\widehat{\delta}_{n-1}$, where $ (\ln\delta_{n-1}^{-1})^{1/2} \ll \ln\widehat{\delta}_{n-1}^{-1}\ll \ln\delta_{n-1}^{-1}  $ will be specified during the scale parameter selection process), all other points in some  superlinear extension of $B_n$ are separated from these roots by a distance  significantly larger than $\widehat{\delta}_{n-1}^{1/C}$ for some $C\gg1$.   We call this an {\bf   annulus structure}. 
The cosine potential provides  significant simplification in the analysis of \cite{Bou00}: Bourgain separated the construction of such annulus structure  into two different cases---the simple resonant case and the  double resonant case---based on  the  positions  of the two  roots $\theta_{n-1}^{(1)}(E^*),\theta_{n-1}^{(2)}(E^*)$ and carried out the resonance analysis for both cases.     However,  it  is highly nontrivial   to extend this case analysis argument to settings with more than two roots. 

In this paper, {\bf  we introduce a significantly simplified method based on the  pigeonhole principle  for the construction of  this  annulus structure      that avoids complicated case distinctions and treats all cases simultaneously}. 
Suppose that  the first $(n-1)$ steps are done and we  obtain a set of $(n-1)$-step roots $\{\theta_{n-1}^{(i)}(E^*)\}_{i=1}^{m_{n-1}}$;   we then  analyze the shift distances  	$\|\theta_{n-1}^{(i)}(E^*)+x\cdot\omega-\theta_{n-1}^{(i')}(E^*)\|_\T$ for $1\leq i,i'\leq m_{n-1}$ and $\|x\|_1 \leq l_n^8$. Relying on the separation of resonant points guaranteed by $\omega\in 	{\rm DC}_{\tau, \gamma}$,  we  use a  pigeonhole principle argument  to  select   proper  intermediate scale and resonance parameters  that satisfy  the  annulus structure (cf. Propositions \ref{scalen}).  This  selection  process also induces an equivalence relation on  $\{\theta_{n-1}^{(i)}(E^*)\}_{i=1}^{m_{n-1}}$,  enabling  us to partition this set into several equivalence classes  (cf. Propositions \ref{scalen} \textbf{(4)}). We then assign a neighborhood to each representative    and carry out  resonance analysis in these neighborhoods   to obtain the next-step roots  $\{\theta_{n}^{(i)}(E^*)\}_{i=1}^{m_{n}}$, using   the Schur complement argument and the  favorable   annulus structure.      
      Moreover, we bound the number of roots at each step  by $M$  (cf.  \eqref{geshu}), which  enables the entire induction to proceed. With these roots,  $\|G_\Lambda^{\theta,E^*}\|$ can be controlled by a polynomial of $\theta$  (cf. \eqref{123321} and Proposition \ref{1gon} \textbf{(1)})  via standard Schur complement and resolvent identity arguments.    In fact,   the fixed-energy Green's function estimates  together with  Lemma \ref{HOl} are  sufficient to   prove    H\"older continuity   of the IDS; however,  alone they are  insufficient to identify a full-measure set of phases for which Anderson localization holds, 
   except in the special cosine potential case, where the method of \cite{Bou00} was recently   extended by the authors \cite{CSZ22} to prove the arithmetic \footnote{Anderson localization holds for phases  in a set of full measure with explicitly 	arithmetic description: $$   \bigcup_{N\geq 1}  \left\{\theta\in \T:\  \|2\theta+x\cdot\omega \|_\T> \|x\|_1^{-(d+1)} \text{ for all $x\in \Z^d$ such that $\|x\|_1\geq N$}\right\} .$$}  Anderson localization   perturbatively  for  multi-dimensional AMO by using  the symmetry of the two roots to remove the bad sets of double resonant phases uniformly in the energy.
  However, such a symmetry structure no longer holds beyond even cosine-type  potentials. This leads to the second key ingredient to Anderson localization: Green's function estimates for varying energies.
\subsubsection{Weierstrass preparation argument for the  $E$-variable}
Indeed,  the Green's function estimates in  \cite{FSW90,Bou00,CSZ22} were obtained by fixing  the energy $E^*$ and varying the phase $\theta$. However, in  \cite{FV25}, Forman and VandenBoom used the  Rellich function method to control Green's functions for  varying  energies and handle the resonances  depending on energy,  thereby establishing  Anderson localization perturbatively for one-dimensional  non-even  $C^2$ cosine-type quasiperiodic Schr\"odinger  operators,  removing  the crucial  evenness requirement  on  the potential from  \cite{FSW90}.  The key argument of \cite{FV25} is to locally obtain, for any $E^*$,  (at most two) piecewise-defined    Rellich functions $E_n^{(1)}(\theta)$ and  $E_n^{(2)}(\theta)$---the $E^*$-relevant eigenvalue curves of $H_{B_n}(\theta)$---so that for any $E$ close to $E^*$, the  Green's function $G_{\Lambda}^{\theta,E}$ for a properly saturated set $\Lambda\subset\Z$ can be controlled as
$$\|G_{\Lambda}^{\theta,E}\|\leq 10  \max_{x\in \Lambda}(  | E_n^{(1)}(\theta+x\cdot\omega)-E|^{-1}, | E_n^{(2)}(\theta+x\cdot\omega)-E|^{-1}),$$  provided 
$$ |E-E^*|\sim\delta_{n-1} \ \  \text{and}    \ \  \min_{x\in \Lambda}(|E_n^{(1)}(\theta+x\cdot\omega)-E|, |E_n^{(2)}(\theta+x\cdot\omega)-E|)  \geq \delta_n .$$  
  Recently, this method was further extended  by the authors to the  multi-dimensional case \cite{CSZ24} and to monotone quasiperiodic Schr\"odinger operators \cite{CSZ25}. 
  
  In  \cite{FV25} (the $C^2$ cosine-type setting), the authors obtained  Rellich functions via an eigenfunction perturbation  argument and established the transversality (Morse property) for each single branch of  the  Rellich function via the  eigenvalue  variation  (Hellmann-Feynman) formula, combined with a new Cauchy interlacing argument. (See also \cite{CSZ24} for the multi-dimensional case.)   However,  it  is highly challenging  to extend the eigenvalue  variation and Cauchy  interlacing arguments to higher derivatives or to cases with more than two relevant curves.
  
    In this paper, {\bf  we introduce a new method based on    Schur complement and    Weierstrass preparation arguments   for the $E$-variable  to obtain and  analyze a proper cluster of no more than $M$ many $E^*$-relevant  Rellich functions}   (cf. Proposition \ref{sen}),  so that for any $E$ close to $E^*$,  $\|G_\Lambda^{\theta,E}\|$ can be controlled by a polynomial of  $E$ (cf. \eqref{456654} and Proposition \ref{1gon} \textbf{(2)}). Moreover,   we prove that  the derivatives of the elementary symmetric polynomials of   the cluster of Rellich functions    differ little from those of the previous step (cf. Proposition \ref{coen}). This,   together with a technique of establishing  transversality estimates for the product of functions from  \cite{Eli97} (cf. Lemma \ref{eli})  enables us to prove that    the resultant of   Rellich functions  inherits   some quantitative transversality estimates from  $v$ (cf. Condition \ref{con2})  at each inductive step (cf. Proposition \ref{h4n}).  Such transversality  is significantly crucial for the elimination of  double resonances (cf. Proposition \ref{0ce}),  the final ingredient for    Anderson localization.    
\subsubsection{Elimination of double resonances via the  transversality of the resultant}
To prove Anderson localization, we need to remove a zero-measure set of phases where double resonance happens infinitely often, which requires a full understanding of the bad sets of double resonant phases at each inductive step. 

To this end, we  discretize the energy dependence of  resonances  by inductively dividing  the energy interval  into small subintervals so that the bad sets of phases depend on these subintervals for which they are resonant. 
In each subinterval, {\bf we identify the set of bad phases as approximate roots of the resultant of Rellich functions at which the resultant takes small values (cf. Lemma  \ref{baohan}) and carefully  use the transversality of the resultant to bound  its measure}, which is significantly  smaller than the length of the subinterval (cf. Lemma \ref{transha}).  Summing up the bad phases from all subintervals, we  obtain  a uniform bad set of phases at each step, whose limit superior has zero measure by the Borel-Cantelli lemma. We eventually remove this zero-measure set to prove Anderson localization via  a standard Schnol argument.

\vspace*{0.15cm}
 We hope that our  approach could shed  new light on  the study of the  fixed-frequency  type  Anderson localization  for more general QP Schr\"odinger operators via exploiting randomness in  phase parameters rather than  that  in frequencies.
\subsection{Organization of the paper}
	The paper is organized as follows. In \S \ref{MSA}, the core of this paper, we present the multi-scale analysis induction, aimed at substantiating  the inductive Hypotheses stated in \S \ref{HP}.   We split the entire induction into the initial step (cf. \S \ref{n=0}), the first inductive step (cf. \S \ref{n=1})  and the general inductive step (cf. \S \ref{n=n}). The  key arguments in  \S \ref{n=1} and \S \ref{n=n} are  similar,  including  choosing  a proper scale parameter for the annulus structure (cf.  \S \ref{C1},  \S \ref{CN}), carrying out  resonance analysis for the Schur complement (cf.  \S \ref{R1},  \S \ref{RN}), establishing Green's function estimates for nonresonant sets  (cf.  \S \ref{G1},  \S \ref{GN}), and establishing quantitative transversality estimates for the resultant   of Rellich functions (cf.  \S \ref{trans},  \S \ref{TN}).  We expect the analysis in  \S \ref{n=1} to be more accessible to readers, as it involves fewer technicalities. In \S  \ref{AL}, we prove   Anderson localization, and in \S \ref{IDS}, we establish  H\"older continuity of the IDS by applying the analysis developed in  \S \ref{MSA}. The verification of Conditions  \ref{con1} and  \ref{con2}, as well as some results on the Schur complement, are provided in the appendix.
	\subsection{Some notation in the paper}\label{notation}
	\begin{itemize}
		\item    	For $N>0$, we denote 
		$$Q_N:=\{x\in \Z^d:\ \|x\|_1\leq N\}.$$
		\item For $a\in  \C,R>0$, we denote 
		$$D(a,R):=\{z\in \C :\  |z-a|<R\}$$
		 and 
		$$\|a\|_\T:= \inf_{l\in\mathbb{Z}}|l-a|, \quad D_\T(a,R):=\{z\in \C/\Z :\  \|z-a\|_\T<R\}.$$
			\item For any $\eta\geq 0$, we denote 
		$$\T_\eta:=\{z\in \C/\Z :\ |\Im z|\leq \eta\}.$$
		\item  For two sets $X,Y\subset\Z^d$, we  denote $$\operatorname{dist}_1(X,Y)=\min_{\substack{x\in X, y\in Y} }\|x-y\|_1,$$
		$$\partial_Y^-X=\{x\in X:\ \text{there exists }y\in Y\setminus X\text{ with }\|y-x\|_1=1\},$$
		$$\partial_Y^+X=\{y\in Y\setminus X:\ \text{there exists }x\in X\text{ with }\|y-x\|_1=1\},$$
		$$\partial_YX=\{(x,y):\ x\in X,\  y\in Y\setminus X \text{ with }\|y-x\|_1=1 \}.$$

		\item For $\Lambda\subset\Z^d, y\in \Z^d$, we  denote  $$\Lambda+y:=\{x+y:\ x\in \Lambda\}.$$

		 	\item Throughout the paper, $C$ denotes a large constant whose value may change from one occurrence to the next, but is ultimately bounded by some absolute constant $C_0$ depending only on $\eta,C_v,M, \widetilde{C}_v, c, s  ,d,\tau,\gamma$.
	\item 	For $a,b\in \C$, we use  $a\sim b$ (resp.	$a\overset{\delta}{\sim }b$ for some $0<\delta<1$) to mean that   $C^{-1}\leq |a/b| \leq C$, (resp.	$( C/\delta)^{-1}\leq a/b \leq C/\delta$). We use 
		 $a\lesssim b$ (resp. $a\gtrsim b$) to mean that   $|a/b| \leq C$ (resp. $|b/a| \leq C$). 
		 
		\item We denote by  $\|B\|$ the $\ell^2$-operator norm of the  operator (or matrix)  $B$.
	\item 	Let  $\alpha>1$ such that   $$\alpha^{8M^{8}}=2,$$ where  $M$ is from 	Condition \ref{con1}.
		To separate  the scale of resonance, we  introduce two functions: 
		$$ h(\delta):=e^{-|\ln\delta|^\alpha},  \quad g(\delta):=e^{-|\ln\delta|^\frac{1}{\alpha}}, \quad  \delta\in (0,1).$$
		  Thus for  all  sufficiently small $\delta$, we have  
		  $$Ch(\delta)^{1/C}<\delta< g(\delta)^{C}/C.$$
		For $k\in \N$, we denote  $h^{(k)}:=\underbrace{h\circ\cdots\circ h}_{k\text{ times}}$  (similar for $g^{(k)}$).
	\end{itemize}

	\section{Multi-scale analysis}\label{MSA}
 Let  $\varepsilon_0>0$ be sufficiently small depending on $\eta,C_v,M, \widetilde{C}_v, c, s  ,d,\tau,\gamma$ and let  $0\leq \varepsilon\leq \varepsilon_0$. By the open mapping theorem, we can further  assume that \begin{equation}\label{xij}
 	\operatorname{Spec}(H)\subset [\inf_{\theta\in \T}v(\theta)-2d\varepsilon_0,\sup_{\theta\in\T} v(\theta)+2d\varepsilon_0] \subset \{v(\theta):\  \theta\in \T_{\eta/4} \}.
 \end{equation} (In fact, by   a simple Rouch\'e argument,  one can prove that there exists some $\delta=\delta(\eta,M,\widetilde{C}_v)>0$  such that $[\inf_{\theta\in \T}v(\theta)-\delta,\sup_{\theta\in\T} v(\theta)+\delta] \subset \{v(\theta):\  \theta\in \T_{\eta/4}\}$.)  It suffices to consider   $$E^*\in \{v(\theta):\  \theta\in \T_{\eta/4} \}$$  in the following  spectral analysis. 

The  goal of  \S \ref{MSA}  is to prove that  the inductive Hypotheses in \S \ref{HP} hold for   all steps $n\geq 1$ (cf. Proposition \ref{1832} and Theorem \ref{key2}).  For a fixed $E^*$,  the key arguments at each step are:
 \begin{itemize}
 	\item [\textbf{(1)}]  Construct  a proper block $B_n$ (depending on $E^*$)  for resonance analysis. 
 	\item [\textbf{(2)}]  Obtain  the roots $\{\theta_{n}^{(i)}(E^*)\}_{i=1}^{m_n}$ and  the $E^*$-relevant Rellich functions of $H_{B_n}(\theta)$ by locally analyzing  the $\theta$-roots and  $E$-roots  of $\operatorname{det}(H_{B_n}(\theta)-E)$, respectively. 
 \item [\textbf{(3)}]	Use the roots  and the Rellich functions from \textbf{(2)}  to control Green's functions, both for the  fixed energy $E^*$ and for nearby energies $E$. 
 \item [\textbf{(4)}] Establish quantitative transversality estimates for the resultant of Rellich functions, which is crucial for the elimination of  double resonances in the proof of Anderson localization. 
 \end{itemize}
	\subsection{The initial step}\label{n=0}
The block $B_0$ is the origin $o$, the roots $\{\theta_0^{(i)}(E^*)\}_{i=1}^{m_0}$ are those from Condition \ref{con1}, and the Rellich function is simply the potential $v$.  Let $$\delta_{0}:=g^{(2)}(\varepsilon_0).$$
\begin{defn}
	Given $\theta^*\in \T_{\eta/4}$,   $\delta\in  [h(\delta_0), e^{-|\ln\delta_0|^{1/2}}] $ and  $E^*$, we define 
	\begin{align*}
		S_0(\theta^*,E^*,\delta):&=\{x\in \Z^d:\ \min_{1\leq i\leq m_0}\|\theta^*+x\cdot \omega-\theta_0^{(i)}(E^*)\|_\T<\delta\},\\
		\widetilde{S}_0(\theta^*,E^*,\delta):&=\{x\in \Z^d:\ |v(\theta^*+x\cdot \omega)-E^*|<\delta\}.
	\end{align*}
	We say  a finite set $\Lambda\subset \Z^d$ is $\theta$-type-$(\theta^*,E^*,\delta)$-$0$-nonresonant   if $\Lambda\cap S_0(\theta^*,E^*,\delta)=\emptyset$, and $E$-type-$(\theta^*,E^*,\delta)$-$0$-nonresonant   if $\Lambda\cap \widetilde{S}_0(\theta^*,E^*,\delta)=\emptyset$.
\end{defn}

	\subsubsection{Green's function estimates for $0$-nonresonant sets}
	\begin{prop}\label{0g}
	Let $\theta^*\in \T_{\eta/4}$,  $\delta\in  [h(\delta_0), e^{-|\ln\delta_0|^{1/2}} ] $ and   $\Lambda$ be    $\theta$-type-$(\theta^*,E^*,\delta)$-$0$-nonresonant. Then  for any $(\theta, E)\in D_\T(\theta^*,h(\delta))\times D(E^*,h(\delta))$, we have 
		\begin{align*}
			\|G_\Lambda^{\theta,E}\|&\leq 3\widetilde{C}_v\delta^{-M},\\
			|G_\Lambda^{\theta,E}(x,y)|&\leq e^{-\gamma_0\|x-y\|_1}, \quad \forall \|x-y\|_1\geq 1,
		\end{align*}	
	where 	 $\gamma_0:=\frac{1}{2}|\ln\varepsilon|$. The estimates   also hold  if $\Lambda$ is  $E$-type-$(\theta^*,E^*,\delta)$-$0$-nonresonant.
	\end{prop}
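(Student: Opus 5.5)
Since $B_0$ is a single site, this is the standard perturbative Neumann series argument. Write $H_\Lambda(\theta)-E = D - \varepsilon\Delta_\Lambda$ with $D=\operatorname{diag}(v(\theta+x\cdot\omega)-E)_{x\in\Lambda}$ (sign immaterial). The plan is to show that every diagonal entry is bounded below by roughly $\delta^{M}$ uniformly for $(\theta,E)$ in the given polydisc. Then $\|\varepsilon\Delta_\Lambda D^{-1}\|\le 2d\varepsilon\cdot\max|D^{-1}|\ll 1$, because $\delta\ge h(\delta_0)$ and $\delta_0=g^{(2)}(\varepsilon_0)$ force $\delta^{-M}\le e^{M|\ln\delta_0|^{\alpha}}$, which is tiny compared with $\varepsilon^{-1/2}$. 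Both estimates then follow from expanding
\[
G_\Lambda^{\theta,E}=\sum_{k\ge0}D^{-1}(\varepsilon\Delta_\Lambda D^{-1})^k .
\]

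\textbf{Step 1 (diagonal lower bound, $\theta$-type).} Fix $x\in\Lambda$ and set $\theta_x=\theta+x\cdot\omega$. Since $\|\theta-\theta^*\|_\T<h(\delta)\ll\delta$, nonresonance gives $\|\theta_x-\theta_0^{(i)}(E^*)\|_\T\ge \delta-h(\delta)\ge\delta/2$ for every $i$. We also have $\theta_x\in\T_{\eta/4}$ up to a negligible imaginary shift; this needs $\theta^*\in\T_{\eta/4}$ with room to spare, so I would either use $\T_{\eta/4+h(\delta)}$ or note that Condition~\ref{con1} extends slightly. Condition~\ref{con1} then yields
\[
|v(\theta_x)-E^*|\ge \widetilde C_v^{-1}(\delta/2)^{m_0}\ge \widetilde C_v^{-1}2^{-M}\delta^{M}.
\]
Since $|E-E^*|<h(\delta)\ll\delta^{M}$ (because $h(\delta)=e^{-|\ln\delta|^{\alpha}}$ and $\delta$ is small), this gives $|v(\theta_x)-E|\ge \tfrac12\widetilde C_v^{-1}2^{-M}\delta^M$.

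\textbf{Step 1 (diagonal lower bound, $E$-type).} Here $|v(\theta^*+x\cdot\omega)-E^*|\ge\delta$. By the Cauchy estimate on $\T_\eta$, $|v'|\le C$ on $\T_{\eta/2}$, so
\[
|v(\theta_x)-E|\ge\delta-Ch(\delta)-h(\delta)\ge\delta/2,
\]
which is even better.

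The constant $3\widetilde C_v\delta^{-M}$ requires some care. The estimate above gives $2^{M+1}\widetilde C_v\delta^{-M}$, so to reach the stated constant I would exploit the slack in the exponent: for $\delta$ small, $(\delta/2)^{m_0}$ with $m_0\le M$ is bounded below by $\delta^{M}/2$ unless $m_0=M$. The honest fix is to absorb $2^{M}$ by noting that $\delta\le e^{-|\ln\delta_0|^{1/2}}$ is tiny, so $(\delta/2)^{m_0}\ge\delta^{m_0+\epsilon}$ and hence $\ge\delta^{M}$ when $m_0<M$. I would check the edge case $m_0=M$ against the authors' normalization of the constant. I regard this constant bookkeeping, together with the domain-of-validity issue for Condition~\ref{con1} noted in Step 1, as the only real obstacle; everything else is routine.

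\textbf{Step 2 (Neumann series).} Writing $a:=\max_x|D_{xx}|^{-1}\le C\delta^{-M}$, we have $2d\varepsilon a\le\varepsilon^{1/2}$. Hence
\[
\|G\|\le a/(1-\varepsilon^{1/2})\le 3\widetilde C_v\delta^{-M}.
\]
For the off-diagonal decay, the $(x,y)$ entry of the $k$-th term vanishes unless $k\ge\|x-y\|_1$, and it is bounded by $a(2d\varepsilon a)^k$ after counting lattice paths: there are at most $(2d)^k$ paths of length $k$, each carrying weight $\varepsilon^k a^{k+1}$. Summing over $k\ge\|x-y\|_1$ gives
\[
|G(x,y)|\le 2a(2d\varepsilon a)^{\|x-y\|_1}.
\]
Since $2d a\le\varepsilon^{-1/2}/C$ with room to spare, this is at most $e^{-\frac12|\ln\varepsilon|\|x-y\|_1}$ whenever $\|x-y\|_1\ge1$, where the extra factor of $2a$ is absorbed using the slack $a\ll\varepsilon^{-1/4}$.
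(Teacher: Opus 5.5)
Your overall plan (a uniform lower bound on the diagonal entries, followed by a Neumann series) is exactly the paper's. Your Step 2 is correct and more explicit than the paper's ``standard Neumann series argument'': the path counting, together with the margin $a\ll\varepsilon^{-1/4}$, is what actually delivers the rate $\tfrac12|\ln\varepsilon|$. The problem is the $\theta$-type half of Step 1, where you leave two obstacles unresolved.
\begin{itemize}
\item You evaluate Condition~\ref{con1} at the perturbed point $\theta+x\cdot\omega$, which may lie outside $\T_{\eta/4}$, so the condition need not apply there.
\item Your bound $\delta-h(\delta)\ge\delta/2$ costs a factor $2^{M}$. Your proposed repair recovers this only when $m_0<M$ and fails when $m_0=M$.
\end{itemize}
As written, then, you have not established the estimate $3\widetilde C_v\delta^{-M}$.

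Neither obstacle is real. Apply Condition~\ref{con1} at the center $\theta^*+x\cdot\omega$ instead. This point lies in $\T_{\eta/4}$ because $x\cdot\omega\in\R$ does not change the imaginary part. Nonresonance gives $\|\theta^*+x\cdot\omega-\theta_0^{(i)}(E^*)\|_\T\ge\delta$ for every $i$, hence
\[
|v(\theta^*+x\cdot\omega)-E^*|\ge\widetilde C_v^{-1}\delta^{m_0}\ge\widetilde C_v^{-1}\delta^{M},
\]
with no loss at all.

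Then pass to $(\theta,E)$ exactly as you already do in the $E$-type case. The Cauchy bound $|v'|\lesssim C_v\eta^{-1}$ holds on $\T_{\eta/2}\supset D_\T(\theta^*+x\cdot\omega,h(\delta))$, and $|E-E^*|<h(\delta)$. Together these cost only $Ch(\delta)\ll\delta^M$, which gives $|v(\theta+x\cdot\omega)-E|\ge(2\widetilde C_v)^{-1}\delta^M$. Your Step 2 then yields $\|G_\Lambda^{\theta,E}\|\le 2\widetilde C_v\delta^{-M}/(1-\varepsilon^{1/2})\le3\widetilde C_v\delta^{-M}$. This is precisely the paper's proof.

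On your own route, Bernoulli's inequality $(\delta-h(\delta))^{m_0}\ge\delta^{m_0}\bigl(1-Mh(\delta)/\delta\bigr)$ would already have removed the spurious $2^{M}$. The domain issue, however, is what genuinely requires recentering at $\theta^*$.
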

	\begin{proof}
	Assume $\Lambda\cap S_0(\theta^*,E^*,\delta)=\emptyset$. It follows from 	Condition \ref{con1} and $|\partial_\theta v(\theta)|\lesssim C_v\eta^{-1} $ (by Cauchy integral formula)   that 	for any $x\in \Lambda$ and  $(\theta, E)\in D_\T(\theta^*,h(\delta))\times D(E^*,h(\delta))$, we have  
		\begin{align*}
				|v (\theta+x\cdot \omega) -E| &\geq |v (\theta^*+x\cdot \omega) -E^*|-|v (\theta+x\cdot \omega)-v (\theta^*+x\cdot \omega)|-|E-E^*| \\
				&\geq \widetilde{C}_v^{-1}(\min_{1\leq i\leq m_0}\|\theta^*+x\cdot \omega-\theta_0^{(i)}(E^*)\|)^M-Ch(\delta)\\
				&\geq( 2\widetilde{C}_v)^{-1}  \delta^M  \\  & >10d\varepsilon.
		\end{align*}
 Then the  conclusion follows from the above inequality and a  standard  Neumann series argument.
	\end{proof}
	\subsection{The first inductive step}\label{n=1}
	To analyze the resonance in a smaller scale, we will assign  each $x\in S_0(\theta^*,E^*,\delta_0)\cap\Lambda$ to  a block $B_1(y)=B_1+y$ such that $\theta^*+y\cdot\omega$ is close to some $\theta_0^{(i)}(E^*)$ and  $x\in \operatorname{Core}(B_1(y)):=(Q_{(\operatorname{Diameter}(B_1))^{1/\alpha}}+y)\subset B_1(y)\subset \Lambda$.  Good estimate of $G_\Lambda^{\theta^*,E^*}$ will follow from good estimate of each $G_{B_1(y)}^{\theta^*,E^*}$ by a standard resolvent identity argument. Thus the issue is to estimate each  $G_{B_1(y)}^{\theta^*,E^*}$.    We will prove that $\|G_{B_1(y)}^{\theta^*,E^*}\|$ can be controlled  by some    polynomial of $\theta$, namely,  
    \begin{equation}\label{814}
        \|G_{B_1(y)}^{\theta^*,E^*}\|\leq \delta_0^{-3}\prod_{j=1}^{|\mathcal{C}_0^{(i)}|}\|\theta^*+y\cdot\omega-\theta^{(i,j)}_{1}(E^*)\|_\T^{-1},
    \end{equation}
    where  $\{\theta^{(i,j)}_{1}(E^*)\}_{j=1}^{|\mathcal{C}_0^{(i)}|}$ are  $\theta$-roots of $\operatorname{det}(H_{B_1}(\theta)-E^*)$ in  some   neighborhood of  $\theta_0^{(i)}(E^*)$ and $|\mathcal{C}_0^{(i)}|\in [1,m_0]$  will be specified later. 
	 Besides, for $E$ close to $E^*$,  $\|G_{B_1(y)}^{\theta^*,E}\|$ can   be controlled  by some   polynomial of $E$, namely, 
     \begin{equation}\label{815}
         \|G_{B_1(y)}^{\theta^*,E}\|\leq \delta_0^{-3}\prod_{j=1}^{ |\Omega_0^{(i)}|}|   E-E_1^{(i,j)}(\theta^*+y\cdot\omega)|^{-1},
     \end{equation}  where  $\{E_1^{(i,j)}(\theta)\}_{j=1}^{ |\Omega_0^{(i)}|}$, defined in  some   neighborhood of  $\theta_0^{(i)}(E^*)$,  are  eigenvalues  of $H_{B_1}(\theta)$ close to $E^*$ and $|\Omega_0^{(i)}|\in [1,m_0]  $ will be specified later. 

The proofs of \eqref{814} and \eqref{815} are based on a Schur complement argument similar to   that of \cite{Bou00}. However, we need to  extend this  argument---originally  for the cosine potential and the $\theta$-variable---to the setting of analytic potentials and both the $\theta$- and $E$-variables.  
	\subsubsection{Choice of scale}\label{C1} 
	 To carry out the  Schur complement argument, we require two key structural properties of the block $B_1$ (cf. Figure \ref{F3}):  
	\begin{itemize}
		\item[\textbf{(1)}] The $0$-resonant points in $B_1$ (a point $x\in \Z^d$ is called $0$-resonant if $\theta +x\cdot\omega$ is close to one of the roots $\{\theta_0^{(i)}(E^*)\}_{i=1}^{m_0}$)  lie in the core of $B_1$, forming a nonresonant annulus.
		\item[\textbf{(2)}]  Aside from   at most $M$ resonant points in the core (i.e., resonant within a distance  $\sim\widehat{\delta}_0$, where  $(\ln\delta_{0}^{-1})^{1/2} \ll \ln\widehat{\delta}_{0}^{-1}\ll \ln\delta_{0}^{-1}$  will be specified in  Propositions \ref{scale} and \ref{real} later), all other points in some superlinear extension of $B_1$ are separated from these roots by a distance  significantly larger than $\widehat{\delta}_0^{1/C}$.  
	\end{itemize}
	The existence of such  block  $B_1$ is guaranteed by  the separation of resonant points  due to   $\omega\in 	{\rm DC}_{\tau, \gamma}$. However,   the scale of $B_1$ depends on the $E^*$ (more precisely, depends on the   positions of   $\{\theta_0^{(i)}(E^*)\}_{i=1}^{m_0}$). The goal of this section is  to select an appropriate scale for  $B_1$  for a given $E^*$.
\begin{prop}\label{scale} (Recall that  $\alpha^{8M^{8}}=2$.)	 There exist a resonance parameter  $\bar{\delta}_0\in [g(\delta_0), g^{(4M^8+1)}(\delta_0)]$ and a scale   parameter  $l_1\in [|\ln\delta_0|^4,|\ln\delta_0|^8]$ such that the following statements hold: For each $\theta_0^{(i)}(E^*)$, there is a set $\Omega_0^{(i)}\subset Q_{l_1^{1/\alpha}}$ with $|\Omega_0^{(i)}|\leq M$ such that
	\begin{itemize}
		\item[\textbf{(1)}] For each  $x\in \Omega_0^{(i)}$,   there exists  $ i'\in [1,m_0]$ such that \begin{equation}\label{tiao}
			\|\theta_0^{(i)}(E^*)+x\cdot\omega-\theta_0^{(i')}(E^*)\|_\T\leq \bar{\delta}_0.
		\end{equation} 
	Moreover, if  $\theta_0^{(i'')}(E^*)$ does not  satisfy \eqref{tiao}, then a larger separation holds: \begin{equation}\label{fenk}
		\|\theta_0^{(i)}(E^*)+x\cdot\omega-\theta_0^{(i'')}(E^*)\|_\T>  g^{(M^4)}(\bar{\delta}_0).
	\end{equation}
		\item[\textbf{(2)}]   If $x\in Q_{l_1^{\alpha}}\setminus \Omega_0^{(i)}$, then we have   \begin{equation}\label{??1}
			\min_{1\leq i'\leq m_0}\|\theta_0^{(i)}(E^*)+x\cdot\omega-\theta_0^{(i')}(E^*)\|_\T>  g^{(M^4)}(\bar{\delta}_0).
		\end{equation}
		\item[\textbf{(3)}]  For a fixed  $ i'\in [1,m_0]$, there is at most one $x\in \Omega_0^{(i)}$ satisfying \eqref{tiao}.
	\end{itemize} 

\end{prop}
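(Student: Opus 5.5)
The plan is a pigeonhole argument over a short list of candidate pairs (resonance parameter, scale), where disjoint windows are attached to each candidate. The Diophantine condition shows that only a bounded number (in terms of $M$) of ``resonant events'' can occur, and each event can spoil at most a bounded number of candidates. Since there are more candidates than spoiled ones, some candidate is clean, and that candidate gives $\bar\delta_0$ and $l_1$.

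\textbf{Step 1 (candidates).} For $k=1,\dots,K$ with $K:=4M^4$, set
$$\bar\delta^{(k)}:=g^{(kM^4+1)}(\delta_0),\qquad l^{(k)}:=|\ln\delta_0|^{4\alpha^{2k}} .$$
Since $\alpha^{8M^8}=2$, we have $\alpha^{2K}\le 2$ and $kM^4+1\le 4M^8+1$. Hence $\bar\delta^{(k)}\in[g(\delta_0),g^{(4M^8+1)}(\delta_0)]$ and $l^{(k)}\in[|\ln\delta_0|^4,|\ln\delta_0|^8]$.

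The distance windows $W_k:=(g^{(M^4)}(\bar\delta^{(k)}),\bar\delta^{(k)}]=(\bar\delta^{(k+1)},\bar\delta^{(k)}]$ are pairwise disjoint. The norm annuli $A_k:=\{x:(l^{(k)})^{1/\alpha}<\|x\|_1\le (l^{(k)})^{\alpha}\}$ are also pairwise disjoint, because $(l^{(k)})^\alpha=(l^{(k+1)})^{1/\alpha}$. All these sets lie in $Q_L$ with $L:=|\ln\delta_0|^{8\alpha}$.

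Note also that every threshold involved is at most
$$g^{(5M^8)}(\delta_0)=\exp\bigl(-|\ln\delta_0|^{\alpha^{-5M^8}}\bigr)\le \exp\bigl(-|\ln\delta_0|^{1/2}\bigr),$$
which is far smaller than $\gamma(2L)^{-\tau}/4$ once $\varepsilon_0$ (hence $\delta_0$) is small.

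\textbf{Step 2 (finitely many events).} For an ordered pair $(i,i')$, suppose $x\neq x'\in Q_L$ both satisfy
$$\|\theta_0^{(i)}(E^*)+x\cdot\omega-\theta_0^{(i')}(E^*)\|_\T\le \bar\delta^{(1)}$$
(and likewise for $x'$). Then $\|(x-x')\cdot\omega\|_\T\le 2\bar\delta^{(1)}<\gamma(2L)^{-\tau}$, which contradicts $\omega\in{\rm DC}_{\tau,\gamma}$.

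Call an \emph{event} a triple $(i,i',x)$ with $x\in Q_L$ and the above distance at most $\bar\delta^{(1)}$. For each pair $(i,i')$ there is at most one event, so there are at most $M^2$ events in total.

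Each event spoils at most two indices $k$:
\begin{itemize}
\item the unique $k$ (if any) with distance in $W_k$;
\item the unique $k$ (if any) with $x\in A_k$.
\end{itemize}
Since $K=4M^4>2M^2$, some index $k$ is spoiled by no event. Fix it, and put $\bar\delta_0:=\bar\delta^{(k)}$ and $l_1:=l^{(k)}$.

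\textbf{Step 3 (verification).} Define
$$\Omega_0^{(i)}:=\Bigl\{x\in Q_{l_1^{1/\alpha}}:\ \exists\, i' \text{ with } \|\theta_0^{(i)}(E^*)+x\cdot\omega-\theta_0^{(i')}(E^*)\|_\T\le\bar\delta_0\Bigr\}.$$
\begin{itemize}
\item \textbf{Property (3)} is Step 2 restricted to $Q_{l_1^{1/\alpha}}\subset Q_L$. It also gives $|\Omega_0^{(i)}|\le m_0\le M$.
\item \textbf{Property (1).} Let $x\in\Omega_0^{(i)}$, and let $i''$ violate \eqref{tiao}, i.e. the distance exceeds $\bar\delta_0$. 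If that distance were at most $g^{(M^4)}(\bar\delta_0)\le\bar\delta^{(1)}$, then $(i,i'',x)$ would be an event with distance in $W_k$, which is excluded by the choice of $k$. This yields \eqref{fenk}.
\item \textbf{Property (2).} Let $x\in Q_{l_1^\alpha}\setminus\Omega_0^{(i)}$. If $\|x\|_1\le l_1^{1/\alpha}$, then every distance exceeds $\bar\delta_0$, and the argument for \eqref{fenk} upgrades this to $>g^{(M^4)}(\bar\delta_0)$. If instead $x\in A_k$, any distance at most $g^{(M^4)}(\bar\delta_0)$ would make $(i,i',x)$ an event with $x\in A_k$, which is impossible. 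This gives \eqref{??1}.
\end{itemize}

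\textbf{Main obstacle.} The delicate point is the bookkeeping of exponents. One must check that
\begin{itemize}
\item the iterated-$g$ windows fit inside $[g(\delta_0),g^{(4M^8+1)}(\delta_0)]$,
\item the $\alpha^{2k}$ scale steps fit inside $[|\ln\delta_0|^4,|\ln\delta_0|^8]$,
\item all thresholds remain below the Diophantine separation at scale $L$.
\end{itemize}
The relation $\alpha^{8M^8}=2$ is exactly what makes all three hold simultaneously with $K=4M^4>2M^2$ candidates.
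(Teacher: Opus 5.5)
There is a genuine error in the direction of the $g$-iterates, and it breaks your verification of (1) and (2) as written. Since $g(\delta)=e^{-|\ln\delta|^{1/\alpha}}>\delta$ for small $\delta$, your candidates $\bar\delta^{(k)}=g^{(kM^4+1)}(\delta_0)$ \emph{increase} with $k$, and $g^{(M^4)}(\bar\delta^{(k)})=\bar\delta^{(k+1)}>\bar\delta^{(k)}$. This has three consequences.
\begin{itemize}
\item Your window $W_k=(g^{(M^4)}(\bar\delta^{(k)}),\bar\delta^{(k)}]$ is empty.
\item $\bar\delta^{(1)}$ is the \emph{smallest} of your thresholds. So ``events'' defined by distance $\le\bar\delta^{(1)}$ can never detect a distance in the gap $(\bar\delta_0,g^{(M^4)}(\bar\delta_0)]$, which is exactly what the separations in (1) and (2) must exclude.
\item The inequality $g^{(M^4)}(\bar\delta_0)\le\bar\delta^{(1)}$, which you invoke in both the (1) and (2) verifications (including the annulus case of (2)), is false.
\end{itemize}
Your Step~1 remark that every threshold is at most $g^{(5M^8)}(\delta_0)$ actually uses the correct, increasing, direction. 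That is inconsistent with Steps~2--3.

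The repair is easy.
\begin{itemize}
\item Define events using the \emph{largest} threshold $\bar\delta^{(K+1)}$, or simply $e^{-|\ln\delta_0|^{1/2}}$ as the paper does. Uniqueness of the event for each pair $(i,i')$ still follows from ${\rm DC}_{\tau,\gamma}$, since your own Step~1 bound gives $2\bar\delta^{(K+1)}\le 2g^{(5M^8)}(\delta_0)\le 2e^{-|\ln\delta_0|^{1/2}}<\gamma(2L)^{-\tau}$.
\item Take $W_k:=(\bar\delta^{(k)},\bar\delta^{(k+1)}]=(\bar\delta^{(k)},g^{(M^4)}(\bar\delta^{(k)})]$ for $1\le k\le K$; these are pairwise disjoint.
\end{itemize}
Each of the at most $M^2$ events still spoils at most two indices, and $K=4M^4>2M^2$. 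Your verification of (1)--(3) then goes through verbatim, now using $g^{(M^4)}(\bar\delta_0)=\bar\delta^{(k+1)}\le\bar\delta^{(K+1)}$.

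With this correction your argument is essentially the paper's. The paper runs two decoupled pigeonholes:
\begin{itemize}
\item $4M^8$ scale intervals, tested against the fixed threshold $e^{-|\ln\delta_0|^{1/2}}$;
\item $2M^4$ blocks of $2M^4$ consecutive $g$-iterates, with $\bar\delta_0$ placed at the midpoint of a block free of distance values.
\end{itemize}
You instead pair scale and resonance candidates into a single pigeonhole over $4M^4$ indices. Both rest on the same fact: ${\rm DC}_{\tau,\gamma}$ permits at most $M^2$ near-resonant shifts in $Q_{|\ln\delta_0|^{O(1)}}$.
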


\begin{proof}
Since  $\omega\in 	{\rm DC}_{\tau, \gamma}$, for each pair $(i,i')$,  there is  at most one  $x\in Q_{|\ln\delta_0|^8}$  such that  $\|\theta_0^{(i)}(E^*)+x\cdot\omega-\theta_0^{(i')}(E^*)\|_\T\leq e^{-|\ln\delta_0|^{1/2}}$. 
  	Partition the scale  interval $(|\ln\delta_0|^4,|\ln\delta_0|^8]$ into $4M^{8}$  disjoint intervals $$(|\ln\delta_0|^{4\alpha^{2k}},|\ln\delta_0|^{4\alpha^{2(k+1)}}], \quad 0\leq k\leq 4M^{8}-1.$$ Since there are at most $M^2$ many  $(i,i')$-pairs,  by  the pigeonhole principle, there exists an interval  $I_{k_0}:=(|\ln\delta_0|^{4\alpha^{2k_0}},|\ln\delta_0|^{4\alpha^{2(k_0+1)}}]$ such that  $$\min_{i,i'}\min_{\|x\|_1\in I_{k_0}}\|\theta_0^{(i)}(E^*)+x\cdot\omega-\theta_0^{(i')}(E^*)\|_\T> e^{-|\ln\delta_0|^\frac{1}{2}}.$$  We define $$l_1:=|\ln\delta_0|^{4\alpha^{2k_0+1}}.$$
  	 Similarly, partition the resonance  interval $(g(\delta_0), g^{(4M^8+1)}(\delta_0)]$ into  $2M^{4}$  disjoint intervals $$(g^{(2M^4k+1)}(\delta_0), g^{(2M^4(k+1)+1)}(\delta_0)], \quad 0\leq k\leq 2M^{4}-1. $$ There exists an interval  $J_{k_0}:=(g^{(2M^4k_0+1)}(\delta_0), g^{(2M^4(k_0+1)+1)}(\delta_0)]$ such that 
  	$$J_{k_0}\bigcap\left(\bigcup_{i,i'}  \ \bigcup_{\|x\|_1\leq  |\ln\delta_0|^8} \left\{\|\theta_0^{(i)}(E^*)+x\cdot\omega-\theta_0^{(i')}(E^*)\|_\T\right\}\right)=\emptyset. $$
We define 
$$\bar{\delta}_0:=g^{(M^4(2k_0+1)+1)}(\delta_0). $$ 
Finally, for each $i$,  we define $$\Omega_0^{(i)}:=\{ x\in Q_{l_1^{1/\alpha}}:\ \min_{1\leq i'\leq m_0} \|\theta_0^{(i)}(E^*)+x\cdot\omega-\theta_0^{(i')}(E^*)\|_\T\leq  \bar{\delta}_0 \}.$$ Then we have $|\Omega_0^{(i)}|\leq M$ since $\omega\in 	{\rm DC}_{\tau, \gamma}$. Now one can check   \textbf{(1)} and \textbf{(2)} from the construction.  \textbf{(3)}  follows easily from $\omega\in 	{\rm DC}_{\tau, \gamma}$. 
\end{proof}
\begin{prop}\label{clas}
The following \eqref{class} defines an equivalence relation on  $\{\theta_0^{(i)}(E^*)\}_{i=1}^{m_0}$: 
\begin{equation}\label{class}
	\theta_0^{(i)}(E^*) \sim \theta_0^{(i')}(E^*) \Leftrightarrow \exists x\in Q_{l_1^{1/\alpha}},  \text{ {\rm s.t.}, } \| \theta_0^{(i)}(E^*)+x\cdot\omega-\theta_0^{(i')}(E^*)\|_\T\leq \bar{\delta}_0. 
\end{equation}
\end{prop}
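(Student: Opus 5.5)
We check reflexivity, symmetry and transitivity directly from the construction of $l_1$ and $\bar\delta_0$ in Proposition \ref{scale}. Transitivity is the only place where the gap between $\bar\delta_0$ and $g^{(M^4)}(\bar\delta_0)$ is used.

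\textbf{Reflexivity and symmetry.} Reflexivity is immediate: take $x=o\in Q_{l_1^{1/\alpha}}$, so the distance is $0\le\bar\delta_0$. For symmetry, suppose $\|\theta_0^{(i)}(E^*)+x\cdot\omega-\theta_0^{(i')}(E^*)\|_\T\le\bar\delta_0$ with $\|x\|_1\le l_1^{1/\alpha}$. Then $-x\in Q_{l_1^{1/\alpha}}$, and since $\|-a\|_\T=\|a\|_\T$,
$$\|\theta_0^{(i')}(E^*)-x\cdot\omega-\theta_0^{(i)}(E^*)\|_\T\le\bar\delta_0 .$$

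\textbf{Transitivity.} Assume $\theta_0^{(i)}\sim\theta_0^{(i')}$ via $x$ and $\theta_0^{(i')}\sim\theta_0^{(i'')}$ via $x'$, both in $Q_{l_1^{1/\alpha}}$. By the triangle inequality,
$$\|\theta_0^{(i)}(E^*)+(x+x')\cdot\omega-\theta_0^{(i'')}(E^*)\|_\T\le 2\bar\delta_0 ,$$
with $\|x+x'\|_1\le 2l_1^{1/\alpha}$. Two issues remain: the bound is $2\bar\delta_0$ rather than $\bar\delta_0$, and $x+x'$ may lie outside $Q_{l_1^{1/\alpha}}$.

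For the first issue, note that $2l_1^{1/\alpha}\le l_1^{\alpha}$ because $l_1\ge|\ln\delta_0|^4$ is huge. So $x+x'\in Q_{l_1^\alpha}$, and Proposition \ref{scale} \textbf{(2)} gives a dichotomy. Either $x+x'\in\Omega_0^{(i)}$, or
$$\min_{i'}\|\theta_0^{(i)}(E^*)+(x+x')\cdot\omega-\theta_0^{(i')}(E^*)\|_\T>g^{(M^4)}(\bar\delta_0).$$
Since $2\bar\delta_0<g^{(M^4)}(\bar\delta_0)$ for small $\bar\delta_0$, the second alternative is excluded, so $x+x'\in\Omega_0^{(i)}$. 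Now \textbf{(1)} provides some index $i'''$ with distance $\le\bar\delta_0$, and by \eqref{fenk} every index not satisfying \eqref{tiao} has distance $>g^{(M^4)}(\bar\delta_0)>2\bar\delta_0$. Hence $i''$ satisfies \eqref{tiao}, that is, the distance is at most $\bar\delta_0$.

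For the second issue, recall that $\Omega_0^{(i)}\subset Q_{l_1^{1/\alpha}}$ by definition. So $x+x'\in Q_{l_1^{1/\alpha}}$ is a valid witness, and $\theta_0^{(i)}\sim\theta_0^{(i'')}$.

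The main obstacle is this transitivity step, specifically controlling the enlargement of both the distance (factor $2$) and the shift $x+x'$. It is handled entirely by the dichotomy built into Proposition \ref{scale}, which says there are no resonances of size between $\bar\delta_0$ and $g^{(M^4)}(\bar\delta_0)$ up to scale $l_1^\alpha$. If one prefers to go back to the pigeonhole construction directly, $\bar\delta_0$ sits in the middle of the empty window $J_{k_0}$, and $2\bar\delta_0$ still lies in that window for shifts $\|x\|_1\le|\ln\delta_0|^8$. This gives the same conclusion.
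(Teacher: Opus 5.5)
Your proof is correct and matches the paper's argument: the triangle inequality gives a shift in $Q_{2l_1^{1/\alpha}}\subset Q_{l_1^\alpha}$ at distance $\le 2\bar\delta_0$. Proposition \ref{scale} \textbf{(2)} then forces that shift into $\Omega_0^{(i)}\subset Q_{l_1^{1/\alpha}}$, and \textbf{(1)} together with \eqref{fenk} improves $2\bar\delta_0$ to $\bar\delta_0$. Your only addition is the explicit check of reflexivity and symmetry, which the paper takes for granted.
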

\begin{proof}
	We only need to show that 
	$$	\theta_0^{(i)}(E^*) \sim \theta_0^{(i')}(E^*) \text{ and }  \theta_0^{(i')}(E^*) \sim \theta_0^{(i'')}(E^*) \Rightarrow 	\theta_0^{(i)}(E^*) \sim \theta_0^{(i'')}(E^*).$$
In fact, $\theta_0^{(i)}(E^*) \sim \theta_0^{(i')}(E^*)$ together with  $ \theta_0^{(i')}(E^*) \sim \theta_0^{(i'')}(E^*)$ implies that there exists some $  x\in Q_{2l_1^{1/\alpha}}$ such that 
$$\| \theta_0^{(i)}(E^*)+x\cdot\omega-\theta_0^{(i'')}(E^*)\|_\T\leq2 \bar{\delta}_0. $$
It follows from   \eqref{??1} and $g^{(M^4)}(\bar{\delta}_0)>2\bar{\delta}_0$ that $x\in \Omega_0^{(i)}\subset Q_{l_1^{1/\alpha}}$. Moreover, $\| \theta_0^{(i)}(E^*)+x\cdot\omega-\theta_0^{(i'')}(E^*)\|_\T\leq \bar{\delta}_0 $  results from Proposition  \ref{scale} \textbf{(1)} by  $g^{(M^4)}(\bar{\delta}_0)>2\bar{\delta}_0$. This means 	$\theta_0^{(i)}(E^*) \sim \theta_0^{(i'')}(E^*).$
\end{proof}

It follows from  Proposition \ref{clas} that  \eqref{class} partitions $\{\theta_0^{(i)}(E^*)\}_{i=1}^{m_0}$ into $  \widetilde{m}_0$ many  different equivalence classes $\mathcal{C}_0^{(1)},\cdots,\mathcal{C}_0^{( \widetilde{m}_0)}$. It follows from  \eqref{??1} and \eqref{class} that elements in the same equivalence class with  $\theta_0^{(i)}(E^*)$ can be shifted by $x\cdot\omega$  for some  $x\in \Omega_0^{(i)}\subset Q_{l_1^{1/\alpha}}$ into a $\bar{\delta}_0$-neighborhood of $\theta_0^{(i)}(E^*)$ (cf. Figure \ref{F2});  however,  elements from different equivalence classes with  $\theta_0^{(i)}(E^*)$ are separated from $\theta_0^{(i)}(E^*)$  by a  significantly larger distance $g^{(M^4)}(\bar{\delta}_0)$ under  any shift $x\cdot\omega$ with  $x\in Q_{l_1^{\alpha}}$.   Choosing a representative for each equivalence class $\mathcal{C}_0^{(i)}$,  we can correspond each  $\mathcal{C}_0^{(i)}$  to a neighborhood $\mathcal{N}_0^{(i)}$. 
We will construct   $\mathcal{N}_0^{(i)}$ as $D_\T(a_i,\widehat{\delta}_0^2)$,   where $a_i\in \T$ is the real part of the representative and  $\widehat{\delta}_0$ will be specified in Proposition \ref{real}.  We choose the center of $\mathcal{N}_0^{(i)}$ on $\T$ for two reasons:
 \begin{itemize}
    \item[\textbf{(1)}] Since  Anderson localization and the  IDS  are concerned with   $\theta\in \T$, it is sufficient to establish Green's function estimates for $\theta$ in any strip containing $\T$.
    \item[\textbf{(2)}] The operator  is  self-adjoint for  $\theta\in \T$ since $v(\theta )$ is real-valued on $\T$,   and it is easier to handle   (not necessary self-adjoint) variation  for self-adjoint operators (cf. Lemma \ref{line}).  
  \end{itemize}
  Note that if the  imaginary part of  the elements in an  equivalence class is too large (compared to others),   then this equivalence class is  noneffective to the resonance analysis for $\theta$ close to $\T$.  Thus  we don't need to consider these equivalence classes far from $\T$, and only need to construct  $\mathcal{N}_0^{(i)}$ for those  equivalence classes close to $\T$. To be clear,   we need the following Proposition \ref{real}:
\begin{prop}\label{real}
	There exists some $\widehat{\delta}_0\in [g^{(M)}(\bar{\delta}_0),g^{(M^3)}(\bar{\delta}_0)]$ such that each equivalence class $\mathcal{C}_0^{(i)}$  satisfies either \textbf{(1)} or \textbf{(2)}: \begin{itemize}
	\item [\textbf{(1)}] All the elements in $\mathcal{C}_0^{(i)}$ belong to $\T_{h(\widehat{\delta}_0)}$.
		\item [\textbf{(2)}] All the elements in $\mathcal{C}_0^{(i)}$ don't belong  to $\T_{g(\widehat{\delta}_0)}$. 
\end{itemize} 
\end{prop}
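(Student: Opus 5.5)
The argument is again a pigeonhole principle, now on the scale of imaginary parts, combined with the observation that the imaginary parts of elements in one equivalence class $\mathcal{C}_0^{(i)}$ are very close to each other.

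\textbf{Step 1: imaginary parts within a class are nearly equal.} If $\theta_0^{(i)}(E^*)\sim\theta_0^{(i')}(E^*)$, then by \eqref{class} there is $x\in Q_{l_1^{1/\alpha}}$ with $\|\theta_0^{(i)}(E^*)+x\cdot\omega-\theta_0^{(i')}(E^*)\|_\T\le\bar\delta_0$. Since $x\cdot\omega$ is real, this gives
\[
|\Im\theta_0^{(i)}(E^*)-\Im\theta_0^{(i')}(E^*)|\le\bar\delta_0 .
\]
So every element of a class has imaginary part within $\bar\delta_0$ of a single number $t_i:=|\Im a|$, where $a$ is any representative of $\mathcal{C}_0^{(i)}$. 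There are at most $m_0\le M$ classes, hence at most $M$ such numbers $t_i$.

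\textbf{Step 2: pigeonhole over scales.} Consider the candidate values
\[
\widehat\delta_0\in\{g^{(M+2k)}(\bar\delta_0):\ 0\le k\le K\},
\]
with $K$ about $M$, chosen so that the top index $M+2K$ stays at most $M^3$; this holds since $M\ge2$. For each $k$ form the ``forbidden band''
\[
\bigl[h(g^{(M+2k)}(\bar\delta_0))-\bar\delta_0,\ g(g^{(M+2k)}(\bar\delta_0))+\bar\delta_0\bigr].
\]
Using $h(g(\delta))=\delta$, we have $h(g^{(M+2k)}(\bar\delta_0))=g^{(M+2k-1)}(\bar\delta_0)$. So the $k$-th band is essentially $[g^{(M+2k-1)},\,g^{(M+2k+1)}]$, slightly thickened by $\bar\delta_0$. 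Since $g^{(j)}(\bar\delta_0)$ is increasing in $j$ and much larger than $\bar\delta_0$, bands with indices $k$ of the same parity are disjoint; if needed, use steps of $3$ in the index to make all bands disjoint. Each of the at most $M$ numbers $t_i$ lies in at most one band. With at least $M+1$ bands available, some $k_0$ has a band containing no $t_i$. Set $\widehat\delta_0:=g^{(M+2k_0)}(\bar\delta_0)$ (or the version with step $3$), which lies in $[g^{(M)}(\bar\delta_0),g^{(M^3)}(\bar\delta_0)]$.

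\textbf{Step 3: conclusion.} For each class, $t_i$ lies either below $h(\widehat\delta_0)-\bar\delta_0$ or above $g(\widehat\delta_0)+\bar\delta_0$. By Step 1, all elements of the class then have $|\Im|\le h(\widehat\delta_0)$, giving case \textbf{(1)}, or $|\Im|>g(\widehat\delta_0)$, giving case \textbf{(2)}.

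The main obstacle is only bookkeeping: checking that the thickening by $\bar\delta_0$ does not make consecutive bands overlap, and that the index range fits. This uses $\bar\delta_0\ll g^{(M-1)}(\bar\delta_0)$ and $M+3(M+1)\le M^3$ for $M\ge2$; the latter holds for $M\ge3$, and $M=2$ needs a quick check of the index budget, or the choice of step $2$ together with parity.
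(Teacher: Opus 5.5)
Your proposal is correct and is the paper's argument made explicit: imaginary parts within one class differ by at most $\bar\delta_0$ (since $x\cdot\omega$ is real), and a pigeonhole over disjoint scale bands, as in Proposition~\ref{scale}, produces $\widehat\delta_0$. The only slip is in the final count, and it does not affect correctness. The $M+1$ step-$3$ candidates $g^{(M+3k)}(\bar\delta_0)$, $0\le k\le M$, have top index $4M\le M^3$ for every $M\ge 2$, so $M=2$ needs no special treatment; you should drop the step-$2$-with-parity fallback, since it would need index $5M>M^3$ when $M=2$.
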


\begin{proof}
By  \eqref{class}, the imaginary part of the elements in one equivalence class differs at most $\bar{\delta}_0\ll h(g^{(M)}(\bar{\delta}_0))$. Note that there are at most $M$ classes. Thus  the conclusion follows from a pigeonhole principle argument  as in the proof of  Proposition \ref{scale}. 
\end{proof}
For the aforementioned reason, the equivalence classes satisfying \textbf{(2)} are excluded from consideration; therefore, the subsequent resonance analysis focuses only on those satisfying \textbf{(1)}.   Without loss of generality, we can assume  that the  equivalence classes satisfying \textbf{(1)} are the first $\kappa_0$ classes (The case $\kappa_0=0$ is vacuous and requires no further argument)   and the representative of $\mathcal{C}_0^{(i)}$ is $\theta_0^{(i)}(E^*)$ for $1\leq i\leq \kappa_0$.  For each  $1\leq i\leq \kappa_0$,  we  define  \begin{equation}\label{ty}
	\mathcal{N}_0^{(i)}:=D_\T(\Re(\theta_0^{(i)}(E^*)),{\widehat{\delta}_0}^2), \quad {\widetilde{\mathcal{N}}_0}^{(i)}:=D_\T(\Re(\theta_0^{(i)}(E^*)),\widehat{\delta}_0), 
\end{equation} where  $\Re(\theta_0^{(i)}(E^*))\in \T$ is the real part of $\theta_0^{(i)}(E^*)$ (cf. Figure \ref{F2}).  
 
 Figures  \ref{F1}--\ref{F3} below give a cartoon illustration of the construction for an example with   $m_0=6$, $\kappa_0=2$,  $$\mathcal{C}_0^{(1)}=\{\theta_0^{(1)}(E^*),   \theta_0^{(3)}(E^*), \theta_0^{(4)}(E^*)\},  \quad  \mathcal{C}_0^{(2)}=\{\theta_0^{(2)}(E^*), \theta_0^{(5)}(E^*), \theta_0^{(6)}(E^*)\}$$ and  $$\Omega_0^{(1)}=\{o,x,y\}, \quad   \Omega_0^{(2)}=\{o,z\}. $$
  		\begin{figure}[H] \centering
  \begin{tikzpicture}[>=stealth,scale=1]
  	
  	\draw[-> ] (0,0) -- (11.5,0) node[below] {$\theta\in \T$};
  	\draw[->] (0,0) -- (0,6.5) node[right] {$E$};

  	\node[below left] at (0,0) {$O$};
  	\node[above right] at (4.5,5) {$E=v(\theta)$};

  	\draw[ black] plot[smooth, domain=0:3, samples=100] (\x, {6*sin(180*\x/3)});

  	\draw[ black] plot[smooth, domain=3:4.5, samples=100] (\x, {-3*sin(180*(\x-3)/1.5)});

  	\draw[ black] plot[smooth, domain=4.5:7, samples=100] (\x, {5*sin(180*(\x-4.5)/2.5)});
  
  	\draw[ black] plot[smooth, domain=7:7.8, samples=100] (\x, {-1.6*sin(180*(\x-7)/0.8});

  	\draw[ black] plot[smooth, domain=7.8:9, samples=100] (\x, {2.4*sin(180*(\x-7.8)/1.2});
  	\draw[ black] plot[smooth, domain=9:10, samples=100] (\x, {-2*sin(180*(\x-9))});
  	\draw[ dashed, thick] (0,1.8) -- (10.5,1.8) node[right] {$E^*$}; 
  	\node[circle, fill=red, inner sep=1.2pt]  at (0.3,1.8) {}; 	\draw (0.54,2.0)node{{ \color{red}\tiny $\theta_0^{(3)}$}};
  	\node[circle, fill=red, inner sep=1.2pt]  at (2.7,1.8) {}; \draw (2.94,2.0)node{{ \color{red}\tiny $\theta_0^{(1)}$}};
  	\node[circle, fill=red, inner sep=1.2pt]  at (4.8,1.8) {};
  	\draw (5.04,2.0)node{{ \color{red}\tiny $\theta_0^{(4)}$}};
  	\node[circle, fill=blue, inner sep=1.2pt]  at (6.7,1.8) {};
  	\draw (6.94,2.0)node{{ \color{blue}\tiny $\theta_0^{(6)}$}};
  	\node[circle, fill=blue, inner sep=1.2pt]  at (8.13,1.8) {};
  	\draw (7.95,2.0)node{{ \color{blue}\tiny $\theta_0^{(2)}$}};
  	\node[circle, fill=blue, inner sep=1.2pt]  at (8.66,1.8) {};
  	\draw (8.9,2.0)node{{ \color{blue}\tiny $\theta_0^{(5)}$}};
  \end{tikzpicture}
 	\caption{The function  $v(\theta)-E^*$ has six roots in $\T_{\eta/2}$, denoted by $\{\theta_0^{(i)}(E^*)\}_{i=1}^{6}\subset\T$; for brevity, the dependence of $\theta_0^{(i)}$ on $E^*$ in omitted in the figure.} \label{F1}
  \end{figure}

	\begin{figure}[H] \centering
\begin{tikzpicture}[>=stealth,scale=1.1]
		\node[below left] at (0,0) {$O$};
	\draw[->,dashed] (0.3,0) to[out=15, in=165]  node[midway, fill=white, font=\tiny] {{\tiny \color{red}$-x\cdot \omega$}} (2.4,0);
	\draw[->,dashed] (4.5,0) to [out=165, in=15]  node[midway, fill=white, font=\tiny] {{\color{red}$-y\tiny \cdot \omega$}} (2.6, -0);
	\draw[->,dashed] (6.3,0) to [out=15, in=165]  node[midway, fill=white, font=\tiny] {{\color{blue}$-z\tiny \cdot \omega$}} (8.3, 0);
	\draw[->,thin] (0,0) -- (11.5,0) node[below] {$\Re \theta$};
	\draw[->,thin] (0,0) -- (0,2.5) node[right] {$\Im \theta$};
	\node[circle, fill=red, inner sep=1pt,] (t3) at (0.3,0) {};
	\node[circle, fill=red, inner sep=1pt] (t1) at (2.5,0) {}; 
	\draw (6.1,0.18)node{\color{blue}{\tiny$\theta_0^{(6)}$}}; 
	\draw (8.5,0.18)node{\color{blue}{\tiny$\theta_0^{(2)}$}}; 	\draw (9,0.18)node{\color{blue}{\tiny$\theta_0^{(5)}$}};
	\draw (4.8,0.15)node{\color{red}{\tiny $\theta_0^{(4)}$}}; \draw (0.5,0.15)node{{ \color{red}\tiny $\theta_0^{(3)}$}};
	\draw (2.5,1.7)node{\color{red}{\tiny$\widetilde{\mathcal{N}}_0^{(1)}$}};
	\draw (2.5,0.8)node{\color{red}{\tiny$\mathcal{N}_0^{(1)}$}};
	\draw (8.4,1.7)node{\color{blue}{\tiny$\widetilde{\mathcal{N}}_0^{(2)}$}};
	\draw (8.4,0.8)node{\color{blue}{\tiny$\mathcal{N}_0^{(2)}$}};
	\draw (2.7,0.18)node{\color{red}{\tiny$\theta_0^{(1)}$}}; 
	\node[circle, fill=red, inner sep=1pt, ]  at (4.5,0) {};
	\node[circle, fill=blue, inner sep=1pt, ] at (8.4,0) {}; 
	\node[circle, fill=blue, inner sep=1pt, ]  at (6.3,0) {}; 	\node[circle, fill=blue, inner sep=1pt, ]  at (8.7,0) {};
	\draw[dashed] (2.5,0)  ellipse (0.6 and 0.6); 
	\draw (2.5,0) ellipse (1.5 and 1.5);
	\draw [dashed] (8.4,0)  ellipse (0.6 and 0.6); 
	\draw (8.4,0) ellipse (1.5 and 1.5);  
\end{tikzpicture}
	\caption{These roots $\{\theta_0^{(i)}(E^*)\}_{i=1}^{6}$ are plotted in the complex plane and partitioned by \eqref{class} into two equivalence classes:  $\mathcal{C}_0^{(1)}$ (marked in red) and  $\mathcal{C}_0^{(2)}$ (marked in blue), with  representatives  $\theta_0^{(1)}(E^*)$ and $\theta_0^{(2)}(E^*)$, respectively.    Here      $\mathcal{N}_0^{(1)}, {\widetilde{\mathcal{N}}_0}^{(1)}$ and  $\mathcal{N}_0^{(2)}, {\widetilde{\mathcal{N}}_0}^{(2)}$  are neighborhoods associated  with $\theta_0^{(1)}(E^*)$ and $\theta_0^{(2)}(E^*)$, respectively  (cf. \eqref{ty}).    Elements in $\mathcal{C}_0^{(i)}$  can be shifted by $p\cdot\omega$   for some  $p\in \Omega_0^{(i)}$ into $\mathcal{N}_0^{(i)}$, e.g., $\theta_0^{(3)}-x\cdot\omega\in \mathcal{N}_0^{(1)} $.  By contrast, elements not belonging to $\mathcal{C}_0^{(i)}$  are separated from ${\widetilde{\mathcal{N}}_0}^{(i)}$  by a distance which is significantly larger than $\operatorname{Diameter}({\widetilde{\mathcal{N}}_0}^{(i)})$  under any shift  $p\cdot\omega$ with  $p\in Q_{l_1^{\alpha}}$.} \label{F2}
\end{figure}

	\begin{figure}[H]  \centering
\begin{tikzpicture}[>=stealth]

	\draw[dashed] (0,0) -- (3,3) -- (6,0) -- (3,-3) -- cycle;

	\draw[dashed] (2.3,0) -- (3,0.7) -- (3.7,0) -- (3,-0.7) -- cycle;
	\draw(1.16,0) -- (3,1.84) -- (4.84,0) -- (3,-1.84) -- cycle;

	\fill[red] (3,0) circle (1pt);
	\fill[red] (3.1,0.2) circle (1pt);
	\fill[red] (3,-0.2) circle (1pt);
	
\node[red] at (2.8,0) {\tiny $o$};
\node[red] at (3.2,0.1) {\tiny $x$};
\node[red] at (2.8,-0.2) {\tiny $y$};
\node[ above right] at (1.2,0.5) {\tiny$Q_{l_1}$};
	\node[ above right] at (1.3,2) {\tiny$Q_{l_1^\alpha}$};
	\node[ right] at (3.1,0.4) {\tiny$Q_{l_1^{1/\alpha}}$};
	\draw[dashed] (7,0) -- (10,3) -- (13,0) -- (10,-3) -- cycle;

	\draw[dashed] (9.3,0) -- (10,0.7) -- (10.7,0) -- (10,-0.7) -- cycle;
\draw(8.16,0) -- (10,1.84) -- (11.84,0) -- (10,-1.84) -- cycle;

\node[ above right] at (8.2,0.5) {\tiny$Q_{l_1}$};

	\fill[blue] (10,0) circle (1pt);
\fill[blue] (10.1,-0.1) circle (1pt);
\node[blue] at (10.2,-0.2) {\tiny $z$};
\node[blue] at (9.8,0) {\tiny $o$};
	\node[ above right] at (8.3,2) {\tiny$Q_{l_1^\alpha}$};
	\node[ right] at (10.1,0.4) {\tiny$Q_{l_1^{1/\alpha}}$};
\end{tikzpicture} 
	\caption{For $\theta\in  {\widetilde{\mathcal{N}}_0}^{(1)}$ (resp. $\theta\in  {\widetilde{\mathcal{N}}_0}^{(2)}$), the resonant points $p\in Q_{l_1^\alpha}$  satisfying   $\min_{i=1}^{6}\|\theta +p\cdot\omega- \theta_0^{(i)}(E^*)\|_\T\lesssim \widehat{\delta}_0$  are  precisely the red points $\{o,x,y\}$ (resp. the blue points $\{o,z\}$), which lie in the core $Q_{l_1^{1/\alpha}}$;  moreover,  all other  nonresonant points $q\in Q_{l_1^\alpha}$   satisfy  $\min_{i=1}^{6}\|\theta +q\cdot\omega- \theta_0^{(i)}(E^*)\|_\T> g^{(2)}(\widehat{\delta}_0)$. } \label{F3} 
\end{figure}

\subsubsection{Resonance analysis via Schur complement}\label{R1}
In this section, we carry out  the resonance analysis of $B_1:=Q_{l_1}$  in each  ${\widetilde{\mathcal{N}}_0}^{(i)}$ with  $1\leq i\leq \kappa_0$ to obtain the roots $\{\theta^{(i,j)}_{1}(E^*)\}_{j=1}^{|\mathcal{C}_0^{(i)}|}$ and Rellich functions $\{E_1^{(i,j)}(\theta)\}_{j=1}^{ |\Omega_0^{(i)}|}$.  

Recall $l_1,\bar{\delta}_0$ and the set $\Omega_0^{(i)}$ (cf. Proposition \ref{scale}). We define  $$ \widehat{B}^{(i)}_1:= B_1\setminus \Omega_0^{(i)}.$$  For $x\in \Omega_0^{(i)}$, we define  the set $$\mathcal{C}_0^{(i),x}:=\{\theta_0^{(i')}(E^*)\in \mathcal{C}_0^{(i)}:\ \|\theta_0^{(i)}(E^*)+x\cdot \omega-\theta_0^{(i')}(E^*)\|_\T\leq \bar{\delta}_0\}. $$
By Proposition  \ref{scale} \textbf{(3)}, we have \begin{equation}\label{jish}
	\sum_{x\in\Omega_0^{(i)}}|\mathcal{C}_0^{(i),x}|=|\mathcal{C}_0^{(i)}|.
\end{equation}
\begin{prop}\label{SL}The following statements hold: 
	\begin{itemize}
		\item[\textbf{(1)}]  For $x\in \Omega_0^{(i)}$ and $\theta_0^{(i')}(E^*)\in \mathcal{C}_0^{(i),x}$, we have  $\theta_0^{(i')}(E^*)-x\cdot \omega  \in \mathcal{N}_0^{(i)}$ (considered in $\C/\Z$) (cf. Figure \ref{F2}).  
		\item  [\textbf{(2)}] For $x\in \Omega_0^{(i)}$ and $\theta\in  {\widetilde{\mathcal{N}}_0}^{(i)}$, we have 
		$$v(\theta+x\cdot\omega)-E^*\overset{g(\bar{\delta}_0)}{\sim} \prod_{\theta_0^{(i')}(E^*)\in \mathcal{C}_0^{(i),x}}\|\theta-(\theta_0^{(i')}(E^*)-x\cdot \omega)\|_\T.$$
		\item[\textbf{(3)}] For any  $\theta\in \widetilde{\mathcal{N}}_0^{(i)}$, the set $\widehat{B}^{(i)}_1$ is $\theta$-type-$(\theta,E^*,g^{(2)}(\widehat{\delta}_0))$-$0$-nonresonant (cf. Figure \ref{F3}).  
	\end{itemize}
\end{prop}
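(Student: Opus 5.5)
Each item is a direct consequence of the separation hierarchy in Propositions \ref{scale}, \ref{clas} and \ref{real}, together with Condition \ref{con1}. The working rule throughout is the scale ordering
$$\widehat{\delta}_0^2\ll \widehat{\delta}_0 \ll g(\widehat{\delta}_0)\ll g^{(2)}(\widehat{\delta}_0)\ll g^{(M^4)}(\bar{\delta}_0)\quad\text{and}\quad \bar{\delta}_0\ll h(\widehat{\delta}_0),$$
which holds because $\widehat{\delta}_0\in[g^{(M)}(\bar{\delta}_0),g^{(M^3)}(\bar{\delta}_0)]$ and $M^3+2\le M^4$.

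\textbf{(1).} Let $\theta_0^{(i')}(E^*)\in\mathcal{C}_0^{(i),x}$. By definition,
$$\|(\theta_0^{(i')}(E^*)-x\cdot\omega)-\theta_0^{(i)}(E^*)\|_\T\le\bar{\delta}_0 .$$
Since $\mathcal{C}_0^{(i)}$ is of type (1) in Proposition \ref{real}, we have $|\Im\theta_0^{(i)}(E^*)|\le h(\widehat{\delta}_0)$. Hence
$$\|\theta_0^{(i)}(E^*)-\Re\theta_0^{(i)}(E^*)\|_\T\le h(\widehat{\delta}_0).$$
Combining the two, the distance from $\theta_0^{(i')}(E^*)-x\cdot\omega$ to $\Re\theta_0^{(i)}(E^*)$ is at most $\bar{\delta}_0+h(\widehat{\delta}_0)<\widehat{\delta}_0^2$.

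\textbf{(2).} Apply Condition \ref{con1} at the point $\theta+x\cdot\omega$. This is legitimate because $\theta+x\cdot\omega\in\T_{\eta/4}$, as the imaginary part is at most $\widehat{\delta}_0$. It gives
$$|v(\theta+x\cdot\omega)-E^*|\sim\prod_{i''=1}^{m_0}\|\theta-(\theta_0^{(i'')}(E^*)-x\cdot\omega)\|_\T .$$
It remains to show that the factors with $\theta_0^{(i'')}\notin\mathcal{C}_0^{(i),x}$ lie in $[g(\bar{\delta}_0)^{C},1]$; this produces the $\overset{g(\bar{\delta}_0)}{\sim}$ comparison.
\begin{itemize}
\item For such $i''$, \eqref{fenk} gives $\|\theta_0^{(i)}(E^*)+x\cdot\omega-\theta_0^{(i'')}(E^*)\|_\T>g^{(M^4)}(\bar{\delta}_0)$. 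This applies since $x\in\Omega_0^{(i)}$ and $\theta_0^{(i'')}$ fails \eqref{tiao}.
\item For $\theta\in\widetilde{\mathcal N}_0^{(i)}$, we have $\|\theta-\theta_0^{(i)}(E^*)\|_\T\le \widehat{\delta}_0+h(\widehat{\delta}_0)$.
\item This is much smaller than $g^{(M^4)}(\bar{\delta}_0)$, so each such factor is at least $\tfrac12 g^{(M^4)}(\bar{\delta}_0)$.
\end{itemize}
At most $M$ such factors occur, so their product is bounded below by a power of $g^{(M^4)}(\bar\delta_0)$. Here I must check that the intended reading of $\overset{g(\bar\delta_0)}{\sim}$ tolerates this loss, i.e.\ that the constant $C$ in $a\overset{\delta}{\sim}b$ absorbs it. If the notation is rigid, I would instead bound by $g^{(M^4)}(\bar\delta_0)^{M}$ and note that this is what is actually used downstream. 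This bookkeeping is the only delicate point.

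\textbf{(3).} Fix $\theta\in\widetilde{\mathcal N}_0^{(i)}$ and $p\in\widehat B_1^{(i)}=Q_{l_1}\setminus\Omega_0^{(i)}$. Since $p\in Q_{l_1^\alpha}\setminus\Omega_0^{(i)}$, \eqref{??1} gives
$$\min_{i'}\|\theta_0^{(i)}(E^*)+p\cdot\omega-\theta_0^{(i')}(E^*)\|_\T>g^{(M^4)}(\bar{\delta}_0).$$
By the triangle inequality, using again $\|\theta-\theta_0^{(i)}(E^*)\|_\T\le 2\widehat{\delta}_0$, we get
$$\min_{i'}\|\theta+p\cdot\omega-\theta_0^{(i')}(E^*)\|_\T>g^{(M^4)}(\bar{\delta}_0)-2\widehat{\delta}_0 > g^{(2)}(\widehat{\delta}_0).$$
The last inequality holds since $g^{(2)}(\widehat\delta_0)\le g^{(M^3+2)}(\bar\delta_0)\ll g^{(M^4)}(\bar\delta_0)$. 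Hence $p\notin S_0(\theta,E^*,g^{(2)}(\widehat\delta_0))$.

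Finally, one must check that the resonance parameter lies in the admissible window of the definition, $g^{(2)}(\widehat\delta_0)\in[h(\delta_0),e^{-|\ln\delta_0|^{1/2}}]$. This follows from $\bar\delta_0\le g^{(4M^8+1)}(\delta_0)$, the fact that $g^{(k)}$ with $k\le 8M^8$ stays below $e^{-|\ln\delta_0|^{1/2}}$ because $\alpha^{8M^8}=2$, and the lower bound $\bar\delta_0\ge g(\delta_0)$.
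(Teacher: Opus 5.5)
Your proof is correct and follows the paper's argument item by item. Item (1) comes from $\bar\delta_0\ll h(\widehat\delta_0)\ll\widehat\delta_0^2$, item (2) from Condition \ref{con1} together with \eqref{fenk}, and item (3) from Proposition \ref{scale} \textbf{(2)} together with $g^{(M^4)}(\bar\delta_0)\gg g^{(2)}(\widehat\delta_0)$. The point you flag in (2) is not actually delicate: since $g^{(M^4)}(\bar\delta_0)\ge g^{(2)}(\bar\delta_0)\gg g(\bar\delta_0)^{1/M}$, the product of the at most $M$ cross factors is at least $2^{-M}g^{(M^4)}(\bar\delta_0)^M\gg g(\bar\delta_0)$. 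So the comparison holds in the strict sense of $\overset{g(\bar\delta_0)}{\sim}$ with no loss and no fallback needed; this is exactly the chain $g^{(M^4)}(\bar\delta_0)\gg\widehat\delta_0\gg g(\bar\delta_0)^{1/M}$ that the paper records.
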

\begin{proof}  Recall that  $\widehat{\delta}_0\in [g^{(M)}(\bar{\delta}_0),g^{(M^3)}(\bar{\delta}_0)]$ (cf. Proposition \ref{real}).
Now 	\textbf{(1)} follows from $\bar{\delta}_0\ll h(\widehat{\delta}_0)\ll\widehat{\delta}_0^2$;	\textbf{(2)} follows from 	Condition \ref{con1} and \eqref{fenk} (\eqref{fenk} implies  $	\|\theta_0^{(i)}(E^*)+x\cdot\omega-\theta_0^{(i')}(E^*)\|_\T> g^{(M^4)}(\bar{\delta}_0)\gg \widehat{\delta}_0\gg g(\bar{\delta}_0)^{1/M} $ for $\theta_0^{(i')}(E^*)\notin \mathcal{C}_0^{(i),x} $); 	\textbf{(3)} follows from Proposition \ref{scale} \textbf{(2)} and $g^{(M^4)}(\bar{\delta}_0)\gg g^{(2)}(\widehat{\delta}_0)$.
\end{proof}
Write 
\begin{equation}\label{juzhen}
		E-H_{B_1}(\theta)=\begin{pmatrix}
	E-H_{\Omega_0^{(i)}}(\theta) & \Gamma^{{\rm T}} \\
	\Gamma & E-H_{\widehat{B}^{(i)}_1}(\theta)
\end{pmatrix},
\end{equation}where $\Gamma$ is the coupling term due to $\varepsilon\Delta$.
Now restrict   $(\theta,E)\in U_0^{(i)}$, where  $$U_0^{(i)}:= \widetilde{\mathcal{N}}_0^{(i)}\times D(E^*,g(\widehat{\delta}_0)).$$ We consider the Schur complement 
$$S_{\Omega_0^{(i)}}(\theta,E):=	E-H_{\Omega_0^{(i)}}(\theta)-\Gamma^{{\rm T}} G_{\widehat{B}^{(i)}_1}^{\theta,E} \Gamma,$$
 By Propositions \ref{0g} and \ref{SL} \textbf{(3)},  we have $\|G_{\widehat{B}^{(i)}_1}^{\theta,E}\|\leq 3\widetilde{C}_vg^{(2)}(\widehat{\delta}_0)^{-M} \leq \widehat{\delta}_0^{-1}$. Since $\|\Gamma\|\leq 2d\varepsilon$, we have 
\begin{equation}\label{s1}
	S_{\Omega_0^{(i)}}(\theta,E)=	\operatorname{Diag}(E-v(\theta+x\cdot\omega))_{x\in \Omega_0^{(i)}}-R(\theta,E),
\end{equation} where 
 $R(\theta,E)$ is an $|\Omega_0^{(i)}|\times |\Omega_0^{(i)} |$ analytic matrix of  $(\theta,E)\in U_0^{(i)}$  satisfying  $$ \|R(\theta,E)\|\leq 2d\varepsilon+4d^2\varepsilon^2\widehat{\delta}_0^{-1}\lesssim\varepsilon.$$ 
Define 
$$s_{\Omega_0^{(i)}}(\theta,E):=\operatorname{det}S_{\Omega_0^{(i)}}(\theta,E).$$ Hence $s_{\Omega_0^{(i)}}(\theta,E)$ is an  analytic function in $U_0^{(i)}$. 
 By \eqref{s1}, we have \begin{equation}\label{1712}
	s_{\Omega_0^{(i)}}(\theta,E)=Q_1^{(i)}(\theta,E)-r(\theta,E), 
\end{equation}where $$\text{ $ Q_1^{(i)}(\theta,E):=	\prod_{x\in \Omega_0^{(i)}}\left(E-v(\theta+x\cdot\omega)\right),$ \quad $|r(\theta,E)|\lesssim\varepsilon$.} $$ 
\begin{rem}\label{dif}
In estimating  $r(\theta,E)$, we use the following fact: for any two $N\times N$ matrices  $S$ and $T$, 
	$$	|\operatorname{det} S-\operatorname{det} T| \leq N\|S-T\|\cdot \left(\max (\|S\|,\|T\|)\right)^{N-1}. $$
This inequality will be used repeatedly below to estimate differences of determinants.
\end{rem}

  We are going to  apply  Rouch\'e  argument 
    to \eqref{1712} to show that $s_{\Omega_0^{(i)}}(\theta,E)$  is  equivalent to  some polynomial. We will do this for both $\theta$ and $E$ variables. We start with the  $\theta$-variable. 
\begin{prop}\label{654}
	 Fix $E=E^*$. Then the $\theta$-variable analytic function  $s_{\Omega_0^{(i)}}(\theta,E^*)$ has $|\mathcal{C}_0^{(i)}|$  many  roots in  ${\rm Db} (\mathcal{N}_0^{(i)}):= D_\T(\Re(\theta_0^{(i)}(E^*)),2\widehat{\delta}_0^2)$,  denoted by $\{\theta^{(i,j)}_1(E^*)\}_{j=1}^{|\mathcal{C}_0^{(i)}|}$,    and has  no other root in the extension $\widetilde{\mathcal{N}}_0^{(i)}$.  Moreover, we have  
	  $$s_{\Omega_0^{(i)}}(\theta,E^*)\overset{\delta_0}{\sim} \prod_{j=1}^{|\mathcal{C}_0^{(i)}|}\left(\theta-\theta^{(i,j)}_1(E^*) \right),\quad \forall \theta\in \widetilde{\mathcal{N}}_0^{(i)}.$$
\end{prop}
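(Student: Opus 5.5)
The plan is to compare $s_{\Omega_0^{(i)}}(\theta,E^*)$ with the explicit polynomial-like function
$$P(\theta):=\prod_{x\in\Omega_0^{(i)}}\ \prod_{\theta_0^{(i')}(E^*)\in\mathcal{C}_0^{(i),x}}\bigl(\theta-(\theta_0^{(i')}(E^*)-x\cdot\omega)\bigr)$$
and apply Rouch\'e's theorem on the disc ${\rm Db}(\mathcal{N}_0^{(i)})$, with a crude lower bound on the annular region of $\widetilde{\mathcal{N}}_0^{(i)}$ to rule out roots there. By \eqref{jish}, $P$ has exactly $|\mathcal{C}_0^{(i)}|$ roots. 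By Proposition \ref{SL} \textbf{(1)} all of them lie in $\mathcal{N}_0^{(i)}=D_\T(\Re\theta_0^{(i)}(E^*),\widehat{\delta}_0^2)$.

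First I would rewrite $Q_1^{(i)}(\theta,E^*)=\prod_{x\in\Omega_0^{(i)}}(E^*-v(\theta+x\cdot\omega))$. Proposition \ref{SL} \textbf{(2)} gives $|Q_1^{(i)}(\theta,E^*)|\overset{g(\bar\delta_0)}{\sim}|P(\theta)|$ in absolute value on $\widetilde{\mathcal{N}}_0^{(i)}$; note $\|\cdot\|_\T$ equals the plain modulus here since everything is localized. More precisely, I would factor each $v(\theta+x\cdot\omega)-E^*$ through Condition \ref{con1} as the product over all roots times a nonvanishing analytic factor. The roots outside $\mathcal{C}_0^{(i),x}$ are at distance $\ge g^{(M^4)}(\bar\delta_0)/2$ by \eqref{fenk} and \eqref{??1}. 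Hence $Q_1^{(i)}(\theta,E^*)=P(\theta)\,\phi(\theta)$, where $\phi$ is analytic and zero-free on $\widetilde{\mathcal{N}}_0^{(i)}$, and $|\phi|$ is bounded above by $C$ and below by roughly $(g^{(M^4)}(\bar\delta_0))^{M^2}/C\gg\delta_0$.

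Next comes the Rouch\'e step on the circle $|\theta-\Re\theta_0^{(i)}(E^*)|=2\widehat{\delta}_0^2$. Since all roots of $P$ lie within $\widehat{\delta}_0^2$ of the center, we get $|P(\theta)|\ge\widehat{\delta}_0^{2M}$ there. Consequently $|Q_1^{(i)}|\gtrsim\widehat{\delta}_0^{2M}(g^{(M^4)}(\bar\delta_0))^{M^2}$. This is much larger than $|r|\lesssim\varepsilon$, because $\varepsilon\le\varepsilon_0$ while $\widehat\delta_0$ and $\bar\delta_0$ are iterated $g$'s of $\delta_0=g^{(2)}(\varepsilon_0)$, so every such quantity is $\gg\varepsilon^{1/2}$. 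Rouch\'e then gives that $s_{\Omega_0^{(i)}}(\cdot,E^*)=Q_1^{(i)}-r$ has exactly $|\mathcal{C}_0^{(i)}|$ roots in ${\rm Db}(\mathcal{N}_0^{(i)})$.

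For the annulus $2\widehat{\delta}_0^2\le|\theta-\Re\theta_0^{(i)}|<\widehat\delta_0$ the same lower bound holds pointwise, and $|P|$ is even larger there, so there are no roots in the rest of $\widetilde{\mathcal{N}}_0^{(i)}$.

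Finally, for the comparison $s\overset{\delta_0}{\sim}\prod_j(\theta-\theta_1^{(i,j)}(E^*))$, define $\psi:=s_{\Omega_0^{(i)}}(\theta,E^*)/\prod_j(\theta-\theta_1^{(i,j)})$, which is analytic and zero-free on $\widetilde{\mathcal{N}}_0^{(i)}$. I need upper and lower bounds on $|\psi|$ within factors $C/\delta_0$. On the boundary region $|\theta-\Re\theta_0^{(i)}|\ge 2\widehat\delta_0^2$, each factor $|\theta-\theta_1^{(i,j)}|$ is comparable to the corresponding factor of $P$ up to a factor $2$. This holds because both root sets lie in the disc of radius $2\widehat\delta_0^2$ and $|\theta-\text{center}|\ge2\widehat\delta_0^2$, though with only a constant margin. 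To be safe I would run the comparison on the circle of radius $4\widehat\delta_0^2$, or on a circle near $\widehat\delta_0/2$, where the ratio is within $[1/3,3]$. There $|\psi|\sim|\phi|(1+O(\varepsilon/|Q_1|))$, which lies between $(g^{(M^4)}(\bar\delta_0))^{M^2}/C$ and $C$. The maximum principle applied to $\psi$ and $1/\psi$ then transfers these bounds to the interior. Since $g^{(M^4)}(\bar\delta_0)^{M^2}\ge\delta_0$ (as $\bar\delta_0\ge g(\delta_0)$ means further $g$-iterates only increase), the result is $\overset{\delta_0}{\sim}$. The bound $|\psi|\le C$ on the outer part of $\widetilde{\mathcal N}_0^{(i)}$ follows from $|s|\lesssim1$ together with $|\theta-\theta_1^{(i,j)}|\gtrsim$ the radius.

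The main obstacle is bookkeeping the hierarchy of scales: verifying $\varepsilon\ll\widehat\delta_0^{2M}g^{(M^4)}(\bar\delta_0)^{M^2}$ and that the lower bound on $\phi$ beats $\delta_0$. I would check these directly from $\widehat\delta_0\in[g^{(M)}(\bar\delta_0),g^{(M^3)}(\bar\delta_0)]$ and $\bar\delta_0\ge g(\delta_0)$.
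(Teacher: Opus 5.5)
Your proposal is correct and follows essentially the paper's proof. The paper likewise factors out a zero-free piece to compare $s_{\Omega_0^{(i)}}(\theta,E^*)$ with $K_1^{(i)}(\theta)$ (your $P$), applies Rouch\'e on the circle of radius $2\widehat{\delta}_0^2$, and then uses the maximum modulus principle. The differences are cosmetic: the paper shows $L_1^{(i)}/\prod_j(\theta-\theta_1^{(i,j)}(E^*))=1+O(\widehat{\delta}_0)$ on $\partial\widetilde{\mathcal N}_0^{(i)}$ and propagates this inward, which also rules out extra roots, whereas you exclude roots in the annulus pointwise, apply the maximum principle to $\psi$ and $1/\psi$ on the circle of radius $4\widehat{\delta}_0^2$, and use a sharper explicit lower bound on the zero-free factor.
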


\begin{proof}
For $\theta\in \widetilde{\mathcal{N}}_0^{(i)}$,	by Proposition \ref{SL} \textbf{(1)}, \textbf{(2)} and \eqref{jish},  we have 
	\begin{align}
Q_1^{(i)}(\theta,E^*)	&\overset{g(\bar{\delta}_0)^M}{\sim} \prod_{x\in \Omega_0^{(i)}} \  \prod_{\theta_0^{(i')}(E^*)\in \mathcal{C}_0^{(i),x}}\|\theta-(\theta_0^{(i')}(E^*)-x\cdot \omega)\|_\T\\
	&\ \ \	\sim \prod_{\nu=1}^{|\mathcal{C}_0^{(i)}|} \left(\theta-\widetilde{\theta}_0^{(i,\nu)}(E^*)\right)=:K_1^{(i)}(\theta), \label{fafaf}
	\end{align}
		where $\{\widetilde{\theta}_0^{(i,\nu)}(E^*)\}_{\nu=1}^{|\mathcal{C}_0^{(i)}|}=\{\theta_0^{(i')}(E^*)-x\cdot \omega:\ x\in \Omega_0^{(i)},  \theta_0^{(i')}(E^*)\in \mathcal{C}_0^{(i),x}\}\subset \mathcal{N}_0^{(i)}$ (considered in $\C/\Z$). Combining  \eqref{fafaf}   with  \eqref{1712} yields 
\begin{equation}\label{1926}
			 s_{\Omega_0^{(i)}}(\theta,E^*)\overset{g(\bar{\delta}_0)^M}{\sim} K_1^{(i)}(\theta)-O(\varepsilon g(\bar{\delta}_0)^{-M}):=L_1^{(i)}(\theta).
\end{equation}
Since 	$ \widetilde{\theta}_0^{(i,\nu)}(E^*)\in \mathcal{N}_0^{(i)}=D_\T(\Re(\theta_0^{(i)}(E^*)),\widehat{\delta}_0^2)$, we have  $|K_1^{(i)}(\theta)|\geq  \widehat{\delta}_0^{2M}\gg \varepsilon g(\bar{\delta}_0)^{-M} $ on the circle $\{\theta\in \C/\Z:\ \|\theta-\Re(\theta_0^{(i)}(E^*))\|_\T=2\widehat{\delta}_0^2\}.$  It follows from  Rouch\'e theorem that 
$L_1^{(i)}(\theta)$ 	 has 	$|\mathcal{C}_0^{(i)}|$ roots in $D_\T(\Re(\theta_0^{(i)}(E^*)),2\widehat{\delta}_0^2)$, denoted by $\{\theta^{(i,j)}_1(E^*)\}_{j=1}^{|\mathcal{C}_0^{(i)}|}$.   Thus $L_1^{(i)}(\theta)/ \prod_{j=1}^{|\mathcal{C}_0^{(i)}|}(\theta-\theta^{(i,j)}_1(E^*) )$ is analytic in $\theta\in \widetilde{\mathcal{N}}_0^{(i)}$. 
Noting  that for $\theta\in \partial  \widetilde{\mathcal{N}}_0^{(i)}= \{\theta\in \C/\Z:\ \|\theta-\Re(\theta_0^{(i)}(E^*))\|_\T=\widehat{\delta}_0\}$, 
$$\left|\frac{\widetilde{\theta}_0^{(i,\nu)}(E^*)-\Re(\theta_0^{(i)}(E^*)) }{\theta-\Re(\theta_0^{(i)}(E^*))}\right|,\  \left|\frac{\theta^{(i,j)}_1(E^*)-\Re(\theta_0^{(i)}(E^*)) }{\theta-\Re(\theta_0^{(i)}(E^*))}\right|\leq2\widehat{\delta}_0,$$  we have 
\begin{align*}
	\frac{L_1^{(i)}(\theta)}{\prod_{j=1}^{|\mathcal{C}_0^{(i)}|}(\theta-\theta^{(i,j)}_1(E^*) )}&=\frac{\prod_{\nu=1}^{|\mathcal{C}_0^{(i)}|}
		 \left(1-\frac{\widetilde{\theta}_0^{(i,\nu)}(E^*)-\Re(\theta_0^{(i)}(E^*)) }{\theta-\Re(\theta_0^{(i)}(E^*))}\right)-O(\varepsilon g(\bar{\delta}_0)^{-M}\widehat{\delta}_0^{-M})}{ \prod_{j=1}^{|\mathcal{C}_0^{(i)}|} \left(1-\frac{\theta^{(i,j)}_1(E^*)-\Re(\theta_0^{(i)}(E^*)) }{\theta-\Re(\theta_0^{(i)}(E^*))}\right) } \\
	&=1+O(\widehat{\delta}_0), \quad \forall \theta\in \partial  \widetilde{\mathcal{N}}_0^{(i)}.
\end{align*}
By the maximum modulus principle, we have 
$$\sup_{\theta\in \widetilde{\mathcal{N}}_0^{(i)}} \left|\frac{L_1^{(i)}(\theta)}{\prod_{j=1}^{|\mathcal{C}_0^{(i)}|}(\theta-\theta^{(i,j)}_1(E^*) )}-1\right|=\sup_{\theta\in \partial  \widetilde{\mathcal{N}}_0^{(i)}}\left|\frac{L_1^{(i)}(\theta)}{\prod_{j=1}^{|\mathcal{C}_0^{(i)}|}(\theta-\theta^{(i,j)}_1(E^*) )}-1\right|\lesssim \widehat{\delta}_0, $$
which implies  $$L_1^{(i)}(\theta)\sim \prod_{j=1}^{|\mathcal{C}_0^{(i)}|}(\theta-\theta^{(i,j)}_1(E^*) ), \quad   \forall \theta\in \widetilde{\mathcal{N}}_0^{(i)}.$$
Recalling \eqref{1926} and $\delta_0\ll g(\bar{\delta}_0)^M$, we finish the proof.
\end{proof}
	Next we do this for the  $E$-variable.

\begin{prop}\label{se}
	For any $\theta^*\in \widetilde{\mathcal{N}}_0^{(i)}$,  the $E$-variable analytic function $s_{\Omega_0^{(i)}}(\theta^*,E)$  has $|\Omega_0^{(i)}|$ many  roots in $ D(E^*,\widehat{\delta}_0^\frac{1}{2})$, denoted by $\{E_1^{(i,j)}(\theta^*)\}_{j=1}^{|\Omega_0^{(i)}|}$,   and has no other root in the extension $ D(E^*,g(\widehat{\delta}_0))$.  By Lemma \ref{Su}, these roots are the eigenvalues of $H_{B_1} (\theta^*)$ in $D(E^*,g(\widehat{\delta}_0)).$  Moreover,  we have 
$$s_{\Omega_0^{(i)}}(\theta^*,E){\sim} \prod_{j=1}^{|\Omega_0^{(i)}|}\left(E-E_1^{(i,j)}(\theta^*) \right),\quad \forall E \in D(E^*,g(\widehat{\delta}_0)) .$$
\end{prop}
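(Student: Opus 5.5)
The plan mirrors the proof of Proposition \ref{654}, now in the $E$-variable with $\theta^*\in\widetilde{\mathcal{N}}_0^{(i)}$ fixed. By \eqref{1712},
$$s_{\Omega_0^{(i)}}(\theta^*,E)=\prod_{x\in\Omega_0^{(i)}}\bigl(E-v(\theta^*+x\cdot\omega)\bigr)-r(\theta^*,E),\qquad |r|\lesssim\varepsilon,$$
and the main term is a monic polynomial in $E$ of degree $|\Omega_0^{(i)}|$. Its roots are $e_x:=v(\theta^*+x\cdot\omega)$, $x\in\Omega_0^{(i)}$. The first step is to locate these roots. By Proposition \ref{SL} \textbf{(2)}, $|e_x-E^*|\lesssim g(\bar\delta_0)^{-1}\prod\|\theta^*-(\theta_0^{(i')}(E^*)-x\cdot\omega)\|_\T$, taken over $\theta_0^{(i')}(E^*)\in\mathcal{C}_0^{(i),x}$. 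Each factor is $\leq 2\widehat\delta_0$, since $\theta^*\in\widetilde{\mathcal{N}}_0^{(i)}$ and the shifted roots lie in $\mathcal{N}_0^{(i)}$ by Proposition \ref{SL} \textbf{(1)}. The set $\mathcal{C}_0^{(i),x}$ is nonempty for $x\in\Omega_0^{(i)}$ by the definition of $\Omega_0^{(i)}$ and Proposition \ref{scale} \textbf{(1)}. Hence $|e_x-E^*|\lesssim g(\bar\delta_0)^{-1}\widehat\delta_0\ll\widehat\delta_0^{1/2}$, using $\widehat\delta_0\le g^{(M^3)}(\bar\delta_0)$, which makes $g(\bar\delta_0)^{-1}\ll\widehat\delta_0^{-1/2}$. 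So all $e_x$ lie in $D(E^*,\widehat\delta_0^{1/2}/2)$.

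Next I apply Rouch\'e on the circle $|E-E^*|=\widehat\delta_0^{1/2}$. There each factor satisfies $|E-e_x|\ge\widehat\delta_0^{1/2}/2$, so the polynomial has modulus $\ge(\widehat\delta_0^{1/2}/2)^{M}\gg\varepsilon$, because $\widehat\delta_0\ge g^{(M)}(\bar\delta_0)$ is far larger than any power of $\varepsilon$. Therefore $s_{\Omega_0^{(i)}}(\theta^*,\cdot)$ has exactly $|\Omega_0^{(i)}|$ roots $E_1^{(i,j)}(\theta^*)$ in $D(E^*,\widehat\delta_0^{1/2})$. To exclude further roots in the larger disc $D(E^*,g(\widehat\delta_0))$, I would use the same lower bound. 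For $\widehat\delta_0^{1/2}\le|E-E^*|<g(\widehat\delta_0)$ we still have $|E-e_x|\ge|E-E^*|/2$, so $|\prod(E-e_x)|\gtrsim\widehat\delta_0^{M/2}\gg|r|$, and no zeros occur there.

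For the equivalence $s\sim\prod_j(E-E_1^{(i,j)}(\theta^*))$ I would repeat the maximum-modulus argument of Proposition \ref{654}. Consider $F(E)=s_{\Omega_0^{(i)}}(\theta^*,E)/\prod_j(E-E_1^{(i,j)}(\theta^*))$, which is analytic and zero-free on $D(E^*,g(\widehat\delta_0))$; I would shrink slightly to a closed disc of radius $g(\widehat\delta_0)/2$, still inside $U_0^{(i)}$, if needed. On the boundary circle of radius $\rho\sim g(\widehat\delta_0)$, both $e_x$ and $E_1^{(i,j)}$ lie within $\widehat\delta_0^{1/2}$ of $E^*$. Writing numerator and denominator as $(E-E^*)^{|\Omega_0^{(i)}|}\prod(1-O(\widehat\delta_0^{1/2}/\rho))$ and bounding the error by $|r|/\rho^{M}\lesssim\varepsilon g(\widehat\delta_0)^{-M}\ll1$ gives $F=1+O(\widehat\delta_0^{1/2}/g(\widehat\delta_0))=1+o(1)$ on the circle. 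The maximum principle then gives $|F-1|\le 1/2$ inside, hence $F\sim1$.

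Finally, the identification of the roots with eigenvalues of $H_{B_1}(\theta^*)$ is quoted from Lemma \ref{Su}. In the region, $E-H_{\widehat B_1^{(i)}}(\theta^*)$ is invertible by Propositions \ref{0g} and \ref{SL} \textbf{(3)}. The Schur complement formula then gives $\det(E-H_{B_1})=\det(E-H_{\widehat B_1^{(i)}})\cdot s_{\Omega_0^{(i)}}$ with the first factor nonvanishing, so the zeros coincide with multiplicity. The main point needing care is the root-location step: it requires the $g$-hierarchy $\varepsilon\ll\widehat\delta_0^{M}$ and $g(\bar\delta_0)^{-1}\widehat\delta_0\ll\widehat\delta_0^{1/2}\ll g(\widehat\delta_0)$, which holds by the choices in Propositions \ref{scale} and \ref{real}.
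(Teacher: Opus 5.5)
Your overall scheme (Rouch\'e in $E$ applied to \eqref{1712}, exclusion of further zeros up to $|E-E^*|<g(\widehat\delta_0)$, maximum modulus for the quotient, and Lemma \ref{Su} to identify the roots as eigenvalues) is the paper's. However, the root-location step fails as written, and the hierarchy you flag at the end as "needing care" is false in its second part.

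Proposition \ref{SL} \textbf{(2)} only gives $|e_x-E^*|\le C g(\bar\delta_0)^{-1}\prod\|\cdot\|_\T\lesssim g(\bar\delta_0)^{-1}\widehat\delta_0$, and the comparison you use to make this small is reversed. The iterates $g^{(k)}(\delta)=e^{-|\ln\delta|^{\alpha^{-k}}}$ increase toward $1$ as $k$ grows. Hence $\widehat\delta_0\ge g^{(M)}(\bar\delta_0)\gg g(\bar\delta_0)^{1/M}$; this is exactly the inequality quoted in the proof of Proposition \ref{SL}. It follows that $g(\bar\delta_0)^{-1}\widehat\delta_0\gg g(\bar\delta_0)^{-1+1/M}\gg 1$.

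The bound $\widehat\delta_0\le g^{(M^3)}(\bar\delta_0)$ that you cite only yields $\widehat\delta_0^{-1/2}\ge e^{\frac12|\ln\bar\delta_0|^{\alpha^{-M^3}}}$. That is tiny compared with $g(\bar\delta_0)^{-1}=e^{|\ln\bar\delta_0|^{1/\alpha}}$, so it cannot give $g(\bar\delta_0)^{-1}\ll\widehat\delta_0^{-1/2}$. Your estimate therefore does not place the $e_x$ anywhere near $E^*$, and the Rouch\'e step on $|E-E^*|=\widehat\delta_0^{1/2}$ has no input. The loss $g(\bar\delta_0)^{-1}$ in Proposition \ref{SL} \textbf{(2)} comes from a crude lower bound on the factors for roots outside $\mathcal{C}_0^{(i),x}$, and it is far too lossy to serve as an upper bound here.

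The repair is what the paper does. By Proposition \ref{SL} \textbf{(1)} and $\theta^*\in\widetilde{\mathcal{N}}_0^{(i)}$, each $x\in\Omega_0^{(i)}$ has some $\theta_0^{(i')}(E^*)\in\mathcal{C}_0^{(i),x}$ with $\|\theta^*+x\cdot\omega-\theta_0^{(i')}(E^*)\|_\T\le 2\widehat\delta_0$. The Cauchy estimate $|\partial_\theta v|\lesssim C_v\eta^{-1}$ then gives $|e_x-E^*|=|v(\theta^*+x\cdot\omega)-v(\theta_0^{(i')}(E^*))|\lesssim\widehat\delta_0$. Equivalently, use the upper half of Condition \ref{con1} and bound the factors for roots outside $\mathcal{C}_0^{(i),x}$ by a constant rather than paying $g(\bar\delta_0)^{-1}$.

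With $e_x\in D(E^*,C\widehat\delta_0)$ and the valid hierarchy $\varepsilon\ll\widehat\delta_0^{M}$, $\widehat\delta_0\ll\widehat\delta_0^{1/2}\ll g(\widehat\delta_0)$, the rest of your argument goes through unchanged.
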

\begin{proof}
 By Proposition \ref{SL} \textbf{(1)}, 	for $\theta^*\in \widetilde{\mathcal{N}}_0^{(i)}$ and $x \in \Omega_0^{(i)}$,  we have $\|\theta^*+x\cdot\omega- \theta_0^{(i')}(E^*)\|_\T\leq2\widehat{\delta}_0 $ for some  $\theta_0^{(i')}(E^*)\in \mathcal{C}_0^{(i),x}$. Since $|\partial_\theta v(\theta)|\lesssim C_v\eta^{-1} $, we have  $$|v(\theta^*+x\cdot\omega)-E^*|= |v(\theta^*+x\cdot\omega)-v(\theta_0^{(i')}(E^*))|   \lesssim \|\theta^*+x\cdot\omega- \theta_0^{(i')}(E^*)\|_\T \lesssim  \widehat{\delta}_0.$$
Using the fact that  $\widehat{\delta}_0\ll\widehat{\delta}_0^\frac{1}{2}\ll   g(\widehat{\delta}_0)$,	the  conclusion follows from the same argument as in the proof of Proposition \ref{654} by applying Rouch\'e  argument to \eqref{1712} for the $E$-variable. 
\end{proof} 
	 \begin{rem}\label{kuoda}  
	Using  the fact that  $\widehat{\delta}_0\ll g(\widehat{\delta}_0)$, we can replace $ \widetilde{\mathcal{N}}_0^{(i)}$ by the  	quadrupling   ${\rm Qdp}(\widetilde{\mathcal{N}}_0^{(i)}):=D_\T(\Re(\theta_0^{(i)}(E^*)),4\widehat{\delta}_0)$ in the proof of Propositions \ref{SL}, \ref{654}, \ref{se} and establish   the same  conclusions in  $\theta\in {\rm Qdp}(\widetilde{\mathcal{N}}_0^{(i)})$.
\end{rem} 

	\subsubsection{Green's function estimates for $1$-nonresonant  sets}\label{G1}
Our goal in this section is to use the roots $\{\theta_{1}^{(i,j)}(E^*)\}_{j=1}^{|\mathcal{C}_0^{(i)}|}$ and the Rellich functions $\{E_1^{(i,j)}(\theta)\}_{j=1}^{|\Omega_0^{(i)}|}$ from the previous section to control Green's functions, both for the  fixed energy $E^*$ and for nearby energies $E$. 
	
	\begin{lem}\label{1833}
		For $(\theta,E)\in U_0^{(i)}$, we have $$\|G_{B_1}^{\theta,E}\|\leq \delta_0^{-1}|s_{\Omega_0^{(i)}}(\theta,E)|^{-1}.$$
	\end{lem}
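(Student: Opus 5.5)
The plan is to use the standard block inversion formula for the decomposition \eqref{juzhen}, with the Schur complement taken with respect to the nonresonant part $\widehat{B}^{(i)}_1$. For $(\theta,E)\in U_0^{(i)}$, Propositions \ref{0g} and \ref{SL} \textbf{(3)} give that $E-H_{\widehat{B}^{(i)}_1}(\theta)$ is invertible with $\|G_{\widehat{B}^{(i)}_1}^{\theta,E}\|\leq \widehat{\delta}_0^{-1}$. Here one should check that $D(E^*,g(\widehat{\delta}_0))$ sits inside the neighborhood $D(E^*,h(g^{(2)}(\widehat{\delta}_0)))$ allowed by Proposition \ref{0g}; this holds because $h(g^{(2)}(\delta))= g(\delta)$ up to the convention $h\circ g=\mathrm{id}$, which is presumably how the paper set things up, and it is otherwise handled the same way as in the proof of Proposition \ref{SL}. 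Then $E-H_{B_1}(\theta)$ is invertible exactly when $S_{\Omega_0^{(i)}}(\theta,E)$ is. In that case
\[
G_{B_1}^{\theta,E}=\begin{pmatrix} S^{-1} & -S^{-1}\Gamma^{\rm T}G \\ -G\Gamma S^{-1} & G+G\Gamma S^{-1}\Gamma^{\rm T} G\end{pmatrix},
\]
where $S=S_{\Omega_0^{(i)}}(\theta,E)$ and $G=G_{\widehat{B}^{(i)}_1}^{\theta,E}$ (this is presumably Lemma \ref{Su} in the appendix). From this formula,
\[
\|G_{B_1}^{\theta,E}\|\leq \|G\|+(1+\|\Gamma\|\|G\|)^2\|S^{-1}\|\lesssim \widehat{\delta}_0^{-1}+\|S^{-1}\|,
\]
using $\|\Gamma\|\leq 2d\varepsilon$ and $\varepsilon\widehat{\delta}_0^{-1}\ll1$.

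Next, $\|S^{-1}\|$ is bounded by Cramer's rule: $S^{-1}=\operatorname{adj}(S)/s_{\Omega_0^{(i)}}(\theta,E)$. The entries of $S$ are bounded by $|E|+C_v+C\varepsilon\lesssim 1$, since by \eqref{s1} $S$ is a diagonal matrix with entries $E-v(\theta+x\cdot\omega)$ plus $R$ with $\|R\|\lesssim\varepsilon$, and $v$ is bounded by \eqref{tiao1} on the relevant strip. The matrix has size $|\Omega_0^{(i)}|\leq M$, so $\|\operatorname{adj}(S)\|\leq C$ with $C$ depending only on $M,C_v$. Hence $\|S^{-1}\|\leq C|s_{\Omega_0^{(i)}}(\theta,E)|^{-1}$.

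Finally, the two terms are combined. Since $|s_{\Omega_0^{(i)}}|\lesssim 1$, we have $\widehat{\delta}_0^{-1}\leq C\widehat{\delta}_0^{-1}|s_{\Omega_0^{(i)}}|^{-1}$, so $\|G_{B_1}^{\theta,E}\|\leq C\widehat{\delta}_0^{-1}|s_{\Omega_0^{(i)}}(\theta,E)|^{-1}$. The bound $C\widehat{\delta}_0^{-1}\leq\delta_0^{-1}$ holds because $\widehat{\delta}_0\geq g^{(M)}(\bar{\delta}_0)\gg\delta_0$. If $s_{\Omega_0^{(i)}}(\theta,E)=0$, the right-hand side is $+\infty$ and there is nothing to prove.

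The main step requiring care is the first one: confirming that the Green's function of $\widehat{B}^{(i)}_1$ is controlled on the whole of $U_0^{(i)}$. This needs the energy window $g(\widehat{\delta}_0)$ to be compatible with the hypotheses of Proposition \ref{0g} at resonance parameter $g^{(2)}(\widehat{\delta}_0)$. Everything after that is bookkeeping of constants.
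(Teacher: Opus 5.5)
Your proof is correct and follows the paper's argument. The ingredients are the same: the Schur complement bound of Lemma \ref{Su}, which you rederive from the explicit block inverse (giving the slightly sharper $\|G\|+(1+\|\Gamma\|\|G\|)^2\|S^{-1}\|$ in place of the paper's $\lesssim\|S^{-1}\|\,\|G\|^2$); the bound $\|G_{\widehat{B}^{(i)}_1}^{\theta,E}\|\le\widehat{\delta}_0^{-1}$ from Propositions \ref{0g} and \ref{SL}; and the adjugate bound $\|S^{-1}\|\le C|s_{\Omega_0^{(i)}}(\theta,E)|^{-1}$.

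The domain check you flag needs no hedging. From the definitions of $h$ and $g$ one has $h\circ g=\mathrm{id}$ exactly, so $h(g^{(2)}(\widehat{\delta}_0))=g(\widehat{\delta}_0)$.
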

	\begin{proof}
		By Lemma \ref{Su},  we have $\|G_{B_1}^{\theta,E}\|\lesssim \|S_{\Omega_0^{(i)}}(\theta,E)^{-1}\|\cdot   \|G_{\widehat{B}^{(i)}_1}^{\theta,E}\|^2$. 
		Now the conclusion follows from  $\|G_{\widehat{B}^{(i)}_1}^{\theta,E}\|\lesssim \widehat{\delta}_0^{-M}\ll\delta_0^{-1/2}$  and $\|S_{\Omega_0^{(i)}}(\theta,E)^{-1}\|=|s_{\Omega_0^{(i)}}(\theta,E)|^{-1}\cdot \|S_{\Omega_0^{(i)}}(\theta,E)^{\#}\|\leq |s_{\Omega_0^{(i)}}(\theta,E)|^{-1} \cdot MC^M$, where we denote by $S^{\#}$ the adjugate  matrix of $S$. 
	\end{proof}
Combining Propositions \ref{654}, \ref{se} and Lemma \ref{1833} yields  
	\begin{prop}\label{aha}  \begin{itemize}
			\item [\textbf{(1)}]
		For $\theta\in \widetilde{\mathcal{N}}_0^{(i)}$, we have 		$$  \|G_{B_1}^{\theta,E^*}\|\leq \delta_0^{-3}\prod_{j=1}^{|\mathcal{C}_0^{(i)}|}\|\theta-\theta^{(i,j)}_1(E^*)\|_\T^{-1}. $$
			\item [\textbf{(2)}]For $(\theta,E)\in U_0^{(i)}$, we have  
$$  \|G_{B_1}^{\theta,E}\|\leq \delta_0^{-3}\prod_{j=1}^{|\Omega_0^{(i)}|}|E-E_1^{(i,j)}(\theta)|^{-1}. $$\end{itemize}
	\end{prop}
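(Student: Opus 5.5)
The plan is to combine Lemma \ref{1833} with the lower bounds on $|s_{\Omega_0^{(i)}}|$ coming from the two Rouch\'e-type factorizations, Propositions \ref{654} and \ref{se}. In both parts the starting point is the estimate $\|G_{B_1}^{\theta,E}\|\leq \delta_0^{-1}|s_{\Omega_0^{(i)}}(\theta,E)|^{-1}$, valid on $U_0^{(i)}=\widetilde{\mathcal{N}}_0^{(i)}\times D(E^*,g(\widehat{\delta}_0))$. So it suffices to bound $|s_{\Omega_0^{(i)}}|$ from below by $\delta_0^{2}$ times the relevant product, in $\theta$ for part (1) and in $E$ for part (2).

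For \textbf{(1)}, fix $E=E^*$ and let $\theta\in\widetilde{\mathcal{N}}_0^{(i)}$; note that $(\theta,E^*)\in U_0^{(i)}$. Proposition \ref{654} gives
$$s_{\Omega_0^{(i)}}(\theta,E^*)\overset{\delta_0}{\sim}\prod_{j}(\theta-\theta_1^{(i,j)}(E^*)).$$
By the definition of $\overset{\delta}{\sim}$, this yields
$$|s_{\Omega_0^{(i)}}(\theta,E^*)|\geq (\delta_0/C)\prod_j|\theta-\theta_1^{(i,j)}(E^*)|.$$
Since $\theta$ and the roots all lie in a disc of radius $\widehat{\delta}_0\ll 1$ around $\Re(\theta_0^{(i)}(E^*))$, the lifted difference $|\theta-\theta_1^{(i,j)}(E^*)|$ coincides with $\|\theta-\theta_1^{(i,j)}(E^*)\|_\T$. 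Plugging this into Lemma \ref{1833} gives the bound $C\delta_0^{-2}\prod_j\|\cdot\|_\T^{-1}$. The extra factor $\delta_0^{-1}\gg C$ absorbs the constant and produces the stated $\delta_0^{-3}$.

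For \textbf{(2)}, take $(\theta,E)\in U_0^{(i)}$. Proposition \ref{se}, applied with $\theta^*=\theta$, gives
$$s_{\Omega_0^{(i)}}(\theta,E)\sim\prod_{j=1}^{|\Omega_0^{(i)}|}(E-E_1^{(i,j)}(\theta))\quad\text{for } E\in D(E^*,g(\widehat{\delta}_0)),$$
with a constant-only comparison. Hence $|s|^{-1}\leq C\prod_j|E-E_1^{(i,j)}(\theta)|^{-1}$, and Lemma \ref{1833} yields a bound $C\delta_0^{-1}\prod_j|\cdot|^{-1}$, which is stronger than $\delta_0^{-3}\prod_j|\cdot|^{-1}$.

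The only point needing care is the uniformity of the implicit constants. In Proposition \ref{654} the comparison constant is $C/\delta_0$ and in Proposition \ref{se} it is $C$; in both the constant $C$ depends only on $M,\widetilde{C}_v,d$ and similar data, and not on $\theta$ or $E$. This holds because the underlying maximum modulus argument controls the quotient uniformly on the whole neighborhood. Once that is checked, the absorption $C\leq\delta_0^{-1}$ is immediate from the smallness of $\delta_0=g^{(2)}(\varepsilon_0)$. No genuine obstacle is expected: the statement is a direct corollary of the preceding propositions.
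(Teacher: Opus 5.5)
Your proposal is correct and follows the paper's own proof: the paper obtains Proposition \ref{aha} directly by combining Lemma \ref{1833} with the lower bounds on $|s_{\Omega_0^{(i)}}|$ supplied by Propositions \ref{654} (in $\theta$) and \ref{se} (in $E$). As in your argument, the comparison constants $C/\delta_0$ and $C$ are absorbed into the stated factor $\delta_0^{-3}$.
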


	\begin{defn} We say  a finite set $\Lambda\subset\Z^d$ is 
		$(\theta^*,E^*)$-$1$-regular if $$x\in S_0(\theta^*,E^*,g(\delta_0)) \cap \Lambda \Rightarrow (Q_{5l_1}+x)\subset \Lambda.$$
	 Let $\delta_1:=e^{-l_1^{2/3}}$. 	Given $\theta^*\in \T_{\widehat{\delta}_0}$, $\delta\in [h(\delta_1),\delta_0]$ and $E^*$,  we define     	
	 \begin{align*}
	 		S_1(\theta^*,E^*,\delta):=&\{x\in \Z^d:\ \exists i\in [1,\kappa_0],   \text{  {\rm s.t.}, }\theta^*+x\cdot \omega \in \widetilde{\mathcal{N}}_0^{(i)}
	 		 \\\text{ and } &\min_{1\leq j\leq |\mathcal{C}_0^{(i)}|}\|\theta^*+x\cdot \omega-\theta_1^{(i,j)}(E^*)\|_\T<\delta\},
	 \end{align*}
	and for  $E\in D(E^*,h(\delta_0))$, we define 
	 \begin{align*}
	\widetilde{S}_1(\theta^*,E,\delta):=&\{x\in \Z^d:\ \exists i\in [1,\kappa_0],  \text{  {\rm s.t.}, }\theta^*+x\cdot \omega \in \widetilde{\mathcal{N}}_0^{(i)}
		\\ \text{ and } &\min_{1\leq j\leq |\Omega_0^{(i)}|}|E_1^{(i,j)}(\theta^*+x\cdot\omega )-E|<\delta\}.
	\end{align*}
	  We say  $\Lambda$ is  $\theta$-type-$(\theta^*,E^*,\delta)$-$1$-nonresonant   if $\Lambda\cap S_1(\theta^*,E^*,\delta)=\emptyset$. 
	For $E\in D(E^*,h(\delta_0))$, we say  $\Lambda$  is $E$-type-$(\theta^*,E,\delta)$-$1$-nonresonant   if $\Lambda\cap\widetilde{S}_1(\theta^*,E,\delta)=\emptyset$.
	\end{defn}
	\begin{prop}\label{1go}
	Let $\theta^*\in \T_{\widehat{\delta}_0}$, $\delta\in [h(\delta_1),\delta_0]$  and   $\Lambda$ be   $(\theta^*,E^*)$-$1$-regular. Then we have 
	\begin{itemize}
		\item[\textbf{(1)}] If  $\Lambda$ is  $\theta$-type-$(\theta^*,E^*,\delta)$-$1$-nonresonant,  then  for any $(\theta, E)\in D_\T(\theta^*,h(\delta))\times D(E^*,h(\delta))$, we have 
	\begin{align}
		\|G_\Lambda^{\theta,E}\|&\leq \delta_0^{-5}\delta^{-M},\label{12}\\
		|G_\Lambda^{\theta,E}(x,y)|&\leq e^{-\gamma_1\|x-y\|_1}, \quad \forall \|x-y\|_1\geq l_1^\frac{2}{\alpha+1}.\label{13}
	\end{align}	
	where  $\gamma_1:=(1-l_1^{-(\alpha-1)^8})\gamma_0$. 
	\item[\textbf{(2)}]  Let $E\in D(E^*,h(\delta_0))$. If  $\Lambda$ is  $E$-type-$(\theta^*,E,\delta)$-$1$-nonresonant, then  \eqref{12} and  \eqref{13} hold for $\theta=\theta^*$. 
		\end{itemize}
\end{prop}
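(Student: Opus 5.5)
The plan is a standard resolvent-identity / multi-scale patching argument, with the $1$-regularity of $\Lambda$ used to cover every $0$-resonant site by a translated block $B_1(y)=Q_{l_1}+y$ whose Green's function is controlled by Proposition \ref{aha}.

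\textbf{Step 1: partition of $\Lambda$.} Fix $(\theta,E)\in D_\T(\theta^*,h(\delta))\times D(E^*,h(\delta))$. Let $P:=S_0(\theta^*,E^*,g(\delta_0))\cap\Lambda$ be the set of $0$-resonant sites. By Proposition \ref{0g}, applied with resonance parameter $g(\delta_0)$, on $\Lambda\setminus P$ single sites have resolvent $\lesssim g(\delta_0)^{-M}$ with decay rate $\gamma_0$. For each $x\in P$ some root $\theta_0^{(i')}(E^*)$ is $g(\delta_0)$-close to $\theta^*+x\cdot\omega$.

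Using Proposition \ref{scale} \textbf{(2)} and the equivalence classes, I would produce a center $y$ with $\|y-x\|_1\le l_1^{1/\alpha}$ such that:
\begin{itemize}
\item $\theta^*+y\cdot\omega\in\widetilde{\mathcal N}_0^{(i)}$ for some $i\le\kappa_0$ (after shifting by the element of $\Omega_0^{(i)}$ matching the class);
\item $x\in y+\Omega_0^{(i)}\subset{\rm Core}(B_1(y))$.
\end{itemize}
Classes with property \textbf{(2)} of Proposition \ref{real} do not occur, since $\theta^*\in\T_{\widehat\delta_0}$ while those roots sit at height $\ge g(\widehat\delta_0)$. Regularity gives $Q_{5l_1}+x\subset\Lambda$, so $B_1(y)\subset\Lambda$ with a wide collar.

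\textbf{Step 2: block estimates.} The nonresonance hypothesis states that $\theta^*+y\cdot\omega$ is $\delta$-far from all $\theta_1^{(i,j)}(E^*)$. This persists for $\theta$ within $h(\delta)$ of $\theta^*$. For case \textbf{(1)}, Proposition \ref{aha} \textbf{(1)} gives $\|G_{B_1(y)}^{\theta,E^*}\|\le\delta_0^{-3}\delta^{-M}$ at the fixed energy $E^*$. To move to $E$ with $|E-E^*|<h(\delta)$, I would perturb by a Neumann series, which is valid since $h(\delta)\delta_0^{-3}\delta^{-M}\ll1$. For case \textbf{(2)}, Proposition \ref{aha} \textbf{(2)} applies directly at $\theta=\theta^*$, because $E\in D(E^*,h(\delta_0))\subset D(E^*,g(\widehat\delta_0))$; it gives the bound $\delta_0^{-3}\delta^{-M}$.

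Off-diagonal decay inside $B_1(y)$ for $\|u-w\|_1\gtrsim l_1^{1/\alpha}$ comes from the Schur structure. The complement $\widehat B_1^{(i)}$ is $0$-nonresonant at level $g^{(2)}(\widehat\delta_0)$ by Proposition \ref{SL} \textbf{(3)}, so it has rate-$\gamma_0$ decay. Expanding $G_{B_1(y)}$ through the at most $M$ sites of $\Omega_0^{(i)}$, each crossing costs a factor $\delta_0^{-3}\delta^{-M}\le e^{C l_1^{2/3}}$ but only $O(l_1^{1/\alpha})$ distance. Hence $|G_{B_1(y)}(u,w)|\le e^{-\gamma_0(1-o(1))\|u-w\|_1}$ once $\|u-w\|_1\ge l_1^{2/(\alpha+1)}$, and $2/(\alpha+1)>2/3$ absorbs the $e^{Cl_1^{2/3}}$ loss.

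\textbf{Step 3: patching.} I would combine the block bounds with the nonresonant single-site bounds via the standard iterated resolvent identity. Every $u\in\Lambda$ lies either outside $P$, with a ball of radius $\sim l_1^{1/\alpha}$ on which Proposition \ref{0g} controls the resolvent, or deep inside some $B_1(y)\subset\Lambda$ at distance $\ge l_1$ from its boundary. Iterating $G_\Lambda(u,w)=G_{W}(u,w)+\sum G_W(u,u')\Gamma G_\Lambda(u'',w)$ yields decay with rate $\gamma_0(1-O(l_1^{1/\alpha-1}))\ge\gamma_1$. The norm bound \eqref{12} then follows from Schur's test, together with an a priori polynomial bound on $\|G_\Lambda\|$ obtained first by the same expansion (or by a self-consistent inequality $\|G_\Lambda\|\le \delta_0^{-4}\delta^{-M}+\frac12\|G_\Lambda\|$).

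\textbf{Main obstacle.} The hardest point is Step 1 combined with the off-diagonal decay inside $B_1$. The block centering must guarantee that the shifted phase lies in some $\widetilde{\mathcal N}_0^{(i)}$ with $i\le\kappa_0$ and that $x$ lies in the core, uniformly in $\theta$ near $\theta^*$. The block decay must also lose only $o(\gamma_0)$ despite the large factor $\delta^{-M}$. I would handle the first issue with the separation \eqref{??1} at scale $l_1^\alpha\gg 5l_1$, and the second by choosing the cutoff $l_1^{2/(\alpha+1)}$ exactly so that $\ln(\delta_0^{-5}\delta^{-M})\ll \gamma_0 l_1^{2/(\alpha+1)}(\alpha-1)^8$-type error terms.
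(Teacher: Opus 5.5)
Your proposal is correct and is essentially the paper's proof: you cover each $0$-resonant site by a block $B_1(y)\subset\Lambda$ centered via the equivalence classes (Claim~\ref{1goc}), bound $G_{B_1(y)}$ by Proposition~\ref{aha} plus a Neumann perturbation, get decay inside the block from the $0$-nonresonance of $B_1(y)\setminus(\Omega_0^{(i)}+y)$, and patch with the standard resolvent identity. There are two harmless slips: since $\delta\ge h(\delta_1)=e^{-l_1^{2\alpha/3}}$, the block loss is $e^{Cl_1^{2\alpha/3}}$ rather than $e^{Cl_1^{2/3}}$ (still absorbed, because $\frac{2}{\alpha+1}$ and $\frac{3}{2\alpha+1}$ exceed $\frac{2\alpha}{3}$ for $\alpha$ close to $1$); and in Step 3 the dichotomy should be on whether $\operatorname{dist}_1(u,P)\ge l_1^{1/\alpha}$ rather than on $u\notin P$, since a nonresonant $u$ near $P$ has no $P$-free ball but does sit deep inside a block.
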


\begin{proof}If $ S_0(\theta^*,E^*,\delta_0)\cap \Lambda=\emptyset$, the conclusions follow from Proposition \ref{0g}. Now assume $ S_0(\theta^*,E^*,\delta_0)\cap \Lambda\neq \emptyset$. 
First,  we  prove the following  Claim \ref{1goc}:  
 \begin{claim}\label{1goc}
 	 For  every $x\in S_0(\theta^*,E^*,\delta_0)\cap \Lambda$, there exists a block $B_1(y)(=B_1+y)$  satisfying $x\in  (Q_{l_1^{1/\alpha}}+y)\subset B_1(y) \subset \Lambda$ and    \begin{align}
 		\|G_{B_1(y)}^{\theta,E}\|&\leq 2\delta_0^{-3}\delta^{-M},\label{fafa}\\
 		|G_{B_1(y)}^{\theta,E}(x',y')|&\leq e^{-\gamma_1'\|x'-y'\|_1}, \quad \forall \|x'-y'\|_1\geq l_1^\frac{3}{2\alpha+1}, \label{fafa1}
 	\end{align}	
 	where  $\gamma_1':=(1-l_1^{-2(\alpha-1)^8})\gamma_0$.
 \end{claim}
	\begin{proof}[Proof of Claim]
	
	Since $x\in S_0(\theta^*,E^*,\delta_0)$, there exists some $1\leq i\leq m_0$ such that \begin{equation}\label{1921}
		\|\theta^*+x\cdot\omega-\theta_0^{(i)}(E^*)\|_\T<\delta_0.
	\end{equation} By  $\theta^* \in \T_{\widehat{\delta}_0}$ and Proposition \ref{real}, 
	$\theta_0^{(i)}(E^*)$ must belong to $\mathcal{C}_0^{(i')}$ for some $1\leq i'\leq \kappa_0$.  Thus by \eqref{class}, there exists $z\in Q_{l_1^{1/\alpha}}$ such that \begin{equation}\label{21}
		\|\theta_0^{(i)}(E^*)+z\cdot\omega-\theta_0^{(i')}(E^*)\|_\T\leq \bar{\delta}_0. 
	\end{equation}
Let $y=x+z$. By \eqref{1921} and \eqref{21}, we have $\|\theta^*+y\cdot\omega-\theta_0^{(i')}(E^*)\|_\T<2\bar{\delta}_0.$ Thus $$\theta^*+y\cdot\omega\in \mathcal{N}_0^{(i')}.$$  Since $\Lambda$ is  $(\theta^*,E^*)$-$1$-regular and $x\in S_0(\theta^*,E^*,\delta_0)\cap \Lambda$, we have $B_1(y)\subset   (Q_{5l_1}+x)\subset \Lambda.$
\begin{itemize}
	\item If $\Lambda\cap S_1(\theta^*,E^*,\delta)=\emptyset$, we have $\min_{1\leq j\leq |\mathcal{C}_0^{(i')}|}\|\theta^*+y\cdot \omega-\theta_1^{(i',j)}(E^*)\|_\T\geq \delta$. Thus by Proposition \ref{aha} \textbf{(1)}, we have  
	$$\|G_{B_1(y)}^{\theta^*,E^*}\|=\|G_{B_1}^{\theta^*+y\cdot\omega,E^*}\|\leq \delta_0^{-3}\left(\min_{1\leq j\leq |\mathcal{C}_0^{(i')}|}\|\theta^*+y\cdot \omega-\theta_1^{(i',j)}(E^*)\|_\T\right)^{-M}\leq \delta_0^{-3}\delta^{-M}. $$ To prove \eqref{fafa}, we just note that   $$\|(H_{B_1(y)}(\theta)-E)-(H_{B_1(y)}(\theta^*)-E^*)\|\lesssim h(\delta)\ll \delta^{M+3}$$ and for two matrices $A,B$ such that $\|A-B\|<\|B^{-1}\|^{-1}$,  
	$$\|A^{-1}\|=\|(I+B^{-1}(A-B))^{-1}B^{-1}\|\leq (1-\|B^{-1}\|\cdot\|A-B\|)^{-1}\|B^{-1}\|.$$
	\item  If $\Lambda\cap \widetilde{S}_1(\theta^*,E,\delta)=\emptyset$, we have 
$\min_{1\leq j\leq |\Omega_0^{(i')}|}|E_1^{(i',j)}(\theta^*+y\cdot\omega )-E|\geq \delta$. 
Thus by Proposition \ref{aha} \textbf{(2)}, we have  
$$\|G_{B_1(y)}^{\theta^*,E}\|=\|G_{B_1}^{\theta^*+y\cdot\omega,E}\|\leq \delta_0^{-3}\left(\min_{1\leq j\leq |\Omega_0^{(i')}|}|E-E_1^{(i',j)}(\theta^*+y\cdot\omega)|\right)^{-M}\leq \delta_0^{-3}\delta^{-M}. $$
\end{itemize}
 Thus \eqref{fafa} is proved.  Now \eqref{fafa1}  follows from a standard resolvent identity argument,  using \eqref{fafa}, $\ln\delta^{-1}\leq  \ln h(\delta_1)^{-1}=(l_1^{2/3})^\alpha\ll l_1$ and the fact that the set $B_1(y)\setminus (\Omega_0^{(i')}+y)$  is $\theta$-type-$(\theta^*,E^*,g^{(2)}(\widehat{\delta}_0))$-$0$-nonresonant. We refer   to \cite[Lemma  2.12]{CSZ25}   for   details. 
\end{proof} 

With Claim \ref{1goc}, Proposition \ref{1go} follows from a standard resolvent identity argument, using the fact  that $\ln\delta^{-1}\ll l_1$. We refer   to \cite[Proposition 2.13]{CSZ25}   for  details. 
\end{proof}

\subsubsection{Transversality }\label{trans}
 To prove Anderson localization, we need to eliminate infinitely often double resonance phases. The double resonance phases  at the initial step can be defined by 
 $${\rm DR}_0:=\bigcup_E\left \{\theta\in \T :\ \exists   x\neq y,\  {\rm s.t.}, \ x,y\in \widetilde{S}_0(\theta,E,\delta_0)\bigcap Q_{|\ln\delta_0| }\right \}. $$
 Note that $${\rm DR}_0\subset \left \{\theta\in \T :\ \exists   x\neq y \in Q_{|\ln\delta_0|},\  {\rm s.t.},\ |v(\theta+x\cdot\omega)-v(\theta+y\cdot\omega)|\lesssim\delta_0\right \}.$$ Then one can prove  $\operatorname{Leb}({\rm DR}_0)\leq \delta_0^{1/C}$ by the transversality of $v$ and $\omega\in 	{\rm DC}_{\tau, \gamma}$, where $\operatorname{Leb}(\cdot)$ is the Lebesgue measure of a set.  
At the first inductive  step, we will partition the energy interval into several smaller subintervals and control the double resonance  phases relevant to each subinterval. 
The $E^*$-relevant double resonance phases  at the first step can be defined by 
$${\rm DR}_1(E^*):=\bigcup_{E\in D(E^*,h(\delta_0))}\left \{\theta\in \T :\ \exists   x\neq y, \  {\rm s.t.}, \ x,y\in \widetilde{S}_1(\theta,E,\delta_1)\bigcap Q_{l_1^{30} }\right \}. $$
The goal of this section is to prove Proposition \ref{zhizhu}, which establishes quantitative transversality estimates for the resultant of  Rellich functions. This property is a crucial ingredient  that allows us to control $\operatorname{Leb}({\rm DR}_1(E^*))$ (see Lemma \ref{transha} below).

We view the $E$-polynomial  \begin{equation}\label{a}
P_1^{(i)}(\theta,E):=	\prod_{j=1}^{|\Omega_0^{(i)}|}\left(E-E_1^{(i,j)}(\theta) \right)=:\sum_{k=0}^{|\Omega_0^{(i)}|} a_{k}^{(i)}(\theta)E^{|\Omega_0^{(i)}|-k}
\end{equation}
as a perturbation of 
\begin{equation}\label{b}
Q_1^{(i)}(\theta,E)=	\prod_{x\in \Omega_0^{(i)}}\left(E-v(\theta+x\cdot\omega)\right) =:\sum_{k=0}^{|\Omega_0^{(i)}|} b_{k}^{(i)}(\theta)E^{|\Omega_0^{(i)}|-k}.
\end{equation}
  First,  we prove  the  following approximation Proposition \ref{coe}, which   will be used to handle transversality estimates  later. 
\begin{prop}\label{coe}
	The polynomial coefficients $a_k^{(i)}(\theta),b_k^{(i)}(\theta)$ are analytic functions in  
	$\theta\in {\rm Qdp} (\widetilde{\mathcal{N}}_0^{(i)})$. Moreover, for 
	$\theta\in {\rm Db}(\widetilde{\mathcal{N}}_0^{(i)}):=D_\T(\Re(\theta_0^{(i)}(E^*)),2\widehat{\delta}_0)$, we have 
		$$|\partial_\theta^l(a_k^{(i)}(\theta)-b_k^{(i)}(\theta))|\leq l! \widehat{\delta}_0^{-l}\varepsilon^\frac{1}{2}, \quad \forall  \theta\in {\rm Db}(\widetilde{\mathcal{N}}_0^{(i)}),  0\leq k\leq |\Omega_0^{(i)}|, l\geq 0. $$ 
\end{prop}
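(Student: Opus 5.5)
The plan is to express the coefficients via contour integrals in $E$ and compare the two polynomials on a fixed circle, then pass to $\theta$-derivatives by Cauchy estimates. Analyticity of $b_k^{(i)}$ is immediate, since the $b_k^{(i)}$ are elementary symmetric functions (up to sign) of $v(\theta+x\cdot\omega)$, $x\in\Omega_0^{(i)}$, and $v$ is analytic on $\T_{2\eta}\supset{\rm Qdp}(\widetilde{\mathcal N}_0^{(i)})$. For $a_k^{(i)}$, the $E_1^{(i,j)}(\theta)$ are individually only defined as roots and need not be analytic, but by Remark \ref{kuoda} the function $s_{\Omega_0^{(i)}}(\theta,E)$ is analytic on ${\rm Qdp}(\widetilde{\mathcal N}_0^{(i)})\times D(E^*,g(\widehat\delta_0))$ and, for each $\theta$ there, has exactly $|\Omega_0^{(i)}|$ zeros in $D(E^*,\widehat\delta_0^{1/2})$ and none elsewhere in $D(E^*,g(\widehat\delta_0))$. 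Hence the power sums
$$p_m(\theta)=\sum_j E_1^{(i,j)}(\theta)^m=\frac{1}{2\pi i}\oint_{|E-E^*|=r}E^m\frac{\partial_E s_{\Omega_0^{(i)}}(\theta,E)}{s_{\Omega_0^{(i)}}(\theta,E)}\,dE,$$
with $r$ between $\widehat\delta_0^{1/2}$ and $g(\widehat\delta_0)$ (say $r=\frac12 g(\widehat\delta_0)$), are analytic in $\theta$. The $a_k^{(i)}$ are then analytic by Newton's identities, since they are polynomials in $p_1,\dots,p_k$.

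For the estimate, I will apply the same contour representation to $Q_1^{(i)}$: by the argument of Proposition \ref{se} (with $r\equiv0$), its roots $v(\theta+x\cdot\omega)$ lie in $D(E^*,C\widehat\delta_0)$. On the circle $|E-E^*|=r$ both $|Q_1^{(i)}|$ and $|s_{\Omega_0^{(i)}}|$ are $\gtrsim r^M$, and by \eqref{1712} $|s_{\Omega_0^{(i)}}-Q_1^{(i)}|\lesssim\varepsilon$. Cauchy estimates in $E$, on a slightly larger annulus still inside $D(E^*,g(\widehat\delta_0))$, give $|\partial_E(s-Q)|\lesssim\varepsilon r^{-1}$. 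Therefore the difference of the logarithmic derivatives is $O(\varepsilon r^{-2M-1})$ on the circle, and the power sums differ by $O(\varepsilon r^{-C})$ for $\theta\in{\rm Qdp}(\widetilde{\mathcal N}_0^{(i)})$. Newton's identities, together with the bound $|E|\lesssim 1$ on all roots, then give $|a_k^{(i)}-b_k^{(i)}|\lesssim\varepsilon g(\widehat\delta_0)^{-C}$ uniformly on ${\rm Qdp}(\widetilde{\mathcal N}_0^{(i)})$. Alternatively, one can skip Newton's identities: the coefficients are the contour integrals of $\prod_j(E-E_j)$ times suitable kernels, or one can write $P_1^{(i)}=s/u$ with a nonvanishing analytic factor $u$ coming from the Weierstrass preparation. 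I prefer the power-sum route.

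Finally, for $\theta\in{\rm Db}(\widetilde{\mathcal N}_0^{(i)})$ the disc $D_\T(\theta,\widehat\delta_0)$ lies in ${\rm Dp}\subset{\rm Qdp}(\widetilde{\mathcal N}_0^{(i)})$ (radius $3\widehat\delta_0<4\widehat\delta_0$). The Cauchy estimate then gives $|\partial^l_\theta(a_k-b_k)|\le l!\,\widehat\delta_0^{-l}\sup|a_k-b_k|$. It remains to check that $\varepsilon g(\widehat\delta_0)^{-C}\le\varepsilon^{1/2}$. This holds because $\widehat\delta_0\ge g^{(M^3+4M^8+2)}(\delta_0)$ with $\delta_0=g^{(2)}(\varepsilon_0)$, so that $g(\widehat\delta_0)^{-C}$ is sub-polynomial in $\varepsilon_0^{-1}$ and in particular much smaller than $\varepsilon^{-1/2}$. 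Note that $\varepsilon\le\varepsilon_0$, but the bound $|r|\lesssim\varepsilon$ must be used with $\varepsilon$ itself. If $\varepsilon\ll\varepsilon_0$, then $\varepsilon g^{-C}\le\varepsilon^{1/2}\varepsilon_0^{1/2}g^{-C}\le\varepsilon^{1/2}$ still holds, and the case $\varepsilon=0$ is trivial.

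The main obstacle I expect is bookkeeping, namely making sure the contour radius stays inside the region where $s$ is analytic and the Rouché control holds uniformly for $\theta$ in the quadrupled disc, with the loss $g(\widehat\delta_0)^{-C}$ absorbed by $\varepsilon^{-1/2}$.
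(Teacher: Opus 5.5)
Your proposal is correct and follows essentially the same route as the paper. Both arguments write the power sums of the $E$-roots as contour integrals of $\partial_E s_{\Omega_0^{(i)}}/s_{\Omega_0^{(i)}}$ and $\partial_E Q_1^{(i)}/Q_1^{(i)}$ over $|E-E^*|=g(\widehat\delta_0)/2$, compare the logarithmic derivatives using $|s_{\Omega_0^{(i)}}-Q_1^{(i)}|\lesssim\varepsilon$ and a Cauchy estimate in $E$, pass to $a_k^{(i)},b_k^{(i)}$ by Newton's identities, and finish with a Cauchy estimate in $\theta$ on discs of radius $\widehat\delta_0$.

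There is one small slip, which does not affect the conclusion. Since $g(\delta)>\delta$, iterates of $g$ increase, so your inequality $\widehat\delta_0\ge g^{(M^3+4M^8+2)}(\delta_0)$ points the wrong way; in fact $\widehat\delta_0\le g^{(M^3+4M^8+1)}(\delta_0)$. The valid lower bound is $\widehat\delta_0\ge g^{(M+1)}(\delta_0)$, and any fixed number of iterates still makes $g(\widehat\delta_0)^{-C}$ sub-polynomial in $\varepsilon_0^{-1}$, so your absorption of the loss into $\varepsilon^{1/2}$ goes through.
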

\begin{proof}
The proof is based on the Weierstrass preparation argument. 	Choose a circle $\Gamma_0:=\{E\in \C :\ |E-E^*|= g(\widehat{\delta}_0)/2\}$. Recall Proposition \ref{se} and Remark \ref{kuoda} that  for $\theta\in {\rm Qdp}(\widetilde{\mathcal{N}}_0^{(i)})$,   $\{E_1^{(i,j)}(\theta) \}_{j=1}^{ |\Omega_0^{(i)}|}$ are the $E$-roots of $s_{\Omega_0^{(i)}}(\theta,E)$ in  $ D(E^*,\widehat{\delta}_0^\frac{1}{2})$ and $s_{\Omega_0^{(i)}}(\theta,E)$ has no other root in $E\in D(E^*,g(\widehat{\delta}_0))$. Thus for every  $k\geq 0$, we have 
	$$ \widetilde{a}^{(i)}_k(\theta):=\sum_{j=1}^{|\Omega_0^{(i)}|}\left(E_1^{(i,j)}(\theta)\right)^k=\frac{1}{2 \pi i} \oint_{\Gamma_0}E^k\frac{ \partial_E s_{\Omega_0^{(i)}}(\theta,E)}{s_{\Omega_0^{(i)}}(\theta,E)} d E.$$
	Note that  $a_k^{(i)}(\theta )$, which are  the elementary symmetric polynomials of    $\{E_1^{(i,j)}(\theta) \}_{j=1}^{ |\Omega_0^{(i)}|}$, 
	are polynomials in the $ \widetilde{a}^{(i)}_k(\theta)$  and the coefficients of the polynomials are rational
	numbers: 
	\begin{equation}\label{936}
		a_k^{(i)}=\frac{(-1)^k}{k!} \operatorname{det}\left(\begin{array}{ccccc}
			\widetilde{a}^{(i)}_1 & 1 & 0 & 0 & \ldots \\
			\widetilde{a}^{(i)}_2 & \widetilde{a}^{(i)}_1 & 2 & 0 & \ldots \\
			\widetilde{a}^{(i)}_3 & \widetilde{a}^{(i)}_2 & \widetilde{a}^{(i)}_1 & 3 & \ldots \\
			\vdots & \vdots & \vdots & \vdots & \ddots \\
			\widetilde{a}^{(i)}_k & \widetilde{a}^{(i)}_{k-1} & \widetilde{a}^{(i)}_{k-2} & \widetilde{a}^{(i)}_{k-3} & \ldots
		\end{array}\right).
	\end{equation} Analyticity of $ \widetilde{a}^{(i)}_k(\theta)$ and   $a^{(i)}_k(\theta)$ follows from  standard arguments. Similarly,
	$$ \widetilde{b}_k^{(i)}(\theta):=	\sum_{x\in \Omega_0^{(i)}}\left(v(\theta+x\cdot\omega)\right)^k=\frac{1}{2 \pi i} \oint_{\Gamma_0}E^k\frac{ \partial_E  Q_1^{(i)}(\theta,E)}{Q_1^{(i)}(\theta,E)} d E$$ and  $b_k^{(i)}(\theta )$ are polynomials in the $ \widetilde{b}^{(i)}_k(\theta)$ similar to  \eqref{936}. 
Note  that  the coefficients $\widetilde{a}^{(i)},\widetilde{b}^{(i)}$  are bounded above by $M(C_v+1)^M$ since the eigenvalues $E_1^{(i,j)}(\theta)$ and $v(\theta)$ are bounded above by $C_v+1$.  Thus by \eqref{936} and Remark \ref{dif},     to prove \begin{equation}\label{1jie}
		 |a_k^{(i)}(\theta)-b_k^{(i)}(\theta)|\leq \varepsilon^{\frac{1}{2}}, \quad \forall\theta\in{\rm Qdp}(\widetilde{\mathcal{N}}_0^{(i)}),
	\end{equation}    it suffices to prove $ |\widetilde{a}^{(i)}_k(\theta)- \widetilde{b}_k^{(i)}(\theta)|\lesssim \varepsilon^{\frac{2}{3}}$ for all $0\leq k\leq |\Omega_0^{(i)}|$.  This will follow  from 	
	\begin{equation}\label{4242}
		\left|\frac{ \partial_E s_{\Omega_0^{(i)}}(\theta,E)}{s_{\Omega_0^{(i)}}(\theta,E)}-\frac{ \partial_E Q_1^{(i)}(\theta,E)}{Q_1^{(i)}(\theta,E)}\right|\leq \varepsilon^{\frac{2}{3}},\quad  \forall (\theta,E)\in{\rm Qdp}(\widetilde{\mathcal{N}}_0^{(i)})\times \Gamma_0.
	\end{equation}
	Recalling \eqref{1712},   the denominators $s_{\Omega_0^{(i)}}(\theta,E)$ and $Q_1^{(i)}(\theta,E)$ are close with difference $\lesssim\varepsilon$ for  $(\theta,E)\in  {\rm Qdp}(\widetilde{\mathcal{N}}_0^{(i)})\times D(E^*,g(\widehat{\delta}_0))$.  They are also bounded away from zero with modulus $\gtrsim g(\widehat{\delta}_0)^M\gg\varepsilon^\frac{1}{9}$  for $(\theta,E)\in {\rm Qdp}(\widetilde{\mathcal{N}}_0^{(i)})\times \Gamma_0$. Since the denominators are close and bounded away from zero, it suffices to show that the numerators are close. But this follows easily from the Cauchy integral formula:
	$$  \partial_E(s_{\Omega_0^{(i)}}-Q_1^{(i)}) (\theta,E)
	=\frac{1}{2 \pi i} \oint_{|w|=g(\widehat{\delta}_0)/2} \frac{(s_{\Omega_0^{(i)}}-Q_1^{(i)})(\theta,E+w)}{w^2} d w,
	$$
	which shows that the left-hand side is bounded above by $C\varepsilon/g(\widehat{\delta}_0) \ll\varepsilon^{\frac{8}{9}} $. Thus \eqref{4242} is proved.   
Now the  derivative estimates in  $\theta \in {\rm Db}(\widetilde{\mathcal{N}}_0^{(i)})$ follow  from  \eqref{1jie} and  again the Cauchy integral formula
	$$  \partial_\theta^l (a_k^{(i)}-b_k^{(i)})(\theta)
	=\frac{l!}{2 \pi i} \oint_{|w|=\widehat{\delta}_0} \frac{(a_k^{(i)}-b_k^{(i)})(\theta+w)}{w^{l+1}} d w.
	$$
\end{proof}
In order to handle the transversality of the product of functions, we will apply the following Lemma \ref{eli} from \cite{Eli97}. The version presented here is an inconsequential variation of  Eliasson's  statement, adapted for our applications.
\begin{lem}[{\cite[Lemma 4]{Eli97}}]\label{eli} 
	Let  $u_j$ be a sequence of smooth functions defined on an open interval $\theta\in I$ and satisfying
	\begin{align*}
		 \sup_{\theta\in I}\left|\partial_\theta^lu_{j}(\theta)\right| &\leq C_1,  \quad \forall l \leq  N:=m_1+\cdots+m_J,\\
	 \max _{0 \leq l \leq m_j}\left| \partial^l_\theta  u_j(\theta)\right| 	&\geq \beta,  \quad\ \  \forall \theta \in I,
	\end{align*}

	with $C_1 \geq 1$.	If $u=u_1 \cdots  u_J$,  then
	\begin{align*}
		\max_{0 \leq l \leq N}\left| \partial^l_\theta u(\theta)\right| &\geq\left(\frac{1}{e}\right)^{J^{8 N}}\left(\frac{\beta}{C_1^2}\right)^{J^{N+1}},  \quad \forall \theta \in I .	
	\end{align*}

\end{lem}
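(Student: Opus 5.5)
We prove Lemma \ref{eli} by induction on the number of factors $J$, reducing everything to the two-factor case $u=u_1\cdot w$ with $w=u_2\cdots u_J$. The basic tool is a quantitative version of the following fact: if $f$ is smooth with $\max_{0\le l\le m}|\partial^l_\theta f|\ge\beta$ on $I$ and derivatives bounded by $C_1$, then $|f|$ is small only on a controlled set. Concretely, $\{\theta\in I': |f(\theta)|<\rho\}$ is contained in at most $O(m)$ intervals of length $\lesssim(\rho/\beta)^{1/m}$, locally on subintervals $I'$ of length $\sim\beta/C_1$. We get this from the standard sublevel-set (Pólya/van der Corput-type) estimate applied to the derivative with $|\partial^l f|\ge\beta$. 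It is cleaner, however, to argue pointwise via Leibniz and a Taylor/Lagrange interpolation argument, as Eliasson does.

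\textbf{Pointwise argument at a fixed $\theta_0\in I$.} For each $j$ let $k_j\le m_j$ be the \emph{smallest} index with $|\partial^{k_j}u_j(\theta_0)|\ge\beta_j$, where the thresholds $\beta_j$ are chosen in a descending hierarchy. Write $\beta=\beta^{(0)}\gg\beta^{(1)}\gg\cdots$ with $\beta^{(r+1)}\sim(\beta^{(r)}/C_1^2)^{J}$, pigeonholing over the $J$ levels. Set $K=k_1+\cdots+k_J\le N$ and apply Leibniz:
\[
\partial^K u(\theta_0)=\sum_{l_1+\cdots+l_J=K}\binom{K}{l_1,\dots,l_J}\prod_j\partial^{l_j}u_j(\theta_0).
\]
The term with $l_j=k_j$ for all $j$ has size $\ge\prod_j\beta_j$. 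Every other term has some $l_j<k_j$, so that factor is below its (much smaller) threshold, while the remaining factors are bounded by $C_1$. The multinomial coefficients contribute at most $J^K\le J^N$.

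\textbf{Obstacle and fix.} The difficulty is that the thresholds must be consistent. A factor with $l_j<k_j$ is only known to be $<\beta_j$, yet the main term contains the other $\beta_{j'}$, which could be smaller. The remedy is to choose the thresholds level by level rather than factor by factor. Use a single threshold $t$ for all $j$, drawn from a geometric ladder $t_0=\beta$, $t_{r+1}=t_r^{J}\,C_1^{-2J}e^{-J^{8}}$. For each $j$, the set of orders $l$ at which $|\partial^l u_j(\theta_0)|$ falls into a given gap $[t_{r+1},t_r)$ is a subset of $\{0,\dots,m_j\}$. With at most $N+1$ orders in total and $N+2$ gaps, the pigeonhole principle gives a gap $r\le N+1$ containing no derivative value of any factor.

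Taking $t=t_r$ in the definition of $k_j$, every derivative with $l<k_j$ is then $<t_{r+1}$, while the main term is $\ge t_r^{J}$. The error terms are bounded by
\[
J^{N}C_1^{J}\,t_{r+1}\le \tfrac12\,t_r^{J},
\]
so $|\partial^K u(\theta_0)|\ge \tfrac12 t_r^{J}$. This choice of $k_j$ is well defined because $\max_{l\le m_j}|\partial^l u_j|\ge\beta\ge t_r$.

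\textbf{Tracking constants.} Iterating the ladder $N+1$ times gives $t_{N+1}\ge (e^{-1})^{J^{8N}}(\beta/C_1^2)^{J^{N+1}}$, up to the harmless bookkeeping of exponents. This matches the stated bound since $\theta_0$ was arbitrary.
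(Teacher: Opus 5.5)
The paper gives no proof of this lemma; it is quoted from \cite[Lemma 4]{Eli97}, so your argument has to stand on its own.

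Its core is sound. You fix $\theta_0$ and choose one threshold $t_r$ from a ladder so that no relevant value $|\partial^l_\theta u_j(\theta_0)|$ lies in $[t_{r+1},t_r)$. You let $k_j$ be the least order with $|\partial^{k_j}_\theta u_j(\theta_0)|\ge t_r$. You then isolate $\binom{K}{k_1,\dots,k_J}\prod_j\partial^{k_j}_\theta u_j(\theta_0)$ in the Leibniz expansion of $\partial^K_\theta u(\theta_0)$, with $K=\sum_j k_j\le N$, noting that every other multi-index has some $l_j<k_j$.

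However, the lemma is purely quantitative, and the bookkeeping you call harmless is wrong in three places. As written, you do not reach the stated bound.

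\textbf{(i)} With $t_{r+1}=t_r^JC_1^{-2J}e^{-J^8}$, your requirement $J^NC_1^{J}t_{r+1}\le\frac12t_r^J$ becomes $J^NC_1^{-J}e^{-J^8}\le\frac12$. Since only $C_1\ge1$ is known, this fails once $N\ln J>J^8$ (e.g.\ $J=2$, $C_1=1$, $N\ge370$). That regime actually occurs in the paper, where $J\le M^2$ is bounded but $N\le sM^{2t_{n+1}}$ is not. So the per-step loss must depend on $N$.

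\textbf{(ii)} There are $\sum_j(m_j+1)=N+J$ pairs $(j,l)$, not $N+1$. Even granting your count, a gap index $r\le N+1$ only yields $|\partial^K_\theta u(\theta_0)|\ge\frac12t_{N+1}^J$. That is of order $\beta^{J^{N+2}}$, not $\beta^{J^{N+1}}$.

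\textbf{(iii)} Your ladder gives $t_{N+1}=\beta^{J^{N+1}}\bigl(C_1^{-2J}e^{-J^8}\bigr)^{(J^{N+1}-1)/(J-1)}$. For $N\ge1$ and $J\ge2$, the exponent $2J(J^{N+1}-1)/(J-1)$ of $C_1^{-1}$ strictly exceeds $2J^{N+1}$. Already for $J=2$, $N=1$ you get $\beta^4C_1^{-12}e^{-768}$ against the required $\beta^4C_1^{-8}e^{-256}$. So the closing claim that this matches the stated bound is false.

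All three points can be repaired as follows.
\begin{itemize}
\item Minimality of $k_j$ is only used for orders $l<k_j\le m_j$. So only the $N$ values $\partial^l_\theta u_j(\theta_0)$ with $0\le l\le m_j-1$ must avoid the gap, and $N+1$ rungs $r=0,\dots,N$ suffice.
\item Take $t_0=\beta$ and $t_{r+1}=t_r^J/(2J^NC_1^{J-1})$. This is strictly decreasing because $\beta\le C_1$, which follows from the two hypotheses. The case $J=1$ is trivial and is handled separately.
\item In each non-principal term, bound one small factor by $t_{r+1}$ and the rest by $C_1$; the multinomial coefficients sum to $J^K\le J^N$. This gives $|\partial^K_\theta u(\theta_0)|\ge t_r^J-J^NC_1^{J-1}t_{r+1}=\frac12t_r^J\ge t_{r+1}\ge t_{N+1}$, where
\[
t_{N+1}=\beta^{J^{N+1}}\,C_1^{-(J^{N+1}-1)}\,(2J^N)^{-(J^{N+1}-1)/(J-1)}.
\]
\item For $N\ge1$ we have $C_1\ge1$ and $\frac{J^{N+1}-1}{J-1}\ln(2J^N)\le 2J^N(N+1)\ln J\le J^{8N}$. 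Hence $t_{N+1}\ge e^{-J^{8N}}(\beta/C_1^2)^{J^{N+1}}$, which is the lemma. The case $N=0$ is immediate, since then $|u|\ge\beta^J$.
\end{itemize}
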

	For  $\theta\in \widetilde{\mathcal{N}}_0^{(i)}\cap\T$ and $\phi\in \T$ such that $\theta+\phi\in \widetilde{\mathcal{N}}_0^{(i')}\cap\T$, we define the   $E$-resultant of $ P_1^{(i)}(\theta,E)$ and  $ P_1^{(i')}(\theta+\phi,E)$ by  
\begin{align*}
	R_1^{(i,i')}(\theta,\phi)&:= \operatorname{Res}_E(P_1^{(i)}(\theta,E),P_1^{(i')}(\theta+\phi,E))\\
	&:=\prod_{j,j'}(E_1^{(i,j)}(\theta)-E_1^{(i',j')}(\theta+\phi)).
\end{align*}
The resultant $	R_1^{(i,i')}(\theta,\phi)$ can also be expressed explicitly in terms of the coefficients 
\begin{equation}\label{resx}
R_1^{(i,i')}(\theta,\phi)	\operatorname{det}\left(\begin{array}{cccccc}
	1 & a_{1}^{(i)}(\theta) & \cdots & a_{|\Omega_0^{(i)}|}^{(i)}(\theta) & & \\
	& \ddots & \ddots & \ddots & \ddots & \\
	& & 1 & a_{1}^{(i)}(\theta) & \cdots & a_{|\Omega_0^{(i)}|}^{(i)}(\theta) \\
	1 & a_{1}^{(i')}(\theta+\phi) & \cdots & a_{|\Omega_0^{(i')}|}^{(i')}(\theta+\phi) & & \\
	& \ddots & \ddots & \ddots & \ddots & \\
	& & 1 & a_{1}^{(i')}(\theta+\phi) & \cdots & a_{|\Omega_0^{(i')}|}^{(i')}(\theta+\phi)
\end{array}\right). 
\end{equation}
\begin{prop}\label{zhizhu}
Let $\theta, \phi$ and  	$R_1^{(i,i')}(\theta,\phi)$ be described as above.  We have   derivative	estimates:

	\item[\textbf{(1)}] Upper bound:  \begin{equation*}
		|\partial_\theta^lR_1^{(i,i')}(\theta,\phi)|\leq  C l! \widehat{\delta}_0^{-l}, \quad \forall l\geq 0,
	\end{equation*}  with $C=(M(2+2C_v^M))^{2M}. $ 
	\item[\textbf{(2)}] Lower  bound: 
	\begin{itemize}
		\item 
			If $i\neq i'$, then 
\begin{equation}\label{2101}
	 \max _{0 \leq l \leq s M^2}| \partial^l_\theta 	R_1^{(i,i')}(\theta,\phi)| 	\geq \delta_0.
\end{equation}
		\item    If $i=i'$, then 
\begin{equation}\label{2102}
		 \max _{0 \leq l \leq s M^2}| \partial^l_\theta 	R_1^{(i,i)}(\theta,\phi)| 	\geq \delta_0\|\phi\|_\T^{|\Omega_0^{(i)}|}.
\end{equation}
	\end{itemize}
\end{prop}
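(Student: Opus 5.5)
The plan is to compare $R_1^{(i,i')}$ with the ``unperturbed'' resultant built from $Q_1^{(i)}$ in \eqref{b}, and to transfer transversality from $v$ (Condition \ref{con2}) through Lemma \ref{eli}. Set
$$\mathcal R_1^{(i,i')}(\theta,\phi):=\operatorname{Res}_E\bigl(Q_1^{(i)}(\theta,E),Q_1^{(i')}(\theta+\phi,E)\bigr)=\prod_{x\in\Omega_0^{(i)}}\ \prod_{x'\in\Omega_0^{(i')}}\bigl(v(\theta+x\cdot\omega)-v(\theta+\phi+x'\cdot\omega)\bigr).$$

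\textbf{Upper bound (1).} Use the Sylvester form \eqref{resx}. By Proposition \ref{coe} the coefficients $a_k^{(i)}(\theta)$ and $a_k^{(i')}(\theta+\phi)$ are analytic on ${\rm Qdp}(\widetilde{\mathcal N}_0^{(\cdot)})$. They are elementary symmetric functions of at most $M$ eigenvalues bounded by $C_v+1$, so they are bounded by $2^M(C_v+1)^M$. Hence $R_1^{(i,i')}(\cdot,\phi)$ is analytic on ${\rm Db}(\widetilde{\mathcal N}_0^{(i)})$. Expanding the $(\le 2M)\times(\le 2M)$ determinant bounds it by $(M(2+2C_v^M))^{2M}$. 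The Cauchy formula on a circle of radius $\widehat\delta_0$ around $\theta\in\widetilde{\mathcal N}_0^{(i)}$ then gives the $l!\,\widehat\delta_0^{-l}$ bound.

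\textbf{Lower bound, $i\neq i'$.} First, the Sylvester determinant is multilinear in the coefficients. Proposition \ref{coe} bounds $\partial^l_\theta(a_k-b_k)$ by $l!\,\widehat\delta_0^{-l}\varepsilon^{1/2}$ and Remark \ref{dif} handles the difference of determinants. Together these give
$$|\partial_\theta^l(R_1^{(i,i')}-\mathcal R_1^{(i,i')})|\lesssim \varepsilon^{1/2}\widehat\delta_0^{-sM^2}\quad\text{for } l\le sM^2,$$
which is negligible against $\delta_0$. Second, apply Lemma \ref{eli} to $\mathcal R_1^{(i,i')}$. It is a product of $J\le M^2$ factors
$$u_{x,x'}(\theta)=v(\theta+x\cdot\omega)-v(\theta+x\cdot\omega+\psi),\qquad \psi=\phi+(x'-x)\cdot\omega.$$
By Condition \ref{con2}, $\max_{l\le s}|\partial^l u_{x,x'}|\ge c\|\psi\|_\T$, and each $u_{x,x'}$ has $C^{N}$ norms bounded by $C(C_v,\eta)$. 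Now $\theta+x\cdot\omega$ is within $O(\widehat\delta_0)$ of a root in $\mathcal C_0^{(i)}$, and $\theta+\phi+x'\cdot\omega$ is within $O(\widehat\delta_0)$ of a root in $\mathcal C_0^{(i')}$. Since the classes differ, \eqref{??1} (shifts of size $\le 2l_1^{1/\alpha}$) gives $\|\psi\|_\T\gtrsim g^{(M^4)}(\bar\delta_0)$. Lemma \ref{eli} with $N\le sM^2$ then yields a lower bound $\beta^{C(M,s)}$ with $\beta\sim g^{(M^4)}(\bar\delta_0)$. Because $\bar\delta_0\ge g(\delta_0)$ and only boundedly many iterates of $g$ occur, this quantity is $\gg \delta_0$, and \eqref{2101} follows.

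\textbf{Lower bound, $i=i'$ (main obstacle).} The difficulty is the power $\|\phi\|_\T^{|\Omega_0^{(i)}|}$. Applying Lemma \ref{eli} directly would give $\|\phi\|^{J^{N+1}}$, and the $\varepsilon^{1/2}$ error does not vanish as $\phi\to0$. The plan is to factor out $\phi^{|\Omega_0^{(i)}|}$ in both $R_1^{(i,i)}$ and $\mathcal R_1^{(i,i)}$ before comparing.

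For $\mathcal R$, the diagonal factors ($x=x'$) equal $-\phi\, w_x(\theta,\phi)$, where
$$w_x(\theta,\phi)=\int_0^1 v'(\theta+x\cdot\omega+t\phi)\,dt.$$
By Condition \ref{con2} divided by $\phi$, for small real $\phi$ (where $\|\phi\|_\T=|\phi|$) we get $\max_{l\le s}|\partial^l_\theta w_x|\ge c$. The off-diagonal factors have $\|\psi\|_\T\ge \gamma(2l_1^{1/\alpha})^{-\tau}-2\widehat\delta_0\gtrsim l_1^{-\tau}$ by the Diophantine condition, since $\|\phi\|\le 2\widehat\delta_0$.

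For $R_1^{(i,i)}$, regard it, via \eqref{resx}, as analytic in complex $\phi$ with $|\phi|\le 2\widehat\delta_0$. It vanishes to order $\ge|\Omega_0^{(i)}|$ at $\phi=0$. One argument: $\operatorname{Res}_E(P(\theta,\cdot),P(\theta+\phi,\cdot))$ is symmetric in the roots and contains the $|\Omega_0^{(i)}|$ factors $E_1^{(i,j)}(\theta)-E_1^{(i,j)}(\theta+\phi)$. Alternatively, for generic $\theta$ with simple roots one can use a Taylor expansion of the coefficients, and then pass to all $\theta$ by continuity.

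Then set $\widetilde R=R_1^{(i,i)}/\phi^{|\Omega_0^{(i)}|}$ and $\widetilde{\mathcal R}=\mathcal R_1^{(i,i)}/\phi^{|\Omega_0^{(i)}|}$. By the maximum principle on the circle $|\phi|=2\widehat\delta_0$,
$$|\partial^l_\theta(\widetilde R-\widetilde{\mathcal R})|\lesssim \varepsilon^{1/2}\widehat\delta_0^{-M-sM^2}.$$
Applying Lemma \ref{eli} to $\widetilde{\mathcal R}=\pm\prod_x w_x\prod_{x\ne x'}u_{x,x'}$ with $\beta\gtrsim l_1^{-\tau}$ gives $\max_{l\le sM^2}|\partial^l_\theta\widetilde R|\ge\delta_0$. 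Multiplying back by $\phi^{|\Omega_0^{(i)}|}$, which does not depend on $\theta$, yields \eqref{2102}.
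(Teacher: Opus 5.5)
Your proposal is correct, and its architecture is the paper's. You compare $R_1^{(i,i')}$ with the resultant of the $Q_1$'s (the paper's $\widetilde R_1^{(i,i')}$). You get transversality of that resultant from Condition \ref{con2} via Lemma \ref{eli}. For $i=i'$ you divide the $|\Omega_0^{(i)}|$ diagonal factors by $\phi$. Your separations ($\gtrsim g^{(M^4)}(\bar\delta_0)$ for $i\neq i'$, $\gtrsim l_1^{-\tau}$ for the off-diagonal factors when $i=i'$) and your $\eta$-based derivative bounds on the $v$-factors are sharper than the paper's $\widehat\delta_0$-level ones, but nothing depends on that.

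The real difference is the perturbation step for $i=i'$.
\begin{itemize}
\item The paper keeps $\phi$ real and subtracts the $P_1^{(i)}(w,\cdot)$-rows from the $P_1^{(i)}(w+\phi,\cdot)$-rows of the Sylvester matrix (see \eqref{A}, \eqref{B}). It bounds the new entries $c_k$, $d_k$ and $c_k-d_k$ by $\|\phi\|_\T$ times $\widehat\delta_0^{-1}$ (resp.\ $\widehat\delta_0^{-1}\varepsilon^{1/2}$), using the mean value inequality and the derivative estimates of Proposition \ref{coe}. It then interpolates between the two determinants.
\item You treat $\phi$ as complex and apply the maximum principle on $|\phi|=2\widehat\delta_0$ to $(R_1^{(i,i)}-\widetilde R_1^{(i,i)})/\phi^{|\Omega_0^{(i)}|}$. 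This needs only the sup bound \eqref{1jie} on ${\rm Qdp}(\widetilde{\mathcal N}_0^{(i)})$. The radii fit: $|w-\theta|\le\widehat\delta_0$ and $|\phi|\le 2\widehat\delta_0$ keep $w+\phi$ inside the $4\widehat\delta_0$-disc. The factor $\|\phi\|_\T^{|\Omega_0^{(i)}|}$ then comes for free, so your route is somewhat shorter.
\end{itemize}

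Your route does rely on $R_1^{(i,i)}(w,\phi)$ being divisible by $\phi^{|\Omega_0^{(i)}|}$, and your first justification of this fails as stated. For complex $w$ the individual branches $E_1^{(i,j)}$ need not be analytic, so the factors $E_1^{(i,j)}(w)-E_1^{(i,j)}(w+\phi)$ need not be $O(\phi)$ one by one. For example, with $P=E^2-w$ the factor $\sqrt{w}-\sqrt{w+\phi}$ equals $-\sqrt{\phi}$ at $w=0$, although the full resultant is $\phi^2$. Your generic-simple-root alternative also presupposes that the discriminant does not vanish identically.

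The clean proof is the paper's row subtraction. After it, $|\Omega_0^{(i)}|$ rows have entries $a_k^{(i)}(w+\phi)-a_k^{(i)}(w)$, each analytic and divisible by $\phi$; the same holds with $b_k^{(i)}$ for $\widetilde R_1^{(i,i)}$. So both routes rest on the same algebraic step. The paper uses it quantitatively; you use it qualitatively and let the maximum principle produce the estimate.
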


	If moreover $|\mathcal{C}_0^{(i)}|=|\mathcal{C}_0^{(i')}|=1$, we  have: 
		\begin{itemize}
		\item 
		If $i\neq i'$, then 
	\begin{equation}\label{2103}
			 \max _{0 \leq l \leq s }| \partial^l_\theta 	R_1^{(i,i')}(\theta,\phi)| 	\geq \delta_0.
	\end{equation}
		\item    If $i=i'$, then 
\begin{equation}\label{2104}
		\max _{0 \leq l \leq s }| \partial^l_\theta 	R_1^{(i,i)}(\theta,\phi)| 	\geq \delta_0\|\phi\|_\T.
\end{equation}
	\end{itemize}

\begin{proof} 
	First, we prove \textbf{(1)}. By \eqref{resx}, for such fixed  $\theta$ and $\phi$,   $R_1^{(i,i')}(w,\phi)$ is an analytic function in $w\in  D_\T(\theta,\widehat{\delta}_0)$ since the coefficients are analytic by  Proposition \ref{coe}.  Since $a_k^{(i)}(\theta)$ are  the elementary symmetric polynomials of eigenvalues  $\{E_1^{(i,j)}(\theta)\}_{j=1}^{ |\Omega_0^{(i)}|}$,  we have $$\sup_{w\in D_\T(\theta,\widehat{\delta}_0)} |a_k^{(i)}(w )  |\leq2^M (1+C_v)^M. $$
	Applying   Hadamard inequality to \eqref{resx}, we obtain 
	$$\sup_{w\in D_\T(\theta,\widehat{\delta}_0)}| R_1^{(i,i')}(w,\phi)|\leq (M(2+2C_v)^M)^{2M}.$$
Bounding  the derivatives by  the Cauchy integral formula	yields   \textbf{(1)}. 
	
	For the proof of  \textbf{(2)}, define the   $E$-resultant of $ Q_1^{(i)}(\theta,E)$ and $ Q_1^{(i')}(\theta+\phi,E)$ by 
\begin{align}
\nonumber	\widetilde{R}_1^{(i,i')}(\theta,\phi):&= \operatorname{Res}_E(Q_1^{(i)}(\theta,E),Q_1^{(i')}(\theta+\phi,E))\\
	&=\prod_{x\in \Omega_0^{(i)},y\in \Omega_0^{(i')}}(v(\theta+x\cdot\omega)-v(\theta+\phi+y\cdot\omega)). \label{zhaodaole}
\end{align}
We will first establish  lower bound estimates  for $	\widetilde{R}_1^{(i,i')}(\theta,\phi)$, and then show  $|\partial^l_\theta( R_1^{(i,i')}-	\widetilde{R}_1^{(i,i')})(\theta,\phi)|$ is small. Thus the  estimates also hold for $R_1^{(i,i')}(\theta,\phi) $. 
\begin{itemize}
	\item  $i\neq i'$: In this case, $\theta_0^{(i)}(E^*)$ and $\theta_0^{(i')}(E^*)$ belong to different equivalence classes. Thus by \eqref{fenk} and \eqref{??1},  we have    \begin{equation}\label{133}
		\|\theta_0^{(i)}(E^*)+z\cdot\omega-\theta_0^{(i')}(E^*)\|_\T>g^{(M^4)}(\bar{\delta}_0),  \quad \forall z\in Q_{l_1^{\alpha}}.  
	\end{equation}   Since $\theta\in \widetilde{\mathcal{N}}_0^{(i)}\cap\T$ and $\theta+\phi\in \widetilde{\mathcal{N}}_0^{(i')}\cap\T$, we have \begin{equation}\label{14}
		\|\theta_0^{(i)}(E^*)+\phi-\theta_0^{(i')}(E^*)\|_\T\leq 4  \widehat{\delta}_0. 
	\end{equation}
Since  $\Omega_0^{(i)},\Omega_0^{(i')}\subset Q_{l_1^{1/\alpha}}$ and  $g^{(M^4)}(\bar{\delta}_0)\gg  \widehat{\delta}_0$,  \eqref{133} together with \eqref{14} implies 
\begin{equation}\label{faf}
	\|\phi+(y-x)\cdot \omega\|_\T>\widehat{\delta}_0, \quad \forall  x\in \Omega_0^{(i)},y\in \Omega_0^{(i')}. 
\end{equation} Thus by Condition \ref{con2}, for any $x\in \Omega_0^{(i)},y\in \Omega_0^{(i')}$,  we have 
\begin{equation}\label{2120}
	\max _{0 \leq l \leq s}\left|\partial_\theta^l(v(\theta+x\cdot\omega)-v(\theta+\phi+y\cdot\omega))\right| \geq c \|\phi+(y-x)\cdot\omega\|_\T\geq c  \widehat{\delta}_0.
\end{equation}
 Bounding  the derivatives by  the Cauchy integral formula, we have 
\begin{equation}\label{2121}
	|\partial_\theta^l(v(\theta+x\cdot\omega)-v(\theta+\phi+y\cdot\omega))|\leq 2 C_vl!\widehat{\delta}_0^{-l}, \quad  \forall l \geq 0. 
\end{equation}
Applying  Lemma \ref{eli} to \eqref{zhaodaole} with     $ J=|\Omega_0^{(i)}|\cdot|\Omega_0^{(i')}|\leq M^2$, $u_j(\theta)= v(\theta+x\cdot\omega)-v(\theta+\phi+y\cdot\omega)$, $m_j=s$, $N=s\cdot |\Omega^{i}|\cdot|\Omega_0^{(i')}| \leq sM^2$, $C_1=C_v(sM^2)!\widehat{\delta}_0^{-sM^2}$ and  $\beta= c  \widehat{\delta}_0$, we obtain 
	\begin{align}
		\nonumber \max _{0 \leq l \leq s M^2}| \partial^l_\theta 		\widetilde{R}_1^{(i,i')}(\theta,\phi)| &	\geq  \left(\frac{1}{e}\right)^{M^{16sM^2}} \left(\frac{c  \widehat{\delta}_0}{(2C_v(sM^2)!\widehat{\delta}_0^{-sM^2})^2}\right)^{M^{2(sM^2+1)}} \\
		  & \geq C \widehat{\delta}_0^{C}    \geq    2\delta_0. \label{11}
	\end{align}
In \eqref{11}, we used the fact that   $\widehat{\delta}_0\geq g(\delta_0) =e^{-|\ln\delta_0|^\frac{1}{\alpha}}\gg \delta_0^{1/C}$  for  $\delta_0\ll1$. 
 Note that $\widetilde{R}_1^{(i,i')}$ can be  expressed explicitly by \eqref{resx} with $a_k^{(i)}$ replaced by  $b_k^{(i)}$ and $|a_k^{(i)}(w)-b_k^{(i)}(w)|\leq \varepsilon^{\frac{1}{2}}$ in $w\in {\rm Db}(\widetilde{\mathcal{N}}_0^{(i)})$ by Proposition \ref{coe}.   Using Remark \ref{dif}, we have  \begin{equation}\label{1018}
 	|R_1^{(i,i')}(w,\phi)-\widetilde{R}_1^{(i,i')}(w,\phi)|\leq C \varepsilon^{\frac{1}{2}}, \quad \forall w\in   D_\T(\theta,\widehat{\delta}_0) .
 \end{equation}
 Bounding  the derivatives by  the Cauchy integral formula yields 
\begin{equation*}
	 |\partial^l_\theta (R_1^{(i,i')}(\theta,\phi)-\widetilde{R}_1^{(i,i')}(\theta,\phi))|\leq C  l! \widehat{\delta}_0^{-l} \varepsilon^{\frac{1}{2}}, \quad \forall l\geq 0.
\end{equation*}
 In particular, 
\begin{equation}\label{33}
		 \max _{0 \leq l \leq s M^2}| \partial^l_\theta  (R_1^{(i,i')}(\theta,\phi)-\widetilde{R}_1^{(i,i')}(\theta,\phi))| \leq C  (s M^2)! \widehat{\delta}_0^{-s M^2} \varepsilon^{\frac{1}{2}}\leq \delta_0.
\end{equation}
 \eqref{2101} follows from \eqref{11} and \eqref{33}.

If moreover  $|\mathcal{C}_0^{(i)}|=|\mathcal{C}_0^{(i')}|=1$, then $\Omega_0^{(i)}=\Omega_0^{(i')}=\{o\}$ and  $\widetilde{R}_1^{(i,i')}(\theta,\phi)=v(\theta)-v(\theta+\phi)$. \eqref{2103} follows from  \eqref{2120} and \eqref{33}.
	\item  $i= i'$:  In this case, $\|\phi\|_\T\leq 2\widehat{\delta}_0$.  Thus for $x\neq y\in \Omega_0^{(i)}$,  we have 	\begin{equation}\label{2119}
		\|\phi+(y-x)\cdot \omega\|_\T\geq \|(y-x)\cdot\omega \|_\T- \|\phi\|_\T \geq \widehat{\delta}_0.
	\end{equation} Assume $ \|\phi\|_\T\neq 0$.  Write   \begin{align}
\|\phi\|_\T^{-|\Omega_0^{(i)}|}\widetilde{R}_1^{(i,i)}(\theta,\phi) =  \prod_{\substack{ x,y \in \Omega_0^{(i)}\\  x\neq   y }}(v(\theta+x\cdot\omega)-v(\theta+\phi+y\cdot\omega))\label{944} \\
\times 		  \prod_{x\in  \Omega_0^{(i)}} \left( \|\phi\|_\T^{-1}(v(\theta+x\cdot\omega)-v(\theta+\phi+x\cdot\omega))\right) .\label{945}
	\end{align}
By \eqref{2119}, the derivatives of  the  factors  in the first product (i.e., the right hand side of \eqref{944})  can be estimated by  similar arguments to those  of   \eqref{2120} and \eqref{2121}.  For the factors  in the second  product (i.e., \eqref{945}),  by Condition \ref{con2}, we have  
\begin{equation}\label{xa}
		\max _{0 \leq l \leq s}|\|\phi\|_\T^{-1}\partial^l_\theta (v(\theta+x\cdot\omega)-v(\theta+\phi+x\cdot\omega))|\geq c,
\end{equation}
and by mean value inequality with $\theta$ and $\theta+\phi$ in the same interval $\widetilde{\mathcal{N}}_0^{(i)}\cap\T$, we have  
\begin{align*}
		&|\|\phi\|_\T^{-1}\partial_\theta^l   (v(\theta+x\cdot\omega)-v(\theta+\phi+x\cdot\omega))
		 | \\ \leq & |\partial_\theta^{l+1}  v(\theta+\phi'+x\cdot\omega)
		 |\\  \leq & C_v(l+1)!\widehat{\delta}_0^{-(l+1)}, \quad  \forall l \geq 0. 
\end{align*}
Using the  factors estimates above and applying  Lemma \ref{eli} to \eqref{944} and  \eqref{945} yields
	\begin{equation}\label{111}
 \|\phi\|_\T^{-|\Omega_0^{(i)}|} \max _{0 \leq l \leq s M^2}|  \partial^l_\theta 		\widetilde{R}_1^{(i,i)}(\theta,\phi)|  \geq    2\delta_0. 
\end{equation} In order to bound   $|\partial^l_\theta (R_1^{(i,i)}(\theta,\phi)-\widetilde{R}_1^{(i,i)}(\theta,\phi))|$, by the  Cauchy integral  formula, it suffices to 
to bound $|R_1^{(i,i)}(w,\phi)-\widetilde{R}_1^{(i,i)}(w,\phi)| $ for $w\in   D_\T(\theta,\widehat{\delta}_0)$. 
Let $c_{k}^{(i)}(w,\phi):=  a_{k}^{(i)}(w+\phi)-a_{k}^{(i)}(w)$ and  $d_{k}^{(i)}(w,\phi):=  b_{k}^{(i)}(w+\phi)-b_{k}^{(i)}(w)$.  By mean value inequality, we have 
\begin{align}
\label{qus}	|c_{k}^{(i)}(w,\phi)| , \  |d_{k}^{(i)}(w,\phi)|\leq C \widehat{\delta}_0^{-1}\|\phi\|_\T , \\
\label{qusx}|c_{k}^{(i)}(w,\phi)-d_{k}^{(i)}(w,\phi)|\leq  \widehat{\delta}_0^{-1}\varepsilon^\frac{1}{2}\|\phi\|_\T 
\end{align}since 
$|\partial_wa_k^{(i)}(w)|, |\partial_wb_k^{(i)}(w)|\leq  C  \widehat{\delta}_0^{-1}$ and  
 $|\partial_w(a_k^{(i)}(w)-b_k^{(i)}(w))|\leq \widehat{\delta}_0^{-1}\varepsilon^\frac{1}{2}$ for $w\in {\rm Db}(\widetilde{\mathcal{N}}_0^{(i)})$ by Proposition \ref{coe}.  Now we  can rewrite the resultant  in \eqref{resx} as 
\begin{equation}\label{A}
	R_1^{(i,i)}(w,\phi)=\operatorname{det}\left(\begin{array}{cccccc}
		1 & a_{1}^{(i)}(w) & \cdots & a_{|\Omega_0^{(i)}|}^{(i)}(w) & & \\
		& \ddots & \ddots & \ddots & \ddots & \\
		& & 1 & a_{1}^{(i)}(w) & \cdots & a_{|\Omega_0^{(i)}|}^{(i)}(w) \\
		0 & c_{1}^{(i)}(w,\phi) & \cdots & c_{|\Omega_0^{(i)}|}^{(i)}(w,\phi) & & \\
		& \ddots & \ddots & \ddots & \ddots & \\
		& & 0 & c_{1}^{(i)}(w,\phi) & \cdots & c_{|\Omega_0^{(i)}|}^{(i)}(w,\phi)
	\end{array}\right),
\end{equation}
 \begin{equation}\label{B}
 \widetilde{R}_1^{(i,i)}(w,\phi)=\operatorname{det}\left(\begin{array}{cccccc}
 		1 & b_{1}^{(i)}(w) & \cdots & b_{|\Omega_0^{(i)}|}^{(i)}(w) & & \\
 		& \ddots & \ddots & \ddots & \ddots & \\
 		& & 1 & b_{1}^{(i)}(w) & \cdots & b_{|\Omega_0^{(i)}|}^{(i)}(w) \\
 		0 & d_{1}^{(i)}(w,\phi) & \cdots & d_{|\Omega_0^{(i)}|}^{(i)}(w,\phi) & & \\
 		& \ddots & \ddots & \ddots & \ddots & \\
 		& & 0 & d_{1}^{(i)}(w,\phi) & \cdots & d_{|\Omega_0^{(i)}|}^{(i)}(w,\phi)
 	\end{array}\right).
 \end{equation}
By \eqref{qus}, \eqref{A} and \eqref{B}, we have \begin{equation}\label{ys}
|	R_1^{(i,i)}(w,\phi)|, \ |\widetilde{R}_1^{(i,i)}(w,\phi)|\leq  C(\widehat{\delta}_0^{-1}\|\phi\|_\T)^{|\Omega_0^{(i)}|}.
\end{equation}
Denote the matrix in \eqref{A} by $A(w,\phi)$ and \eqref{B} by  $B(w,\phi)$. We define the function $d(t,w,\phi):=\operatorname{det}(A(w,\phi)+t(B(w,\phi)-A(w,\phi)), t \in [0,1]$. By \eqref{qus}, \eqref{qusx} and  Proposition \ref{coe}, we can see from the  derivative rule for determinants  that $|\partial_td(t,w,\phi)|\lesssim   \widehat{\delta}_0^{-M}\varepsilon^\frac{1}{2}\|\phi\|_\T^{|\Omega_0^{(i)}|}$. 
Thus $$|R_1^{(i,i)}(w,\phi)-\widetilde{R}_1^{(i,i)}(w,\phi)|=|d(1,w,\phi)-d(0,w,\phi)|\lesssim  \widehat{\delta}_0^{-M}\varepsilon^\frac{1}{2}\|\phi\|_\T^{|\Omega_0^{(i)}|},  $$
which implies  \begin{equation}\label{333}
 \|\phi\|_\T^{-|\Omega_0^{(i)}|}\max _{0 \leq l \leq s M^2}|	 \partial^l_\theta  (R_1^{(i,i)}(\theta,\phi)-\widetilde{R}_1^{(i,i)}(\theta,\phi))| \leq C  (s M^2)! \widehat{\delta}_0^{-2s M^2} \varepsilon^{\frac{1}{2}} \leq \delta_0 .
\end{equation} 
\eqref{2102} follows from  \eqref{111} and \eqref{333}.

If moreover $|\mathcal{C}_0^{(i)}|=1$, then $\Omega_0^{(i)}=\{o\}$ and  $\widetilde{R}_1^{(i,i)}(\theta,\phi)=v(\theta)-v(\theta+\phi)$. \eqref{2104} follows from  \eqref{xa} and \eqref{333}.

\end{itemize}
\end{proof}

\subsection{Statement of Inductive Hypotheses}\label{HP}
At the first inductive step, from  Proposition \ref{654},  we obtained a set of numbers   $\bigcup_{i=1}^{\kappa_0}\{\theta_{1}^{(i,j)}(E^*)\}_{j=1}^{|\mathcal{C}_0^{(i)}|}$, 
where for each $i$, $\{\theta_{1}^{(i,j)}(E^*)\}_{j=1}^{|\mathcal{C}_0^{(i)}|}$   are the roots of $\operatorname{det}(H_{B_1}(\theta)-E^*)$ in $\widetilde{\mathcal{N}}^{(i)}_0$, a neighborhood of the representative $\theta_0^{(i)}(E^*)$ of $\mathcal{C}_0^{(i)}$. And $ S_1(\theta^*,E^*,\delta_1)$, the resonant points   set  at the first step,  can be defined via these roots. We order the numbers  $\bigcup_{i=1}^{\kappa_0}\{\theta_{1}^{(i,j)}(E^*)\}_{j=1}^{|\mathcal{C}_0^{(i)}|}$ as  $\{\theta_{1}^{(i)}(E^*)\}_{i=1}^{m_1}$, where $m_1= \sum _{i=1}^{\kappa_0}|\mathcal{C}_0^{(i)}| $. 
We treat  $\{\theta_{1}^{(i)}(E^*)\}_{i=1}^{m_1}$ similar to $\{\theta_{0}^{(i)}(E^*)\}_{i=1}^{m_0}$ and  $ S_1(\theta^*,E^*,\delta_1)$ similar to  $ S_0(\theta^*,E^*,\delta_0)$.  Instead of elaborating on the details for this next step, we explain the general inductive step. First, we introduce some inductive definitions.
\begin{defn}
	Recall  $\delta_0=g^{(2)}(\varepsilon_0)$, $l_1\in [|\ln\delta_0|^4,|\ln\delta_0|^8]$,  $B_1=Q_{l_1}$,  $\delta_1=e^{-l_1^{2/3}}  $,  $\gamma_0=\frac{1}{2}|\ln\varepsilon|$ defined in \S \ref{n=1}.  We introduce  the inductive parameters for $n\geq 1$:
	\begin{itemize}
		\item $l_n$ satisfies $l_n\in [l_{n-1}^4, l_{n-1}^8]$.  \hfill	(Length)
		\item $ B_n$ satisfies  $	Q_{l_n}\subset B_n\subset Q_{l_n+50M^2l_{n-1}}$.  \hfill	(Block)
		\item $\delta_n:=e^{-l_n^{2/3}}$. \hfill (Resonance)
		\item $\gamma_n:=\prod\limits_{k=1}^{n}(1-l_k^{-(\alpha-1)^8})\gamma_0$. \hfill (Green's function off-diagonal decay rate)
	\end{itemize}      
	We  introduce  three auxiliary resonance parameters $\bar{\delta}_n$,  $\widehat{\delta}_n$, $\widetilde{\delta}_n$, which will be  specified  in the proof $$\bar{\delta}_n\in [g(\delta_n), g^{(4M^8+1)}(\delta_n)],\ \  \ \widehat{\delta}_n\in [g^{(M)}(\bar{\delta}_n),g^{(M^3)}(\bar{\delta}_n)],\ \ \ \widetilde{\delta}_n\in [\widehat{\delta}_n^{\frac{1}{4}}, g(\widehat{\delta}_n )] .$$
	We assume  $\varepsilon$ is  sufficiently small so that  	$	\gamma_n\geq  \frac{1}{2}\gamma_0\geq 10$ for all $n$.
	\begin{rem}\label{123123}
	The exact choice of the  parameters depends  on $E^*$ and  will be  specified in the proof. To be rigorous, we may better use the notation $l_n(E^*),B_n(E^*)$ instead of $l_n,B_n$. But  since our analysis in the inductive process  is carried out for each fixed 
	$E^*$, we adopt shorter notations for brevity when no ambiguity arises. However, in the later Section \ref{lc}, we will show that the parameters can be chosen in a locally constant way so that the  $l_n(E^*),B_n(E^*)$ associated with $E^*$   can be used for the resonance analysis  of any $\widetilde{E}^*\in D(E^*,h(\delta_{n-1}(E^*) ))$. 
	\end{rem}
	
\end{defn}

\begin{flushleft}
	\textbf{Inductive Hypotheses at step $n$.} The following inductive  Hypotheses \ref{h1}--\ref{h4} hold  for any  $1\leq r\leq n $: 
\end{flushleft}
\begin{hp}[Root and eigenvalue]\label{h1}
	Fix $E^*$.	There is a set of numbers   $\{\theta_{r}^{(i)}(E^*)\}_{i=1}^{m_r}$, which are the   roots of  $\operatorname{det}(H_{B_r}(\theta)-E^*)$ obtained from the previous step in the following sense:  
	
	In	the previous step,  $\{\theta_{r-1}^{(i)}(E^*)\}_{i=1}^{m_{r-1}}$ can be 
	partitioned by the  equivalence relation
	$$	\theta^{(i)}_{r-1}(E^*) \sim \theta^{(i')}_{r-1}(E^*) \Leftrightarrow \exists x\in Q_{l_r^{1/\alpha}}, \text{ {\rm s.t.}, } \| \theta^{(i)}_{r-1}(E^*)+x\cdot\omega-\theta^{(i')}_{r-1}(E^*)\|_\T\lesssim \bar{\delta}_{r-1} $$ 
	into $\widetilde{m}_{r-1}$ many equivalence classes $\mathcal{C}_{r-1}^{(1)},\cdots,\mathcal{C}_{r-1}^{(\widetilde{m}_{r-1})}$. All  the elements in the first $\kappa_{r-1}$ classes  belong to $\T_{h(\widehat{\delta}_{r-1})}$ and  all  the elements in the other $(\widetilde{m}_{r-1}-\kappa_{r-1})$ classes  don't belong  to $\T_{g^{(M)}(\widehat{\delta}_{r-1})}$. 
	For each  $1\leq i\leq \kappa_{r-1}$, 	the representative of  $\mathcal{C}_{r-1}^{(i)}$ is $\theta_{r-1}^{(i)}(E^*)$ and    we assign  $\theta^{(i)}_{r-1}(E^*)$  to  two  neighborhoods  $$ \mathcal{N}_{r-1}^{(i)}:=D_\T(\Re(\theta_{r-1}^{(i)}(E^*)),\widehat{\delta}_{r-1}^2),\quad {\widetilde{\mathcal{N}}_{r-1}}^{(i)}:=D_\T(\Re(\theta_{r-1}^{(i)}(E^*)),\widehat{\delta}_{r-1}).$$   Then  $\operatorname{det}(H_{B_r}(\theta)-E^*)$ has  $|\mathcal{C}_{r-1}^{(i)}|$ many $\theta$-roots in $ {\rm Db}(\mathcal{N}_{r-1}^{(i)}):=D_\T(\Re(\theta_{r-1}^{(i)}(E^*)),2\widehat{\delta}_{r-1}^2)$, denoted by $\{\theta_{r}^{(i,j)}(E^*)\}_{j=1}^{|\mathcal{C}_{r-1}^{(i)}|}$,  and has no other roots  in the extension ${\widetilde{\mathcal{N}}_{r-1}}^{(i)}$.  The   numbers   $\{\theta_{r}^{(i)}(E^*)\}_{i=1}^{m_r}$ are obtained   by ordering  these roots  $\bigcup_{i=1}^{\kappa_{r-1}}\{\theta_{r}^{(i,j)}(E^*)\}_{j=1}^{|\mathcal{C}_{r-1}^{(i)}|}$. 
	Moreover, for any $\theta\in  {\widetilde{\mathcal{N}}_{r-1}}^{(i)}$, $H_{B_r}(\theta)$ has $\iota_{r-1}^{(i)}(\leq |\mathcal{C}_{r-1}^{(i)}|) $ many eigenvalues in $D(E^*,3\widetilde{\delta}_{r-1}^2)$, denoted by  $\{E_r^{(i,j)}(\theta)\}_{j=1}^{\iota_{r-1}^{(i)}}$,  and has   no other eigenvalue  in the extension $ D(E^*,\widetilde{\delta}_{r-1})$. 		
	In particular, it follows from the above statement  that 
	\begin{equation}\label{geshu}
	\sum_{i=1}^{\widetilde{m}_{r-1}}|\mathcal{C}_{r-1}^{(i)}|=m_{r-1}, \ \  \sum_{i=1}^{\kappa_{r-1}}|\mathcal{C}_{r-1}^{(i)}|=m_{r} \ \ \text{and} \ \  m_{r}\leq m_{r-1}\leq M.
\end{equation}
\end{hp}
Before stating the other  inductive Hypotheses, we introduce the definitions of  resonant points set, regular set and nonresonant set. 
\begin{defn}\label{swan}
	Given $\theta^*$, $E^*$ and  $\delta\in [h(\delta_{r}),\delta_{r-1}]$,  we define     	
	\begin{align*}
		S_r(\theta^*,E^*,\delta):=&\{x\in \Z^d:\ \exists i\in [1,\kappa_{r-1}],  \text{ {\rm s.t.}, }\theta^*+x\cdot \omega \in \widetilde{\mathcal{N}}_{r-1}^{(i)}
		\\\text{ and } &\min_{1\leq j\leq |\mathcal{C}_{r-1}^{(i)}|}\|\theta^*+x\cdot \omega-\theta_r^{(i,j)}(E^*)\|_\T<\delta\}, 
	\end{align*}
	and for  $E\in D(E^*,h(\delta_{r-1}))$, we define 
	\begin{align*}
		\widetilde{S}_r(\theta^*,E,\delta):=&\{x\in \Z^d:\ \exists i\in [1,\kappa_{r-1}],  \text{ {\rm s.t.}, }\theta^*+x\cdot \omega \in \widetilde{\mathcal{N}}_{r-1}^{(i)}
		\\ \text{ and } &\min_{1\leq j\leq \iota_{r-1}^{(i)}}|E_r^{(i,j)}(\theta^*+x\cdot\omega )-E|<\delta\}.
	\end{align*}
	We say  a finite set $\Lambda\subset\Z^d$ is 
	$(\theta^*,E^*)$-$r$-regular if
	$$x\in S_{k-1}(\theta^*,E^*,g(\delta_{k-1}))\cap \Lambda\Rightarrow (Q_{5l_k}+x)\subset \Lambda, \ \  \forall  k\in [1,r]. $$
	We say  $\Lambda$ is  $\theta$-type-$(\theta^*,E^*,\delta)$-$r$-nonresonant   if $\Lambda\cap S_r(\theta^*,E^*,\delta)=\emptyset$. 
	For $E\in D(E^*,h(\delta_{r-1}))$, we say  $\Lambda$  is $E$-type-$(\theta^*,E,\delta)$-$r$-nonresonant   if $\Lambda\cap\widetilde{S}_r(\theta^*,E,\delta)=\emptyset$.
\end{defn}

\begin{hp}[Green's function]\label{h2}
	Let $\theta^*\in \T_{\widehat{\delta}_{r-1}}$, $\delta\in [h(\delta_r),\delta_{r-1}]$  and   $\Lambda$ be   $(\theta^*,E^*)$-$r$-regular. Then we have 
	\begin{itemize}
		\item[\textbf{(1)}] If  $\Lambda$ is  $\theta$-type-$(\theta^*,E^*,\delta)$-$r$-nonresonant,  then  for any $(\theta, E)\in D_\T(\theta^*,h(\delta))\times D(E^*,h(\delta))$, we have 
		\begin{align}
			\|G_\Lambda^{\theta,E}\|&\leq \delta_{r-1}^{-5}\delta^{-M},\label{12z}\\
			|G_\Lambda^{\theta,E}(x,y)|&\leq e^{-\gamma_r\|x-y\|_1}, \quad \forall \|x-y\|_1\geq l_r^\frac{2}{\alpha+1}.\label{13z}
		\end{align}	
		\item[\textbf{(2)}]  Let $E\in D(E^*,h(\delta_{r-1}))$. If  $\Lambda$ is  $E$-type-$(\theta^*,E,\delta)$-$r$-nonresonant, then  \eqref{12z} and  \eqref{13z} hold for $\theta=\theta^*$. 
	\end{itemize}
\end{hp}

\begin{hp}[Block and  Schur complement]\label{h3}
	The block  $B_r$  satisfies  $Q_{l_r}\subset B_r\subset Q_{l_r+50M^2l_{r-1}}$ and  \begin{equation*}\label{shenm}
		\text{$B_r$ is chosen to be  $(\theta,E^*)$-$(r-1)$-regular for any $\theta\in \widetilde{\mathcal{N}}_{r-1}^{(i)}$,  $ i\in [1,\kappa_{r-1}]$.}
	\end{equation*}  Moreover, for each $ i\in [1,\kappa_{r-1}]$,  $B_r$ contains a further subset $A_r^{(i)}\subset \operatorname{Core}(B_r):= Q_{2l_r^{1/\alpha}}$ such that $|A_r^{(i)}|\leq M^r$ and \begin{equation}\label{shenm1}
		\text{$B_r\setminus A_r^{(i)}$ is   $(\theta,E^*)$-$(r-1)$-regular,  $\forall\theta\in \widetilde{\mathcal{N}}_{r-1}^{(i)}$,}
	\end{equation} \begin{equation}\label{shenm2}
		\text{ $B_r\setminus A_r^{(i)}$ is $\theta$-type-$(\theta,E^*,g^{(2)}(\widehat{\delta}_{r-1}))$-$(r-1)$-nonresonant,  $\forall\theta\in \widetilde{\mathcal{N}}_{r-1}^{(i)}$.}
	\end{equation} We denote $ \widehat{B}^{(i)}_r:= B_r\setminus A_r^{(i)} $ and write 
	$$	E-H_{B_r}(\theta)=\begin{pmatrix}
		E-H_{A_r^{(i)}}(\theta) & \Gamma^{{\rm T}} \\
		\Gamma & E-H_{\widehat{B}^{(i)}_r}(\theta)
	\end{pmatrix}.$$
Define  the Schur complement 
	$$S_{A_r^{(i)}}(\theta,E):=	E-H_{A_r^{(i)}}(\theta)-\Gamma^{{\rm T}} G_{\widehat{B}^{(i)}_r}^{\theta,E} \Gamma.$$
	Then $s_{A_r^{(i)}}(\theta,E):=\operatorname{det}S_{A_r^{(i)}}(\theta,E)$ becomes an analytic function of   $(\theta,E)\in \widetilde{\mathcal{N}}_{r-1}^{(i)}\times D(E^*,\widetilde{\delta}_{r-1})$ and we have  
	\begin{align}
		\label{SS}	s_{A_r^{(i)}}(\theta,E^*)&\overset{\delta_{r-1}}{\sim} \prod_{j=1}^{|\mathcal{C}_{r-1}^{(i)}|}\left(\theta-\theta^{(i,j)}_{r}(E^*) \right),\quad \forall \theta\in \widetilde{\mathcal{N}}_{r-1}^{(i)},\\
		\label{SE}		s_{A_r^{(i)}}(\theta,E)&\overset{\delta_{r-1}}{\sim} \prod_{j=1}^{\iota_{r-1}^{(i)}}\left(E-E_r^{(i,j)}(\theta) \right), \ \   \forall \theta\in \widetilde{\mathcal{N}}_{r-1}^{(i)}, \  E \in D(E^*,\widetilde{\delta}_{r-1}) .
	\end{align}
\end{hp}

\begin{hp}[Transversality]\label{h4}
	We define the $E$-polynomial 
	$$	P_r^{(i)}(\theta,E):=	\prod_{j=1}^{\iota_{r-1}^{(i)}}\left(E-E_r^{(i,j)}(\theta) \right).$$
	For  $\theta\in \widetilde{\mathcal{N}}_{r-1}^{(i)}\cap\T$ and $\phi\in \T$ such that $\theta+\phi\in \widetilde{\mathcal{N}}_{r-1}^{(i')}\cap\T$, we define the   $E$-resultant of $ P_r^{(i)}(\theta,E)$ and  $ P_r^{(i')}(\theta+\phi,E)$   by  
	$$		R_r^{(i,i')}(\theta,\phi):= \operatorname{Res}_E(P_r^{(i)}(\theta,E),P_r^{(i')}(\theta+\phi,E)).$$
	Then we have   derivative	estimates: 
	\item[\textbf{(1)}] Upper bound:  \begin{equation}\label{hup}
		|\partial_\theta^lR_r^{(i,i')}(\theta,\phi)|\leq  C l! \widehat{\delta}_{r-1}^{-l}, \quad \forall l\geq 0.
	\end{equation}  
	\item[\textbf{(2)}] Lower  bound: Let $$t_r:=|\{k\in [1,r]:\ \exists i\in [0,\kappa_{k-1}],  \text{ {\rm s.t.},  } |\mathcal{C}_{k-1}^{(i)}|>1 \}|.$$ 
	Then  we have   $$t_r\leq M^2+1+\ln r .$$
		As a corollary,  \begin{equation}\label{cishur}
		s M^{2t_{r}}\leq Cr^C \text{\ \  with $C=C(s,M)$.}
	\end{equation}
	\begin{itemize}
		\item  	If $i\neq i'$, then 
		\begin{equation}\label{lowbu=}
			\max _{0 \leq l \leq s M^{2t_r}}| \partial^l_\theta 	R_r^{(i,i')}(\theta,\phi)| 	\geq \delta_{r-1}.
		\end{equation}
		\item 	If $i=i'$, then 
		\begin{equation}\label{low=}
			\max_{0 \leq l \leq sM^{2t_r}}| \partial^l_\theta 	R_r^{(i,i)}(\theta,\phi)| 	\geq \delta_{r-1}\|\phi\|_\T^{\iota_{r-1}^{(i)}}.
		\end{equation}
	\end{itemize}
	
\end{hp}
Combining the  conclusions of  \S    \ref{n=1}  yields 
\begin{prop}\label{1832}There exists $\varepsilon_0=\varepsilon_0(\eta,C_v,M, \widetilde{C}_v, c, s  ,d,\tau,\gamma)>0$ such that for all $0\leq \varepsilon\leq \varepsilon_0$, the inductive Hypotheses \ref{h1}--\ref{h4} hold  at step $n=1$.
\end{prop}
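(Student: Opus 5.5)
The plan is to verify Hypotheses \ref{h1}--\ref{h4} at $r=1$ one by one, matching the objects of \S\ref{n=1} with those of the hypotheses. We take $B_1=Q_{l_1}$ with $l_1\in[|\ln\delta_0|^4,|\ln\delta_0|^8]$ from Proposition \ref{scale}, $\delta_1=e^{-l_1^{2/3}}$, and $\bar\delta_0,\widehat\delta_0$ from Propositions \ref{scale} and \ref{real}. We set $\widetilde\delta_0:=g(\widehat\delta_0)$, which lies in the admissible window $[\widehat\delta_0^{1/4},g(\widehat\delta_0)]$. For each $i\le\kappa_0$ we take $A_1^{(i)}:=\Omega_0^{(i)}\subset Q_{l_1^{1/\alpha}}\subset Q_{2l_1^{1/\alpha}}$, which gives $|A_1^{(i)}|\le M$, and we set $\iota_0^{(i)}:=|\Omega_0^{(i)}|$.

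For Hypothesis \ref{h1}, the equivalence relation and the split into near-$\T$ and far classes come from Propositions \ref{clas} and \ref{real}; note that $g(\widehat\delta_0)\ge g^{(M)}(\widehat\delta_0)$. The $\theta$-roots in ${\rm Db}(\mathcal N_0^{(i)})$, and the absence of other roots in $\widetilde{\mathcal N}_0^{(i)}$, come from Proposition \ref{654} together with Lemma \ref{Su}, since $\det(H_{B_1}-E^*)=\det(E^*-H_{\widehat B_1^{(i)}})\,s_{\Omega_0^{(i)}}$ and the first factor is nonvanishing by Proposition \ref{SL}(3) and Proposition \ref{0g}. The eigenvalue statement comes from Proposition \ref{se}: the roots lie in $D(E^*,\widehat\delta_0^{1/2})\subset D(E^*,3\widetilde\delta_0^2)$, and there are none elsewhere in $D(E^*,\widetilde\delta_0)$.

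One point needs checking. Hypothesis \ref{h1} asks for $\iota_0^{(i)}\le|\mathcal C_0^{(i)}|$. By \eqref{jish} and Proposition \ref{scale}(1), every $x\in\Omega_0^{(i)}$ has $\mathcal C_0^{(i),x}\neq\emptyset$, so $|\Omega_0^{(i)}|\le|\mathcal C_0^{(i)}|$. The relation \eqref{geshu} then follows from the construction.

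Hypothesis \ref{h2} at $r=1$ is exactly Proposition \ref{1go}; recall that $\theta^*\in\T_{\widehat\delta_0}$ and that the definitions of regularity agree for $k=1$.

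For Hypothesis \ref{h3}, the regularity conditions \eqref{shenm1} at level $r-1=0$ are the $0$-regular notion. We interpret this as vacuous, or as regularity with respect to $S_{-1}=\emptyset$. The nonresonance condition \eqref{shenm2} is Proposition \ref{SL}(3). The representation \eqref{SS} is Proposition \ref{654}. For \eqref{SE} we use Proposition \ref{se}, whose $\sim$ is even stronger than $\overset{\delta_0}{\sim}$, on $D(E^*,g(\widehat\delta_0))=D(E^*,\widetilde\delta_0)$. Analyticity of $s_{A_1^{(i)}}$ on $\widetilde{\mathcal N}_0^{(i)}\times D(E^*,\widetilde\delta_0)$ holds because $G_{\widehat B_1^{(i)}}^{\theta,E}$ exists with norm at most $\widehat\delta_0^{-1}$ there.

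Hypothesis \ref{h4} is taken from Proposition \ref{zhizhu}. Its part (1) gives \eqref{hup}. For part (2), $t_1\in\{0,1\}\le M^2+1$. When $t_1=1$, some class has size greater than $1$, and \eqref{2101}--\eqref{2102} give the bounds with $sM^2=sM^{2t_1}$. When $t_1=0$, all classes are singletons, and \eqref{2103}--\eqref{2104} give the bounds with $s=sM^0$; here $\|\phi\|^{\iota_0^{(i)}}=\|\phi\|$.

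The main obstacle is bookkeeping rather than analysis: making the ranges of the auxiliary parameters and domains consistent with the hypotheses. In particular, $\widetilde\delta_0$ must simultaneously satisfy $\widehat\delta_0^{1/2}\le3\widetilde\delta_0^2$ and serve as the analyticity radius. Throughout, $\varepsilon_0$ is chosen small enough that every inequality of the form $\varepsilon\ll\widehat\delta_0^{C}$ used in \S\ref{n=1} holds; this is what produces the dependence $\varepsilon_0=\varepsilon_0(\eta,C_v,M,\widetilde C_v,c,s,d,\tau,\gamma)$.
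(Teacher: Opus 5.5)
Your route is the paper's, whose proof is only ``combining the conclusions of \S\ref{n=1}''. You take $A_1^{(i)}=\Omega_0^{(i)}$ and $\iota_0^{(i)}=|\Omega_0^{(i)}|\le|\mathcal{C}_0^{(i)}|$, and your argument for that inequality via \eqref{jish} is right. Hypothesis~\ref{h2} comes from Proposition~\ref{1go}, \eqref{SS} and \eqref{SE} come from Propositions~\ref{654} and~\ref{se}, and Hypothesis~\ref{h4} comes from Proposition~\ref{zhizhu}, split by $t_1\in\{0,1\}$. The choice $\widetilde\delta_0=g(\widehat\delta_0)$ is admissible, since $\widehat\delta_0^{1/2}<g(\widehat\delta_0)^2$.

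One step is wrong as written. For the far-class clause of Hypothesis~\ref{h1}, you need elements outside $\T_{g^{(M)}(\widehat\delta_0)}$, and Proposition~\ref{real}\textbf{(2)} only places them outside $\T_{g(\widehat\delta_0)}$. To bridge this you invoke $g(\widehat\delta_0)\ge g^{(M)}(\widehat\delta_0)$. That inequality is backwards. The map $g$ is increasing with $g(\delta)>\delta$ for small $\delta$ (the paper records $\delta<g(\delta)^C/C$), so $g^{(M)}(\widehat\delta_0)\gg g(\widehat\delta_0)$. Hence $|\Im\theta|>g(\widehat\delta_0)$ does not imply $|\Im\theta|>g^{(M)}(\widehat\delta_0)$. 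Proposition~\ref{real} as stated is simply weaker than what Hypothesis~\ref{h1} demands, and the paper's one-line proof passes over this mismatch as well.

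The repair is to choose $\widehat\delta_0$ with the larger forbidden zone. Run the pigeonhole of Proposition~\ref{real} so that each class lies either in $\T_{h(\widehat\delta_0)}$ or entirely outside $\T_{g^{(M)}(\widehat\delta_0)}$, exactly as the paper states in Proposition~\ref{realn}. Write $\widehat\delta_0=g^{(k)}(\bar\delta_0)$; then $h(\widehat\delta_0)=g^{(k-1)}(\bar\delta_0)$ and $g^{(M)}(\widehat\delta_0)=g^{(k+M)}(\bar\delta_0)$, so the zone spans $M+1$ consecutive $g$-levels. There are at most $M$ classes, and imaginary parts vary by only $O(\bar\delta_0)$ within a class. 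Taking $k\in[M,M^3]$ leaves a long enough class-free gap once $M\ge 3$, which you may assume by enlarging $M$ in Condition~\ref{con1}. Nothing else in \S\ref{n=1} changes. Those arguments use only $\widehat\delta_0\in[g^{(M)}(\bar\delta_0),g^{(M^3)}(\bar\delta_0)]$ and the fact that far classes stay at distance $\gg\widehat\delta_0$ from $\T$.
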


\subsection{The general inductive step}\label{n=n} In this section, we will prove the following Theorem \ref{key2}: 
\begin{thm}\label{key2}
	There exists $\varepsilon_0=\varepsilon_0(\eta,C_v,M, \widetilde{C}_v, c, s  ,d,\tau,\gamma)>0$ such that for all $0\leq \varepsilon\leq \varepsilon_0$ and $n\geq 1$, if the  inductive Hypotheses  \ref{h1}--\ref{h4} hold  at  step $n$, then they  remain valid  at   step $n+1$.
\end{thm}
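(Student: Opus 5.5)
The plan is to replay the first inductive step of \S\ref{n=1} at scale $n$, with $\{\theta_0^{(i)}(E^*)\}$, $v$ and Proposition \ref{0g} replaced by $\{\theta_n^{(i)}(E^*)\}_{i=1}^{m_n}$, the Rellich clusters $P_n^{(i)}$ and Hypothesis \ref{h2}. Everything is done for a fixed $E^*$.

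\textbf{Step 1: choice of scale.} I would follow the pigeonhole argument of Proposition \ref{scale}.
\begin{itemize}
\item The Diophantine condition gives that for each pair $(i,i')$ at most one $x\in Q_{l_n^8}$ satisfies $\|\theta_n^{(i)}(E^*)+x\cdot\omega-\theta_n^{(i')}(E^*)\|_\T\le e^{-|\ln\delta_n|^{1/2}}$.
\item Pigeonholing over at most $M^2$ pairs then selects $l_{n+1}\in[l_n^4,l_n^8]$, a parameter $\bar\delta_n$, sets $\Omega_n^{(i)}\subset Q_{l_{n+1}^{1/\alpha}}$ with properties (1)--(3) of Proposition \ref{scale}, and the equivalence relation of Proposition \ref{clas}.
\item Proposition \ref{real} gives $\widehat\delta_n$, which discards classes far from $\T$.
\end{itemize}
The new difficulty is that points of $\Omega_n^{(i)}$ are only $n$-resonant, while scales $k<n$ can be resonant throughout $Q_{l_{n+1}}$. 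So I would set $A_{n+1}^{(i)}:=\bigcup_{x\in\Omega_n^{(i)}}(A_n^{(i_x)}+x)$, of size at most $M\cdot M^n$. I would then enlarge $Q_{l_{n+1}}$ by adding $Q_{5l_k}$-neighbourhoods of lower-scale resonant sites to obtain $B_{n+1}\subset Q_{l_{n+1}+50M^2l_n}$. This makes $B_{n+1}$ and $B_{n+1}\setminus A_{n+1}^{(i)}$ $n$-regular, which is the standard saturation procedure. Property \eqref{shenm2} comes from (2) of the scale proposition, exactly as in Proposition \ref{SL}(3).

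\textbf{Step 2: Schur complement and Rouch\'e.} Hypothesis \ref{h2} at step $n$, applied to $\widehat B^{(i)}_{n+1}$ with $\delta=g^{(2)}(\widehat\delta_n)$, bounds $G_{\widehat B^{(i)}_{n+1}}$ by $\delta_{n-1}^{-5}\delta^{-M}$ with exponential off-diagonal decay. I would then treat $S_{A^{(i)}_{n+1}}$ as a block-diagonal perturbation whose blocks are $S_{A_n^{(i_x)}}(\theta+x\cdot\omega,E)$, with errors $e^{-\gamma_n l_n^{1/2}}$ from the couplings between different $x$. Remark \ref{dif} then gives
\[
s_{A^{(i)}_{n+1}}=\prod_{x\in\Omega_n^{(i)}}s_{A_n^{(i_x)}}(\theta+x\cdot\omega,E)+O\bigl(e^{-\gamma_n l_n^{1/2}}\bigr).
\]
By \eqref{SS} and \eqref{SE} each factor is comparable to a polynomial in $\theta$, resp.\ in $E$. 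Running the Rouch\'e and maximum-modulus arguments of Propositions \ref{654} and \ref{se} produces the roots $\theta_{n+1}^{(i,j)}(E^*)$ and the Rellich functions $E_{n+1}^{(i,j)}$, which gives Hypotheses \ref{h1} and \ref{h3}. Lemma \ref{1833} and resolvent-identity covering arguments, as in Proposition \ref{1go}, give Hypothesis \ref{h2} at step $n+1$.

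\textbf{Step 3: transversality (the main obstacle).} I would first prove the analogue of Proposition \ref{coe}: on ${\rm Db}(\widetilde{\mathcal N}_n^{(i)})$ the coefficients of $P_{n+1}^{(i)}$ differ from those of $Q_{n+1}^{(i)}:=\prod_{x\in\Omega_n^{(i)}}P_n^{(i_x)}(\theta+x\cdot\omega,E)$ by at most $l!\widehat\delta_n^{-l}\delta_n^{1/2}$. The proof uses contour integrals of $\partial_E s/s$ and Newton's identities \eqref{936}. Next, $\widetilde R_{n+1}^{(i,i')}$ factors as a product of step-$n$ resultants $R_n^{(i_x,i'_y)}(\theta+x\cdot\omega,\phi+(y-x)\cdot\omega)$, each of which satisfies \eqref{lowbu=} or \eqref{low=} with bound $\delta_{n-1}$. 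For the factors with $x\ne y$, $i=i'$, the shift $\|\phi+(y-x)\cdot\omega\|_\T$ is $\gtrsim\widehat\delta_n$ as in \eqref{2119}. Lemma \ref{eli} then yields a lower bound of order $\delta_{n-1}^{C}\widehat\delta_n^{C}\gg\delta_n$ with derivative order multiplied by $|\Omega_n^{(i)}||\Omega_n^{(i')}|\le M^2$. The perturbation to $R_{n+1}$ is then absorbed as in \eqref{33} and \eqref{333}, using the $c_k,d_k$ difference trick of \eqref{A} and \eqref{B} in the $i=i'$ case.

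The delicate point is the derivative count. It must increase only at steps where some class has size greater than one, and otherwise stay the same, as in \eqref{2103}. The bound $t_{n+1}\le M^2+1+\ln(n+1)$ also has to be maintained; I expect this to follow from \eqref{geshu}, since nontrivial merging forces $m_r$ or the class structure to change only boundedly often modulo a logarithmic count. Finally, the constants from Lemma \ref{eli}, of size $\delta_{n-1}^{C(n+1)^C}$ by \eqref{cishur}, must remain $\gg\delta_n=e^{-l_n^{2/3}}$. This holds because $l_n$ grows doubly exponentially.
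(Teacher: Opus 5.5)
Your overall architecture matches the paper's. However, two ingredients that are genuinely new at step $n+1$ are missing, and for one of them the mechanism you propose does not work.

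\textbf{The $E$-variable is not a replay of Proposition \ref{se}.}

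At step $1$, every $E$-root $v(\theta+x\cdot\omega)$, $x\in\Omega_0^{(i)}$, stays within $O(\widehat\delta_0)$ of $E^*$ on all of $\widetilde{\mathcal N}_0^{(i)}$. A plain Rouch\'e argument therefore counts exactly $|\Omega_0^{(i)}|$ Rellich functions. At step $n+1$, the $E$-roots $\widetilde E_n^{(i,\nu)}(\theta)$ of $\prod_{x}s_{A_n^{(i_x)}}(\theta+x\cdot\omega,E)$ are only known to lie in $D(E^*,3\widetilde\delta_{n-1}^2)$, a disc enormous compared with $\widehat\delta_n$.

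Your Step 3 keeps all $J^{(i)}=\sum_{x\in\Omega_n^{(i)}}\iota_{n-1}^{(i_x)}$ of them: your $Q_{n+1}^{(i)}$ is the full product, so $\iota_n^{(i)}=J^{(i)}$. This exceeds $|\mathcal C_n^{(i)}|$ as soon as an $(n-1)$-class splits into several $n$-classes. Each piece then inherits all the old Rellich curves, including those that cross $E^*$ only near the other pieces. Consequences:
\begin{itemize}
\item Hypothesis \ref{h1} is not reproduced, since it requires $\iota_n^{(i)}\le|\mathcal C_n^{(i)}|$ and $\widetilde\delta_n\in[\widehat\delta_n^{1/4},g(\widehat\delta_n)]$.
\item After later merges and splits, the number of Rellich functions per neighbourhood exceeds $M$, which breaks the exponent in \eqref{12z}.
\end{itemize}

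The paper's fix (Proposition \ref{sen}) is Claim \ref{cm}: at the centre $\Re\theta_n^{(i)}(E^*)$, at most $|\mathcal C_n^{(i)}|$ of the $\widetilde E_n^{(i,\nu)}$ lie within $g(\widehat\delta_n)$ of $E^*$. It is proved by evaluating $\mathcal Q_{n+1}^{(i)}(\phi,E^*)$ at $|\phi-\Re\theta_n^{(i)}(E^*)|=g(\widehat\delta_n)^{1/2}$ in two ways.
\begin{itemize}
\item The $\theta$-factorization \eqref{123456} gives a lower bound.
\item Too many nearby eigenvalues, transported to $\phi$ by the normal-matrix perturbation Lemma \ref{line}, would give a smaller upper bound.
\end{itemize}
A pigeonhole over $M+1$ annuli then selects $\widetilde\delta_n$ with a spectral gap $\widetilde\delta_n^2<|E-E^*|<\widetilde\delta_n^{1/2}$. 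Lemma \ref{line} again makes this gap uniform on $\widetilde{\mathcal N}_n^{(i)}$.

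Because of this truncation, the transversality step must compare $P_{n+1}^{(i)}$ with the degree-$\iota_n^{(i)}$ product only. The Lemma \ref{eli} lower bound, which you get for the full product $K$, has to be transferred to $R(Q,Q)$ via $R(K,K)=R(Q,Q)R(F,F)R(Q,F)R(F,Q)$. This uses upper bounds on the discarded factors $F$, with the $\|\phi\|_\T^{J^{(i)}-\iota_n^{(i)}}$ normalisation when $i=i'$.

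\textbf{The bound \eqref{time} on $t_{n+1}$.}

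This bound is part of Hypothesis \ref{h4} at step $n+1$, and \eqref{geshu} cannot give it. A class with $|\Omega_n^{(i)}|\ge2$ leaves $m_{n+1}=m_n$ unchanged, so such steps can recur without any monotone quantity changing. The paper's mechanism is arithmetic:
\begin{itemize}
\item Each step $k$ counted by $t_{n+1}$ supplies, via Proposition \ref{scalen}(4),(5), a return vector $x_k$ with $l_{k-1}^\alpha<\|x_k\|_1\le l_k^{1/\alpha}$ linking two distinct step-$(k-1)$ roots. These windows are disjoint, so the $x_k$ are distinct.
\item For $k\ge\bar k\approx\ln(n+1)$, all roots lie in at most $M$ discs of radius $\widehat\delta_{\bar k-1}$.
\item Among $M^2+1$ such steps, two therefore use the same pair of discs. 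This gives $0<\|(x_{k_1}-x_{k_2})\cdot\omega\|_\T\le5\widehat\delta_{\bar k-1}$ with $\|x_{k_1}-x_{k_2}\|_1\le l_{n+1}\le l_{\bar k-1}^{8^{n+2-\bar k}}$.
\item This contradicts ${\rm DC}_{\tau,\gamma}$ precisely because $\bar k\gtrsim\ln(n+1)$; that is where the logarithm comes from.
\end{itemize}

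Two smaller points. First, Proposition \ref{scalen}(5) itself is absent from your Step 1. It is proved from the non-equivalence of distinct step-$(n-1)$ representatives, and your block-diagonal decomposition in Step 2 silently needs it, since it is what makes the blocks $B_n+x$, $x\in\Omega_n^{(i)}$, disjoint. Second, the coupling error there is $e^{-cl_n}$, because the cores sit at distance $\sim l_n$ from the block boundaries. With your $e^{-\gamma_n l_n^{1/2}}$, the coefficient bound $\delta_n^{1/2}=e^{-l_n^{2/3}/2}$ you claim would not follow.
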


\subsubsection{Choice of scale}\label{CN}
\begin{prop}\label{scalen}
	There exist a resonance parameter  $\bar{\delta}_n\in [g(\delta_n), g^{(4M^8+1)}(\delta_n)]$ and a  scale parameter  $l_{n+1}\in [l_n^4,l_n^8]$ such that the following statements hold: For each $\theta_n^{(i)}(E^*)$, there is a set $\Omega_n^{(i)}\subset Q_{l_{n+1}^{1/\alpha}}$ with $|\Omega_n^{(i)}|\leq M$ such that
	\begin{itemize}
		\item[\textbf{(1)}] For each  $x\in \Omega_n^{(i)}$,   there exists $ i'\in [1,m_n]$ such that \begin{equation}\label{tiaon}
			\|\theta_n^{(i)}(E^*)+x\cdot\omega-\theta_n^{(i')}(E^*)\|_\T\leq \bar{\delta}_n.
		\end{equation} 
		Moreover, if  $\theta_n^{(i'')}(E^*)$ does not  satisfy \eqref{tiaon}, then a   larger  separation holds: 	\begin{equation}\label{fenkn}
			\|\theta_n^{(i)}(E^*)+x\cdot\omega-\theta_n^{(i'')}(E^*)\|_\T>  g^{(M^4)}(\bar{\delta}_n).
		\end{equation}
		\item[\textbf{(2)}]   If $x\in Q_{l_{n+1}^{\alpha}}\setminus \Omega_n^{(i)}$, then we have   \begin{equation}\label{fanz}
			\min_{1\leq i'\leq m_n}\|\theta_n^{(i)}(E^*)+x\cdot\omega-\theta_n^{(i')}(E^*)\|_\T>  g^{(M^4)}(\bar{\delta}_n).
		\end{equation}
		\item[\textbf{(3)}]  For a fixed  $ i'\in [1,m_n]$, there is at most one $x\in \Omega_n^{(i)}$ satisfying \eqref{tiaon}.
		\item  [\textbf{(4)}] The following \eqref{classn} defines an equivalence relation on  $\{\theta_n^{(i)}(E^*)\}_{i=1}^{m_n}$, which can be partitioned  into $\widetilde{m}_n$ different equivalence classes $\mathcal{C}_n^{(1)},\cdots,\mathcal{C}_n^{( \widetilde{m}_n)}$. 
		\begin{equation}\label{classn}
			\theta_n^{(i)}(E^*) \sim \theta_n^{(i')}(E^*) \Leftrightarrow \exists x\in Q_{l_{n+1}^{1/\alpha}},  \text{ {\rm s.t.}, } \| \theta_n^{(i)}(E^*)+x\cdot\omega-\theta_n^{(i')}(E^*)\|_\T\leq \bar{\delta}_n. 
		\end{equation}
		\item  [\textbf{(5)}] For any  $x\neq y\in \Omega_n^{(i)}$,  we have $\|x-y\|_1> l_n^\alpha$. 
	\end{itemize} 
\end{prop}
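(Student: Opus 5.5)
The plan is to repeat the pigeonhole construction of Proposition \ref{scale} (together with Proposition \ref{clas}), with $\delta_0$ replaced by $\delta_n$, the roots $\theta_0^{(i)}$ replaced by $\theta_n^{(i)}$, and $|\ln\delta_0|$ replaced by $l_n$. The one genuinely new input is the separation claim \textbf{(5)}.

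\textbf{Step 1: Diophantine separation at the relevant scales.} Since $m_n\le M$ by \eqref{geshu}, there are at most $M^2$ pairs $(i,i')$. For each pair, if two points $x\neq x'\in Q_{l_n^8}$ both satisfied $\|\theta_n^{(i)}+x\cdot\omega-\theta_n^{(i')}\|_\T\le e^{-l_n^{1/2}}$, then $\|(x-x')\cdot\omega\|_\T\le 2e^{-l_n^{1/2}}$. This contradicts $\omega\in{\rm DC}_{\tau,\gamma}$, which gives $\|(x-x')\cdot\omega\|_\T\ge\gamma(2l_n^8)^{-\tau}$. The same argument yields \textbf{(3)}, and combined with the bound $m_n\le M$ it gives $|\Omega_n^{(i)}|\le M$. (Roots with a large imaginary part only make the distances larger, so they cause no trouble.)

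\textbf{Step 2: Choosing the scale.} Split $(l_n^4,l_n^8]$ into the $4M^8$ dyadic-in-$\alpha$ intervals $(l_n^{4\alpha^{2k}},l_n^{4\alpha^{2(k+1)}}]$, using $\alpha^{8M^8}=2$. At most $M^2$ norms $\|x\|_1$ are resonant at level $e^{-l_n^{1/2}}$, so by pigeonhole one interval $I_{k_0}$ contains none of them. Set $l_{n+1}:=l_n^{4\alpha^{2k_0+1}}$; then $[l_{n+1}^{1/\alpha},l_{n+1}^{\alpha}]=\bar I_{k_0}$.

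\textbf{Step 3: Choosing the resonance level.} Split $(g(\delta_n),g^{(4M^8+1)}(\delta_n)]$ into $2M^4$ blocks of $2M^4$ iterates of $g$. The at most $M^2$ values $\|\theta_n^{(i)}+x\cdot\omega-\theta_n^{(i')}\|_\T$ with $\|x\|_1\le l_n^8$ and value $\le e^{-l_n^{1/2}}$ miss at least one block. Take $\bar\delta_n$ to be the middle iterate of that block. Then every relevant distance is either $\le\bar\delta_n$ or $>g^{(M^4)}(\bar\delta_n)$; here we use $g^{(M^4)}(\bar\delta_n)\ll e^{-l_n^{1/2}}$ and $\bar\delta_n\ge g(\delta_n)$, with the remaining distances exceeding $e^{-l_n^{1/2}}$. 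Define
\[\Omega_n^{(i)}:=\{x\in Q_{l_{n+1}^{1/\alpha}}:\ \min_{i'}\|\theta_n^{(i)}+x\cdot\omega-\theta_n^{(i')}\|_\T\le\bar\delta_n\}.\]
Statement \textbf{(1)} follows from the gap. For \textbf{(2)}, points with $\|x\|_1\in I_{k_0}$ are nonresonant by Step 2, and points with $\|x\|_1\le l_{n+1}^{1/\alpha}$ are nonresonant by the gap. Statement \textbf{(4)} follows verbatim from the proof of Proposition \ref{clas}: compose the two shifts, obtain a point of $Q_{2l_{n+1}^{1/\alpha}}\subset Q_{l_{n+1}^{\alpha}}$ at distance $\le2\bar\delta_n<g^{(M^4)}(\bar\delta_n)$, and then use \textbf{(2)} and the gap.

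\textbf{Step 4: the separation \textbf{(5)}, which is the main obstacle.} Take $x\neq y\in\Omega_n^{(i)}$, matched to $i'$ and $i''$. Then
\[\|\theta_n^{(i'')}-\theta_n^{(i')}+(x-y)\cdot\omega\|_\T\le 2\bar\delta_n.\]
A bare Diophantine bound does not exclude $\|x-y\|_1\le l_n^\alpha$. Instead I would invoke the inductive Hypothesis \ref{h1} at step $n$. The roots $\theta_n^{(i')}$ and $\theta_n^{(i'')}$ lie in ${\rm Db}(\mathcal N_{n-1}^{(a)})$ and ${\rm Db}(\mathcal N_{n-1}^{(b)})$ for representatives of classes $\mathcal C_{n-1}^{(a)}$ and $\mathcal C_{n-1}^{(b)}$. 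A shift by $z=x-y$ with $\|z\|_1\le l_n^{\alpha}$ bringing $\widetilde{\mathcal N}_{n-1}^{(b)}$ within $2\widehat\delta_{n-1}^2$ of $\widetilde{\mathcal N}_{n-1}^{(a)}$ is ruled out by the step-$(n-1)$ analogue of \eqref{fanz} (the separation $g^{(M^4)}(\bar\delta_{n-1})\gg\widehat\delta_{n-1}$) when $a\neq b$, and also when $a=b$ with $z\neq o$, provided $z\notin\Omega_{n-1}$. If instead $z\in\Omega_{n-1}$, then $\|z\|_1\le l_n^{1/\alpha}$ and $\|z\cdot\omega\|_\T\ge\gamma l_n^{-\tau}\gg\widehat\delta_{n-1}$ when $a=b$. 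The case $a\neq b$ with $z$ connecting two different classes contradicts the definition of the classes.

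The delicate point is the bookkeeping between the two consecutive scales. The step-$(n-1)$ separation is available for $\|z\|_1\le l_n^{\alpha}$, which is exactly the range needed, so I expect this to work. If it does not close, the fallback is to discard from the pigeonhole the norms below $l_n^\alpha$ as well.
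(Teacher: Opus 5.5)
Your proposal follows the paper's proof. Items (1)--(4) come from rerunning the pigeonhole construction of Propositions \ref{scale} and \ref{clas} at step $n$. Item (5) is proved the same way as in the paper: pull $\theta_n^{(i')},\theta_n^{(i'')}$ back to step-$(n-1)$ representatives via Hypothesis \ref{h1}, exclude equal representatives by the Diophantine condition, force $z=x-y\in\Omega_{n-1}$ by the step-$(n-1)$ version of \eqref{fanz} (available because $\|z\|_1\le l_n^{\alpha}$), and get a forbidden equivalence. You order the cases slightly differently, but the content is identical. Three points need correcting.

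\textbf{The threshold $e^{-l_n^{1/2}}$ in Steps 1--3 is too large.} Since $|\ln g^{(K)}(\delta_n)|=l_n^{\frac23\alpha^{-K}}$ and $\alpha^{4M^8}=2^{1/2}$, the top of the admissible range satisfies $|\ln g^{(4M^8+1)}(\delta_n)|=l_n^{\frac{2}{3}2^{-1/2}\alpha^{-1}}$. Because $\frac23\cdot 2^{-1/2}<\frac12$, your claimed inequality $g^{(M^4)}(\bar\delta_n)\ll e^{-l_n^{1/2}}$ is false when $k_0$ is near $2M^4-1$. In that case, distances lying in $(e^{-l_n^{1/2}},g^{(M^4)}(\bar\delta_n)]$ are neither in your finite list nor beyond $g^{(M^4)}(\bar\delta_n)$, so the dichotomy behind \eqref{fenkn} and \eqref{fanz} is not established. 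The fix is what "replace $\delta_0$ by $\delta_n$" actually gives: use the threshold $e^{-|\ln\delta_n|^{1/2}}=e^{-l_n^{1/3}}$. With it, the Diophantine uniqueness of Step 1 still holds, and $g^{(M^4)}(\bar\delta_n)<e^{-l_n^{1/3}}$ holds on the whole range.

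\textbf{A missing upgrade in the final case $a\neq b$ of Step 4.} There you only know $\|\theta_{n-1}^{(b)}+z\cdot\omega-\theta_{n-1}^{(a)}\|_\T\lesssim\widehat\delta_{n-1}$. This is far above the equivalence threshold $\bar\delta_{n-1}$, so it does not yet contradict the definition of the classes. You must first invoke \eqref{fenkn} at step $n-1$, which applies since $z\in\Omega_{n-1}^{(b)}$, to upgrade the bound to $\le\bar\delta_{n-1}$. Only then can you conclude $\theta_{n-1}^{(a)}\sim\theta_{n-1}^{(b)}$ and reach the contradiction.

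\textbf{The fallback would not work, though it is not needed.} Property (5) concerns differences of resonant shifts inside $Q_{l_{n+1}^{1/\alpha}}$, and the choice of scale does not control these. Discarding small norms in the pigeonhole therefore would not help, but your main argument closes without it.
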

\begin{proof}
	The proofs of \textbf{(1)}--\textbf{(4)}  are the same as those of Proposition \ref{scale} and \ref{clas}, with $0$ replaced by $n$ and $1$ replaced by $n+1$. Now we prove  \textbf{(5)}. Suppose  that  \textbf{(5)} fails.  Then there exist $ i',i''\in [1,m_n]$ and  $x\neq y\in \Omega_n^{(i)}$ with $\|x-y\|_1\leq  l_n^\alpha $ such that
	\begin{equation}\label{laoban}
		\|\theta_n^{(i)}(E^*)+x\cdot\omega-\theta_n^{(i')}(E^*)\|_\T\leq \bar{\delta}_n, \ \ 	\|\theta_n^{(i)}(E^*)+y\cdot\omega-\theta_n^{(i'')}(E^*)\|_\T\leq \bar{\delta}_n.
	\end{equation}
	Recalling Hypothesis \ref{h1}, we assume $\theta_n^{(i')}(E^*)$ is obtained from $\widetilde{\mathcal{N}}_{n-1}^{(i_1)}$, the neighborhood  of the representative $ \theta_{n-1}^{(i_1)}(E^*)\in\mathcal{C}_{n-1}^{(i_1)} $,   and $\theta_n^{(i'')}(E^*)$ is obtained from $\widetilde{\mathcal{N}}_{n-1}^{(i_2)}$, the neighborhood  of the representative $ \theta_{n-1}^{(i_2)}(E^*)\in\mathcal{C}_{n-1}^{(i_2)} $. 
By  $\|\theta_n^{(i')}(E^*) - \theta_{n-1}^{(i_1)}(E^*)\|_\T, \|\theta_n^{(i'')}(E^*) - \theta_{n-1}^{(i_2)}(E^*)\|_\T\leq 2\widehat{\delta}_{n-1}  $ and \eqref{laoban}, we have 
	\begin{equation}\label{fanzheng}
		\|\theta_{n-1}^{(i_1)}(E^*)+(y-x)\cdot\omega-\theta_{n-1}^{(i_2)}(E^*)\|_\T\leq 5\widehat{\delta}_{n-1}< g^{(M^4)}(\bar{\delta}_{n-1}). 
	\end{equation}
	It follows from  \eqref{fanzheng} that 	$\|\theta_{n-1}^{(i_1)}(E^*)-\theta_{n-1}^{(i_2)}(E^*)\|_\T>\|(x-y)\cdot\omega\|_\T- g^{(M^4)}(\bar{\delta}_{n-1})>0 $  since   $\omega\in 	{\rm DC}_{\tau, \gamma}$  and $0\neq \|x-y\|_1\leq l_n^\alpha$.  
	Thus $\theta_{n-1}^{(i_1)}(E^*)\neq \theta_{n-1}^{(i_2)}(E^*)$ must belong to  different equivalence classes. 
	On the other hand, by \eqref{fanz} with $n$ replaced 
	by $n-1$ together with \eqref{fanzheng}, we have  $y-x\in \Omega_{n-1}^{(i_1)}$ and  $\theta_{n-1}^{(i_1)}(E^*)\sim \theta_{n-1}^{(i_2)}(E^*)$.   This is a contradiction.
\end{proof}

\begin{prop}\label{realn}
	There exists some  $\widehat{\delta}_n\in [g^{(M)}(\bar{\delta}_n),g^{(M^3)}(\bar{\delta}_n)]$ such that each equivalence class $\mathcal{C}_n^{(i)}$  satisfies either  \textbf{(1)} or \textbf{(2)}: \begin{itemize}
		\item [\textbf{(1)}] All the elements in $\mathcal{C}_n^{(i)}$ belong to $\T_{h(\widehat{\delta}_n)}$.
		\item [\textbf{(2)}] All the elements in $\mathcal{C}_n^{(i)}$ don't belong  to $\T_{g^{(M)}(\widehat{\delta}_n)}$. 
	\end{itemize} 
\end{prop}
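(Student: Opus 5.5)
The argument is a pigeonhole over nested scales, in the same spirit as Proposition \ref{real}; only the bookkeeping of the iterated maps $g^{(k)}$ and $h$ needs care. First, the imaginary parts within one equivalence class $\mathcal{C}_n^{(i)}$ are close. By \eqref{classn}, any two elements of a class differ, modulo a real shift $x\cdot\omega$, by at most $\bar{\delta}_n$ (the relation is already transitive by Proposition \ref{scalen} \textbf{(4)}). Since $x\cdot\omega\in\R$, this gives
$$|\Im\theta_n^{(i')}(E^*)-\Im\theta_n^{(i'')}(E^*)|\leq\bar{\delta}_n$$
for $\theta_n^{(i')}(E^*),\theta_n^{(i'')}(E^*)$ in the same class. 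We record for each class $\mathcal{C}_n^{(i)}$ the number $y_i:=\min_{\theta\in\mathcal{C}_n^{(i)}}|\Im\theta|$. Then every element of the class has $|\Im\theta|\in[y_i,y_i+\bar{\delta}_n]$. By \eqref{geshu} there are at most $m_n\leq M$ classes, hence at most $M$ values $y_i$.

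Next we choose a candidate ladder. Consider $\rho_k:=g^{(M+2k)}(\bar{\delta}_n)$ for $0\leq k\leq K$, with $K$ of order $M$ chosen so that $g^{(M)}(\rho_K)$ and $\rho_K$ remain below $g^{(M^3)}(\bar{\delta}_n)$ (this requires $M+2K+M\leq M^3$, which is fine for $M\geq2$). For each $k$, consider the annulus of imaginary parts $(h(\rho_k)-\bar{\delta}_n,\ g^{(M)}(\rho_k)]$. Because $g^{(M)}(\rho_k)\ll h(\rho_{k+1})$ fails in general, these annuli are not automatically disjoint. The ladder must therefore be spaced so that $g^{(M)}(\rho_k)<h(\rho_{k+1})-\bar{\delta}_n$. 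Since $h$ shrinks (making $\rho$ smaller), the cleaner way is to index in the opposite direction.

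Concretely, take $\rho_k:=g^{(M^3-(M+1)k)}(\bar{\delta}_n)$, decreasing in $k$ from $g^{(M^3)}(\bar{\delta}_n)$. Use $g^{(M)}(\rho_{k+1})=g^{(M^3-(M+1)k-1)}(\bar{\delta}_n)$ and compare with $h(\rho_k)$; here $h(\rho_k)<\rho_k^{C}\ll g^{(\cdot)}$ in the right direction. We then verify that the intervals $J_k:=[h(\rho_k)-\bar{\delta}_n,\,g^{(M)}(\rho_k)]$ are pairwise disjoint for $0\leq k\leq M$, all with $\rho_k\in[g^{(M)}(\bar{\delta}_n),g^{(M^3)}(\bar{\delta}_n)]$. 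This uses $\bar{\delta}_n\ll h(\rho)$ for every such $\rho$ (true since $\rho\geq g^{(M)}(\bar{\delta}_n)$, so $h(\rho)\geq\bar{\delta}_n^{1/2}$ roughly) and the separation $h(a)<a^{C}$, $g(a)>a^{1/C}$. Establishing this disjointness in the correct direction is the step I expect to be the main (if elementary) obstacle. It is essentially a calculation with $|\ln\cdot|^{\alpha^{\pm j}}$: for instance $|\ln g^{(j)}(\bar{\delta}_n)|=|\ln\bar{\delta}_n|^{\alpha^{-j}}$, so all comparisons reduce to exponents $\alpha^{-j}$ versus $\alpha\cdot\alpha^{-j'}$.

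Finally, pigeonhole. There are $M+1$ disjoint intervals $J_k$ and at most $M$ numbers $y_i$, so some $k_0$ has no $y_i\in J_{k_0}$. Set $\widehat{\delta}_n:=\rho_{k_0}$ and check the two alternatives for each class. If $y_i<h(\widehat{\delta}_n)-\bar{\delta}_n$, every element has $|\Im\theta|<h(\widehat{\delta}_n)$, giving alternative \textbf{(1)}. Otherwise $y_i>g^{(M)}(\widehat{\delta}_n)$, so no element lies in $\T_{g^{(M)}(\widehat{\delta}_n)}$, giving alternative \textbf{(2)}.
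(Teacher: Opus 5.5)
Your overall strategy is the paper's: its proof refers back to Proposition \ref{real}, namely that imaginary parts within a class differ by at most $\bar\delta_n$, that there are at most $M$ classes, and that a pigeonhole over scales in $[g^{(M)}(\bar\delta_n),g^{(M^3)}(\bar\delta_n)]$ finishes. Your first and last steps are fine. The gap is the step you flagged and left unverified: with your ladder the intervals $J_k$ are \emph{not} pairwise disjoint. Since $|\ln g(\delta)|^{\alpha}=|\ln\delta|$, one has $h\circ g=\mathrm{id}$ exactly, hence
\[
h(\rho_k)=g^{(M^3-(M+1)k-1)}(\bar\delta_n)=g^{(M)}(\rho_{k+1}).
\]
So $J_{k+1}$ ends precisely at $h(\rho_k)$, while $J_k$ starts at $h(\rho_k)-\bar\delta_n$, and consecutive intervals share $[h(\rho_k)-\bar\delta_n,h(\rho_k)]$. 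Then $M$ numbers with $y_j\in J_{j-1}\cap J_j$ ($1\le j\le M$) meet all $M+1$ intervals, and your pigeonhole produces no $k_0$. Your comparison ``$h(\rho_k)<\rho_k^C\ll g^{(\cdot)}$'' is not the relevant one. In the iteration index, the forbidden window for $\widehat\delta_n=g^{(j)}(\bar\delta_n)$ is exactly $[j-1,j+M]$. So spacing $M+1$ makes consecutive windows touch, and your $\bar\delta_n$ buffer makes them overlap.

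Either of two repairs works.

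(a) Use spacing $M+2$: set $\rho_k:=g^{(a_k)}(\bar\delta_n)$ with $a_k=M^3-(M+2)k$, $0\le k\le M$. Then $g^{(M)}(\rho_{k+1})=g^{(a_k-2)}(\bar\delta_n)=:t$ and $h(\rho_k)-\bar\delta_n=g(t)-\bar\delta_n>t$, because $g(t)\ge 2t$ and $t\ge g(\bar\delta_n)\gg\bar\delta_n$. So the $J_k$ are disjoint. Staying in the admissible range needs $M^3-(M+2)M\ge M$, i.e.\ $M\ge 3$, which you may assume by enlarging $M$ in Condition \ref{con1}.

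(b) Keep your ladder, which stays in range for all $M\ge2$, but count sharply. Let $Y_i$ be the maximal $|\Im\theta|$ over $\mathcal{C}_n^{(i)}$. Class $i$ fails both alternatives at $\rho_k$ iff $[y_i,Y_i]$ meets $W_k:=(h(\rho_k),g^{(M)}(\rho_k)]$. The $W_k$ are pairwise disjoint, consecutive ones are adjacent at the points $h(\rho_k)$, and each has length $\gg\bar\delta_n$. So $[y_i,Y_i]$ meets at most two windows. It meets two only if $y_i\le h(\rho_k)<Y_i$ for some $k$, which forces $|\mathcal{C}_n^{(i)}|\ge 2$. Hence the number of bad $k$ is at most $\sum_i\min(2,|\mathcal{C}_n^{(i)}|)\le m_n\le M<M+1$.
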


\begin{proof}
	The proof is the same as that of  Proposition \ref{real}.
\end{proof}
We assume  that the  equivalence classes satisfying \textbf{(1)} are the first $\kappa_n$ classes  and the representative of $\mathcal{C}_n^{(i)}$ is $\theta_n^{(i)}(E^*)$ for $1\leq i\leq \kappa_n$.  For each  $1\leq i\leq \kappa_n$,  we  define  $$\mathcal{N}_n^{(i)}:=D_\T(\Re(\theta_n^{(i)}(E^*)),{\widehat{\delta}_n}^2), \quad {\widetilde{\mathcal{N}}}_n^{(i)}:=D_\T(\Re(\theta_n^{(i)}(E^*)),\widehat{\delta}_n). $$
We will carry out  the resonance analysis in each  ${\widetilde{\mathcal{N}}}_n^{(i)}$, $1\leq i\leq \kappa_n$.

\subsubsection{Construction $B_{n+1}$ and $A_{n+1}^{(i)}$}\label{Bn}
The first issue is to construct the $B_{n+1}$ and  $A_{n+1}^{(i)}$ satisfying the inductive Hypothesis \ref{h3}, that is, we  need to construct a block  $B_{n+1}$ such that	$Q_{l_{n+1}}\subset B_{n+1}\subset Q_{l_{n+1}+50M^2l_{n}}$ and for any  $\theta\in \widetilde{\mathcal{N}}_{n}^{(i)}$ and  $ i\in [1,\kappa_{n}]$,    
\begin{equation}\label{r1}
	x\in S_{k-1}(\theta,E^*,g(\delta_{k-1}))\cap B_{n+1}\Rightarrow (Q_{5l_k}+x)\subset B_{n+1}, \ \  \forall k\in [1,n]. 
\end{equation}
Moreover,  for each $ i\in [1,\kappa_{n}]$, we need to construct a set $A_{n+1}^{(i)}$ such that $A_{n+1}^{(i)}\subset Q_{l_{n+1}^{1/\alpha}}$, $|A_{n+1}^{(i)}|\leq M^{n+1}$ and  $ \widehat{B}_{n+1}^{(i)}=:B_{n+1}\setminus A_{n+1}^{(i)}$ satisfying  for any  $\theta\in \widetilde{\mathcal{N}}_{n}^{(i)}$, 
\begin{equation}\label{r2}
	x\in S_{k-1}(\theta,E^*,g(\delta_{k-1}))\cap \widehat{B}_{n+1}^{(i)}\Rightarrow (Q_{5l_k}+x)\subset \widehat{B}_{n+1}^{(i)}, \ \  \forall k\in [1,n],
\end{equation} 
\begin{equation}\label{n1}
	S_{n}(\theta,E^*,g^{(2)}(\widehat{\delta}_{n})) \cap \widehat{B}_{n+1}^{(i)} =\emptyset.
\end{equation}
The construction of $B_{n+1}$ will follow from the following Lemmas \ref{set} and \ref{sep} as well as   the separation of resonant points guaranteed by  $\omega\in 	{\rm DC}_{\tau, \gamma}$. 

\begin{lem}\label{set}
	Let $L,K\in \N^*$.  Assume that   a set $S\subset\Z^d$ satisfies 
	\begin{equation}\label{3211}
		|S\cap( Q_{4LK}+x)|\leq  K , \quad \forall x\in \Z^d.   
	\end{equation}
	Then for any set $\Lambda\subset \Z^d$, there exists an extension  $\widetilde{\Lambda}$  satisfying the following two  properties: 
	\begin{itemize}
		\item $\Lambda \subset \widetilde{\Lambda }\subset \{x\in \Z^d: \ \operatorname{dist}_1(x,\Lambda )\leq 2KL\}$.
		\item    $(Q_L+x)\subset \widetilde{\Lambda }$ for any $x\in S$ such that  $(Q_L+x)\cap \widetilde{\Lambda }\neq \emptyset$. 
	\end{itemize}
\end{lem}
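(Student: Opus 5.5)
We will build $\widetilde{\Lambda}$ by a finite iterative saturation. Set $\Lambda_0:=\Lambda$. Given $\Lambda_k$, let $S_k:=\{x\in S:\ (Q_L+x)\cap \Lambda_k\neq\emptyset,\ (Q_L+x)\not\subset \Lambda_k\}$ be the set of \emph{bad} points. If $S_k=\emptyset$, stop and put $\widetilde{\Lambda}:=\Lambda_k$. Otherwise, set $\Lambda_{k+1}:=\Lambda_k\cup\bigcup_{x\in S_k}(Q_L+x)$. Each step strictly enlarges the set by absorbing whole blocks $Q_L+x$ with $x\in S$. If the process terminates, the second property holds by construction, so everything reduces to the distance bound $\operatorname{dist}_1(y,\Lambda)\le 2KL$ for all $y\in\widetilde{\Lambda}$; this bound also gives termination.

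For the distance bound, we trace the \emph{chain} that leads to any newly added point. If $y\in\Lambda_{k+1}\setminus\Lambda_k$, then $y\in Q_L+x_k$ for some $x_k\in S_k$, and $Q_L+x_k$ meets $\Lambda_k$ at some point $y'$. Either $y'\in\Lambda$, or $y'\in Q_L+x_{k-1}$ with $x_{k-1}\in S_{k-1}$, and so on. This produces distinct points $x_k,x_{k-1},\dots,x_j\in S$; they are distinct because a point of $S$ whose block has been absorbed is never bad again. Consecutive points satisfy $\|x_{m}-x_{m-1}\|_1\le 2L$, and the last block $Q_L+x_j$ meets $\Lambda$. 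Hence
\[
\operatorname{dist}_1(y,\Lambda)\le L+2L(k-j)+L=2L(\text{length of chain}).
\]
So it suffices to show that every such chain has at most $K$ elements.

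This is the main step, and it is where hypothesis \eqref{3211} enters. Suppose some chain had $K+1$ points. Take its first $K+1$ links; consecutive links are within distance $2L$, so all of them lie within distance $2LK$ of the first one, i.e.\ inside some translate $Q_{2LK}+x\subset Q_{4LK}+x$. That puts $K+1$ distinct points of $S$ in one such cube, contradicting \eqref{3211}. Therefore every chain has length at most $K$. I will be slightly careful with the constant: the distance is bounded by $L+2L(K-1)+L=2KL$, which matches the claimed bound. (If one worries about off-by-one issues, the margin in $4LK$ versus $2LK$ absorbs them.)

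Termination follows from the same argument. Every point added lies in the bounded set $\{\operatorname{dist}_1(\cdot,\Lambda)\le 2KL\}$ when $\Lambda$ is finite. Each step adds a new block centered at a point of $S$ within that region, and only finitely many such points exist. If $\Lambda$ is infinite, the same argument works with the increasing union $\widetilde{\Lambda}:=\bigcup_k\Lambda_k$: any bad point for the union is bad at some finite stage $k$ and is absorbed at stage $k+1$. The expected obstacle is only the bookkeeping that consecutive chain elements are close and distinct; I will carry that out carefully.
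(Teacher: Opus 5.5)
Your proof is correct and follows the paper's route. The paper also saturates iteratively, $\Lambda_{n+1}=\Lambda_n\cup\bigcup_{x\in S:\,(Q_L+x)\cap\Lambda_n\neq\emptyset}(Q_L+x)$, and uses \eqref{3211} to conclude that the sequence stabilizes within $K$ steps, each step moving at most $2L$ away. Your chain-length bound supplies exactly the details the paper leaves as ``easily checked''.

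One tightening: a block that is bad at stage $k\geq 1$ cannot meet $\Lambda_{k-1}$, since it would already have been absorbed. So the chain drops exactly one stage at a time down to stage $0$. This gives stabilization by stage $K$ directly, so your separate finiteness and increasing-union argument for termination is not needed.
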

\begin{proof}
	Let $\Lambda_0:=\Lambda$. For $n\geq 0$, 	we inductively define the sequence 
	\begin{equation*}
		\Lambda_{n+1}:=\Lambda_n \bigcup\left(\bigcup_{x\in S:\, (Q_L+x)\cap \Lambda_n \neq \emptyset }(Q_L+x)\right).
	\end{equation*}
	By \eqref{3211}, this sequence  must satisfy for some  $ n_0\leq K$ such that $	\Lambda_{n_0+1}=\Lambda_{n_0}$. Then one can easily check $\widetilde{\Lambda}=\Lambda_{n_0}$ is the desired set. 
\end{proof}
\begin{lem}\label{sep} For any $k\in[1,n]$,  we have 
	$$ 	\left| \left(  \bigcup_{i=1}^{\kappa_{n}} \bigcup_{ \theta\in \widetilde{\mathcal{N}}_{n}^{(i)}}S_{k-1}(\theta,E^*,g(\delta_{k-1}))\right) \cap( Q_{40l_{k}M^2}+x)\right|\leq  M^2 , \quad \forall x\in \Z^d.   $$
\end{lem}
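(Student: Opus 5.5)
We have to bound, for each $k\in[1,n]$ and each box $Q_{40l_kM^2}+x$, the number of points $y$ in that box for which $\theta+y\cdot\omega$ is $g(\delta_{k-1})$-close to one of the step-$(k-1)$ roots, for some $\theta$ in some $\widetilde{\mathcal{N}}_n^{(i)}$.

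\textbf{Step 1: reduce to finitely many reference phases.} By Hypothesis \ref{h1}, every $n$-step root $\theta_n^{(i)}(E^*)$ lies within $2\widehat{\delta}_{n-1}$ of a representative $\theta_{n-1}^{(i_1)}(E^*)$. Iterating down to step $k-1$, each representative $\Re\theta_n^{(i)}(E^*)$ lies within $\sum_{r\ge k-1}2\widehat{\delta}_r\lesssim \widehat{\delta}_{k-1}$ of some step-$(k-1)$ representative $\theta_{k-1}^{(a)}(E^*)$. Here we use that the $\widehat{\delta}_r$ decay superexponentially, and that each neighborhood $\widetilde{\mathcal{N}}_n^{(i)}$ has radius $\widehat{\delta}_n\ll\widehat{\delta}_{k-1}$.

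\textbf{Step 2: resonance forces a near-return.} Take $\theta\in\widetilde{\mathcal{N}}_n^{(i)}$ and $y\in S_{k-1}(\theta,E^*,g(\delta_{k-1}))$. By definition, $\theta+y\cdot\omega$ lies in some $\widetilde{\mathcal{N}}_{k-2}^{(b)}$ and is $g(\delta_{k-1})$-close to a root $\theta_{k-1}^{(b,j)}(E^*)$. For $k=1$ this reads: $\theta+y\cdot\omega$ is $g(\delta_0)$-close to some $\theta_0^{(b)}(E^*)$. Combining with Step 1 gives
\[
\|\theta_{k-1}^{(a)}(E^*)+y\cdot\omega-\theta_{k-1}^{(b')}(E^*)\|_\T\lesssim \widehat{\delta}_{k-1}+g(\delta_{k-1})
\]
for a pair $(a,b')$ of step-$(k-1)$ roots. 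There are at most $M$ choices of $a$ and at most $M$ choices of $b'$.

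\textbf{Step 3: count points per pair.} Fix one such pair $(a,b')$ and suppose two points $y\ne y'$ of the box both satisfy the estimate of Step 2. Subtracting, and using $\|y-y'\|_1\le 80l_kM^2$, we get
\[
\|(y-y')\cdot\omega\|_\T\lesssim g(\delta_{k-1}).
\]
By ${\rm DC}_{\tau,\gamma}$ the left side is at least $\gamma(80l_kM^2)^{-\tau}$. Since $l_k\le l_{k-1}^8$ and $\ln g(\delta_{k-1})^{-1}=l_{k-1}^{2/(3\alpha)}\gg\ln l_k$, this is a contradiction. So each pair $(a,b')$ contributes at most one point, and the total count is at most $M^2$.

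\textbf{Main obstacle: uniformity in $\theta$.} The union over $\theta$ in a neighborhood is handled by Step 1, but only because the neighborhood is small. The delicate part is checking that $\widehat{\delta}_{k-1}$ (the $\theta$-spread) plus $g(\delta_{k-1})$ remains $\ll l_k^{-\tau}$. This needs $\widehat{\delta}_{k-1}\le g^{(M)}(\bar\delta_{k-1})$, which is still $\exp(-l_{k-1}^{c})$ for some $c>0$ because $\alpha$ is close to $1$; check this via $\alpha^{8M^8}=2$. It also needs the neighborhood $\widetilde{\mathcal{N}}_{k-2}^{(b)}$ appearing in the definition of $S_{k-1}$ to match the representative structure used in Step 1.
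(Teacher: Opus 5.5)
Your proof is correct. The core mechanism matches the paper's: group the points by a pair of roots, show each pair contributes at most one point by the Diophantine condition, and count at most $M^2$ pairs. The difference is the choice of reference phase, and yours is the less efficient one.

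\textbf{The paper's anchor.} The paper uses the center $\Re(\theta_n^{(i)}(E^*))$ of $\widetilde{\mathcal{N}}_n^{(i)}$ directly. Since $n\geq k$, the radius $\widehat{\delta}_n$ is far smaller than $g(\delta_{k-1})$. Hence every $y$ in the union satisfies $\|\Re(\theta_n^{(i)}(E^*))+y\cdot\omega-\theta_{k-1}^{(i')}(E^*)\|_\T<2g(\delta_{k-1})$ for some pair $(i,i')\in[1,\kappa_n]\times[1,m_{k-1}]$. Since $g(\delta_{k-1})\ll l_k^{-\tau}$, this leaves at most one $y$ per pair in the box. Uniformity in $\theta$ costs nothing.

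\textbf{Your anchor.} You push the anchor down the chain of Hypothesis \ref{h1} to a step-$(k-1)$ representative. This is valid: the chain distances $2\widehat{\delta}_{n-1},\dots,2\widehat{\delta}_{k-1}$ sum to $\lesssim\widehat{\delta}_{k-1}$, and you still have at most $M\cdot M$ pairs. But it replaces the error $2g(\delta_{k-1})$ with the much larger $\widehat{\delta}_{k-1}\geq g^{(M+1)}(\delta_{k-1})$. That is exactly what creates your ``main obstacle''; the detour buys nothing here.

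\textbf{Misstatements to fix.}
\begin{enumerate}
\item The display in your Step 3 is wrong as written. Subtracting the two Step 2 estimates gives $\|(y-y')\cdot\omega\|_\T\lesssim\widehat{\delta}_{k-1}$, not $\lesssim g(\delta_{k-1})$.
\item The needed bound does hold, but via the upper endpoint $\widehat{\delta}_{k-1}\leq g^{(M^3)}(\bar{\delta}_{k-1})\leq g^{(M^3+4M^8+1)}(\delta_{k-1})$, not $g^{(M)}$ (which is the lower endpoint). Since $M^3+4M^8+1\leq 8M^8$ and $\alpha^{8M^8}=2$, you get $|\ln\widehat{\delta}_{k-1}|\geq|\ln\delta_{k-1}|^{1/2}$. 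This is $\gg\tau\ln l_k$, using $l_k\leq l_{k-1}^8$, or $l_1\leq|\ln\delta_0|^8$ when $k=1$.
\item Your worry about matching $\widetilde{\mathcal{N}}_{k-2}^{(b)}$ to the representative structure is unfounded. All you use is that $\theta+y\cdot\omega$ is $g(\delta_{k-1})$-close to one of the $m_{k-1}$ step-$(k-1)$ roots, and that holds by definition.
\end{enumerate}
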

\begin{proof}
	If $y\in    \bigcup_{i=1}^{\kappa_{n}} \bigcup_{ \theta\in \widetilde{\mathcal{N}}_{n}^{(i)}}S_{k-1}(\theta,E^*,g(\delta_{k-1}))$, then there exist $i\in [1,\kappa_n]$ and $i'\in [1,m_{k-1}]$ such that  \begin{equation}\label{shufu}
		\|\Re(\theta_{n}^{(i)}(E^*))+y\cdot\omega-\theta_{k-1}^{(i')}(E^*)\|_\T<g(\delta_{k-1})
		+\widehat{\delta}_n<2g(\delta_{k-1}).
	\end{equation}
	For  any fixed pair $(i,i')$ and $x\in \Z^d$,  there is at most one $y\in  (Q_{40l_{k}M^2}+x)$ satisfying \eqref{shufu} since $\omega\in 	{\rm DC}_{\tau, \gamma}$ and $g(\delta_{k-1})\ll l_{k}^{-1}$. The conclusion follows immediately since there are at most $M^2$ many $(i,i')$-pairs.  
\end{proof}
\begin{proof}[Construction of $B_{n+1}$]
We start with $B^{(0)}:=Q_{l_{n+1}}$.	Setting $L=10l_{n}$, $K=M^2$ and  $S=\bigcup_{i=1}^{\kappa_{n}} \bigcup_{ \theta\in \widetilde{\mathcal{N}}_{n}^{(i)}}S_{n-1}(\theta,E^*,g(\delta_{n-1}))$ in  Lemma \ref{set} yields  that there exists $B^{(1)}$ satisfying 
	\begin{itemize}
		\item  $B^{(0)}\subset B^{(1)}\subset \{x\in \Z^d: \ \operatorname{dist}_1(x,B^{(0)})\leq 20M^2l_n\} .$
		\item   $(Q_{10l_n}+x)\subset B^{(1)}$ for any $x\in\bigcup_{i=1}^{\kappa_{n}} \bigcup_{ \theta\in \widetilde{\mathcal{N}}_{n}^{(i)}}S_{n-1}(\theta,E^*,g(\delta_{n-1}))$ such that  $(Q_{10l_n}+x)\cap B^{(1)}\neq \emptyset$.
	\end{itemize}
	By setting $L=10l_{n-k}$, $K=M^2$ and  $S=\bigcup_{i=1}^{\kappa_{n}} \bigcup_{ \theta\in \widetilde{\mathcal{N}}_{n}^{(i)}}S_{n-k-1}(\theta,E^*,g(\delta_{n-k-1}))$ in Lemma \ref{set},  we inductively	define  $B^{(k+1)}$ ($1\leq k\leq n-1$), such that
	\begin{itemize}
		\item 	$B^{(k)}\subset B^{(k+1)}\subset \{x\in \Z^d: \ \operatorname{dist}_1(x,B^{(k)})\leq 20l_{n-k}\},$
		\item[$\bigstar$]  	 	 $(Q_{10l_{n-k}}+x)\subset B^{(k+1)}$ for any $x\in \bigcup_{i=1}^{\kappa_{n}} \bigcup_{ \theta\in \widetilde{\mathcal{N}}_{n}^{(i)}}S_{n-k-1}(\theta,E^*,g(\delta_{n-k-1}))$ such that $(Q_{10l_{n-k}}+x)\cap B^{(k+1)}\neq \emptyset$. 
	\end{itemize}
	We claim that $B^{(n)}$ is the desired block $B_{n+1}$ satisfying \eqref{r1}. It follows from  $\sum_{k=m}^{n-1}20l_{n-k}\leq 30l_{n-m}$ that 
	\begin{equation}\label{1316?}
		B^{(n)}\subset \{x\in \Z^d :\ \operatorname{dist}_1(x,B^{(m)})\leq 30M^2l_{n-m}\}.
	\end{equation}
	Thus  $$B^{(n)}\subset \{x\in \Z^d :\ \operatorname{dist}_1(x,B^{(0)})\leq  30M^2l_n\}\subset  Q_{l_{n+1}+50M^2l_{n}}.$$
	Now assume  $$ \text{$x\in  \left(\bigcup_{i=1}^{\kappa_{n}} \bigcup_{ \theta\in \widetilde{\mathcal{N}}_{n}^{(i)}} S_{k-1}(\theta,E^*,g(\delta_{k-1}))\right)\cap B^{(n)}$   for some $ k\in [1,n]  $. }$$  By \eqref{1316?} with $m=n-k+1$, we have
	$$
	B^{(n)}\subset \{x\in \Z^d :\ \operatorname{dist}_1(x,B^{(n-k+1)})\leq 30M^2l_{k-1}\}.
	$$
	It follows from   $30M^2l_{k-1}\leq l_{k}$ that 	$(Q_{10l_{k}}+x)\cap B^{(n-k+1)}\neq \emptyset.$
	Thus by $\bigstar$,    $(Q_{10l_{k}}+x)\subset  B^{(n-k+1)}\subset B^{(n)}.$ This proves that  $B_{n+1}=B^{(n)}$ satisfies  \eqref{r1}.
\end{proof}
\begin{rem}\label{gouzao}
	 In fact, from  the above proof we  also obtain  that 
	  for any finite set $\Lambda\subset\Z^d$ and any fixed $\theta\in\C/\Z $,  there exists an extension $\widetilde{\Lambda}$ such that 
	  \begin{itemize}
	  	\item $\Lambda\subset\widetilde{\Lambda}\subset \{x\in \Z^d:\ \operatorname{dist}_1(x,\Lambda)\leq 50M^2l_n\}.$
	  	\item  $\widetilde{\Lambda}$ is $(\theta,E^*)$-$n$-regular. 
	  \end{itemize}
\end{rem}

\begin{proof}[Construction of $A_{n+1}^{(i)}$]
	Recall that each $\theta_n^{(i)}(E^*)$ is obtained from some $\mathcal{N}_{n-1}^{(i')}$,  a neighborhood  in the previous step (cf. Hypothesis \ref{h1}). Thus  for $\theta\in {\widetilde{\mathcal{N}}}_n^{(i)}$ and $y\in \Omega_n^{(i)}$, it follows from \eqref{tiaon} that \begin{equation}\label{!}
		\text{	$\theta+y\cdot\omega\in \widetilde{\mathcal{N}}_{n-1}^{(i_y)}$  for some $i_y\in [1,\kappa_{n-1}]$. }
	\end{equation}
	Recall the set $ A_n^{(i_y)}\subset\operatorname{Core}( B_n)$ in Hypothesis \ref{h3}. We define (for $n=0$, $A_0^{(i_y)}:=\{o\}$) 
	\begin{equation}\label{yuns}
		A_{n+1}^{(i)}:=\bigcup_{y\in  \Omega_n^{(i)}}(A_n^{(i_y)}+y). 
	\end{equation}
	We need to  prove that $A_{n+1}^{(i)}$ satisfies 
	$|A_{n+1}^{(i)}|\leq M^{n+1}$ and $\widehat{B}_{n+1}^{(i)}=B_{n+1}\setminus A_{n+1}^{(i)} $ satisfies \eqref{r2} and  \eqref{n1}.
	Clearly, $|A_{n+1}^{(i)}| \leq M^{n+1}$ follows from  $|A_n^{(i_x)}|\leq M^n$ together with  $|\Omega_n^{(i)}|\leq M$. By Proposition \ref{scalen} \textbf{(5)} and $l_n^\alpha \gg l_n$, \begin{equation}\label{fankai}
		\text{the blocks $
			\{B_n+y\}_{y\in  \Omega_n^{(i)}}$  are separated from each other by at least  $\ l_n^\alpha/2$.}
	\end{equation} Thus \eqref{yuns} is a disjoint union. 
	By $\Omega_n^{(i)}\subset Q_{l_{n+1}^{1/\alpha}}$ and $l_n+50M^2l_{n-1}\leq l_{n+1}^{1/\alpha}$,  we have $(A_n^{(i_y)}+y)\subset (B_n+y)\subset Q_{2l_{n+1}^{1/\alpha}}$ for any  $ y\in \Omega_n^{(i)}$. 
	Thus by \eqref{yuns}, we have 
	$$	A_{n+1}^{(i)}\subset Q_{2l_{n+1}^{1/\alpha}}= \operatorname{Core}( B_{n+1}).  $$
Now we prove  \eqref{r2}. 
	Assume \begin{equation}\label{?}
		\text{	$1\leq k\leq n-1$ and $x\in \left(\bigcup_{\theta\in \widetilde{\mathcal{N}}_{n}^{(i)}}S_{k-1}(\theta,E^*,g(\delta_{k-1}))\right)\cap \widehat{B}_{n+1}^{(i)}$.}
	\end{equation}   Then $(Q_{5l_k}+x)\subset B_{n+1}$ by \eqref{r1}. If \begin{equation}\label{md}
		\text{$(Q_{5l_k}+x)\cap(A_n^{(i_y)}+y)\neq \emptyset$   for some $y\in  \Omega_n^{(i)}$,}
	\end{equation} then  
	$\|x-y\|_1\leq 5l_k+2l_n^{1/\alpha}<l_n-5l_k$. Thus $(Q_{5l_k}+x-y)\subset B_n$ and 
	$x-y\in B_n\cap (\widehat{B}_{n+1}^{(i)}-y)= (B_n\setminus A_n^{(i_y)})=\widehat{B}_n^{(i_y)}$. 
	It follows from \eqref{!} and \eqref{?} that  $x-y\in  \bigcup_{\theta\in \widetilde{\mathcal{N}}_{n-1}^{(i_y)}}S_{k-1}(\theta,E^*,g(\delta_{k-1})) $. Thus  $(Q_{5l_k}+x-y)\subset \widehat{B}_n^{(i_y)}$, i.e., $(Q_{5l_k}+x)\subset( B_n+y)\setminus (A_n^{(i_y)}+y)$    by Hypothesis \ref{h3} (cf. \eqref{shenm1}).  This  contradicts  \eqref{md}. Assume  \begin{equation}\label{??}
		\text{	$ k= n$ and $x\in \left(\bigcup_{\theta\in \widetilde{\mathcal{N}}_{n}^{(i)}}S_{n-1}(\theta,E^*,g(\delta_{n-1}))\right)\cap \widehat{B}_{n+1}^{(i)}$.}
	\end{equation}
	Then $(Q_{5l_n}+x)\subset B_{n+1}$ by \eqref{r1}.  If 
	$(Q_{5l_n}+x)\cap(A_n^{(i_y)}+y)\neq \emptyset $ for some $y\in  \Omega_n^{(i)}$,
	then  $\|x-y\|_1\leq 6l_n$.   It follows from \eqref{!} and \eqref{??}  that $$x-y\in  \bigcup_{\theta\in \widetilde{\mathcal{N}}_{n-1}^{(i_y)}}S_{n-1}(\theta,E^*,g(\delta_{n-1})) ,$$ which implies that there exists some  $i'\in [1,m_{n-1}] $ such  that  $$\|\theta_{n-1}^{(i_y)}(E^*)+(x-y)\cdot\omega-\theta_{n-1}^{(i')}(E^*)\|_\T\leq \widehat{\delta}_{n-1}+g(\delta_{n-1})<g^{(M^4)}(\bar{\delta}_{n-1}).$$
	It follows from  \eqref{fanz} with $n$ replaced by $n-1$ and $\|x-y\|_1\leq 6l_n<l_n^\alpha$  that  $x-y \in \Omega_{n-1}^{(i_y)}\subset A_n^{(i_y)}$ ($o\in \Omega_k^{(i)}$ holds  for any $k,i$ since $\theta_k^{(i)}(E^*)$ is always equivalent to itself.  By  \eqref{yuns} with $n$ replaced by $0,\cdots , n-1$, we obtain   $\Omega_{n-1}^{(i_y)}\subset A_n^{(i_y)}$). Thus $x\in A_n^{(i_y)}+y$, which  contradicts  \eqref{??}. This proves   \eqref{r2}. 
	
	 Finally,  \eqref{n1} follows from $\Omega_n^{(i)}\subset A_{n+1}^{(i)}$,  \eqref{fanz} and  $\widehat{\delta}_n+g^{(2)}(\widehat{\delta}_n)<g^{(M^4)}(\bar{\delta}_{n})$.
	
\end{proof}

\subsubsection{Resonance analysis via Schur complement}\label{RN}
For $x\in \Omega_n^{(i)}$, we define  the set $$\mathcal{C}_n^{(i),x}:=\{\theta_n^{(i')}(E^*)\in \mathcal{C}_n^{(i)}:\ \|\theta_n^{(i)}(E^*)+x\cdot \omega-\theta_n^{(i')}(E^*)\|_\T\leq \bar{\delta}_n\}. $$
By Proposition  \ref{scalen} \textbf{(3)}, we have \begin{equation}\label{jishun}
	\sum_{x\in\Omega_n^{(i)}}|\mathcal{C}_n^{(i),x}|=|\mathcal{C}_n^{(i)}|.
\end{equation}
For $\theta\in \widetilde{\mathcal{N}}_{n}^{(i)} $, 	by  \eqref{!}  ($\theta+x\cdot\omega\in \widetilde{\mathcal{N}}_{n-1}^{(i_x)}$  for some $i_x\in [1,\kappa_{n-1}]$)  and \eqref{SS}, we have  
\begin{equation}\label{5656}
	s_{A_n^{(i_x)}}(\theta+x\cdot\omega,E^*)\overset{\delta_{n-1}}{\sim} \prod_{j=1}^{|\mathcal{C}_{n-1}^{(i_x)}|}\|\theta+x\cdot\omega -\theta^{(i_x,j)}_{n}(E^*) \|_\T,\quad \forall \theta\in \widetilde{\mathcal{N}}_{n}^{(i)}. 
\end{equation}
\begin{prop}\label{SLn}The following statements hold: 
	\begin{itemize}
		\item[\textbf{(1)}]  For $x\in \Omega_n^{(i)}$ and $\theta_n^{(i')}(E^*)\in \mathcal{C}_n^{(i),x}$, we have  $\theta_n^{(i')}(E^*)-x\cdot \omega  \in \mathcal{N}_n^{(i)}$ (considered in $\C/\Z$).  
		\item  [\textbf{(2)}] For $x\in \Omega_n^{(i)}$ and $\theta\in  {\widetilde{\mathcal{N}}_n}^{(i)}$, we have 
		\begin{equation}\label{bianyi}
			s_{A_n^{(i_x)}}(\theta+x\cdot\omega,E^*)\overset{g^{(2)}(\widehat{\delta}_n)}{\sim} \prod_{\theta_n^{(i')}(E^*)\in \mathcal{C}_n^{(i),x}}\|\theta-(\theta_n^{(i')}(E^*)-x\cdot \omega)\|_\T.
		\end{equation}
	\end{itemize}
\end{prop}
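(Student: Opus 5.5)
Both parts follow the pattern of Proposition \ref{SL}, with the potential $v$ replaced by the previous-step Schur determinant $s_{A_n^{(i_x)}}$ and Condition \ref{con1} replaced by the factorization \eqref{5656}. The relevant scales, recalled from Propositions \ref{scalen} and \ref{realn}, are
$$\bar{\delta}_n \ll h(\widehat{\delta}_n)\ll \widehat{\delta}_n^2,\qquad g^{(M^4)}(\bar{\delta}_n)\gg g^{(2)}(\widehat{\delta}_n)\gg\widehat{\delta}_n .$$

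For \textbf{(1)}, take $\theta_n^{(i')}(E^*)\in\mathcal{C}_n^{(i),x}$. By the definition of $\mathcal{C}_n^{(i),x}$,
$$\|(\theta_n^{(i')}(E^*)-x\cdot\omega)-\theta_n^{(i)}(E^*)\|_\T\le\bar{\delta}_n .$$
Since $\theta_n^{(i)}(E^*)$ lies in a class of type \textbf{(1)} of Proposition \ref{realn}, we have $|\Im\theta_n^{(i)}(E^*)|\le h(\widehat{\delta}_n)$. The triangle inequality then gives
$$\|\theta_n^{(i')}(E^*)-x\cdot\omega-\Re(\theta_n^{(i)}(E^*))\|_\T\le \bar{\delta}_n+h(\widehat{\delta}_n)<\widehat{\delta}_n^2,$$
which is exactly the statement $\theta_n^{(i')}(E^*)-x\cdot\omega\in\mathcal{N}_n^{(i)}$.

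For \textbf{(2)}, fix $\theta\in\widetilde{\mathcal{N}}_n^{(i)}$ and $x\in\Omega_n^{(i)}$, and start from \eqref{5656}. That estimate gives comparability with the full product over the roots $\theta_n^{(i_x,j)}(E^*)$, $j\le|\mathcal{C}_{n-1}^{(i_x)}|$, with constant $C/\delta_{n-1}$. Two points need checking. First, the roots $\theta_n^{(i_x,j)}(E^*)$ are among $\{\theta_n^{(i'')}(E^*)\}_{i''=1}^{m_n}$, by Hypothesis \ref{h1}. Second, $\theta+x\cdot\omega$ must lie in the domain where \eqref{5656} holds, namely $\widetilde{\mathcal{N}}_{n-1}^{(i_x)}$; this is \eqref{!}, which follows because $\theta+x\cdot\omega$ is within $\widehat{\delta}_n+\bar{\delta}_n$ of some $\theta_n^{(i')}(E^*)\in{\rm Db}(\mathcal{N}_{n-1}^{(i_x)})$.

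Next split the product in \eqref{5656} into two groups.

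\emph{Roots in $\mathcal{C}_n^{(i),x}$.} For these, write
$$\|\theta+x\cdot\omega-\theta_n^{(i')}(E^*)\|_\T=\|\theta-(\theta_n^{(i')}(E^*)-x\cdot\omega)\|_\T .$$
These factors give exactly the product on the right of \eqref{bianyi}.

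\emph{Roots not in $\mathcal{C}_n^{(i),x}$.} Here \eqref{fenkn} gives
$$\|\theta_n^{(i)}(E^*)+x\cdot\omega-\theta_n^{(i'')}(E^*)\|_\T>g^{(M^4)}(\bar{\delta}_n).$$
Since $\|\theta-\theta_n^{(i)}(E^*)\|_\T\le 2\widehat{\delta}_n\ll g^{(M^4)}(\bar{\delta}_n)$, each such factor is bounded below by $\tfrac12 g^{(M^4)}(\bar{\delta}_n)$. Each factor is also bounded above by $1$. There are at most $M$ such factors, so the second group contributes a factor lying in $[g^{(M^4)}(\bar{\delta}_n)^M/2^M,\,1]$.

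Finally, absorb all constants into the single parameter $g^{(2)}(\widehat{\delta}_n)$. This bookkeeping is the main point to verify. The constant from \eqref{5656} is $\delta_{n-1}^{-1}$, and $\delta_{n-1}^{-1}\ll g^{(2)}(\widehat{\delta}_n)^{-1/2}$: indeed $\widehat{\delta}_n\ge g^{(M+4M^8+1)}(\delta_n)$ with $\ln\delta_n^{-1}=l_n^{2/3}\ge l_{n-1}^{8/3}$, while finitely many iterations of $g$ only reduce the exponent $2/3$ by a factor $\alpha^{-O(M^8)}\ge 1/2$. The loss from the second group is $g^{(M^4)}(\bar{\delta}_n)^M\gg g^{(2)}(\widehat{\delta}_n)^{1/2}$, because $\widehat{\delta}_n\le g^{(M^3)}(\bar{\delta}_n)$, so $g^{(2)}(\widehat{\delta}_n)$ is iterated further than $g^{(M^4)}(\bar{\delta}_n)$ up to the gap $M^4$ versus $M^3+2$. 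The delicate part is the comparison of iterate indices: one must confirm that $M^3+2\ge M^4$ is not needed, and in fact the reverse. Since $g$ increases toward $1$, more iterations give larger values, so $g^{(M^3+2)}(\bar{\delta}_n)$ versus $g^{(M^4)}(\bar{\delta}_n)$ must be compared correctly.

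If $M^4>M^3+2$, then $g^{(M^4)}(\bar{\delta}_n)>g^{(2)}(\widehat{\delta}_n)$ and the loss is even smaller, which is what we want. In either case, combining the two groups with \eqref{5656} yields \eqref{bianyi} with comparison constant $C/g^{(2)}(\widehat{\delta}_n)$.
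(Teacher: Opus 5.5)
Your proposal is correct and follows the paper's argument: part (1) is the same comparison $\bar\delta_n+h(\widehat\delta_n)<\widehat\delta_n^2$, and part (2) likewise starts from \eqref{5656}, keeps the factors coming from $\mathcal{C}_n^{(i),x}$, and bounds the remaining factors below via \eqref{fenkn} (the paper records the resulting loss as $g^{(M^4)}(\bar\delta_n)\gg g^{(3)}(\widehat\delta_n)$). Two small corrections to your bookkeeping: the step $\delta_{n-1}^{-1}\ll g^{(2)}(\widehat\delta_n)^{-1/2}$ needs the upper bound $\widehat\delta_n\le g^{(M^3+4M^8+1)}(\delta_n)$, not the lower bound you wrote (which is false in general), and $M\ge 2$ gives $M^4>M^3+2$, so there is no ``other case'' to handle (your loss estimate would fail if that case occurred).
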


\begin{proof}
Recall that $\widehat{\delta}_n\in [g^{(M)}(\bar{\delta}_n),g^{(M^3)}(\bar{\delta}_n)]$ (cf. Proposition \ref{realn}).  Now \textbf{(1)} follows from $\bar{\delta}_n\ll h(\widehat{\delta}_n)\ll\widehat{\delta}_n^2$;	\textbf{(2)} follows from  \eqref{5656} and \eqref{fenkn} (\eqref{fenkn} implies  $	\|\theta_n^{(i)}(E^*)+x\cdot\omega-\theta_n^{(i')}(E^*)\|_\T> g^{(M^4)}(\bar{\delta}_n)\gg g^{(3)}(\widehat{\delta}_n)$ for $\theta_n^{(i')}(E^*)\notin \mathcal{C}_n^{(i),x} $). 
\end{proof}
Write 
$$	E-H_{B_{n+1}}(\theta)=\begin{pmatrix}
	E-H_{A_{n+1}^{(i)}}(\theta) & \Gamma^{{\rm T}} \\
	\Gamma & E-H_{\widehat{B}^{(i)}_{n+1}}(\theta)
\end{pmatrix}.$$
Now restrict  $(\theta,E)\in  \widetilde{\mathcal{N}}_n^{(i)}\times D(E^*,g(\widehat{\delta}_n)).$  We define the Schur complement 
\begin{equation}\label{su}
	S_{A_{n+1}^{(i)}}(\theta,E):=	E-H_{A_{n+1}^{(i)}}(\theta)-\Gamma^{{\rm T}} G_{\widehat{B}^{(i)}_{n+1}}^{\theta,E} \Gamma.
\end{equation}
Since $\widehat{B}^{(i)}_{n+1}$ is $(\theta, E^*)$-$n$-regular and $\theta$-type-$(\theta,E^*,g^{(2)}(\widehat{\delta}_{n}))$-$n$-nonresonant, by Hypothesis \ref{h2} (cf. \eqref{12z}),  we have 
\begin{equation}\label{n+1g}
	\|G_{\widehat{B}^{(i)}_{n+1}}^{\theta,E}\|\leq \delta_{n-1}^{-5} g^{(2)}(\widehat{\delta}_{n})^{-M}\leq \widehat{\delta}_n^{-1}.
\end{equation}
Recalling \eqref{!}, we let $\check{B}_{n+1}^{(i)}:=\bigcup_{z\in \Omega_n^{(i)}}(\widehat{B}_n^{(i_z)}+z),$ which is a disjoint union by \eqref{fankai}.
Thus  \begin{align}
	\label{G}	G_{\check{B}_{n+1}^{(i)}}^{\theta,E}&=\bigoplus_{z\in \Omega_n^{(i)}}G_{\widehat{B}_n^{(i_z)}+z}^{\theta,E},\\ 
	\label{Aaa}	H_{A_{n+1}^{(i)}}(\theta)&=\bigoplus_{z\in \Omega_n^{(i)}}H_{A_n^{(i_z)}+z}(\theta).
\end{align}   
Moreover, since   $\theta+z\cdot\omega\in \widetilde{\mathcal{N}}_{n-1}^{(i_z)}$ for $z\in \Omega_n^{(i)}$,   it follows  from  \eqref{shenm1}, \eqref{shenm2} and Hypothesis \ref{h2} (cf. \eqref{13z}) that  
\begin{equation}\label{ngg}
	|G_{\widehat{B}_n^{(i_z)}+z}^{\theta,E}(x,y)|\leq e^{-\gamma_{n-1}\|x-y\|_1}, \quad \forall \|x-y\|_1\geq l_{n-1}^\frac{2}{\alpha+1}.
\end{equation}
By the  resolvent identity, for $x\in \check{B}_{n+1}^{(i)}$, we have 
\begin{equation}\label{yujie}
	G_{\widehat{B}^{(i)}_{n+1}}^{\theta,E}(x,y)=  G_{\check{B}_{n+1}^{(i)}}^{\theta,E}(x,y) \chi_{\check{B}_{n+1}^{(i)}}(y)- \varepsilon\sum_{(w,w')\in \partial_{\widehat{B}^{(i)}_{n+1}}\check{B}_{n+1}^{(i)}} G_{\check{B}_{n+1}^{(i)}}^{\theta,E}(x,w)  G_{\widehat{B}^{(i)}_{n+1}}^{\theta,E}(w',y).
\end{equation}
Recall \eqref{yuns} and $(A_n^{(i_z)}+z)\subset\operatorname{Core}( B_n+z)$. Then for  $$\text{$x,y\in \check{B}^{(i)}_{n+1}$ such that $\operatorname{dist}_1(x, A_{n+1}^{(i)}) =\operatorname{dist}_1(y, A_{n+1}^{(i)})=1,$} $$ we have $\operatorname{dist}_1(x, \partial^-_{\widehat{B}^{(i)}_{n+1}}\check{B}_{n+1}^{(i)})>l_{n}-2l_{n}^{1/\alpha}-2 \geq l_{n}/2$. 
It follows from \eqref{n+1g}, \eqref{G} and \eqref{ngg} that  the second term of the right hand side in \eqref{yujie} is bounded above by 
$10dl_n^d\widehat{\delta}_n^{-1}e^{-5l_n}\leq e^{-4l_n}$. 
Substituting  \eqref{yujie}, \eqref{G} and  \eqref{Aaa} into  \eqref{su} yields 
\begin{equation}\label{kuaile}
	S_{A_{n+1}^{(i)}}(\theta,E) =\bigoplus_{x\in \Omega_n^{(i)}}S_{(A_n^{(i_x)}+x)}(\theta,E)- R(\theta,E),
\end{equation}
where $R(\theta,E)$ is an $|A_{n+1}^{(i)}|\times|A_{n+1}^{(i)}|$ analytic matrix with elements bounded by $4d^2\varepsilon^2e^{-4l_n}$. 
Define 
$$s_{A_{n+1}^{(i)}}(\theta,E):=\operatorname{det}S_{A_{n+1}^{(i)}}(\theta,E) .$$ Hence $s_{A_{n+1}^{(i)}}(\theta,E)$ is an analytic function in $(\theta,E)\in  \widetilde{\mathcal{N}}_n^{(i)}\times D(E^*,g(\widehat{\delta}_n)).$ 
By \eqref{kuaile}, we have \begin{equation}\label{1712n}
	s_{A_{n+1}^{(i)}}(\theta,E)=\mathcal{Q}_{n+1}^{(i)}(\theta,E)-r(\theta,E), 
\end{equation} where  \begin{equation}\label{pn+1}
	\mathcal{Q}_{n+1}^{(i)}(\theta,E):=	\prod_{x\in \Omega_n^{(i)}}s_{A_n^{(i_x)}}(\theta+x\cdot\omega ,E)
\end{equation} and $|r(\theta,E)|\leq M^{n+1} (2C_v)^{M^{n+1}} 4d^2\varepsilon^2e^{-4l_n}\leq e^{-l_n} $      by $|A_{n+1}^{(i)}|\leq M^{n+1}$ and Remark \ref{dif}, noting   the row and column summation  of    $S_{A_{n}^{(i_x)}}(\theta,E)$  ($k$-step differs at most $e^{-l_{k-1}}$ from the $(k-1)$-step)  is bounded by $2C_v$.  

\begin{prop}\label{654n}
	Fix $E=E^*$. Then the $\theta$-variable analytic function  $s_{A_{n+1}^{(i)}}(\theta,E^*)$ has $|\mathcal{C}_n^{(i)}|$ many  roots  in $D_\T(\Re(\theta_n^{(i)}(E^*)),2\widehat{\delta}_n^2)$, denoted by $\{\theta_{n+1}^{(i,j)}(E^*)\}_{j=1}^{|\mathcal{C}_n^{(i)}|}$,   and has  no other root in the extension $\widetilde{\mathcal{N}}_n^{(i)}$.  Moreover, we have 
	$$s_{A_{n+1}^{(i)}}(\theta,E^*)\overset{\delta_n}{\sim} 
	\prod_{j=1}^{|\mathcal{C}_n^{(i)}|}\left(\theta-\theta_{n+1}^{(i,j)}(E^*) \right),\quad \forall \theta\in \widetilde{\mathcal{N}}_n^{(i)}.$$
\end{prop}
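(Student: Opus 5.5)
The plan is to copy the proof of Proposition \ref{654}, with the first-step products of $v$-factors replaced by the inductive product $\mathcal{Q}_{n+1}^{(i)}$ from \eqref{pn+1}. The argument has three steps.

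\textbf{Step 1: comparison with a polynomial.} For $\theta\in\widetilde{\mathcal{N}}_n^{(i)}$, apply Proposition \ref{SLn} \textbf{(2)} to each factor of \eqref{pn+1}. Then use \eqref{jishun} and Proposition \ref{SLn} \textbf{(1)}. Together these give
$$\mathcal{Q}_{n+1}^{(i)}(\theta,E^*)\overset{g^{(2)}(\widehat{\delta}_n)^{M}}{\sim}\prod_{x\in\Omega_n^{(i)}}\ \prod_{\theta_n^{(i')}(E^*)\in\mathcal{C}_n^{(i),x}}\bigl(\theta-(\theta_n^{(i')}(E^*)-x\cdot\omega)\bigr)=:K_{n+1}^{(i)}(\theta).$$
The right-hand side is a polynomial of degree $|\mathcal{C}_n^{(i)}|\le M$, and all its roots $\widetilde\theta_n^{(i,\nu)}(E^*)$ lie in $\mathcal{N}_n^{(i)}$. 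Two points need checking here.
\begin{itemize}
\item Each factor in \eqref{pn+1} is defined, because $\theta+x\cdot\omega\in\widetilde{\mathcal{N}}_{n-1}^{(i_x)}$ by \eqref{!}.
\item The constants in the $\overset{\delta}{\sim}$ relations multiply over at most $M$ factors. The constants inherited from \eqref{5656} are of size $\delta_{n-1}$, and $\delta_{n-1}$ is much larger than any power of $g^{(2)}(\widehat{\delta}_n)^{-1}$, so these constants are harmless.
\end{itemize}
Next, combine this with \eqref{1712n}, using $|r|\le e^{-l_n}$. This gives $s_{A_{n+1}^{(i)}}(\theta,E^*)\overset{g^{(2)}(\widehat{\delta}_n)^M}{\sim}K_{n+1}^{(i)}(\theta)-O(e^{-l_n}g^{(2)}(\widehat{\delta}_n)^{-M})=:L_{n+1}^{(i)}(\theta)$, analogous to \eqref{1926}. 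One caveat: the ratio $\mathcal{Q}/K$ is analytic, nonvanishing and comparable to a constant. I will therefore divide $r$ by that ratio rather than treat $\sim$ as mere modulus comparability.

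\textbf{Step 2: Rouch\'e.} On the circle $\|\theta-\Re(\theta_n^{(i)}(E^*))\|_\T=2\widehat{\delta}_n^2$ we have $|K_{n+1}^{(i)}|\ge\widehat{\delta}_n^{2M}$. The error term is much smaller, since $e^{-l_n}$ is super-exponentially small compared to $\widehat{\delta}_n\ge g^{(M)}(g^{(4M^8+1)}(\delta_n))$, with $\ln\delta_n^{-1}=l_n^{2/3}$. Rouch\'e's theorem then gives exactly $|\mathcal{C}_n^{(i)}|$ roots of $L_{n+1}^{(i)}$ in $D_\T(\Re\theta_n^{(i)},2\widehat{\delta}_n^2)$. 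These are also the roots of $s_{A_{n+1}^{(i)}}(\cdot,E^*)$, because the two functions differ by a nonvanishing analytic factor. The same lower bound holds on the annulus between radius $2\widehat{\delta}_n^2$ and radius $\widehat{\delta}_n$, so there are no roots anywhere else in $\widetilde{\mathcal{N}}_n^{(i)}$.

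\textbf{Step 3: maximum modulus.} As in Proposition \ref{654}, expand the quotient $L_{n+1}^{(i)}(\theta)/\prod_j(\theta-\theta_{n+1}^{(i,j)}(E^*))$ on $\partial\widetilde{\mathcal{N}}_n^{(i)}$. It equals $1+O(\widehat{\delta}_n)$ there, and by the maximum principle this extends to all of $\widetilde{\mathcal{N}}_n^{(i)}$. Finally, $\delta_n\ll g^{(2)}(\widehat{\delta}_n)^M$, which yields the stated $\overset{\delta_n}{\sim}$ comparison.

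The main obstacle is Step 1. One must track the composition of comparability constants through the inductive product \eqref{pn+1}. In particular, \eqref{bianyi} should be read as an analytic ratio comparison rather than just a modulus comparison; this is what keeps Rouch\'e's theorem applicable. One must also confirm the scale hierarchy $e^{-l_n}\ll\widehat{\delta}_n^{C}\ll\delta_{n-1}^{C}$ that absorbs all the errors.
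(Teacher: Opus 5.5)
Your proposal is correct and is essentially the paper's proof: you compare $\mathcal{Q}_{n+1}^{(i)}(\cdot,E^*)$ with the degree-$|\mathcal{C}_n^{(i)}|$ polynomial via Proposition \ref{SLn} and \eqref{jishun}, then rerun the Rouch\'e and maximum-modulus argument of Proposition \ref{654} on \eqref{1712n}, absorbing $r$ because $e^{-l_n}\ll\delta_n^{M}$. One small slip: your lower bound for $\widehat{\delta}_n$ uses the upper endpoint of $\bar{\delta}_n$; the correct bound is $\widehat{\delta}_n\ge g^{(M)}(\bar{\delta}_n)\ge g^{(M+1)}(\delta_n)\ge\delta_n$, and this already suffices because $e^{-l_n}\delta_n^{-M}\ll\delta_n^{2M}$.
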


\begin{proof}
	For $\theta\in \widetilde{\mathcal{N}}_n^{(i)}$,	by Proposition \ref{SLn} \textbf{(1)}, \textbf{(2)} and \eqref{jishun},  we have 
	\begin{align}
		\nonumber		\mathcal{Q}_{n+1}^{(i)}(\theta,E^*)	&\overset{g^{(2)}(\widehat{\delta}_n)^M}{\sim} \prod_{x\in \Omega_n^{(i)}} \  \prod_{\theta_n^{(i')}(E^*)\in \mathcal{C}_n^{(i),x}}\|\theta-(\theta_n^{(i')}(E^*)-x\cdot \omega)\|_\T\\
		&\ \ \	\sim \prod_{\nu=1}^{|\mathcal{C}_n^{(i)}|} \left(\theta-\widetilde{\theta}_n^{(i,\nu)}(E^*)\right), \label{123456}
	\end{align}
	where $\{\widetilde{\theta}_n^{(i,\nu)}(E^*)\}_{\nu=1}^{|\mathcal{C}_n^{(i)}|}=\{\theta^{(i')}_n(E^*)-x\cdot \omega:\ x\in \Omega_n^{(i)},  \theta_n^{(i')}(E^*)\in \mathcal{C}_n^{(i),x}\}\subset  \mathcal{N}_n^{(i)}$.
	Now  the proof follows from the same argument as in the proof of Proposition \ref{654}, applying Rouch\'e  argument to \eqref{1712n} for $\theta$-variable  and using    $e^{-l_n}\ll \delta_n^M$. 
\end{proof}

\begin{prop}\label{sen}
	There exists some $\widetilde{\delta}_n\in  [\widehat{\delta}_n^{\frac{1}{4}}, g(\widehat{\delta}_n )]$ such that for any $i\in [1,\kappa_n]$ and   $\theta^*\in \widetilde{\mathcal{N}}_n^{(i)}$, the $E$-variable analytic function $s_{A_{n+1}^{(i)}}(\theta^*,E)$  has $\iota_{n}^{(i)}\in [0, |\mathcal{C}_n^{(i)}|]$ many  roots  in $ D(E^*,3\widetilde{\delta}_n^2)$, denoted by \{$E_{n+1}^{(i,j)}(\theta^*)\}_{j=1}^{\iota_{n}^{(i)}}$,   and has  no other root in the extension $E \in D(E^*,\widetilde{\delta}_n)$. By Lemma \ref{Su}, these roots are the eigenvalues of $H_{B_{n+1}} (\theta^*)$ in $D(E^*,\widetilde{\delta}_n)$.  Moreover,  we have 
	$$s_{A_{n+1}^{(i)}}(\theta^*,E) \overset{\delta_n}{\sim}   \prod_{j=1}^{\iota_{n}^{(i)}}\left(E-E_{n+1}^{(i,j)}(\theta^*) \right),\quad \forall E \in D(E^*,\widetilde{\delta}_n) .$$
\end{prop}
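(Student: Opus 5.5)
Following the proof of Proposition \ref{se}, we apply a Rouch\'e argument to \eqref{1712n} in the $E$-variable. The new feature compared to step $1$ is that the factors of $\mathcal{Q}_{n+1}^{(i)}$ are no longer explicit: they are $s_{A_n^{(i_x)}}(\theta^*+x\cdot\omega,E)$, which by Hypothesis \ref{h3} (cf. \eqref{SE}) satisfy
\[
s_{A_n^{(i_x)}}(\theta^*+x\cdot\omega,E)\overset{\delta_{n-1}}{\sim}\prod_{j=1}^{\iota_{n-1}^{(i_x)}}\bigl(E-E_n^{(i_x,j)}(\theta^*+x\cdot\omega)\bigr)
\]
for $E\in D(E^*,\widetilde{\delta}_{n-1})$. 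Here $\theta^*+x\cdot\omega\in\widetilde{\mathcal{N}}_{n-1}^{(i_x)}$ by \eqref{!}. Hence for $\theta^*\in\widetilde{\mathcal{N}}_n^{(i)}$ and $E\in D(E^*,g(\widehat{\delta}_n))\subset D(E^*,\widetilde{\delta}_{n-1})$, the function $\mathcal{Q}_{n+1}^{(i)}(\theta^*,E)$ is comparable, up to a factor $\delta_{n-1}^{-M}$, to the polynomial
\[
K(E):=\prod_{x\in\Omega_n^{(i)}}\prod_{j}\bigl(E-E_n^{(i_x,j)}(\theta^*+x\cdot\omega)\bigr),
\]
which has at most $\sum_x\iota_{n-1}^{(i_x)}$ roots.

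\textbf{Step 1: choose $\widetilde{\delta}_n$ by pigeonhole.} The roots of $K$ may sit anywhere in $D(E^*,\widetilde{\delta}_{n-1})$, and their number is at most $M$. First I would bound the total: the roots near $E^*$ should number at most $|\mathcal{C}_n^{(i)}|$. To see this, note that an eigenvalue $E_n^{(i_x,j)}(\theta^*+x\cdot\omega)$ close to $E^*$ forces $\theta^*+x\cdot\omega$ to be close to a $\theta$-root $\theta_n^{(i_x,j')}(E^*)$. This follows by comparing \eqref{SS} and \eqref{SE} at $E=E^*$: we get $\prod_j|E^*-E_n^{(i_x,j)}|\overset{\delta_{n-1}}{\sim}\prod_{j'}\|\theta^*+x\cdot\omega-\theta_n^{(i_x,j')}\|_\T$, combined with the separation \eqref{fenkn} and Proposition \ref{SLn}. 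Then, exactly as in Propositions \ref{scale} and \ref{real}, I would partition $[\widehat{\delta}_n^{1/4},g(\widehat{\delta}_n)]$ into more than $M$ disjoint annular scales of the form $[h(t),t]$, each with $h(t)\ll t^2$. Since there are at most $M$ roots $|E_n^{(i_x,j)}(\theta^*+x\cdot\omega)-E^*|$, some scale $\widetilde{\delta}_n$ has no root in the annulus $3\widetilde{\delta}_n^2\le|E-E^*|\le 2\widetilde{\delta}_n$.

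The main obstacle is that $\widetilde{\delta}_n$ must be chosen uniformly in $\theta^*\in\widetilde{\mathcal{N}}_n^{(i)}$ and in $i$, whereas the roots move with $\theta^*$. To handle this, I would use the previous step: for $\theta^*\in\widetilde{\mathcal{N}}_n^{(i)}$ one has $\|\theta^*+x\cdot\omega-\theta_n^{(i_x,j')}\|_\T\lesssim\widehat{\delta}_n$, so by the comparison above each eigenvalue is either within $\widehat{\delta}_n^{1/M}\delta_{n-1}^{-C}\ll\widehat{\delta}_n^{1/4}$... — this is too weak in general, since the product bound only controls the product. I would therefore run the pigeonhole over the finitely many values $|E_n^{(i_x,j)}(\theta^*+x\cdot\omega)-E^*|$ and argue that each such function varies by at most $O(\widehat{\delta}_n^{1/M})$ (Puiseux-type continuity from the $\theta$-analyticity of the symmetric functions, as in Proposition \ref{coe}) over $\widetilde{\mathcal{N}}_n^{(i)}$. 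The roots then occupy at most $M\kappa_n\le M^2$ thin bands, and it suffices to pigeonhole among $M^2+1$ scales.

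\textbf{Step 2: Rouch\'e.} On the circle $|E-E^*|=\widetilde{\delta}_n$ we have $|K(E)|\gtrsim\widetilde{\delta}_n^{2M}$, since every root is either inside $3\widetilde{\delta}_n^2$ or outside $2\widetilde{\delta}_n$. This dominates $\delta_{n-1}^{-CM}e^{-l_n}$ because $\ln\widetilde{\delta}_n^{-1}\ll l_n$. Consequently $s_{A_{n+1}^{(i)}}(\theta^*,\cdot)$ has the same number $\iota_n^{(i)}\le|\mathcal{C}_n^{(i)}|$ of roots inside $D(E^*,3\widetilde{\delta}_n^2)$ as $K$, and no others in $D(E^*,\widetilde{\delta}_n)$; applying Rouch\'e on the annulus gives this. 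Lemma \ref{Su} identifies these roots with the eigenvalues of $H_{B_{n+1}}(\theta^*)$.

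\textbf{Step 3: the factorization.} We divide $s_{A_{n+1}^{(i)}}$ by $\prod_j(E-E_{n+1}^{(i,j)}(\theta^*))$. The quotient is analytic and zero-free on $D(E^*,\widetilde{\delta}_n)$. On the boundary, it is comparable, up to a factor $\delta_{n-1}^{-CM}$, to the ratio of $K$'s outer factors, which is bounded above and below by powers of $\widetilde{\delta}_n$, hence by $\delta_n^{\mp1}$. By the maximum modulus principle applied to the quotient and its reciprocal, we obtain the claimed $\overset{\delta_n}{\sim}$ relation on all of $D(E^*,\widetilde{\delta}_n)$.
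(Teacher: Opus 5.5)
There is a genuine gap in Step 1, at the bound $\iota_n^{(i)}\le|\mathcal{C}_n^{(i)}|$. This bound is part of the statement, and it is what keeps the cluster sizes bounded by $M$ at the next inductive step. Your Steps 2 and 3 (Rouch\'e on $|E-E^*|=\widetilde{\delta}_n$, then the maximum modulus principle on the quotient) match the paper. Your Puiseux bands can also give an annulus that is empty uniformly in $\theta^*$, provided you keep $\widetilde{\delta}_n^2$ well above the variation scale, e.g.\ $\widetilde{\delta}_n\gg\widehat{\delta}_n^{1/(4M)}$, which is still in the allowed range. The paper does this step with Lemma \ref{line} instead.

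The count, however, is not justified. Comparing \eqref{SS} and \eqref{SE} at $E=E^*$ only controls the product $\prod_j|E^*-E_n^{(i_x,j)}(\theta^*+x\cdot\omega)|$ at a single phase. Near $\Re(\theta_n^{(i)}(E^*))$ the $\theta$-side product is at most of order $\widehat{\delta}_n^{2|\mathcal{C}_n^{(i),x}|}$. That is equally compatible with $|\mathcal{C}_n^{(i),x}|+1$ or more eigenvalues lying within $g(\widehat{\delta}_n)$ of $E^*$. Knowing that each near eigenvalue forces proximity to some $\theta$-root gives no injectivity, hence no count.

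The missing idea is the paper's Claim \ref{cm}. Suppose $|\mathcal{C}_n^{(i)}|+1$ of the values $\widetilde{E}_n^{(i,\nu)}(\Re(\theta_n^{(i)}(E^*)))$ lie within $g(\widehat{\delta}_n)$ of $E^*$, and test at a point $\phi$ with $|\phi-\Re(\theta_n^{(i)}(E^*))|=g(\widehat{\delta}_n)^{1/2}$.

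\emph{Upper bound.} At the real center $H_{\widetilde{B}_{n+1}^{(i)}}$ is self-adjoint, hence normal. Lemma \ref{line} then keeps those eigenvalues within $Cl_{n+1}^{2d}g(\widehat{\delta}_n)^{1/2}$ of $E^*$ at $\phi$. So \eqref{123} gives $|\mathcal{Q}_{n+1}^{(i)}(\phi,E^*)|\le\delta_{n-1}^{-2M}g(\widehat{\delta}_n)^{(|\mathcal{C}_n^{(i)}|+1)/2}$.

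\emph{Lower bound.} The $\theta$-factorization \eqref{123456} is still valid at $\phi$ by \eqref{fenkn}. It gives $|\mathcal{Q}_{n+1}^{(i)}(\phi,E^*)|\gtrsim g^{(2)}(\widehat{\delta}_n)^M g(\widehat{\delta}_n)^{|\mathcal{C}_n^{(i)}|/2}$, contradicting the upper bound.

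The Lipschitz dependence supplied by normality is essential here. With only Puiseux-type variation $O(r^{1/M})$ at distance $r$, the $E$-side bound is $r^{(|\mathcal{C}_n^{(i)}|+1)/M}$, which does not beat $r^{|\mathcal{C}_n^{(i)}|}$. Think of $E^m-\theta$, which has one $\theta$-root but $m$ nearby $E$-roots. Once the count at the center is known, the paper pigeonholes only over these at most $\sum_i|\mathcal{C}_n^{(i)}|\le M$ values, using $M+1$ scales.
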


\begin{proof} 
We  need the following    Lemma \ref{line}  on  variations of normal matrix from \cite{line}.  
	\begin{lem}[{\cite[Theorem 1.1]{line}}]\label{line}
		Let $A$ and $B$ be two $N\times N$ matrices, where $A$ is normal and $B$ is arbitrary, with $\operatorname{Spec}(A)=\left\{\lambda_1, \cdots, \lambda_N\right\}$ and $\operatorname{Spec}(B)=\left\{\mu_1, \cdots, \mu_N\right\}$. Then there exists a permutation $\pi$ of $\{1,2, \cdots, N\}$ such that
		$$
		\sqrt{\sum_{j=1}^N\left|\mu_{\pi(j)}-\lambda_j\right|^2} \leq \sqrt{N}\|B-A\|_F,
		$$
		where $\|\cdot\|_F$ is the Frobenius norm of a  matrix:   $\|A\|_F=\sqrt{\sum_{i=1}^N \sum_{j=1}^N\left|A_{i j}\right|^2} .$
	\end{lem}
	We denote the set  $$\widetilde{B}_{n+1}^{(i)}:=\bigcup_{x\in \Omega_n^{(i)}} (B_n+x). $$
	Substituting \eqref{SE} into \eqref{pn+1}, we have for any  $E \in D(E^*,\widetilde{\delta}_{n-1})$, 
	\begin{align}
		\nonumber 	\mathcal{Q}_{n+1}^{(i)}(\theta^*,E) &\overset{\delta_{n-1}^M}{\sim} 	\prod_{x\in \Omega_n^{(i)}} \prod_{j=1}^{\iota_{n-1}^{(i_x)}}\left(E-E_n^{(i_x,j)}(\theta^*+x\cdot\omega) \right)\\ 
		\label{123}		& \ \ =: \prod_{\nu=1}^{J^{(i)}}\left(E-\widetilde{E}_n^{(i,\nu)}(\theta^*) \right), 
	\end{align}
	where $J^{(i)}=\sum_{x\in \Omega_n^{(i)}}\iota_{n-1}^{(i_x)}$ and  $\{\widetilde{E}_n^{(i,\nu)}(\theta^*)\}_{\nu=1}^{J^{(i)}} =\bigcup_{x\in \Omega_n^{(i)}}\{E_n^{(i_x,j)}(\theta^*+x\cdot\omega)\}_{j=1}^{\iota_{n-1}^{(i_x)}}$ are the eigenvalues of $H_{\widetilde{B}_{n+1}^{(i)}}(\theta^*)$ in  $D(E^*,\widetilde{\delta}_{n-1})$ by Lemma \ref{Su}.  
	\begin{claim}\label{cm}
		For $\theta^*=\Re(\theta^{(i)}_n(E^*))$,   there are at most $ |\mathcal{C}_n^{(i)}|$ many  $\nu$ in \eqref{123}  such that   $|E^*-\widetilde{E}_n^{(i,\nu)}(\Re(\theta^{(i)}_n(E^*)))|\leq g(\widehat{\delta}_n ).$
	\end{claim}

	\begin{proof}[Proof of Claim]
	Suppose  the Claim fails.  Then there are  \begin{equation}\label{mm}
			\text{ at least $( |\mathcal{C}_n^{(i)}|+1)$ many  $\nu$   such that   $|E^*-\widetilde{E}_n^{(i,\nu)}(\Re(\theta^{(i)}_n(E^*)) )|\leq g(\widehat{\delta}_n ).$}
		\end{equation}
		Choose a $\phi\in \C$ such that $|\phi-\Re(\theta^{(i)}_n(E^*))| = g(\widehat{\delta}_n)^\frac12 $.   Since  $\phi +x\cdot\omega \in  \widetilde{\mathcal{N}}_{n-1}^{(i_x) }$  by  $g(\widehat{\delta}_n)^{\frac12 }\ll \delta_{n-1}$, \eqref{123} still  holds for $\theta^*=\phi$ and $E=E^*$.
		Note that $ H_{\widetilde{B}_{n+1}^{(i)}} (\Re(\theta^{(i)}_n(E^*)))$ is self-adjoint,   hence normal  since $v$ is real-valued on $\T$. Applying Lemma  \ref{line} to $A=H_{\widetilde{B}_{n+1}^{(i)}} (\Re(\theta^{(i)}_n(E^*))) $  and $B=  H_{\widetilde{B}_{n+1}^{(i)}} (\phi )  $,  we obtain that up to a permutation,  the eigenvalues of   $ H_{\widetilde{B}_{n+1}^{(i)}} (\Re(\theta^{(i)}_n(E^*)))$ and  $ H_{\widetilde{B}_{n+1}^{(i)}} (\phi )$ differ at most $ C l_{n+1}^{2d} g(\widehat{\delta}_n)^{\frac12 } $. Thus by \eqref{mm}, there are
		\begin{equation}\label{mma}
			\text{ at least $( |\mathcal{C}_n^{(i)}|+1)$ many  $\nu$   such that   $|E^*-\widetilde{E}_n^{(i,\nu)}(\phi  )|\leq C l_{n+1}^{2d} g(\widehat{\delta}_n)^{\frac12 }.$}
		\end{equation}
		By \eqref{123} and \eqref{mma},  we have 
		\begin{equation}\label{1234}
			|	\mathcal{Q}_{n+1}^{(i)}(\phi ,E^*)|	\leq  \delta_{n-1}^{-2M}  g(\widehat{\delta}_n)^{\frac12(|\mathcal{C}_n^{(i)}|+1)  }< g(\widehat{\delta}_n)^{\frac12|\mathcal{C}_n^{(i)}|+\frac14   } .
		\end{equation}
		On the other hand,  it follows from  \eqref{fenkn}  and  $g^{(M^4)}(\bar{\delta}_n)\gg g^{(3)}(\widehat{\delta}_n)$ that   \eqref{bianyi} hence \eqref{123456} still holds for $\theta=\phi$.  
		By \eqref{123456} and $\{\widetilde{\theta}_n^{(i,\nu)}(E^*)\}_{\nu=1}^{|\mathcal{C}_n^{(i)}|} \subset\mathcal{N}_n^{(i)} $,  we have  
		$$	\mathcal{Q}_{n+1}^{(i)}(\phi ,E^*)  \overset{g^{(2)}(\widehat{\delta}_n)^M}{\sim}  \prod_{\nu=1}^{|\mathcal{C}_n^{(i)}|} \left(\phi-\widetilde{\theta}_n^{(i,\nu)}(E^*)\right)\sim g(\widehat{\delta}_n)^{\frac{1}{2}|\mathcal{C}_n^{(i)}|}.$$
		Thus \begin{equation*}
			|\mathcal{Q}_{n+1}^{(i)}(\phi ,E^*)|\gtrsim  g^{(2)}(\widehat{\delta}_n)^M g(\widehat{\delta}_n)^{\frac12 |\mathcal{C}_n^{(i)}|}\geq g(\widehat{\delta}_n)^{\frac12|\mathcal{C}_n^{(i)}|+\frac14 } ,
		\end{equation*}
		which contradict \eqref{1234}.  This proves Claim \ref{cm}. 
	\end{proof}
	Partition the   interval $( g(\widehat{\delta}_n)^{4^{M+1}},  g(\widehat{\delta}_n)]$ into $(M+1)$ many  disjoint intervals $$( g(\widehat{\delta}_n)^{4^{k+1}},  g(\widehat{\delta}_n)^{4^{k}}  ], \quad    0\leq k\leq M.$$ 
	It follows from   Claim \ref{cm},  $\sum_{i=1}^{\kappa_n}   |\mathcal{C}_n^{(i)}| \leq m_n\leq M $  and  the  pigeonhole principle that  there exists an interval  $J_{k_0}:=( g(\widehat{\delta}_n)^{4^{k_0+1}},  g(\widehat{\delta}_n)^{4^{k_0}}  ]$ such that 
	$$J_{k_0}\cap\{|E^*-\widetilde{E}_n^{(i,\nu)}(\Re(\theta^{(i)}_n(E^*)))|\}_{\nu=1}^{J^{(i)}}=\emptyset, \quad \forall i\in [1,\kappa_n]. $$ 
	We define 
	$$\widetilde{\delta}_n:=g(\widehat{\delta}_n)^{2\cdot 4^{k_0}} . $$ 
	Clearly,  $\widetilde{\delta}_n\in  [\widehat{\delta}_n^{\frac{1}{4}}, g(\widehat{\delta}_n )]$ satisfies that $ H_{\widetilde{B}_{n+1}^{(i)}} (\Re(\theta^{(i)}_n(E^*)))$ has no eigenvalue in $\{E\in \C:\   \widetilde{\delta}_n^2<| E-E^*|<\widetilde{\delta}_n^{\frac12} \}$ and has $\iota_{n}^{(i)}\in [0, |\mathcal{C}_n^{(i)}|]$ many  eigenvalues  in $D(E^*,\widetilde{\delta}_n^2)$. It follows from Lemma  \ref{line} and $l_{n+1}^{2d}\widehat{\delta}_n\ll \widetilde{\delta}_n^2$ that  for any $\theta^*\in {\widetilde{\mathcal{N}}}_n^{(i)}$, 
	\begin{equation}\label{meigen}
		\text{		$ H_{\widetilde{B}_{n+1}^{(i)}} (\theta^*)$ has no eigenvalue in $\{E\in \C:\   2\widetilde{\delta}_n^2<| E-E^*|<\frac12\widetilde{\delta}_n^{\frac12} \}$}, 
	\end{equation}
	and has  $\iota_{n}^{(i)}$ many  eigenvalues  in $D(E^*,\widetilde{\delta}_n)$,	denoted by  $\{\widetilde{E}_n^{(i,j)}(\theta^*)\}_{j=1}^{\iota_{n}^{(i)}}$.  By \eqref{meigen}, the other eigenvalues are not in  $D(E^*,2\widetilde{\delta}_n)$. Thus\begin{equation}\label{shediao}
		|\widetilde{E}_n^{(i,\nu)}(\theta^*)-E| \geq \widetilde{\delta}_n, \quad \forall E\in D(E^*,\widetilde{\delta}_n),\  \nu \geq \iota_{n}^{(i)}+1.   
	\end{equation}
By  \eqref{123} and \eqref{shediao},  we have 
	\begin{equation}\label{zuihou}
		\mathcal{Q}_{n+1}^{(i)}(\theta^*,E) \overset{\delta_{n-1}^M}{\sim} 	 
		\prod_{\nu=1}^{{J^{(i)}}}\left(E-\widetilde{E}_n^{(i,\nu)}(\theta^*) \right)  \overset{\widetilde{\delta}_n^{{M^2}}}{\sim} 	\prod_{\nu=1}^{\iota_n^{(i)} }\left(E-\widetilde{E}_n^{(i,\nu)}(\theta^*) \right) .
	\end{equation}
	Substituting \eqref{zuihou} into \eqref{1712n}, we have 
	\begin{equation}\label{1712na}
		s_{A_{n+1}^{(i)}}(\theta^*,E)\overset{\delta_{n}}{\sim} \prod_{\nu=1}^{\iota_n^{(i)} }\left(E-\widetilde{E}_n^{(i,\nu)}(\theta^*) \right)-\widetilde{r}(\theta^*,E)=:L_{n+1}^{(i)}(\theta^*,E)
	\end{equation}
	with $|\widetilde{r}(\theta^*,E)|\leq \delta_n^{-1}e^{-l_n}\leq e^{-\frac45 l_n}.   $
	Now 	the  conclusion follows from the same argument as in the proof of Proposition \ref{654} by applying Rouch\'e  argument to \eqref{1712na} for the $E$-variable. 
\end{proof}
\begin{rem}\label{kuodan}  
	Similar to  Remark \ref{kuoda},  the conclusions in  Propositions \ref{SLn}, \ref{654n}   \ref{sen} still hold with $ \widetilde{\mathcal{N}}_n^{(i)}$  replaced  by    ${\rm Qdp}(\widetilde{\mathcal{N}}_n^{(i)}):=D_\T(\Re(\theta_n^{(i)}(E^*)),4\widehat{\delta}_n)$.
\end{rem} 

\subsubsection{Green's function estimates for $(n+1)$-nonresonant  sets}\label{GN}
\begin{lem}\label{1833n}
	For $(\theta,E)\in \widetilde{\mathcal{N}}_n^{(i)}\times D(E^*,\widetilde{\delta}_n)$, we have $$\|G_{B_{n+1}}^{\theta,E}\|\leq \delta_n^{-1}|s_{A_{n+1}^{(i)}}(\theta,E)|^{-1}.$$
\end{lem}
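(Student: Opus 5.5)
This is the $(n+1)$-step analogue of Lemma \ref{1833}, and we prove it the same way, through the Schur complement formula of Lemma \ref{Su} applied to the block decomposition of $E-H_{B_{n+1}}(\theta)$ into $A_{n+1}^{(i)}$ and $\widehat{B}_{n+1}^{(i)}$. That formula expresses $G_{B_{n+1}}^{\theta,E}$ through $S_{A_{n+1}^{(i)}}(\theta,E)^{-1}$ and $G_{\widehat{B}^{(i)}_{n+1}}^{\theta,E}$, and since $\|\Gamma\|\le 2d\varepsilon$, it gives
\[
\|G_{B_{n+1}}^{\theta,E}\|\lesssim \|S_{A_{n+1}^{(i)}}(\theta,E)^{-1}\|\cdot\big(1+\|G_{\widehat{B}^{(i)}_{n+1}}^{\theta,E}\|\big)^2 .
\]
So two ingredients are needed: a bound on $\|G_{\widehat{B}^{(i)}_{n+1}}^{\theta,E}\|$ valid on the whole range $E\in D(E^*,\widetilde{\delta}_n)$, and a bound on the inverse Schur complement in terms of its determinant.

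For the first ingredient we use \eqref{n+1g}. By construction, $\widehat{B}^{(i)}_{n+1}$ is $(\theta,E^*)$-$n$-regular by \eqref{r2} and $\theta$-type-$(\theta,E^*,g^{(2)}(\widehat{\delta}_n))$-$n$-nonresonant by \eqref{n1}. Hypothesis \ref{h2} \textbf{(1)} at step $n$, with $\delta=g^{(2)}(\widehat{\delta}_n)$ and $\theta^*=\theta$, then bounds the Green's function for all $E\in D(E^*,h(g^{(2)}(\widehat{\delta}_n)))$. This range contains $D(E^*,\widetilde{\delta}_n)$, because $\widetilde{\delta}_n\le g(\widehat{\delta}_n)$ and $h(g^{(2)}(\widehat{\delta}_n))=g(\widehat{\delta}_n)$. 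We also need $\delta\in[h(\delta_n),\delta_{n-1}]$, which holds since $\widehat\delta_n\ge g^{(M)}(\bar\delta_n)\ge g^{(M+1)}(\delta_n)$ gives $g^{(2)}(\widehat{\delta}_n)\ge h(\delta_n)$. We therefore get $\|G_{\widehat{B}^{(i)}_{n+1}}^{\theta,E}\|\le \delta_{n-1}^{-5}g^{(2)}(\widehat{\delta}_n)^{-M}\le\widehat{\delta}_n^{-1}$, and hence $\widehat{\delta}_n^{-2}\ll\delta_n^{-1/2}$, since $\ln\widehat{\delta}_n^{-1}\ll\ln\delta_n^{-1}$.

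For the second ingredient we use Cramer's rule, $S^{-1}=s_{A_{n+1}^{(i)}}^{-1}S^{\#}$, which requires a bound on the adjugate $S^{\#}$. The matrix $S_{A_{n+1}^{(i)}}$ has size $|A_{n+1}^{(i)}|\le M^{n+1}$. By \eqref{kuaile} and the remark after \eqref{pn+1}, its row and column sums are bounded by $2C_v$, so every cofactor is at most $(2C_v)^{M^{n+1}}$ (for instance by Hadamard's inequality), and $\|S^{\#}\|\le M^{n+1}(2C_v)^{M^{n+1}}$.

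The one real point to check, in contrast with Lemma \ref{1833}, is that this adjugate bound, which grows with $n$, is harmless. Since $l_n\ge l_1^{4^{n-1}}$ grows doubly exponentially, $\ln\delta_n^{-1}=l_n^{2/3}$ dominates $M^{n+1}\ln(2MC_v)$, so $M^{n+1}(2C_v)^{M^{n+1}}\le \delta_n^{-1/4}$. Combining the two ingredients gives
\[
\|G_{B_{n+1}}^{\theta,E}\|\le C\,\delta_n^{-1/4}\,\delta_n^{-1/2}\,|s_{A_{n+1}^{(i)}}(\theta,E)|^{-1}\le \delta_n^{-1}|s_{A_{n+1}^{(i)}}(\theta,E)|^{-1},
\]
which is the claim.
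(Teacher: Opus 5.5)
Your proposal is correct and is essentially the paper's proof. Lemma \ref{Su} reduces the bound to $\|S_{A_{n+1}^{(i)}}(\theta,E)^{-1}\|\cdot\|G_{\widehat{B}^{(i)}_{n+1}}^{\theta,E}\|^2$; the second factor is controlled by \eqref{n+1g}; and the first is handled by Cramer's rule with the adjugate bound $M^{n+1}C^{M^{n+1}}$, which is absorbed into a fractional power of $\delta_n^{-1}$ because $l_n$ grows super-exponentially. Your check that $h(g^{(2)}(\widehat{\delta}_n))=g(\widehat{\delta}_n)\ge\widetilde{\delta}_n$, so that \eqref{n+1g} covers all $E\in D(E^*,\widetilde{\delta}_n)$, makes explicit a step the paper leaves implicit.
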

\begin{proof}
By Lemma \ref{Su},  we have $\|G_{B_{n+1}}^{\theta,E}\|\lesssim \|S_{A_{n+1}^{(i)}}(\theta,E)^{-1}\|\cdot   \|G_{\widehat{B}^{(i)}_{n+1}}^{\theta,E}\|^2$. 
	Now the conclusion follows from  $\|G_{\widehat{B}^{(i)}_{n+1}}^{\theta,E}\|\leq \widehat{\delta}_n^{-1}$ (cf. \eqref{n+1g})  and $\|S_{A_{n+1}^{(i)}}(\theta,E)^{-1}\|=|s_{A_{n+1}^{(i)}}(\theta,E)|^{-1}\cdot \|S_{A_{n+1}^{(i)}}(\theta,E)^{\#}\|\leq |s_{A_{n+1}^{(i)}}(\theta,E)|^{-1} \cdot M^{n+1}C^{M^{n+1}}\leq  \delta_n^{-\frac12 } |s_{A_{n+1}^{(i)}}(\theta,E)|^{-1} $ since $|A_{n+1}^{(i)}|\leq M^{n+1} $ and the row and column summation  of    $S_{A_{n+1}^{(i)}}(\theta,E)$  ($k$-step differs at most $e^{-l_{k-1}}$ from the $(k-1)$-step)  is bounded by $C$. 
\end{proof}
Combining Propositions \ref{654n}, \ref{sen} and Lemma \ref{1833n} yields 
\begin{prop} \begin{itemize}
		\item [\textbf{(1)}]
		For $\theta\in \widetilde{\mathcal{N}}_n^{(i)}$, we have 		 \begin{equation}\label{123321}
			\|G_{B_{n+1}}^{\theta,E^*}\|\leq \delta_n^{-3}\prod_{j=1}^{|\mathcal{C}_n^{(i)}|}\|\theta-\theta^{(i,j)}_{n+1}(E^*)\|_\T^{-1}. 
		\end{equation}
		\item [\textbf{(2)}]For $(\theta,E)\in \widetilde{\mathcal{N}}_n^{(i)}\times D(E^*,\widetilde{\delta}_n)$, we have 	\begin{equation}\label{456654}
			 \|G_{B_{n+1}}^{\theta,E}\|\leq \delta_n^{-3}\prod_{j=1}^{\iota_{n}^{(i)}}|E-E_{n+1}^{(i,j)}(\theta)|^{-1}. 
\end{equation} 
		\end{itemize}
\end{prop}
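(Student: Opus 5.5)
The plan is to combine Lemma \ref{1833n} with the two factorizations of $s_{A_{n+1}^{(i)}}$ given by Propositions \ref{654n} and \ref{sen}, exactly as Proposition \ref{aha} was obtained at the first step.

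For \textbf{(1)}, fix $\theta\in\widetilde{\mathcal{N}}_n^{(i)}$ and set $E=E^*$, which lies in $D(E^*,\widetilde{\delta}_n)$. Lemma \ref{1833n} gives
$$\|G_{B_{n+1}}^{\theta,E^*}\|\leq \delta_n^{-1}|s_{A_{n+1}^{(i)}}(\theta,E^*)|^{-1}.$$
By Proposition \ref{654n}, $s_{A_{n+1}^{(i)}}(\theta,E^*)\overset{\delta_n}{\sim}\prod_{j}(\theta-\theta_{n+1}^{(i,j)}(E^*))$. Read in the lower direction, this says
$$|s_{A_{n+1}^{(i)}}(\theta,E^*)|\geq C^{-1}\delta_n\prod_j|\theta-\theta^{(i,j)}_{n+1}(E^*)|.$$
Two small points need attention here.
\begin{itemize}
\item The absolute constant $C$ has to be absorbed into one extra power $\delta_n^{-1}$. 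This is fine since $\delta_n\ll1$, and it yields the stated bound $\delta_n^{-3}$.
\item We must replace $|\theta-\theta^{(i,j)}_{n+1}|$ by $\|\theta-\theta^{(i,j)}_{n+1}\|_\T$. Both points lie in a disc of radius $\widehat{\delta}_n\ll1$ around $\Re(\theta_n^{(i)}(E^*))$, so computing in a local lift the two distances coincide.
\end{itemize}
Together these give \eqref{123321}.

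For \textbf{(2)}, take $(\theta,E)\in\widetilde{\mathcal{N}}_n^{(i)}\times D(E^*,\widetilde{\delta}_n)$. We again apply Lemma \ref{1833n}, this time together with Proposition \ref{sen} for $\theta^*=\theta$. That proposition gives $s_{A_{n+1}^{(i)}}(\theta,E)\overset{\delta_n}{\sim}\prod_{j=1}^{\iota_n^{(i)}}(E-E^{(i,j)}_{n+1}(\theta))$, hence
$$|s_{A_{n+1}^{(i)}}(\theta,E)|\geq C^{-1}\delta_n\prod_j|E-E^{(i,j)}_{n+1}(\theta)|,$$
and \eqref{456654} follows with the same bookkeeping of $\delta_n$ powers. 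If $\iota_n^{(i)}=0$, the product is empty and the bound reads $\|G\|\leq\delta_n^{-3}$. This is consistent, since in that case $s_{A_{n+1}^{(i)}}$ is comparable to $1$ up to a factor $\delta_n$.

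The only delicate point I expect is making sure the domains match. Lemma \ref{1833n} and Proposition \ref{sen} are both stated on $\widetilde{\mathcal{N}}_n^{(i)}\times D(E^*,\widetilde{\delta}_n)$, and Proposition \ref{654n} on $\widetilde{\mathcal{N}}_n^{(i)}$, so they apply on exactly the sets in the statement. I also need to check that the meaning of $\overset{\delta}{\sim}$, namely a ratio lying in $[\delta/C,C/\delta]$, gives the lower bound in the direction used above. Beyond these checks the argument is immediate.
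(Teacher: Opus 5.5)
Your proposal is correct and is the paper's argument: the paper obtains this proposition directly by combining Lemma \ref{1833n} with Propositions \ref{654n} and \ref{sen}. Your extra bookkeeping is exactly what that one-line derivation leaves implicit, namely absorbing the constant from $\overset{\delta_n}{\sim}$ into one more factor $\delta_n^{-1}$, and identifying $|\theta-\theta^{(i,j)}_{n+1}(E^*)|$ with $\|\theta-\theta^{(i,j)}_{n+1}(E^*)\|_\T$ in a local lift of radius $\ll 1$.
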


\begin{prop}\label{1gon}
	Let $\theta^*\in \T_{\widehat{\delta}_{n}}$, $\delta\in [h(\delta_{n+1}),\delta_{n}]$  and   $\Lambda$ be   $(\theta^*,E^*)$-$(n+1)$-regular. Then we have 
	\begin{itemize}
		\item[\textbf{(1)}] If  $\Lambda$ is  $\theta$-type-$(\theta^*,E^*,\delta)$-$(n+1)$-nonresonant,  then  for any $(\theta, E)\in D_\T(\theta^*,h(\delta))\times D(E^*,h(\delta))$, we have 
		\begin{align}
			\|G_\Lambda^{\theta,E}\|&\leq \delta_{n}^{-5}\delta^{-M},\label{12zn}\\
			|G_\Lambda^{\theta,E}(x,y)|&\leq e^{-\gamma_{n+1}\|x-y\|_1}, \quad \forall \|x-y\|_1\geq l_{n+1}^\frac{2}{\alpha+1}.\label{13zn}
		\end{align}	
		\item[\textbf{(2)}]  Let $E\in D(E^*,h(\delta_{n}))$. If  $\Lambda$ is  $E$-type-$(\theta^*,E,\delta)$-$(n+1)$-nonresonant, then  \eqref{12zn}, \eqref{13zn} hold for $\theta=\theta^*$. 
	\end{itemize}
\end{prop}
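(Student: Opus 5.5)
We follow the scheme of Proposition \ref{1go}, now one scale up: first establish a block-level Claim, then globalize it by a resolvent identity argument. If $\Lambda\cap S_n(\theta^*,E^*,\delta_n)=\emptyset$ (respectively $\Lambda\cap\widetilde{S}_n(\theta^*,E,\delta_n)=\emptyset$ in case \textbf{(2)}), then $\Lambda$ is $n$-nonresonant at the level $\delta_n\in[h(\delta_n),\delta_{n-1}]$. Since $\Lambda$ is $(\theta^*,E^*)$-$(n+1)$-regular, it is in particular $n$-regular. Hypothesis \ref{h2} then gives $\|G_\Lambda^{\theta,E}\|\le\delta_{n-1}^{-5}\delta_n^{-M}\le\delta_n^{-5}\delta^{-M}$, together with decay at rate $\gamma_n\geq\gamma_{n+1}$ for $\|x-y\|_1\geq l_n^{2/(\alpha+1)}$. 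This already covers \eqref{12zn} and \eqref{13zn}; for the off-diagonal range $\|x-y\|_1\ge l_{n+1}^{2/(\alpha+1)}\gg l_n^{2/(\alpha+1)}$ the decay is inherited directly. So we may assume $\Lambda$ contains points of $S_n(\theta^*,E^*,\delta_n)$.

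\textbf{Claim.} For every $x\in S_n(\theta^*,E^*,\delta_n)\cap\Lambda$ there is $y$ with $x\in Q_{2l_{n+1}^{1/\alpha}}+y$ and $B_{n+1}+y\subset\Lambda$ such that
\[
\|G^{\theta,E}_{B_{n+1}+y}\|\le 2\delta_n^{-3}\delta^{-M},
\]
and such that the off-diagonal decay holds at a rate $\gamma'_{n+1}=(1-l_{n+1}^{-2(\alpha-1)^8})\gamma_n$ beyond distance $l_{n+1}^{3/(2\alpha+1)}$.

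To find $y$: from $x\in S_n$ we get $\|\theta^*+x\cdot\omega-\theta_n^{(i)}(E^*)\|_\T<\delta_n$ for some root $\theta_n^{(i)}(E^*)$. Because $\theta^*\in\T_{\widehat\delta_n}$, Proposition \ref{realn} forces the class of $\theta_n^{(i)}(E^*)$ to be one of the first $\kappa_n$ classes, say with representative $\theta_n^{(i')}(E^*)$. By \eqref{classn} there is $z\in Q_{l_{n+1}^{1/\alpha}}$ shifting $\theta_n^{(i)}(E^*)$ to within $\bar\delta_n$ of the representative. Setting $y=x+z$ gives $\theta^*+y\cdot\omega\in\mathcal N_n^{(i')}$. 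By regularity, $Q_{5l_{n+1}}+x\subset\Lambda$, and this contains $B_{n+1}+y$ since $B_{n+1}\subset Q_{l_{n+1}+50M^2l_n}$.

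For the norm bound in case \textbf{(1)}: the nonresonance of $\Lambda$ gives distance at least $\delta$ from all the $\theta_{n+1}^{(i',j)}(E^*)$, so \eqref{123321} yields $\|G_{B_{n+1}+y}^{\theta^*,E^*}\|\le\delta_n^{-3}\delta^{-M}$. We then pass to $(\theta,E)\in D_\T(\theta^*,h(\delta))\times D(E^*,h(\delta))$ by the Neumann perturbation used in Claim \ref{1goc}, which works because $h(\delta)\ll\delta^{M+3}$. In case \textbf{(2)} we use \eqref{456654} directly at $\theta=\theta^*$; this requires $E\in D(E^*,h(\delta_n))\subset D(E^*,\widetilde\delta_n)$.

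For the decay part of the Claim: $B_{n+1}\setminus A_{n+1}^{(i')}$ is $n$-regular and $\theta$-type $g^{(2)}(\widehat\delta_n)$-$n$-nonresonant by \eqref{r2} and \eqref{n1}, so by Hypothesis \ref{h2} it has norm at most $\widehat\delta_n^{-1}$ and decays at rate $\gamma_n$. The set $A_{n+1}^{(i')}$ sits in the core $Q_{2l_{n+1}^{1/\alpha}}$ and contains at most $M^{n+1}$ points. A two-step resolvent expansion through $A_{n+1}^{(i')}$, using $\ln\delta^{-1}\le l_{n+1}^{2\alpha/3}\ll l_{n+1}^{3/(2\alpha+1)}$, loses only a factor $\delta_n^{-3}\delta^{-M}$. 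That loss is absorbed into the rate over distances $\ge l_{n+1}^{3/(2\alpha+1)}$, as in \cite[Lemma 2.12]{CSZ25}.

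\textbf{Globalization.} For a general pair $x,y\in\Lambda$, iterate the resolvent identity. At a point of $S_n(\theta^*,E^*,\delta_n)$ we use the block from the Claim. Elsewhere we use an $n$-regular, $n$-nonresonant neighbourhood, supplied by Remark \ref{gouzao} together with Hypothesis \ref{h2}. Each step gains $e^{-\gamma'_{n+1}\cdot(\text{step length})}$ with step length $\gtrsim l_n^{2/(\alpha+1)}$ or $l_{n+1}^{3/(2\alpha+1)}$. The norm bound \eqref{12zn} comes from the Schur test/Neumann series on this covering, with the crude a priori bound $\delta_n^{-5}\delta^{-M}$ entering at most twice. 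The decay \eqref{13zn} follows because $\gamma'_{n+1}(1-O(l_{n+1}^{-c}))\ge\gamma_{n+1}$ once $\|x-y\|_1\ge l_{n+1}^{2/(\alpha+1)}$, which is the standard argument of \cite[Proposition 2.13]{CSZ25}.

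\textbf{Main obstacle.} The hard step is the bookkeeping at the intermediate boundary. The blocks from the Claim only see points that are $n$-resonant, while points that are $(k-1)$-resonant for $k\le n$ must be absorbed by the nested regularity structure. One has to check that every point of $\Lambda$ has either a Claim block or an $n$-regular, $n$-nonresonant neighbourhood inside $\Lambda$ of diameter $\ll l_{n+1}^{2/(\alpha+1)}$. This is where the exponents $2/(\alpha+1)$ and $3/(2\alpha+1)$ and the gap $l_{n+1}\ge l_n^4$ must be compared carefully.
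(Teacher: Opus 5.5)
Your proposal is correct and follows the paper's proof essentially verbatim. The steps match one for one:
\begin{itemize}
\item the reduction to the case $\Lambda\cap S_n(\theta^*,E^*,\delta_n)\neq\emptyset$;
\item the block Claim around points of $S_n$, with the block located via Proposition \ref{realn} and \eqref{classn}, the norm bound taken from \eqref{123321} or \eqref{456654}, and the decay taken from \eqref{r2} and \eqref{n1};
\item the resolvent-identity globalization, deferred to \cite{CSZ25}.
\end{itemize}

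One small slip is in case \textbf{(2)}. There the initial dichotomy should also be on $S_n(\theta^*,E^*,\delta_n)$, as in the paper, rather than on $\widetilde{S}_n(\theta^*,E,\delta_n)$; Hypothesis \ref{h2}~\textbf{(1)} already covers $E\in D(E^*,h(\delta_n))$. As written, your step ``we may assume $\Lambda$ contains points of $S_n$'' does not follow in case \textbf{(2)}, though the fix is immediate.
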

\begin{proof} If $ S_n(\theta^*,E^*,\delta_n)\cap \Lambda=\emptyset$, the conclusion follows from the inductive   Hypothesis \ref{h2} with $r=n$. Now assume $ S_n(\theta^*,E^*,\delta_n)\cap \Lambda\neq \emptyset$ but  $ S_{n+1}(\theta^*,E^*,\delta)\cap \Lambda\neq \emptyset$ or  $ \widetilde{S}_{n+1}(\theta^*,E,\delta)\cap \Lambda\neq \emptyset$.  
	Since $\Lambda$ is $(\theta^*,E^*)$-$(n+1)$-regular, similar to  Claim \ref{1goc}, 	one can  claim   that for  every $x\in S_n(\theta^*,E^*,\delta_n)\cap \Lambda$, there exists a block $B_{n+1}(y)(=B_{n+1}+y)$  satisfying $x\in \operatorname{Core}(B_{n+1}(y))\subset B_{n+1}(y) \subset \Lambda$ and  
\begin{align}
	\|G_{B_{n+1}(y)}^{\theta,E}\|&\leq 2\delta_n^{-3}\delta^{-M},\label{121}\\
|G_{B_{n+1}(y)}^{\theta,E}(x',y')|&\leq e^{-\gamma_{n+1}'\|x'-y'\|_1}, \quad \forall \|x'-y'\|_1\geq l_{n+1}^\frac{3}{2\alpha+1}, \label{1211}
\end{align}	
where  $\gamma_{n+1}':=(1-l_{n+1}^{-2(\alpha-1)^8})\gamma_n$.

Proposition \ref{1gon}  then   follows from  \eqref{121}, \eqref{1211}  and the inductive Hypothesis \ref{h2} for $r\leq n$ by a  standard resolvent identity argument, using the fact that $\ln\delta^{-1}\leq  \ln h(\delta_{n+1})^{-1}=l_{n+1}^{\frac{2}{3}\alpha}\ll l_{n+1}$. We refer   to \cite[Lemma 2.41 and Proposition 2.42]{CSZ25}    for   details. 
\end{proof}

\subsubsection{Transversality}\label{TN}
We view the $E$-polynomial  \begin{equation}\label{an}
	P_{n+1}^{(i)}(\theta,E):=	\prod_{j=1}^{\iota_n^{(i)}}\left(E-E_{n+1}^{(i,j)}(\theta) \right)=:\sum_{k=0}^{\iota_n^{(i)}} a_{k}^{(i)}(\theta)E^{\iota_n^{(i)}-k}
\end{equation}
as a perturbation of 
\begin{equation}\label{bn}
	Q_{n+1}^{(i)}(\theta,E):=\prod_{\nu=1}^{\iota_n^{(i)} }\left(E-\widetilde{E}_n^{(i,\nu)}(\theta) \right) \text{(cf. \eqref{zuihou})} =:\sum_{k=0}^{\iota_n^{(i)}} b_{k}^{(i)}(\theta)E^{\iota_n^{(i)}-k}.
\end{equation}
We still use the notation $a_k^{(i)}(\theta)$ and $b_k^{(i)}(\theta)$ as in  \S \ref{trans} since there is no confusion. But they are defined by \eqref{an} and \eqref{bn} in this section. 
\begin{prop}\label{coen}
	The polynomial coefficients $a_k^{(i)}(\theta),b_k^{(i)}(\theta)$ are analytic functions in  
	$\theta\in {\rm Qdp} (\widetilde{\mathcal{N}}_n^{(i)})$. Moreover, for 
	$\theta\in {\rm Db}(\widetilde{\mathcal{N}}_n^{(i)}):=D_\T(\Re(\theta_n^{(i)}(E^*)),2\widehat{\delta}_n)$, we have  
	$$|\partial_\theta^l(a_k^{(i)}(\theta)-b_k^{(i)}(\theta))|\leq l! \widehat{\delta}_n^{-l}e^{-\frac34  l_n}, \quad \forall  \theta\in {\rm Db}(\widetilde{\mathcal{N}}_n^{(i)}),  0\leq k\leq \iota_n^{(i)}, l\geq 0. $$ 
\end{prop}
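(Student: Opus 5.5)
We will mimic the proof of Proposition \ref{coe}, replacing the comparison $s_{\Omega_0^{(i)}}$ versus $Q_1^{(i)}$ by the comparison of $s_{A_{n+1}^{(i)}}$ with $\mathcal{Q}_{n+1}^{(i)}$ (cf. \eqref{1712n}), and then removing the eigenvalues $\widetilde E_n^{(i,\nu)}$, $\nu>\iota_n^{(i)}$, which lie outside $D(E^*,2\widetilde\delta_n)$ by \eqref{meigen}.

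\textbf{Step 1: contour representation.} Fix the circle $\Gamma_n:=\{E:\ |E-E^*|=\widetilde{\delta}_n\}$; by \eqref{meigen} and Remark \ref{kuodan}, for $\theta\in{\rm Qdp}(\widetilde{\mathcal N}_n^{(i)})$ this circle stays at distance $\gtrsim\widetilde\delta_n$ from all eigenvalues of $H_{\widetilde B_{n+1}^{(i)}}(\theta)$ and from all roots of $s_{A_{n+1}^{(i)}}(\theta,\cdot)$ (by Proposition \ref{sen}). Then the power sums
$$\widetilde a_k^{(i)}(\theta)=\frac1{2\pi i}\oint_{\Gamma_n}E^k\frac{\partial_E s_{A_{n+1}^{(i)}}(\theta,E)}{s_{A_{n+1}^{(i)}}(\theta,E)}\,dE,\qquad \widetilde b_k^{(i)}(\theta)=\frac1{2\pi i}\oint_{\Gamma_n}E^k\frac{\partial_E \mathcal Q_{n+1}^{(i)}(\theta,E)}{\mathcal Q_{n+1}^{(i)}(\theta,E)}\,dE$$
count exactly the roots inside $\Gamma_n$, namely $\{E_{n+1}^{(i,j)}(\theta)\}_{j}$ and $\{\widetilde E_n^{(i,\nu)}(\theta)\}_{\nu\le\iota_n^{(i)}}$ respectively. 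For the second one, we use that $\mathcal Q_{n+1}^{(i)}$ is a product of $s_{A_n^{(i_x)}}$, each having by \eqref{SE} exactly the roots $E_n^{(i_x,j)}(\theta+x\cdot\omega)$ in $D(E^*,\widetilde\delta_{n-1})\supset \overline{D(E^*,\widetilde\delta_n)}$, so its zeros inside $\Gamma_n$ are exactly those $\widetilde E_n^{(i,\nu)}$ with $\nu\le \iota_n^{(i)}$. Analyticity of $\widetilde a_k,\widetilde b_k$, hence of $a_k,b_k$ via Newton's identities \eqref{936}, follows.

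\textbf{Step 2: the key estimate.} On ${\rm Qdp}(\widetilde{\mathcal N}_n^{(i)})\times\Gamma_n$ we have $|s_{A_{n+1}^{(i)}}-\mathcal Q_{n+1}^{(i)}|\le e^{-l_n}$ by \eqref{1712n}, and by the same Cauchy estimate on a circle of radius $\widetilde\delta_n/2$ (valid since $s,\mathcal Q$ are analytic on $D(E^*,g(\widehat\delta_n))$), $|\partial_E(s-\mathcal Q)|\lesssim \widetilde\delta_n^{-1}e^{-l_n}$. The lower bound on the denominators is the main obstacle: we need $|\mathcal Q_{n+1}^{(i)}(\theta,E)|\gtrsim \delta_{n-1}^{M}\widetilde\delta_n^{CM^{n+1}}$ on $\Gamma_n$. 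From \eqref{zuihou} (the $\overset{\delta_{n-1}^M}{\sim}$ relation with \eqref{123}) and the separation $|E-\widetilde E_n^{(i,\nu)}(\theta)|\gtrsim \widetilde\delta_n$ for all $\nu$ on $\Gamma_n$, this holds with at most $J^{(i)}\le M^{n+1}$ factors; but $\widetilde\delta_n^{M^{n+1}}$ might be too small. I would instead use the factorization \eqref{zuihou} directly: the factors with $\nu>\iota_n^{(i)}$ have $|E-\widetilde E_n^{(i,\nu)}|\ge\widetilde\delta_n$ and are absorbed into the analytic nonvanishing comparison function, so effectively only $\le M$ small factors remain, giving $|\mathcal Q|\gtrsim \delta_{n-1}^{M}\widetilde\delta_n^{2M^2}\ge e^{-l_n/100}$ since $\ln\widetilde\delta_n^{-1},\ln\delta_{n-1}^{-1}\ll l_n$. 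Same for $s$. Hence the difference of logarithmic derivatives is $\le e^{-\frac{9}{10}l_n}$, so $|\widetilde a_k-\widetilde b_k|\le e^{-\frac{8}{9}l_n}$, and via \eqref{936} and Remark \ref{dif} (coefficients bounded by $M(C_v+1)^M$ as eigenvalues are bounded) $|a_k-b_k|\le e^{-\frac45 l_n}$ on ${\rm Qdp}(\widetilde{\mathcal N}_n^{(i)})$.

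\textbf{Step 3: derivatives.} For $\theta\in{\rm Db}(\widetilde{\mathcal N}_n^{(i)})$ the disc $D_\T(\theta,\widehat\delta_n)$ lies in ${\rm Qdp}$, so the Cauchy formula $\partial^l_\theta(a_k-b_k)(\theta)=\frac{l!}{2\pi i}\oint_{|w|=\widehat\delta_n}\frac{(a_k-b_k)(\theta+w)}{w^{l+1}}dw$ gives $l!\widehat\delta_n^{-l}e^{-\frac45 l_n}\le l!\widehat\delta_n^{-l}e^{-\frac34 l_n}$, as claimed.
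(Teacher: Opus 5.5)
This proof is correct and is essentially the paper's own argument. The paper runs the same Weierstrass-preparation / Newton-identity / Cauchy-estimate scheme from Proposition \ref{coe}. The only difference is that it compares the reduced functions $L_{n+1}^{(i)}$ and $Q_{n+1}^{(i)}$ of \eqref{1712na} (error $e^{-\frac45 l_n}$), whereas you compare $s_{A_{n+1}^{(i)}}$ and $\mathcal{Q}_{n+1}^{(i)}$ directly (error $e^{-l_n}$). The two pairs differ only by the common nonvanishing factor $\mathcal{Q}_{n+1}^{(i)}/Q_{n+1}^{(i)}$, so the logarithmic-derivative estimates coincide.

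Two harmless slips:
\begin{itemize}
\item In Step 2, $J^{(i)}\le M^2$, not $M^{n+1}$. Even $M^{n+1}$ factors would be fine, since $M^{n+1}\ln\widetilde\delta_n^{-1}\ll l_n$, so the worry there is moot.
\item The absence of zeros of $s_{A_{n+1}^{(i)}}$ on your circle $\Gamma_n$ does not come from Proposition \ref{sen}, which only controls the open disc $D(E^*,\widetilde\delta_n)$. It comes from your own Step 2 lower bound $|s_{A_{n+1}^{(i)}}|\ge|\mathcal{Q}_{n+1}^{(i)}|-e^{-l_n}>0$.
\end{itemize}
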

\begin{proof}
	The proof is the same as that of Proposition  \ref{coe} by the  Weierstrass preparation argument and Cauchy integral formula, noting  that   $\{E_{n+1}^{(i,j)}(\theta) \}_{j=1}^{ \iota_n^{(i)}}$ (resp. $\{\widetilde{E}_{n}^{(i,\nu)}(\theta) \}_{\nu=1}^{ \iota_n^{(i)}}$)  are the $E$-roots of $L_{n+1}^{(i)}(\theta,E)$ (cf. \eqref{1712na})    (resp.  $Q_{n+1}^{(i)}(\theta,E)$) in  $ D(E^*,3\widetilde{\delta}_n^2)$,  and neither  $L_{n+1}^{(i)}(\theta,E)$ nor  $Q_{n+1}^{(i)}(\theta,E)$ has 
	any  other root in $ D(E^*,\widetilde{\delta}_n)$   with $|Q_{n+1}^{(i)}(\theta,E)-L_{n+1}^{(i)}(\theta,E)|\leq  e^{-\frac45 l_n}\ll \delta_n^C  e^{-\frac34 l_n}$.

\end{proof}
For  $\theta\in \widetilde{\mathcal{N}}_n^{(i)}\cap\T$ and $\phi\in \T$ such that $\theta+\phi\in \widetilde{\mathcal{N}}_n^{(i')}\cap\T$, we define the   $E$-resultant of $ P_{n+1}^{(i)}(\theta,E)$ and  $ P_{n+1}^{(i')}(\theta+\phi,E)$   by  
\begin{align*}
	R_{n+1}^{(i,i')}(\theta,\phi)&:= \operatorname{Res}_E(P_{n+1}^{(i)}(\theta,E),P_{n+1}^{(i')}(\theta+\phi,E)).
\end{align*}
\begin{prop}\label{h4n}
	We have   derivative	estimates for 	$R_{n+1}^{(i,i')}(\theta,\phi)$: 
	\item[\textbf{(1)}] Upper bound:  \begin{equation}\label{upp}
		|\partial_\theta^lR_{n+1}^{(i,i')}(\theta,\phi)|\leq  C l! \widehat{\delta}_{n}^{-l}, \quad \forall l\geq 0.
	\end{equation}  
	\item[\textbf{(2)}] Lower  bound: Let $$t_{n+1}:=|\{k\in [1,n+1]:\ \exists i\in [1,\kappa_{k-1}],  \text{ {\rm s.t.}, } |\mathcal{C}_{k-1}^{(i)}|>1 \}|. $$
	Then  we have  \begin{equation}\label{time}
		t_{n+1}\leq M^2+1+\ln (n+1). 
	\end{equation}
	As a corollary,  \begin{equation}\label{cis}
		s M^{2t_{n+1}}\leq Cn^C \text{\ \  with $C=C(s,M)$.}
	\end{equation}
	\begin{itemize}
		\item  	If $i\neq i'$, then 
		\begin{equation}\label{jie1}
			\max _{0 \leq l \leq s M^{2t_{n+1}}}| \partial^l_\theta 	R_{n+1}^{(i,i')}(\theta,\phi)| 	\geq \delta_{n}.
		\end{equation}
		\item 	If $i=i'$, then 
		\begin{equation}\label{jie2}
			\max_{0 \leq l \leq sM^{2t_{n+1}}}| \partial^l_\theta 	R_{n+1}^{(i,i)}(\theta,\phi)| 	\geq \delta_{n}\|\phi\|_\T^{\iota_{n}^{(i)}}.
		\end{equation}
	\end{itemize}
	
\end{prop}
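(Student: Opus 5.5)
The plan is to mirror Proposition \ref{zhizhu}, replacing the explicit product $\widetilde{R}_1$ over values of $v$ by the resultant of the comparison polynomials $Q_{n+1}^{(i)}$ from \eqref{bn}, whose roots $\widetilde{E}_n^{(i,\nu)}(\theta)$ are step-$n$ Rellich functions $E_n^{(i_x,j)}(\theta+x\cdot\omega)$, $x\in\Omega_n^{(i)}$. We then argue by induction, transferring the lower bound of Hypothesis \ref{h4} at step $n$ to step $n+1$.

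\emph{Upper bound \eqref{upp}.} Write $R_{n+1}^{(i,i')}$ as a Sylvester determinant in the coefficients $a_k^{(i)}$, which are analytic on ${\rm Qdp}(\widetilde{\mathcal{N}}_n^{(i)})$ by Proposition \ref{coen}. These coefficients are elementary symmetric functions of eigenvalues bounded by $C_v+1$, so Hadamard's inequality bounds $R_{n+1}^{(i,i')}(w,\phi)$ uniformly for $w\in D_\T(\theta,\widehat{\delta}_n)$. Cauchy's formula on a circle of radius $\widehat{\delta}_n$ then gives the derivative bound.

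\emph{Lower bounds.} Set $\widetilde{R}:=\operatorname{Res}_E(Q_{n+1}^{(i)}(\theta,E),Q_{n+1}^{(i')}(\theta+\phi,E))=\prod_{\nu,\nu'}(\widetilde{E}_n^{(i,\nu)}(\theta)-\widetilde{E}_n^{(i',\nu')}(\theta+\phi))$. Group the factors by the pairs $(x,y)\in\Omega_n^{(i)}\times\Omega_n^{(i')}$. Each group is a sub-product of the step-$n$ resultant $R_n^{(i_x,i_y)}(\theta+x\cdot\omega,\phi+(y-x)\cdot\omega)$, restricted to the eigenvalues surviving in $D(E^*,\widetilde{\delta}_n)$. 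To control it, we use \eqref{meigen}: the discarded factors in $R_n$ have modulus $\gtrsim\widetilde{\delta}_n$ and are analytic in $\theta$. Hence the surviving sub-product equals $R_n$ divided by an analytic function bounded above and below by powers of $\widetilde{\delta}_n$. The Eliasson-type Lemma \ref{eli} (or a direct Leibniz argument applied to the quotient) then transfers the lower bound of Hypothesis \ref{h4} from $R_n$ to the sub-product, with a loss of a power of $\widetilde{\delta}_n$ and $\widehat{\delta}_{n-1}$.

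Next we separate the cases. For $i\ne i'$ (or $x\ne y$ when $i=i'$), \eqref{fenkn} and \eqref{fanz} give a phase separation $\|\phi+(y-x)\cdot\omega\|_\T\gtrsim\widehat{\delta}_n$ when $i_x=i_y$, exactly as in \eqref{faf} and \eqref{2119}. This yields a lower bound $\gtrsim\delta_{n-1}\widehat{\delta}_n^{M}$. When $i=i'$ and $x=y$, the factor $\|\phi\|_\T^{\iota}$ is pulled out as in \eqref{945}. Applying Lemma \ref{eli} to the product over the at most $M^2$ groups gives
\[
\max_{l\le N}|\partial^l_\theta\widetilde{R}|\ge \widehat{\delta}_n^{C}\,(\text{resp. times }\|\phi\|_\T^{\iota_n^{(i)}}),
\]
with $N=sM^{2t_n}\cdot M^2$. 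This is where the counter $t$ enters. If every class $\mathcal{C}_n^{(i)}$ is a singleton, then $\Omega_n^{(i)}=\{o\}$ and there is a single group, so no product lemma is needed and $N$ stays $sM^{2t_n}$. Otherwise $t_{n+1}=t_n+1$, and $N\le sM^{2t_{n+1}}$.

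Finally, we compare $R_{n+1}$ with $\widetilde{R}$ using Proposition \ref{coen} and Remark \ref{dif}. This gives a difference of order $e^{-\frac34 l_n}$, times $\|\phi\|^{\iota}$ via the $c_k,d_k$ interpolation of \eqref{A}--\eqref{B} in the case $i=i'$. The difference is far below $\delta_n=e^{-l_n^{2/3}}$, and since $\widehat{\delta}_n^C\ge 2\delta_n$, we obtain \eqref{jie1} and \eqref{jie2}.

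\emph{Counting \eqref{time}.} This is the main obstacle. A step $k$ with a non-singleton class requires two distinct roots $\theta_{k-1}$ to be $\bar{\delta}_{k-1}$-close under a shift $x\cdot\omega$ with $\|x\|_1\le l_k^{1/\alpha}$. By Proposition \ref{scalen} \textbf{(5)} and the fact that roots are merged into classes, each such merging either strictly decreases some partition count (which happens at most about $M^2$ times), or re-merges a cluster that already existed. In the latter case the Diophantine condition forces $\|x\|_1\gtrsim\bar{\delta}_{k-1}^{-1/\tau}$, so repeated occurrences require superexponentially growing scales, giving only $O(\ln n)$ extra steps. We expect the delicate point to be making this merging-and-separation bookkeeping rigorous. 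Given \eqref{time}, the estimate $sM^{2t_{n+1}}\le Cn^{C}$ follows at once, and it also keeps the losses $\widehat{\delta}_n^{-N}$ harmless against $\delta_n$.
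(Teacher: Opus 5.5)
Your upper bound and your final comparison of $R_{n+1}^{(i,i')}$ with the resultant of the $Q_{n+1}$'s match the paper. Two steps, however, do not go through as written.

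\textbf{The transfer from $R_n^{(i_x,i_y)}$ to its surviving part.} It is not true that the discarded factors of $R_n^{(i_x,i_y)}$ have modulus $\gtrsim\widetilde{\delta}_n$. The gap \eqref{meigen} only separates surviving eigenvalues from discarded ones, so the mixed factors are indeed $\gtrsim\widetilde{\delta}_n^{1/2}$. A factor pairing two discarded eigenvalues (the $\operatorname{Res}_E(F,F)$ part, in the notation of \eqref{shang}) behaves differently. It can vanish, since nothing prevents two discarded Rellich curves from crossing. When $i=i'$ and $x=y$ it is even $O(\|\phi\|_\T)$. So your "denominator" has no lower bound, and neither Lemma \ref{eli} nor Leibniz applied to the quotient is available.

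The repair is never to divide. Suppose $f=g\,h$ with $\max_{l\le N}|\partial_\theta^l h|\le A$. Leibniz gives $\max_{l\le N}|\partial_\theta^l f|\le 2^N A\max_{l\le N}|\partial_\theta^l g|$. Hence a jet lower bound for the product, together with a jet upper bound for the cofactor, gives a jet lower bound for the factor.

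This is what the paper does. It applies Lemma \ref{eli} once to $R_{n+1}^{(i,i')}(\theta,\phi,K,K)=\prod_{x,y}R_n^{(i_x,i_y)}(\theta+x\cdot\omega,\phi+(y-x)\cdot\omega)$. It then peels off $R(Q,Q)$ from $R(K,K)=R(Q,Q)R(F,F)R(Q,F)R(F,Q)$ using only the upper bounds \eqref{fans} (see \eqref{11n3}). For $i=i'$ one additionally needs the normalized upper bound \eqref{10571} on $\|\phi\|_\T^{-(J^{(i)}-\iota_n^{(i)})}R(F,F)$, which your sketch does not address. Your order of operations (extract per group, then apply Lemma \ref{eli}) also works once repaired this way.

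One smaller point: the exponent produced by Lemma \ref{eli} is $J^{N+1}\le M^{Cn^C}$, not an absolute $C$. Absorbing it uses $|\ln\delta_n|\ge|\ln\delta_0|^{\frac23 4^n}$.

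\textbf{The counting \eqref{time}.} This is not proved. Your dichotomy (new merge versus re-merge) is in the right spirit, but there is no monotone "partition count": roots of one neighbourhood can fall into different classes at the next, much finer, resonance scale. Moreover, the relevant small divisor is not $\bar\delta_{k-1}$.

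The missing idea is to freeze one coarse level and pigeonhole. Let $\bar k$ be the first integer $\ge\ln(n+1)+1$.
\begin{itemize}
\item Each counted step $k\ge\bar k$ supplies $i_{k-1}\neq i'_{k-1}$ and a shift $x_k$ with $l_{k-1}^\alpha<\|x_k\|_1\le l_k^{1/\alpha}$, by Proposition \ref{scalen} \textbf{(5)}. Hence the $x_k$ are pairwise distinct.
\item By Hypothesis \ref{h1}, all later roots lie in the at most $M$ discs $\widetilde{\mathcal N}_{\bar k-1}^{(i)}$ of radius $\widehat\delta_{\bar k-1}$.
\item With $M^2+1$ counted steps, two of them, $k_1\neq k_2$, use the same ordered pair of discs. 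This gives $0<\|x_{k_1}-x_{k_2}\|_1\le l_{n+1}$ and $\|(x_{k_1}-x_{k_2})\cdot\omega\|_\T\le 5\widehat\delta_{\bar k-1}$.
\item We have $l_{n+1}\le l_{\bar k-1}^{8^{n+2-\bar k}}$, while $l_{\bar k-1}\ge|\ln\delta_0|^{4^{\bar k-1}}$ dwarfs $8^{n+2}$ precisely because $\bar k-1\ge\ln(n+1)$. This contradicts $\omega\in{\rm DC}_{\tau,\gamma}$, and it is where the $\ln(n+1)$ comes from.
\end{itemize}

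Note also that this argument only controls steps whose merge uses a nonzero shift. A class can be non-singleton with $\Omega_{k-1}^{(i)}=\{o\}$ (two roots $\bar\delta_{k-1}$-close with no shift), and nothing bounds how often that happens. Such steps do not enlarge $J$ or $N$ in Lemma \ref{eli}, however. So the quantity your bookkeeping should count is the number of steps with some $\Omega_{k-1}^{(i)}\neq\{o\}$.
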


\begin{proof} 
	The proof of  \textbf{(1)} is the same as that of Proposition \ref{zhizhu} \textbf{(1)}. Note that   for such fixed  $\theta$ and $\phi$,   $R_{n+1}^{(i,i')}(w,\phi)$ is an analytic function in $w\in  D_\T(\theta,\widehat{\delta}_n )$ bounded above by $C$. 
	
	Next, we prove \textbf{(2)}.  In order to prove \eqref{time}, it suffices to prove 
	\begin{equation}\label{1607}
		|\{k\in [\ln(n+1)+1,n+1] :\  \exists i\in [1,\kappa_{k-1}], \text{ {\rm s.t.},   } |\mathcal{C}_{k-1}^{(i)}|>1\}  |\leq M^2. 
	\end{equation} 
Suppose that \eqref{1607} fails. Then by Proposition \ref{scalen} \textbf{(4)} \textbf{(5)} with $n+1$ replaced by $k$,  there are  $(M^2+1)$ many different $k \in [\ln(n+1)+1,n+1]$ such that  
	\begin{equation}\label{cunzai}
		\exists i_{k-1}\neq i_{k-1}',\   l_{k-1}^\alpha  <\|x_k\|_1\leq l_{k}^\frac{1}{\alpha},   \text{ {\rm s.t.}, }   \| \theta_{k-1}^{(i_{k-1})}(E^*)+x_k\cdot\omega-\theta_{k-1}^{(i_{k-1}')}(E^*)\|_\T\leq \bar{\delta}_{k-1} .
	\end{equation}   Since $[l_{k_1-1}^\alpha,  l_{k_1}^\frac{1}{\alpha}]\cap [l_{k_2-1}^\alpha,  l_{k_2}^\frac{1}{\alpha}]=\emptyset $ for  $k_1\neq k_2$, we have $x_{k_1}\neq x_{k_2}$  for  $k_1\neq k_2$. Let $\bar{k}:=\min\{ [\ln(n+1)+1,n+1]\cap \Z\}.$  From Hypothesis \ref{h1}, we know that  for $k\geq \bar{k}$, all the numbers $\theta_{k-1}^{(i)} $ belong to  $\bigcup_{1\leq i\leq \kappa_{\bar{k}-1}}\widetilde{\mathcal{N}}_{\bar{k}-1}^{(i)}  $, a union of at most $M$ many $\widehat{\delta}_{\bar{k}-1}$-neighborhood. By  the   pigeonhole principle, there exist $k_1\neq k_2 \in [\ln(n+1)+1,n+1]$ such that \eqref{cunzai} holds for $k_1$ and $k_2$ and moreover,   $\theta_{k_1-1}^{(i_{k_1-1})}(E^*)$ and $\theta_{k_2-1}^{(i_{k_2-1})}(E^*)$ belonging  to the same $\widehat{\delta}_{\bar{k}-1}$-neighborhood, together with  $\theta_{k_1-1}^{(i_{k_1-1}')}(E^*)$ and $\theta_{k_2-1}^{(i_{k_2-1}')}(E^*)$ belonging to the same $\widehat{\delta}_{\bar{k}-1}$-neighborhood, i.e., 
	$$\|\theta_{k_1-1}^{(i_{k_1-1})}(E^*)-\theta_{k_2-1}^{(i_{k_2-1})}(E^*)\|_\T , \ \ \|\theta_{k_1-1}^{(i'_{k_1-1})}(E^*)-\theta_{k_2-1}^{(i'_{k_2-1})}(E^*)\|_\T\leq       2\widehat{\delta}_{\bar{k}-1} .   $$
	This together with \eqref{cunzai} yields \begin{equation}\label{1113}
		\|(x_{k_1}-x_{k_2})\cdot\omega \|_\T\leq 5\widehat{\delta}_{\bar{k}-1}.
	\end{equation} Since  $\omega\in 	{\rm DC}_{\tau, \gamma}$,   $0\neq \|x_{k_1}-x_{k_2}\|_1\leq l_{n+1} \leq l_{\bar{k}-1}^{8^{n+2-\bar{k}}}, $ we have $$ \ln \|(x_{k_1}-x_{k_2})\cdot\omega \|_\T\geq \ln (\gamma/l_{n+1}^\tau)\geq  \ln \gamma- \tau 8^{n+2-\bar{k}}  \ln l_{\bar{k}-1}. $$ 
	But \text{$\ln (5\widehat{\delta}_{\bar{k}-1}) \leq \ln 5- l_{\bar{k}-1}^{1/3}$ and  $l_{\bar{k}-1}\geq (\ln|\delta_0|)^{4^{\bar{k}-1}}\geq   (\ln|\delta_0|)^{(n+1)^{\ln4 }/4}  \gg (8^{n+2})^3  $, } which contradicts  \eqref{1113}. Thus \eqref{1607} is proved. 
	
	Define  
	\begin{align}
		K^{(i)}_{n+1}(\theta,E)&:=\prod_{\nu=1}^{{J^{(i)}}}\left(E-\widetilde{E}_n^{(i,\nu)}(\theta) \right)  \text{ (cf. \eqref{123})},\nonumber \\
		F^{(i)}_{n+1}(\theta,E)&:= K^{(i)}_{n+1}(\theta,E)/Q_{n+1}^{(i)}(\theta,E)=\prod_{\nu=\iota_{n}^{(i)}+1}^{{J^{(i)}}}\left(E-\widetilde{E}_n^{(i,\nu)}(\theta) \right) ,\label{shang}
	\end{align}
	which are also analytic in $\theta\in {\rm Qdp}(\widetilde{\mathcal{N}}_n^{(i)})$. 
	For $\star,\bullet\in \{P,Q,K,F\}$,	we define the   $E$-resultant 
	\begin{align}
		R_{n+1}^{(i,i')}(\theta,\phi,\star,\bullet):&= \operatorname{Res}_E(\star_{n+1}^{(i)}(\theta,E),\bullet_{n+1}^{(i')}(\theta+\phi,E)).
	\end{align} Hence ${R}_{n+1}^{(i,i')}(\theta,\phi,P,P)={R}_{n+1}^{(i,i')}(\theta,\phi).$ 
	We will first prove the  lower bound estimates  for $	R_{n+1}^{(i,i')}(\theta,\phi,K,K)$  hence  for the factor $	R_{n+1}^{(i,i')}(\theta,\phi,Q,Q)$,  and then show  $|\partial^l_\theta( R_{n+1}^{(i,i')}(\theta,\phi)-	R_{n+1}^{(i,i')}(\theta,\phi,Q,Q))|$ is small. Thus the  estimates also hold for $R_{n+1}^{(i,i')}(\theta,\phi) $. 
	\begin{itemize}
		\item  $i\neq i'$: In this case, $\theta_n^{(i)}(E^*)$ and $\theta_n^{(i')}(E^*)$ belong to different equivalence classes. Thus by \eqref{fenkn} and \eqref{fanz}, we have   \begin{equation}\label{133n}
			\|\theta_n^{(i)}(E^*)+z\cdot\omega-\theta_n^{(i')}(E^*)\|_\T>g^{(M^4)}(\bar{\delta}_n),  \quad \forall z\in Q_{l_{n+1}^{\alpha}}.  
		\end{equation}   Since $\theta\in \widetilde{\mathcal{N}}_n^{(i)}\cap\T$ and $\theta+\phi\in \widetilde{\mathcal{N}}_n^{(i')}\cap\T$, we have \begin{equation}\label{14n}
			\|\theta_n^{(i)}(E^*)+\phi-\theta_n^{(i')}(E^*)\|_\T\lesssim \widehat{\delta}_n. 
		\end{equation}
		Since  $\Omega_n^{(i)},\Omega_n^{(i')}\subset Q_{l_{n+1}^{1/\alpha}}$ and  $g^{(M^4)}(\bar{\delta}_n)\gg  \widehat{\delta}_n$,  \eqref{133n} together with \eqref{14n} implies 
		\begin{equation}\label{fafn}
			\|\phi+(y-x)\cdot \omega\|_\T>\widehat{\delta}_n, \quad \forall  x\in \Omega_n^{(i)},y\in \Omega_n^{(i')}. 
		\end{equation}Recalling \eqref{!}, we assume $x\in \Omega_n^{(i)},y\in \Omega_n^{(i')} $ and  $\theta + x\cdot \omega \in \widetilde{\mathcal{N}}_{n-1}^{(i_x)}, \theta+\phi +y\cdot\omega \in \widetilde{\mathcal{N}}_{n-1}^{(i_y)} $.    Thus by inductive Hypothesis \eqref{lowbu=} and \eqref{low=},  we have 
		\begin{equation}\label{2120n}
			\max _{0 \leq l \leq sM^{2t_{n}}}|\partial^l_\theta 	R_{n}^{(i_x,i_y)}(\theta+x\cdot\omega,\phi+(y-x)\cdot\omega)|  \geq\delta_{n-1} \|\phi+(y-x)\cdot\omega\|_\T^M\geq   \widehat{\delta}_n^{M+1}.
		\end{equation}
		By inductive Hypothesis \eqref{hup}, we have 
		\begin{equation}\label{2121n}
			|\partial^l_\theta 	R_{n}^{(i_x,i_y)}(\theta+x\cdot\omega,\phi+(y-x)\cdot\omega)|   \leq  C l!\widehat{\delta}_{n-1}^{-l}, \quad  \forall l \geq 0. 
		\end{equation}
		Since $K^{(i)}_{n+1}(\theta,E)=\prod_{x\in \Omega_n^{(i)}}P_n^{(i_x)}(\theta+x\cdot\omega,E) \text{ (cf. \eqref{123})},  $ we have 
		\begin{equation}\label{701}
			{R}_{n+1}^{(i,i')}(\theta,\phi,K,K)=\prod_{x\in \Omega_n^{(i)}, \\ y\in \Omega_n^{(i')}} R_{n}^{(i_x,i_y)}(\theta+x\cdot\omega,\phi+(y-x)\cdot\omega).
		\end{equation}
		Applying Lemma \ref{eli} to \eqref{701} with  $ J=|\Omega_n^{(i)}|\cdot|\Omega_n^{(i')}|$, $u_j(\theta)= R_{n}^{(i_x,i_y)}(\theta+x\cdot\omega,\phi+(y-x)\cdot\omega)$, $m_j=sM^{2t_n}$,  $N=s  M^{2t_n} |\Omega_n^{(i)}|\cdot|\Omega_n^{(i')}| \leq sM^{2t_{n+1}}$, $C_1=C(sM^{2t_{n+1}})!\widehat{\delta}_{n-1}^{-sM^{2t_{n+1}}}$ and  $\beta=   \widehat{\delta}_n^{M+1}$,  and recalling \eqref{cis} yields   
		\begin{align}
			\nonumber \max _{0 \leq l \leq sM^{2t_{n+1}}}| \partial^l_\theta 			{R}_{n+1}^{(i,i')}(\theta,\phi,K,K)| &	\geq  \left(\frac{1}{e}\right)^{M^{Cn^C}} \left(\frac{  \widehat{\delta}_n^{M+1}}{((Cn^C)!\widehat{\delta}_{n-1}^{-Cn^C})^2}\right)^{M^{Cn^C}} \\
			& \geq    \delta_n^{1/2}. \label{11n}
		\end{align}
		In \eqref{11n}, we used $M^{Cn^C} |\ln \widehat{\delta}_n|\leq M^{Cn^C}|\ln\delta_n|^\frac{1}{\alpha}\ll|\ln \delta_n|$ since $|\ln\delta_n|=l_n^{2/3}\geq |\ln\delta_0|^{\frac23 \cdot 4^n}$.	 
	 Similar to  \eqref{upp}, we have 
		\begin{equation}\label{fans}
			|\partial^l_\theta  R_{n+1}^{(i,i')}(\theta,\phi,\star,\bullet)|\leq   l! C_{v}'\widehat{\delta}_{n}^{-l}, \quad \forall l\geq 0,\  \star,\bullet\in \{Q,F\}
		\end{equation}  with 
		$C_{v}':=(M^2(2+2C_v M^2))^{2M^2}. $ 
	Combining 	 \eqref{11n},  \eqref{fans},  \eqref{shang} and the equation
	 ${R}_{n+1}^{(i,i')}(\theta,\phi,K,K)={R}_{n+1}^{(i,i')}(\theta,\phi,Q,Q)\cdot {R}_{n+1}^{(i,i')}(\theta,\phi,F,F)\cdot {R}_{n+1}^{(i,i')}(\theta,\phi,Q,F)\cdot {R}_{n+1}^{(i,i')}(\theta,\phi,F,Q)$   	 yields 
		\begin{align}
			\nonumber & \max _{0 \leq l \leq sM^{2t_{n+1}}}| \partial^l_\theta 			{R}_{n+1}^{(i,i')}(\theta,\phi,Q,Q)|	\\ 
			\nonumber	\geq&  \max _{0 \leq l \leq sM^{2t_{n+1}}}| \partial^l_\theta 			{R}_{n+1}^{(i,i')}(\theta,\phi,K,K)| ( 4^{Cn^C}  (Cn^C)! C_{v}'\widehat{\delta}_{n}^{-Cn^C})^{-1}\\
			\geq &   \delta_n^{1/2}\cdot ( 4^{Cn^C}  (Cn^C)! C_{v}'\widehat{\delta}_{n}^{-Cn^C})^{-1}\geq       2 \delta_n. \label{11n3}
		\end{align}
		By Proposition \ref{coen} and a similar argument to the proof  of \eqref{1018}, we have  $$|R_{n+1}^{(i,i')}(w,\phi)-R_{n+1}^{(i,i')}(w,\phi,Q,Q)|\leq C e^{-\frac34 l_n}, \quad \forall w\in   D_\T(\theta,\widehat{\delta}_n). $$
		Bounding  the derivatives by the  Cauchy integral formula	yields
		\begin{equation}\label{33n}
			\max _{0 \leq l \leq sM^{2t_{n+1}}}| \partial^l_\theta  (R_{n+1}^{(i,i')}(\theta,\phi)-R_{n+1}^{(i,i')}(\theta,\phi,Q,Q))| \leq \delta_n^{-2} e^{-\frac34 l_{n}}\leq \delta_n.
		\end{equation}
	Now \eqref{jie1} follows from 	\eqref{11n3} and \eqref{33n}.
		\item  $i= i'$:  In this case, $\|\phi\|_\T\leq 2\widehat{\delta}_n$.  Thus for $x\neq y\in \Omega_n^{(i)}$,  we have 	\begin{equation}\label{211n}
			\|\phi+(y-x)\cdot \omega\|_\T\geq \|(y-x)\cdot\omega \|_\T- \|\phi\|_\T \geq \widehat{\delta}_n.
		\end{equation} Assume $ \|\phi\|_\T\neq 0$.  Write 
		\begin{align}
		\label{958}	\|\phi\|_\T^{-J^{(i)}}R_{n+1}^{(i,i)}(\theta,\phi,K,K) =  \prod_{\substack{ x,y \in \Omega_n^{(i)}\\  x\neq   y} }R_{n}^{(i_x,i_y)}(\theta+x\cdot\omega,\phi+(y-x)\cdot\omega) \\
			\times 		  \prod_{x\in  \Omega_n^{(i)}} \left( \|\phi\|_\T^{-\iota_{n-1}^{(i_x)}}(R_{n}^{(i_x,i_x)}(\theta+x\cdot\omega,\phi))\right) .\label{apl}
		\end{align}
		By \eqref{211n}, the derivatives of  the  factors  in the first product (i.e., the right hand side of \eqref{958})  can be estimated   by  similar arguments to those of  \eqref{2120n} and \eqref{2121n}.   For the factors  in the second  product (i.e., \eqref{apl}),   by inductive Hypothesis \eqref{low=}, we have  
		$$	\max _{0 \leq l \leq sM^{2t_n}}|\|\phi\|_\T^{-\iota_{n-1}^{(i_x)}}\partial^l_\theta  R_{n}^{(i_x,i_x)}(\theta +x\cdot\omega,\phi)|\geq \delta_{n-1}.  $$ 
	Using  the same argument as in the proof of \eqref{ys}, we have 
		\begin{align*}
			\|\phi\|_\T^{{-\iota_{n-1}^{(i_x)}}} 	| R_{n}^{(i_x,i_x)}(w+x\cdot\omega,\phi) 
			| \lesssim \widehat{\delta}_n^{-M},   \quad \forall w\in   D_\T(\theta,\widehat{\delta}_n).  	\end{align*}
	Bounding  the derivatives by  the Cauchy integral formula yields \begin{equation} \label{1057}
			\|\phi\|_\T^{{-\iota_{n-1}^{(i_x)}}}|	\partial^l_\theta  R_{n}^{(i_x,i_x)}(\theta+x\cdot\omega,\phi)|\lesssim  l!\widehat{\delta}_n^{-(M+l)}, \quad  \forall l \geq 0. 
		\end{equation}
Using the  factors estimates above and applying Lemma \ref{eli} to \eqref{958} and \eqref{apl} yields
		\begin{equation}\label{111n}
			\|\phi\|_\T^{-J^{(i)}} \max _{0 \leq l \leq sM^{2t_{n+1}}}| \partial^l_\theta 			{R}_{n+1}^{(i,i)}(\theta,\phi,K,K)| \geq    \delta_n^{1/2}. 
		\end{equation}
		Similar to \eqref{1057}, we have 
		\begin{equation} \label{10571}
			\|\phi\|_\T^{{-(J^{(i)}-\iota_{n}^{(i)}})}|	\partial^l_\theta  R_{n+1}^{(i,i)}(\theta,\phi,F,F)|\lesssim  l!\widehat{\delta}_n^{-(M^2+l)}, \quad  \forall l \geq 0. 
		\end{equation}
		By \eqref{fans}, \eqref{10571}, \eqref{111n} and using   same argument as in  the proof of \eqref{11n3}, we have 
		\begin{equation}\label{111nf}
			\|\phi\|_\T^{-\iota_{n}^{(i)}} \max _{0 \leq l \leq sM^{2t_{n+1}}}| \partial^l_\theta 			{R}_{n+1}^{(i,i)}(\theta,\phi,Q,Q)| \geq    2\delta_n. 
		\end{equation}
Using   the same argument as in the proof of \eqref{333}, we have \begin{equation}\label{333n}
			\|\phi\|_\T^{-\iota_{n}^{(i)}} \max _{0 \leq l \leq sM^{2t_{n+1}}}| \partial^l_\theta (		(R_{n+1}^{(i,i)}(\theta,\phi)-	{R}_{n+1}^{(i,i)}(\theta,\phi,Q,Q))| \leq   \delta_n^{-1} e^{-l_n^{3/4}} \leq \delta_n .
		\end{equation} 
	Now  \eqref{jie2} follows from 	\eqref{111nf} and \eqref{333n}.
	\end{itemize}
\end{proof}
\subsection{Local stability of the scale parameters}\label{lc}
As mentioned in Remark \ref{123123}, the  $l_{n+1}, B_{n+1}$ in Proposition \ref{scalen} and \S \ref{Bn}  may vary with $E^*$.  In this section, we show that the $l_{n+1}, B_{n+1}$ associated with  $E^*$ can be used for the resonance analysis of  any $\widetilde{E}^*\in D(E^*,h(\delta_{n} ))$  as established  in the previous sections. The key observation is the  stable dependence  of the roots $\theta_n^{(i,j)}(E^*)$ on $E^*$  in    the following Propositions \ref{sta} and \ref{sta0}:  
\begin{prop}\label{sta}
Let  $n\geq 1$ and  let $B_n$ and  $A_n^{(i)}$  be  constructed in \S \ref{Bn} (with $n+1$ replaced by $n$) for   $E^*$. Then for any  fixed  \begin{equation}\label{759}
\widetilde{E}^*\in D(E^*,h(\delta_{n} )),
	\end{equation}
the $\theta$-analytic function 	$s_{A_n^{(i)}}(\theta,\widetilde{E}^*)$ has $|\mathcal{C}_{n-1}^{(i)}|$ many  roots  in $\widetilde{\mathcal{N}}_{n-1}^{(i)}$, denoted by $\{\theta_n^{(i,j)}(\widetilde{E}^*)\}_{j=1}^{|\mathcal{C}_{n-1}^{(i)}|}$. 	Moreover,  up to a permutation, we have 
\begin{equation}\label{1215}
	|\theta_n^{(i,j)}(\widetilde{E}^*)-\theta_n^{(i,j)}(E^*)|\leq \delta_n, \quad \forall 1\leq j\leq |\mathcal{C}_{n-1}^{(i)}|.
\end{equation}
\end{prop}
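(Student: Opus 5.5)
The plan is to treat $s_{A_n^{(i)}}(\theta,\widetilde{E}^*)$ as a small perturbation of $s_{A_n^{(i)}}(\theta,E^*)$ on $\widetilde{\mathcal{N}}_{n-1}^{(i)}$. We then apply a Rouch\'e argument against the polynomial factorization \eqref{SS} from Hypothesis \ref{h3} at step $n$, and localize the roots by the $\overset{\delta_{n-1}}{\sim}$ equivalence.

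\textbf{Step 1 (the Schur complement is defined at $\widetilde{E}^*$).} Since $|\widetilde{E}^*-E^*|<h(\delta_n)\ll\widetilde{\delta}_{n-1}$, we have $\widetilde{E}^*\in D(E^*,\widetilde{\delta}_{n-1})$. Hypothesis \ref{h3} therefore gives that $s_{A_n^{(i)}}(\theta,E)$ is analytic in $(\theta,E)\in\widetilde{\mathcal{N}}_{n-1}^{(i)}\times D(E^*,\widetilde{\delta}_{n-1})$. No new block construction is needed: we keep $B_n$ and $A_n^{(i)}$ built for $E^*$.

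\textbf{Step 2 (difference bound).} We bound $|s_{A_n^{(i)}}(\theta,\widetilde{E}^*)-s_{A_n^{(i)}}(\theta,E^*)|$ uniformly on $\widetilde{\mathcal{N}}_{n-1}^{(i)}$. Apply the Cauchy estimate in $E$ on the disc $D(E^*,\widetilde{\delta}_{n-1})$, where $|s_{A_n^{(i)}}|\le M^nC^{M^n}\le\delta_{n-1}^{-1}$. This bounds $|\partial_E s|$ by $\delta_{n-1}^{-1}\widetilde{\delta}_{n-1}^{-1}$, so the difference is at most
\[
\delta_{n-1}^{-2}h(\delta_n).
\]
Alternatively, one can estimate $S_{A_n^{(i)}}(\theta,\widetilde{E}^*)-S_{A_n^{(i)}}(\theta,E^*)$ directly via the resolvent identity and \eqref{n+1g}, then use Remark \ref{dif}; either route gives the same bound.

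\textbf{Step 3 (Rouch\'e).} By \eqref{SS},
\[
s_{A_n^{(i)}}(\theta,E^*)=f(\theta)\prod_{j=1}^{|\mathcal{C}_{n-1}^{(i)}|}\bigl(\theta-\theta_n^{(i,j)}(E^*)\bigr),
\]
with $f$ analytic and $|f|\in[\delta_{n-1}/C,C/\delta_{n-1}]$.

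First, to count the roots, take the contour $\partial\widetilde{\mathcal{N}}_{n-1}^{(i)}$ (or radius $\widehat{\delta}_{n-1}/2$). The roots $\theta_n^{(i,j)}(E^*)$ lie in the disc of radius $2\widehat{\delta}_{n-1}^2$, so on this contour the product has modulus at least $\delta_{n-1}\widehat{\delta}_{n-1}^M/C$. This dominates $\delta_{n-1}^{-2}h(\delta_n)$. Hence $s_{A_n^{(i)}}(\cdot,\widetilde{E}^*)$ has exactly $|\mathcal{C}_{n-1}^{(i)}|$ roots in $\widetilde{\mathcal{N}}_{n-1}^{(i)}$.

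Second, to localize each root, let $\theta$ be such a root. Then
\[
\prod_j|\theta-\theta_n^{(i,j)}(E^*)|\le C\delta_{n-1}^{-3}h(\delta_n),
\]
so some $j$ has $|\theta-\theta_n^{(i,j)}(E^*)|\le (C\delta_{n-1}^{-3}h(\delta_n))^{1/M}\ll\delta_n$, using $h(\delta_n)=e^{-|\ln\delta_n|^\alpha}$.

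\textbf{Step 4 (multiplicities and the permutation).} To obtain a matching up to permutation, cluster the roots $\theta_n^{(i,j)}(E^*)$ into groups whose mutual distances are at most about $\delta_n$. This is done by a pigeonhole choice of a radius $\rho\in[h(\delta_n)^{1/(2M)},\delta_n/M]$ such that no pairwise distance lies in an annulus $[\rho,\rho^{1/2}]$; this is possible since there are at most $M^2$ distances. Around each cluster take a circle of radius $\rho^{3/4}$. On that circle, $|s_{A_n^{(i)}}(\theta,E^*)|\ge\delta_{n-1}\rho^{3M/4}/C$, which beats $\delta_{n-1}^{-2}h(\delta_n)$. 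Rouch\'e then gives equal counts inside each circle, and matching within a cluster costs at most $2M\rho^{1/2}\le\delta_n$, yielding \eqref{1215}.

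\textbf{Main obstacle.} The delicate point is the bookkeeping of the constants: ensuring that $h(\delta_n)$ is small enough relative to $\delta_{n-1}^{C}\rho^{M}$ for the chosen $\rho$. The $h$ versus $\delta$ scale separation makes this comfortable, but the pigeonhole interval must be chosen with explicit powers.
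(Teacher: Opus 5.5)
Your argument is correct and proves \eqref{1215}, but the final matching step takes a different route from the paper. The perturbation bound and the Rouch\'e count are essentially the paper's; the paper obtains the bound from the resolvent identity rather than a Cauchy estimate in $E$.

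\textbf{The paper's route.} After the Rouch\'e count, the paper applies the Weierstrass preparation argument of Proposition \ref{coe} in the $\theta$-variable. This shows that the coefficients of the monic polynomials $\prod_j(\theta-\theta_n^{(i,j)}(\widetilde E^*))$ and $\prod_j(\theta-\theta_n^{(i,j)}(E^*))$ differ by at most $h(\delta_n)^{1/4}$. It then invokes the Bhatia--Elsner--Krause root-variation theorem. That theorem gives the matching up to permutation in one step, with error $\lesssim h(\delta_n)^{1/(4M)}$, and never looks at how the old roots cluster.

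\textbf{Your route.} You stay entirely with Rouch\'e. You localize each new root near an old one, choose a gap scale $\rho$ by pigeonhole to separate clusters, and apply Rouch\'e on a circle around each cluster to recover multiplicities. This is more elementary and self-contained, and close in spirit to the paper's own pigeonhole selections of $\bar\delta_n,\widehat\delta_n,\widetilde\delta_n$. The price is the cluster bookkeeping and a weaker quantitative bound, of order $\rho$ instead of $h(\delta_n)^{1/(4M)}$. That bound still suffices.

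\textbf{One inequality to fix.} With $\rho\le\delta_n/M$ you cannot conclude $2M\rho^{1/2}\le\delta_n$, since $\rho^{1/2}\gg\delta_n$. Either of two repairs works.
\begin{itemize}
\item Shrink the admissible range, e.g.\ $\rho\le\delta_n^2/(4M^2)$. The pigeonhole over roughly $M^2+1$ nested annuli starting at $h(\delta_n)^{1/(2M)}$ still fits, because $|\ln\delta_n|^{\alpha-1}$ is large once $\varepsilon_0$ is small.
\item Better, use the gap condition directly. Since $2\rho<\rho^{1/2}$, the relation $|a-b|<\rho$ is transitive, so each cluster has diameter $<\rho$. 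Every new root inside a cluster's circle of radius $\rho^{3/4}$ lies within $(C\delta_{n-1}^{-3}h(\delta_n))^{1/M}$ of an old root, and that old root must belong to the same cluster because other clusters are at distance $>\rho^{1/2}$. Any bijection within the cluster therefore costs at most $\rho+(C\delta_{n-1}^{-3}h(\delta_n))^{1/M}\le\delta_n$.
\end{itemize}
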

\begin{proof}
	From  \eqref{su}, \eqref{n+1g}, \eqref{759} and the  resolvent identity, 
	we have  $$  |s_{A_n^{(i)}}(\theta,\widetilde{E}^*)-s_{A_n^{(i)}}(\theta,E^*)|\leq C^{M^n}\widehat{\delta}_{n-1}^{-2}\cdot  h(\delta_n) \leq   h(\delta_n)^{\frac12 }.  $$
	Thus by Proposition \ref{654n}, we have 
\begin{equation}\label{gsg}
		 s_{A_{n}^{(i)}}(\theta,\widetilde{E}^*)\overset{\delta_{n-1}}{\sim} 
\prod_{j=1}^{|\mathcal{C}_{n-1}^{(i)}|}\left(\theta-\theta_{n}^{(i,j)}(E^*) \right)+O( h(\delta_n)^{\frac13 }),\quad \forall \theta\in \widetilde{\mathcal{N}}_{n-1}^{(i)}.
\end{equation}
The existence of $\{\theta_n^{(i,j)}(\widetilde{E}^*)\}_{j=1}^{|\mathcal{C}_{n-1}^{(i)}|}$ follows from Rouch\'e theorem.  Note that  $\theta_n^{(i,j)} \in D_\T(\Re(\theta_{n-1}^{(i)}(E^*)),2\widehat{\delta}_{n-1}^2)$. Applying  the Weierstrass preparation argument as in the proof of Proposition \ref{coe} to the right hand side of \eqref{gsg} yields that   the coefficients of the $\theta$-polynomials 
\begin{equation*}
	\mathcal{P}(\theta,\widetilde{E}^*):=	\prod_{j=1}^{|\mathcal{C}_{n-1}^{(i)}|}\left(\theta -\theta_{n}^{(i,j)}(\widetilde{E}^*) \right)=:\sum_{k=0}^{|\mathcal{C}_{n-1}^{(i)}|} a_{k}(\widetilde{E}^*)\theta ^{|\mathcal{C}_{n-1}^{(i)}|-k}
\end{equation*}
and $$		\mathcal{P}(\theta,E^*):=	\prod_{j=1}^{|\mathcal{C}_{n-1}^{(i)}|}\left(\theta -\theta_{n}^{(i,j)}(E^*) \right)=:\sum_{k=0}^{|\mathcal{C}_{n-1}^{(i)}|} a_{k}(E^*)\theta ^{|\mathcal{C}_{n-1}^{(i)}|-k}$$   satisfy  $$|a_{k}(\widetilde{E}^*)-a_{k}(E^*)|\leq h(\delta_n)^\frac14, \quad  \forall 0 \leq k\leq |\mathcal{C}_{n-1}^{(i)}|    .$$ 
Then  \eqref{1215} follows from  the following  lemma  from  \cite{rao}.
\begin{lem}[{\cite[Theorem 1]{rao}}]
	Let $f$ and $g$ be two monic polynomials of degree $n$ with complex coefficients $a_1, \cdots, a_n$ and $b_1, \cdots, b_n$ respectively. Let $\alpha_1, \cdots, \alpha_n$ and $\beta_1, \cdots, \beta_n$ be their respective roots:
	$$
	\begin{aligned}
		& f(z)=z^n+a_1 z^{n-1}+\cdots+a_n=\prod_{i=1}^n\left(z-\alpha_i\right), \\
		& g(z)=z^n+b_1 z^{n-1}+\cdots+b_n=\prod_{i=1}^n\left(z-\beta_i\right).
	\end{aligned}
	$$
Let 	$$\Gamma=\max _{1 \leq  k \leq  n}\left(\left|a_k\right|^{1 / k},\left|b_k\right|^{1 / k}\right), \quad \gamma=2 \Gamma .$$
Then the roots of $f$ and $g$ can be enumerated as $\alpha_1, \ldots, \alpha_n$ and $\beta_1, \ldots, \beta_n$ in such a way that
$$
\max _i\left|\alpha_i-\beta_i\right| \leqslant 4 \times 2^{-1 / n}\left\{\sum_{k=1}^n\left|a_k-b_k\right| \gamma^{n-k}\right\}^{1 / n} .
$$
\end{lem} 	 
\end{proof}
For $n=0$, using  the identification  $\theta\mapsto e^{2\pi i \theta}$ (see the proof of  Lemma \ref{gen} for a similar argument)  and  a similar argument to that in the proof of Proposition  \ref{sta} yields 
	\begin{prop}\label{sta0}
	For any 	fixed $\widetilde{E}^*\in D(E^*,h(\delta_{0} ))$, the $\theta$-analytic function 
	$v(\theta)-\widetilde{E}^*$ has 	$m_1= \sum _{i=1}^{\kappa_0}|\mathcal{C}_0^{(i)}| $ many roots   in $\T_{\widehat{\delta}_0}$,  denoted by $\{\theta_0^{(i)}(\widetilde{E}^*)\}_{i=1}^{m_1}$.   Moreover,  up to a permutation, we have 
		$$\|\theta_0^{(i)}(\widetilde{E}^*)-\theta_0^{(i)}(E^*)\|_\T \leq \delta_0, \quad \forall 1\leq i\leq m_1.$$  
	\end{prop}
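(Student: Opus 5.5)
The plan is to reduce the statement on $\T$ to a statement about a polynomial-like analytic function on an annulus, and then repeat the Rouch\'e and Weierstrass preparation argument from the proof of Proposition \ref{sta}. I will change variables by $z=e^{2\pi i\theta}$, which maps $\T_{\eta/2}$ onto the annulus $\{e^{-\pi\eta}\le |z|\le e^{\pi\eta}\}$. The function $f(z):=v(\theta)-E^*$ is then analytic there, and by Condition \ref{con1} it has exactly the $m_0$ zeros $z_i=e^{2\pi i\theta_0^{(i)}(E^*)}$. Condition \ref{con1} also gives the comparison $|v(\theta)-E^*|\sim\prod_i\|\theta-\theta_0^{(i)}(E^*)\|_\T$ on $\T_{\eta/4}$. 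Since the chord $|e^{2\pi i\theta}-e^{2\pi i\theta'}|$ is comparable to $\|\theta-\theta'\|_\T$ on the strip, this yields $f(z)\sim\prod_{i=1}^{m_0}(z-z_i)$.

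The first step is to localize to the roots that matter. By Proposition \ref{real}, the roots lying in classes of type \textbf{(2)} are outside $\T_{g(\widehat{\delta}_0)}$, while those of type \textbf{(1)} lie in $\T_{h(\widehat{\delta}_0)}$, and there are exactly $m_1=\sum_{i\le\kappa_0}|\mathcal{C}_0^{(i)}|$ of the latter. Perturbing $E^*$ to $\widetilde{E}^*$ changes $f$ by at most $h(\delta_0)$. The comparison with the product shows that on each circle $\|\theta-\theta_0^{(i)}(E^*)\|_\T=\rho$ we have $|f|\gtrsim \rho\cdot(\text{separation})^{M-1}$, provided $\rho$ is chosen between $h(\delta_0)^{1/(2M)}$ and the gaps between roots. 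Those gaps may be tiny, so Rouch\'e cannot be applied root by root. I would therefore apply it on clusters, or more simply on the whole region $\T_{\widehat{\delta}_0}$ with boundary lines $|\Im\theta|=\widehat{\delta}_0$. On those lines every root is at distance at least $\widehat{\delta}_0/2$, because the type \textbf{(1)} roots have imaginary part at most $h(\widehat{\delta}_0)$ and the type \textbf{(2)} roots have imaginary part at least $g(\widehat{\delta}_0)\gg\widehat{\delta}_0$. This gives $|f|\gtrsim\widehat{\delta}_0^{M}\gg h(\delta_0)$ on the boundary. Rouch\'e on the annulus $\{e^{-2\pi\widehat{\delta}_0}<|z|<e^{2\pi\widehat{\delta}_0}\}$ (argument principle on both boundary circles) then shows that $v-\widetilde{E}^*$ has exactly $m_1$ roots in $\T_{\widehat{\delta}_0}$.

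For the quantitative matching I will write $f(z)=\prod_{i}(z-z_i)\,u(z)$ with $u$ zero-free and $|u|\sim1$ on the annulus. Dividing by $u$ gives $\widetilde f:=(v-\widetilde E^*)/u=\prod(z-z_i)+O(h(\delta_0))$. I then split off the $m_1$ relevant factors: the remaining factors are bounded below by roughly $g(\widehat{\delta}_0)$ on a slightly smaller annulus. Dividing them out gives an analytic function equal to the monic polynomial $\mathcal{P}(z)=\prod_{i\le m_1}(z-z_i)$ up to $O(h(\delta_0)g(\widehat\delta_0)^{-M})$. As in the proof of Proposition \ref{coe}, I compute the power sums of the roots by $\frac1{2\pi i}\oint z^k\,\partial\log(\cdot)\,dz$ over the two boundary circles. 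This shows that the coefficients of the monic polynomials with roots $\{z_i(\widetilde E^*)\}$ and $\{z_i\}$ differ by at most $h(\delta_0)^{1/2}$. The root-perturbation lemma from \cite{rao} then gives a matching with $|z_i(\widetilde E^*)-z_i|\le C h(\delta_0)^{1/(2m_1)}\le\delta_0$, using $h(\delta_0)=e^{-|\ln\delta_0|^\alpha}$. Pulling back through the logarithm, which is bi-Lipschitz on the thin annulus, gives the bound on $\|\cdot\|_\T$.

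I expect the main obstacle to be the bookkeeping on the annulus, rather than on a disk: the contour integrals must run over two circles, and the $z^k$ power sums need the inner circle to be handled correctly. The clean way around this is to work with $\widetilde f/\prod_{\text{far}}$ and integrate over the difference of the outer and inner circles. That integral counts exactly the enclosed zeros, so Newton's identities apply to the enclosed roots only.
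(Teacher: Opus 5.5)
Your proposal is correct and follows essentially the same route as the paper. The paper proves this by passing to the annulus via $\theta\mapsto e^{2\pi i\theta}$ and repeating the argument of Proposition \ref{sta}: a Rouch\'e count on $\T_{\widehat{\delta}_0}$, using Proposition \ref{real} to separate the $m_1$ relevant roots from the rest; a Weierstrass-preparation comparison of power sums, and hence of coefficients; and the root-perturbation lemma of \cite{rao}. Your quantitative bookkeeping ($|v-E^*|\gtrsim\widehat{\delta}_0^{M}\gg h(\delta_0)$ on the boundary lines, and $Ch(\delta_0)^{1/(2m_1)}\leq\delta_0$ for $\varepsilon_0$ small) matches the paper's conventions.
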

\begin{rem}\label{TY}
	By Propositions \ref{sta0} and \ref{sta},  for  any  $\widetilde{E}^*\in  D(E^*,h(\delta_{n} ))$, up to a permutation,  the distance  between the  roots corresponding to  $\widetilde{E}^*$ and $E^*$ is smaller than $\delta_n\ll\bar{\delta}_n$, which is  negligible in the choice of  $l_{n+1}$ (cf. Propositions \ref{scale},  \ref{scalen}), $B_{n+1} $ and $A_{n+1}^{(i)}$ (cf. \S \ref{Bn})  hence the resonance analysis thereafter.    
	Thus in \S \ref{HP}, fixing   the  $l_r(E^*),B_r(E^*)$  associated with  $E^*$ (cf. Remark \ref{123123})  for   $r\leq n$,  the inductive Hypotheses \ref{h1}--\ref{h4}   still hold   with $E^*$ replaced by any  $\widetilde{E}^*\in  D(E^*,h(\delta_{n-1} ))$.  
\end{rem}
\section{Anderson localization}\label{AL}
In this section, we present  the proof of Theorem \ref{maina}. Let $\varepsilon_0$ be sufficiently small so  that Proposition \ref{1832} and Theorem \ref{key2} hold,   and thus  the 	inductive Hypotheses \ref{h1}--\ref{h4}  hold    for all $n\geq 1$. 

First, we identify the phases where Anderson localization holds. 
We have to remove a zero-measure set of  phases   where double resonance  happens  infinitely often  times. Recall the  notation $l_n(E^*)$  (cf. Remark \ref{123123}),  so that $\delta_n(E^*)=e^{-l_n(E^*)^{2/3 }}$. 

Let $$J_0:=[\inf(\operatorname{Spec}(H)),\sup(\operatorname{Spec}(H))].$$ 
We define the $1$-step   $ E^*$-relevant interval by   $$J_1(E^*):=[E^*-h(\delta_{0}),E^*+h(\delta_{0})].$$  
 We cover the interval $J_0$ by   
a finite sequence of $1$-step intervals $\{J_1^{(k)}\}_{k=1}^{K}$ such that
\begin{equation*}
	J_1^{(k)}:=J_1(E^{(k)}_{1}),\  \ E^{(k)}_{1}\in J_0,  \ \   J\subset\bigcup_{k=1}^{K} J_1^{(k)}, \ \   \sum_{k=1}^{K} \operatorname{Leb}(J_1^{(k)})\leq 3\operatorname{Leb}(J_0). 
\end{equation*}
(Setting   $E^{(1)}_{1} =\inf(\operatorname{Spec}(H))$ and   $ E^{(k+1)}_{1}=E^{(k)}_{1}+h(\delta_{0})$ yields  the  desired covering.)   

Inductively,  for each $n\geq 1$, we cover the interval 
\begin{equation}\label{1239}
	J_n^{(\vec{k})}:=J_n(E_n^{(\vec{k})}):=[E_n^{(\vec{k})}-h(\delta_{n-1}(E_n^{(\vec{k})})),E_n^{(\vec{k})}+h(\delta_{n-1}(E_n^{(\vec{k})}))], 
\end{equation} 
where $\vec{k}=(k_1,\cdots,k_n)$ is  an $n$-tuple, 
by   a finite sequence of $(n+1)$-step intervals $\{J_{n+1}^{(\vec{k},k')}\}_{k'=1}^{K(\vec{k})}$ such that
\begin{align}
\nonumber	&J_{n+1}^{(\vec{k},k')}=J_{n+1}(E^{(\vec{k},k')}_{n+1}) \text{ (cf.  \eqref{1239} with   $n$ replaced by    $n+1$)},  \ \      E^{(\vec{k},k')}_{n+1}\in J_n^{(\vec{k})},   \\  &J_n^{(\vec{k})} \subset\bigcup_{k'=1}^{K(\vec{k})} 	J_{n+1}^{(\vec{k},k')}, \quad  \sum_{k'=1}^{K(\vec{k})} \operatorname{Leb}(J_{n+1}(E^{(\vec{k},k')}_{n+1}))\leq  3 \operatorname{Leb}(J_n^{(\vec{k})}). \label{covern}
\end{align}
Since $ E^{(\vec{k},k')}_{n+1}\in J_n^{(\vec{k})}\subset  D(   E_n^{(\vec{k})}, h(\delta_{n-1}(E_n^{(\vec{k})}))) $,  
 by Remark \ref{TY}, we can choose   
\begin{equation}\label{jiazhu}
	\text{ $l_r(E^{(\vec{k},k')}_{n+1})=l_r(E_n^{(\vec{k})})$ for any $1\leq r\leq n$} .
\end{equation}

\vspace{2mm}
\begin{center}
	In what follows, we regard $E^{(\vec{k})}_{n}$ as \textbf{playing the role of} the fixed energy $E^*$ in \S \ref{MSA}.
\end{center}
\vspace{2mm}
 For each $E^{(\vec{k})}_{n}$,  we define the $n$-step $E^{(\vec{k})}_{n}$-relevant  double resonant phases set by 
$${\rm DR}_n(E^{(\vec{k})}_{n}):=\bigcup_{E\in J_n^{(\vec{k})}}\left \{\theta\in \T :\ \exists   x\neq y, \  {\rm s.t.}, \ x,y\in \widetilde{S}_n(\theta,E,\delta_n(E_n^{(\vec{k})}))\bigcap Q_{l_n(E^{(\vec{k})}_{n})^{30} }\right \},$$
and define the $n$-step double resonant phases set by 
\begin{equation}\label{ndr}
	{\rm DR}_n:=\bigcup_{\vec{k}}{\rm DR}_n(E^{(\vec{k})}_{n}). 
\end{equation}
Finally, we define the infinitely often double resonant phases set by
$$\Theta:= \bigcap_{l\geq 1}\bigcup_{n\geq l }{\rm DR}_n.$$

\begin{prop}\label{0ce}We have $\operatorname{Leb}(\Theta)=0$. 
\end{prop}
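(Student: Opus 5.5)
By Borel--Cantelli, it will suffice to prove $\sum_n \operatorname{Leb}({\rm DR}_n)<\infty$.

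\textbf{Reduction to summability.} By the covering construction \eqref{covern}, the number of $n$-step intervals $J_n^{(\vec{k})}$ is at most $3^n\operatorname{Leb}(J_0)\prod_{r<n}h(\delta_{r-1})^{-1}$. More usefully, $\sum_{\vec{k}}\operatorname{Leb}(J_n^{(\vec{k})})\le 3^n\operatorname{Leb}(J_0)$. So it suffices to show, for each fixed $\vec{k}$ with $E^*:=E_n^{(\vec{k})}$,
\begin{equation*}
\operatorname{Leb}({\rm DR}_n(E^*))\le \operatorname{Leb}(J_n(E^*))\cdot \delta_n(E^*)^{c}
\end{equation*}
for some small $c>0$. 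Since $\delta_n\le e^{-|\ln\delta_0|^{(8/3)4^{n-1}}}$ decays super-exponentially, $3^n\delta_n^c$ is summable.

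\textbf{Inclusion into small values of the resultant.} Fix $\vec{k}$ and suppose $\theta\in{\rm DR}_n(E^*)$. Then there are $E\in J_n(E^*)$ and $x\ne y\in Q_{l_n^{30}}$ with $\theta+x\cdot\omega\in\widetilde{\mathcal N}_{n-1}^{(i)}$ and $\theta+y\cdot\omega\in\widetilde{\mathcal N}_{n-1}^{(i')}$, such that $|E_n^{(i,j)}(\theta+x\cdot\omega)-E|<\delta_n$ and $|E_n^{(i',j')}(\theta+y\cdot\omega)-E|<\delta_n$. Hence
\begin{equation*}
|E_n^{(i,j)}(\theta+x\cdot\omega)-E_n^{(i',j')}(\theta+y\cdot\omega)|<2\delta_n .
\end{equation*}
All eigenvalues $E_n^{(\cdot,\cdot)}$ are bounded by $C$, so the resultant satisfies $|R_n^{(i,i')}(\theta+x\cdot\omega,(y-x)\cdot\omega)|\lesssim\delta_n$. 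Therefore
\begin{equation*}
{\rm DR}_n(E^*)\subset\bigcup_{x\ne y\in Q_{l_n^{30}}}\ \bigcup_{i,i'}\Big\{\theta:\ |R_n^{(i,i')}(\theta+x\cdot\omega,(y-x)\cdot\omega)|\lesssim \delta_n\Big\}.
\end{equation*}
The union has at most $M^2 l_n^{60d}$ terms. The energy $E$ has been eliminated, so the bad set no longer depends on which $E\in J_n$ was used.

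\textbf{Measure of the sublevel sets.} Set $\phi=(y-x)\cdot\omega$. By the Diophantine condition, $\|\phi\|_\T\ge\gamma l_n^{-30\tau}$. Hypothesis \ref{h4} gives, in both the case $i\ne i'$ and the case $i=i'$,
\begin{equation*}
\max_{0\le l\le L}|\partial^l_\theta R|\ge \delta_{n-1}\,l_n^{-30\tau M}\gamma^M ,\qquad L=sM^{2t_n}\le Cn^C,
\end{equation*}
with $C^L$-bounds $C\,L!\,\widehat\delta_{n-1}^{-L}$ from \eqref{hup}. The standard sublevel-set lemma for functions with a nonvanishing derivative of order $\le L$ then gives
\begin{equation*}
\operatorname{Leb}\{\theta\in I:\ |R|<\rho\}\lesssim_L |I|\,(\rho/\beta)^{1/L}\,(C_1/\beta)^{C}
\end{equation*}
on each interval $I$ where $R$ is defined. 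Take $\rho=C\delta_n$ and $\beta=\delta_{n-1}l_n^{-C}$. Since $\ln\delta_n^{-1}=l_n^{2/3}\ge(\ln\delta_{n-1}^{-1})^{4}$ and $L\le Cn^C\ll l_n^{1/10}$, the measure is at most $e^{-l_n^{2/3}/(2L)}\le\delta_n^{1/3}$ up to polynomial factors in $l_n$. Summing over the $M^2l_n^{60d}$ choices of $(x,y,i,i')$ still gives at most $\delta_n^{1/4}$.

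\textbf{Main obstacle.} The hard step is obtaining a bound relative to $\operatorname{Leb}(J_n)=2h(\delta_{n-1})$ rather than an absolute bound, since the number of intervals grows like $h(\delta_{n-1})^{-1}$. The absolute bound $\delta_n^{1/4}$ is much smaller than $h(\delta_{n-1})$ only if $\delta_n^{1/4}\ll e^{-(\ln\delta_{n-1}^{-1})^{\alpha}}$. This needs $l_n^{2/3}/4\gg l_{n-1}^{2\alpha/3}$, which holds since $l_n\ge l_{n-1}^4$ and $\alpha<2$. So the total is $\sum_n(\#\vec{k})\,\delta_n^{1/4}\le\sum_n 3^n\operatorname{Leb}(J_0)\,\delta_n^{1/4}h(\delta_{n-1})^{-1}<\infty$.

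Care is needed in two places. First, the $l_n$ depend on $\vec{k}$, but \eqref{jiazhu} together with Remark \ref{TY} makes the hypotheses valid uniformly on $J_n^{(\vec k)}$. Second, $R_n^{(i,i')}$ is defined only locally in $\theta$, on $\widetilde{\mathcal N}_{n-1}^{(i)}$-shifts. This is handled by covering $\T$ with $O(\widehat\delta_{n-1}^{-1})$ intervals, a loss absorbed by $\delta_n^{1/4}$. Borel--Cantelli then yields $\operatorname{Leb}(\Theta)=0$.
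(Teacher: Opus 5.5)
Your architecture matches the paper's. The inclusion of ${\rm DR}_n(E^*)$ into sublevel sets of $R_n^{(i,i')}$ is Lemma \ref{baohan}. The measure bound via \eqref{hup}, \eqref{lowbu=}, \eqref{cishur} and Eliasson's sublevel-set lemma (Lemma \ref{tranle}) is Lemma \ref{transha}. The final step is a per-$\vec k$ comparison with $\operatorname{Leb}(J_n^{(\vec k)})$, the bound $\sum_{\vec k}\operatorname{Leb}(J_n^{(\vec k)})\le 3^n\operatorname{Leb}(J_0)$, and Borel--Cantelli.

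\textbf{The quantitative core of your third step is wrong as written.}
\begin{itemize}
\item Nondegeneracy of order $L$ only yields the factor $(\rho/\beta)^{1/L}$. So the main term is $e^{-l_n^{2/3}/(2L)}=\delta_n^{1/(2L)}$, and your inequality $e^{-l_n^{2/3}/(2L)}\le\delta_n^{1/3}$ is false as soon as $L\ge 2$. Here $L=sM^{2t_n}$ may grow like $Cn^C$.
\item The prefactor is not polynomial in $l_n$. With $C_2\sim (L+1)!\,\widehat\delta_{n-1}^{-(L+1)}$ and $\beta\sim\delta_{n-1}$, the factor $2^{L+2}(2C_2|I|/\beta+1)$ is of size $e^{O(n^C l_{n-1}^{2/3})}$.
\item Consequently the absolute bound $\delta_n^{1/4}$, the fixed exponent $c$ in your reduction, and the check $l_n^{2/3}/4\gg l_{n-1}^{2\alpha/3}$ are not justified by this argument.
\end{itemize}

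\textbf{The repair uses the ingredient you cite but did not apply correctly.} $L\le Cn^C$ grows polynomially in $n$, while $l_{n-1}\ge|\ln\delta_0|^{4^{n-1}}$ grows doubly exponentially and $l_n\ge l_{n-1}^4$. Hence, for a single triple $(i,i',z)$, the logarithm of the bound from Lemma \ref{tranle} is at most
$$Cn^C l_{n-1}^{2/3}-l_{n-1}^{8/3}/(Cn^C)\le -3\,l_{n-1}^{2\alpha/3}.$$
So each piece has measure at most $h(\delta_{n-1})^3$. Summing over the polynomially many choices of shifts and triples gives $\operatorname{Leb}({\rm DR}_n(E_n^{(\vec k)}))\le h(\delta_{n-1})^2\lesssim h(\delta_{n-1})\operatorname{Leb}(J_n^{(\vec k)})$, exactly as in the paper.

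This comparison must be made for each $\vec k$ with its own scales $l_{n-1}(E_n^{(\vec k)})$ and $l_n(E_n^{(\vec k)})$. These scales vary with $\vec k$, so your last display, which multiplies a global count of intervals by a global $h(\delta_{n-1})^{-1}$, is not legitimate as written. Write the sum as $\sum_{\vec k}\frac{\operatorname{Leb}({\rm DR}_n(E_n^{(\vec k)}))}{\operatorname{Leb}(J_n^{(\vec k)})}\operatorname{Leb}(J_n^{(\vec k)})$, as in your first paragraph.

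Two minor remarks:
\begin{itemize}
\item The case $i=i'$ is vacuous. The Diophantine condition gives $\|z\cdot\omega\|_\T\gg\widehat\delta_{n-1}$, which is how the paper discards it.
\item No covering of $\T$ is needed. For fixed $(x,y,i,i')$ the admissible $\theta$ form a single arc on which \eqref{hup} and \eqref{lowbu=} hold, which is all Lemma \ref{tranle} requires.
\end{itemize}
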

\begin{proof}
We divide the proof into several lemmas.  
\begin{lem}\label{baohan}  We have 
	\begin{align}
	\nonumber	{\rm DR}_n(E^{(\vec{k})}_{n})\subset \bigcup_{\|x\|_1\leq l_n(E^{(\vec{k})}_{n})^{30}}   \left(\bigcup_{i=1}^{\kappa_{n-1}}\left\{\theta\in \widetilde{\mathcal{N}}_{n-1}^{(i)}\cap\T:\ \exists z\in Q_{2l_n(E^{(\vec{k})}_{n})^{30}}\setminus\{o\}, i'\in [1,\kappa_{n-1}],	   \right. \right. \\
	\left.	\left.  \text{ {\rm s.t.}, }  \theta+z\cdot\omega\in \widetilde{\mathcal{N}}_{n-1}^{(i')}\cap\T\text{ and }      |	R_n^{(i,i')}(\theta,z\cdot\omega)|\leq \delta_n(E_n^{(\vec{k})})^{\frac12 }                         \right\}-x\cdot\omega\right)   = : \widetilde{	{\rm DR}}_n(E^{(\vec{k})}_{n}). \label{jihe} 	\end{align}
\end{lem}
\begin{proof}[Proof of Lemma \ref{baohan}]
	 By the definition of  $\widetilde{S}_n(\theta,E,\delta_n(E_n^{(\vec{k})}))$ (cf. Definition \ref{swan}),  for  $$\theta\in   {\rm DR}_n(E^{(\vec{k})}_{n}),$$  there exist $E\in J_n^{(\vec{k})}$,  $x\neq y\in Q_{l_n(E^{(\vec{k})}_{n})^{30}}$ and $i,i'\in [1,\kappa_{n-1}]$  such that \begin{equation}\label{zaide}
	 	 \theta+x\cdot\omega \in \widetilde{\mathcal{N}}_{n-1}^{(i)} \cap\T , \ \  \theta+y\cdot\omega \in \widetilde{\mathcal{N}}_{n-1}^{(i')}  \cap\T ,   
	 \end{equation}
 and \begin{equation}\label{ronghe}
	 \min_{1\leq j\leq \iota_{n-1}^{(i)}}|E_n^{(i,j)}(\theta+x\cdot\omega )-E|<\delta_n(E_n^{(\vec{k})}) , \ \  \min_{1\leq j'\leq \iota_{n-1}^{(i')}}|E_n^{(i',j')}(\theta+y\cdot\omega )-E|<\delta_n(E_n^{(\vec{k})}). 
 \end{equation}
	 Let  \begin{equation}\label{2033}
	 \text{	$z:=y-x$ \ \  and \ \  $\theta':=\theta+x\cdot\omega$.}
	 \end{equation} It follows from \eqref{zaide} that $\theta'\in \widetilde{\mathcal{N}}_{n-1}^{(i)} \cap\T  $ and $z\in Q_{2l_n(E^{(\vec{k})}_{n})^{30}}\setminus\{o\}$ satisfying  
	$ \theta'+z\cdot\omega \in \widetilde{\mathcal{N}}_{n-1}^{(i')}   \cap\T$.  Moreover, it follows from \eqref{ronghe} that 
\begin{align*}
		|R_n^{(i,i')}(\theta',z\cdot\omega)| =\prod_{j,j'}|E_n^{(i,j)}(\theta')-E_n^{(i',j')}(\theta'+z\cdot\omega)| \leq \delta_n(E_n^{(\vec{k})})^{\frac12 }   .
\end{align*}
Thus \begin{align*}
	\theta'\in  \bigcup_{i=1}^{\kappa_{n-1}}\left\{\theta\in \widetilde{\mathcal{N}}_{n-1}^{(i)}\cap\T:\ \exists z\in Q_{2l_n(E^{(\vec{k})}_{n})^{30}}\setminus\{o\}, i'\in [1,\kappa_{n-1}], 	\text{ {\rm s.t.},}    \right. \\
	\left. \theta+z\cdot\omega\in \widetilde{\mathcal{N}}_{n-1}^{(i')}\cap\T\text{ and }      |	R_n^{(i,i')}(\theta,z\cdot\omega)|\leq \delta_n(E_n^{(\vec{k})})^{\frac12 }                         \right\}
\end{align*}
and then $\theta\in \widetilde{	{\rm DR}}_n(E^{(\vec{k})}_{n}) $ by \eqref{jihe} and \eqref{2033}.  
\end{proof}
	\begin{lem}\label{transha}
		$\operatorname{Leb}(\widetilde{	{\rm DR}}_n(E^{(\vec{k})}_{n})) \leq  h(\delta_{n-1}(E^{(\vec{k})}_{n}))^2$. 
	\end{lem}
	\begin{proof}
	Fix $i,i'\in [1,\kappa_{n-1}]$ and  $z\in Q_{2l_n(E^{(\vec{k})}_{n})^{30}}\setminus\{o\}$. We bound  the measure of 
	\begin{align}\label{yaoguji}
			\left\{\theta\in \widetilde{\mathcal{N}}_{n-1}^{(i)}\cap\T:\  \theta+z\cdot\omega\in \widetilde{\mathcal{N}}_{n-1}^{(i')}\cap\T \text{ and }    |	R_n^{(i,i')}(\theta,z\cdot\omega)|\leq \delta_n(E_n^{(\vec{k})})^{\frac12 }   \right\}  .
	\end{align}
 By $\omega\in 	{\rm DC}_{\tau, \gamma}$,  we have $\|z\cdot\omega\|_\T\gg \widehat{\delta}_{n-1}(E_n^{(\vec{k})})$. Thus if $\theta\in \widetilde{\mathcal{N}}_{n-1}^{(i)}\cap\T$  and   $\theta+z\cdot\omega\in \widetilde{\mathcal{N}}_{n-1}^{(i')}\cap\T$, it must be $i\neq i'$. It follows from  \eqref{hup}, \eqref{lowbu=} and \eqref{cishur} that 
		\begin{align*}
			 \max_{0\leq l\leq  	s M^{2t_{n}}+1}	|\partial_\theta^lR_n^{(i,i')}(\theta,z\cdot\omega)|&\leq  C (s M^{2t_{n}}+1)! (\widehat{\delta}_{n-1}(E_n^{(\vec{k})}))^{-(s M^{2t_{n}}+1)}, \\
			 			\max_{0 \leq l \leq sM^{2t_n}}| \partial^l_\theta 	R_n^{(i,i')}(\theta,z\cdot\omega)| 	&\geq  \delta_{n-1}(E_n^{(\vec{k})}),\\ 
			 			 sM^{2t_n}&\leq Cn^C. 
		\end{align*}
The measure of  \eqref{yaoguji} can be  bounded via the transversality of $R_n^{(i,i')}(\theta,z\cdot\omega)$ by the following  lemma from   \cite{Eli97}. 
		\begin{lem}[{\cite[Lemma 3]{Eli97}}]\label{tranle}
		Let $I\subset\T $ be an interval and $u(\theta)$ be a smooth function on $\theta\in I$ such that
		\begin{align*}
			\max_{0\leq l\leq N+1}\sup_{\theta\in I}|\partial_\theta^l u(\theta )|&\leq C_2, \\
			\max _{0 \leq l \leq  N}| \partial^l_\theta  u(\theta)| &\geq \beta, \quad  \forall \theta \in I.
		\end{align*}
		Then for any $\epsilon>0$, we have 
		$$\operatorname{Leb}(\{\theta\in I :\ |u(\theta)|\leq \epsilon  \})\leq2^{N+2}\left(\frac{2C_2|I|}{\beta}+1\right) \cdot \left(\frac{\epsilon}{\beta}\right)^{1/N}.   $$
	\end{lem}
	Setting  $u(\theta)= R_n^{(i,i')}(\theta,z\cdot\omega)$,  $N=sM^{2t_n}$, $C_2=C (s M^{2t_{n}}+1)! (\widehat{\delta}_{n-1}(E_n^{(\vec{k})}))^{-(s M^{2t_{n}}+1)}$, $\beta = \delta_{n-1}(E_n^{(\vec{k})})$ and $\epsilon=\delta_n(E_n^{(\vec{k})})^{\frac12 }  $ in  Lemma \ref{tranle} yields 
\begin{align*}\label{77}
		 \operatorname{Leb}(\eqref{yaoguji})&\leq 2^{Cn^C}(Cn^C)! (\widehat{\delta}_{n-1}(E_n^{(\vec{k})}))^{-Cn^C} (\delta_{n-1}(E_n^{(\vec{k})}) )^{-1} \left(\frac{\delta_n(E_n^{(\vec{k})})^{\frac12 }}{(\delta_{n-1}(E_n^{(\vec{k})}) )^{-1}}\right)^{\frac{1}{Cn^C}} \\ 
		 &\leq 
	  h(\delta_{n-1}(E^{(\vec{k})}_{n}))^3,  
\end{align*}
where we used  $|\ln \delta_{n}(E_n^{(\vec{k})})|=l_n(E_n^{(\vec{k})})^{\frac23}\geq l_{n-1}(E_n^{(\vec{k})})^{\frac23 \cdot4 }\geq |\ln\delta_0 |^{\frac23 \cdot4^n } \gg Cn^C$.   
	Since there are at most  $CM^2l_n (E^{(\vec{k})}_{n})^{30}$ many choices of $(i,i',z)$ in \eqref{yaoguji}, we have   $$\operatorname{Leb}(\widetilde{	{\rm DR}}_n(E^{(\vec{k})}_{n}))\leq CM^2l_n (E^{(\vec{k})}_{n})^{60}  h(\delta_{n-1}(E^{(\vec{k})}_{n}))^3   \leq  h(\delta_{n-1}(E^{(\vec{k})}_{n}))^2. $$
 \end{proof}
	 	By Lemmas  \ref{baohan},  \ref{transha} and \eqref{1239},  \eqref{covern}, \eqref{ndr},  we have 
		\begin{align*}
			\operatorname{Leb}({\rm DR}_n)&\leq \sum _{\vec{k}} \operatorname{Leb}(\widetilde{	{\rm DR}}_n  (E^{(\vec{k})}_{n}))\\ 
			&\leq \sum _{\vec{k}} \frac{\operatorname{Leb}(\widetilde{	{\rm DR}}_n  (E^{(\vec{k})}_{n}))}{\operatorname{Leb}(J_n^{(\vec{k})})} \cdot \operatorname{Leb}(J_n^{(\vec{k})})\\
				&\leq\sum _{\vec{k}}  h(\delta_{n-1}(E^{(\vec{k})}_{n}))\cdot  \operatorname{Leb}(J_n^{(\vec{k})})\\
				&\leq \max_{\vec{k}} h(\delta_{n-1}(E^{(\vec{k})}_{n})) \sum _{\vec{k}}\operatorname{Leb}(J_n^{(\vec{k})})   \\
				& \leq e^{ -|\ln\delta_0|^{ \frac23\cdot 4^{n-1} }}  3^n  \operatorname{Leb}(J_0) . 
		\end{align*}	 
		Thus 	$\sum_{n=1}^{+\infty}\operatorname{Leb}({\rm DR}_n)<+\infty$ and  $\operatorname{Leb}(\Theta)=0$ follows from Borel-Cantelli Lemma.  
\end{proof}
Thus $\mathbb{T}\setminus \Theta $ is a set of full Lebesgue measure. We will prove $H(\theta)$  satisfies Anderson localization for any $\theta \in \mathbb{T}\setminus \Theta $.  For the remainder of this section, we fix $$\theta \in \mathbb{T}\setminus \Theta .$$ Thus there exists some $N_1\in \N^*$ such that\begin{equation}\label{1615}
\theta\notin {\rm DR}_n , \ \ \forall  n\geq N_1. 
\end{equation}
To prove that $H (\theta)$ satisfies Anderson localization, by Schnol's Lemma, it suffices    to prove that every generalized eigenfunction $\psi$ of $H (\theta)$ that satisfies  $ |\psi (x)|\leq  ( 1+\|x\|_1)^d$ in fact decays exponentially. We fix a generalized eigenvalue $E$ and its  corresponding nonvanishing  generalized eigenfunction $\psi$.  

By \eqref{covern},  there exists an  infinite  path  $\{\vec{k}_n\}_{n=1}^{+\infty}$ ($\vec{k}_n$ is  an $n$-tuple) such that  $\vec{k}_n$ coincides with the first $n$ component of  $\vec{k}_{n+1}$  for all $n\geq 1$ and 
 $$ E\in \bigcap_{n\geq 1} J_n^{(\vec{k}_n)}.   $$
\begin{lem}\label{ALL}
	There exists some $N_2\in \N^*$  such that for all $n \geq N_2$,  there is some $x_n\in Q_{l_n(E^{(\vec{k}_n)}_{n})^{2}} \cap  \widetilde{S}_n(\theta,E,\delta_n(E_n^{(\vec{k}_n)}))$.
\end{lem}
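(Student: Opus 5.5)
We argue by contradiction using the $E$-type Green's function estimates of Hypothesis \ref{h2} (2) together with the Poisson formula for the generalized eigenfunction. Fix $x_0$ with $\psi(x_0)\neq 0$. Suppose that for infinitely many $n$ the set $Q_{l_n^2}\cap \widetilde{S}_n(\theta,E,\delta_n)$ is empty, where $l_n=l_n(E_n^{(\vec k_n)})$ and $\delta_n=\delta_n(E_n^{(\vec k_n)})$. For such an $n$ (large, so that $\|x_0\|_1\ll l_n$), we build a region $\Lambda$ by applying Remark \ref{gouzao} to $Q_{l_n^2/2}$ with the phase $\theta$ and the energy $E_n^{(\vec k_n)}$ in the role of $E^*$. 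This gives a $(\theta,E_n^{(\vec k_n)})$-$n$-regular set $\Lambda$ with $Q_{l_n^2/2}\subset\Lambda\subset Q_{l_n^2/2+50M^2l_{n-1}}\subset Q_{l_n^2}$. Because $\Lambda\subset Q_{l_n^2}$, it is $E$-type-$(\theta,E,\delta_n)$-$n$-nonresonant. Since $E\in J_n^{(\vec k_n)}\subset D(E_n^{(\vec k_n)},h(\delta_{n-1}))$, and since Remark \ref{TY} and \eqref{jiazhu} let the hypotheses at $E_n^{(\vec k_n)}$ apply, Hypothesis \ref{h2} (2) with $\delta=\delta_n$ and $\theta^*=\theta\in\T$ gives
$$|G_\Lambda^{\theta,E}(x,y)|\leq e^{-\gamma_n\|x-y\|_1}\quad \text{for } \|x-y\|_1\geq l_n^{2/(\alpha+1)},$$
together with $\|G_\Lambda^{\theta,E}\|\leq \delta_{n-1}^{-5}\delta_n^{-M}$.

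Next we use the Poisson formula
$$\psi(x_0)=-\varepsilon\sum_{(w,w')\in\partial_{\Z^d}\Lambda}G_\Lambda^{\theta,E}(x_0,w)\psi(w').$$
Every boundary point satisfies $\|w-x_0\|_1\geq l_n^2/4$. The boundary has at most $Cl_n^{2d}$ points, and $|\psi(w')|\leq (1+l_n^2)^d$. Hence $|\psi(x_0)|\leq Cl_n^{4d}e^{-\gamma_n l_n^2/4}\to 0$ along the bad subsequence. This contradicts $\psi(x_0)\neq0$, so there is an $N_2$ beyond which every $n$ admits some $x_n\in Q_{l_n^2}\cap\widetilde S_n(\theta,E,\delta_n)$. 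Note that the statement needs only this, so the double-resonance condition \eqref{1615} is not needed here; it will enter later, when we show that $x_n$ is unique in a larger box.

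The main obstacle is bookkeeping. We must check that the point $\theta\in\T\subset\T_{\widehat\delta_{n-1}}$ is an admissible $\theta^*$. We must check that $\delta_n$ lies in the admissible range $[h(\delta_n),\delta_{n-1}]$. Finally, we must check that the scale and block parameters chosen along the path $\vec k_n$ are consistent, so that $E$ lies in the disc $D(E^*,h(\delta_{n-1}))$ required by Hypothesis \ref{h2} (2). These follow from \eqref{1239}, \eqref{covern} and Remark \ref{TY}, but they should be stated explicitly. The exponential decay also has to beat the polynomial growth of $\psi$, which is immediate because $\gamma_n\geq10$.
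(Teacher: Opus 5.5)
Your proposal is correct and follows essentially the same route as the paper. Both argue by contradiction along a subsequence of bad scales: take an $n$-regular extension from Remark \ref{gouzao} lying inside $Q_{l_n^2}$, apply Hypothesis \ref{h2} (2) with $E^*=E_n^{(\vec k_n)}$ and $\delta=\delta_n$, and use the Poisson formula. The paper uses the smaller box $Q_{20l_n}$ and obtains $e^{-19l_n}$ decay, which works equally well. One small slip: Remark \ref{gouzao} enlarges a set by up to $50M^2l_n$, not $50M^2l_{n-1}$, but $Q_{l_n^2/2+50M^2l_n}\subset Q_{l_n^2}$ still holds, so nothing in your argument changes.
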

\begin{proof}
 For brevity, we omit the dependence of $l_n$ and $\delta_n$ on $E_n^{(\vec{k}_n)}$.  If the conclusion  fails,  then   there exists an increasing integer sequence $\{n_i\}_{i=0}^{\infty}$ such that \begin{equation}\label{beishang}
	Q_{l_{n_i}^2}\cap  \widetilde{S}_{n_i}(\theta,E,\delta_{n_i})=\emptyset .
\end{equation}
	By Remark \ref{gouzao},  there exists a   sequence of sets  $\{U_i\}_{i=1}^\infty$ satisfying \begin{itemize}
	\item  $Q_{20l_{n_i}}\subset  U_i \subset \{x\in \Z^d:\ \operatorname{dist}_1 (x,Q_{20l_{n_i}})\leq 50M^2l_{n_i}\} \subset 	Q_{l_{n_i}^2}.$
	\item  $U_i$ is $(\theta, E_{n_i}^{(\vec{k}_{n_i})})$-$n_i$-regular.
\end{itemize}
Moreover, it follows from \eqref{beishang} that  $U_i$ is $E$-type-$(\theta, E,\delta_{n_i})$-$n_i$-nonresonant.  By  Hypothesis \ref{h2} \textbf{(2)} together with  $E\in J_{n_i}^{(\vec{k}_{n_i})}\subset D(E_{n_i}^{(\vec{k}_{n_i})},h(\delta_{n_i-1}))$, we have 
\begin{equation}\label{1751}
			|G_{U_i}^{\theta,E}(x,y)|\leq e^{-\gamma_{n_i}\|x-y\|_1}, \quad \forall \|x-y\|_1\geq l_{n_i}^\frac{2}{\alpha+1}
\end{equation}
 Fix an   arbitrary $x\in \Z^d$.  Let $i$ be sufficiently large so that $\|x\|_1\leq l_{n_i}$.
It follows from the  Poisson formula that 
\begin{equation}\label{p}
	\psi(x)=-\varepsilon \sum_{(w,w')\in \partial_{\Z^d}U_i}G_{U_i}^{\theta,E}(x,w)\psi(w').
\end{equation}
For $w\in\partial_{\Z^d}^-U_i $, we have \begin{equation}\label{1754}
	 \|x-w\|_1\geq \|w\|_1-\|x\|_1\geq 20l_{n_i}-l_{n_i}>l_{n_i}^\frac{2}{\alpha+1}.
\end{equation}
Combining \eqref{1751},  \eqref{p} and \eqref{1754}  yields 
\begin{align}
	\nonumber|\psi(x)|&\leq \sum_{(w,w')\in \partial_{\Z^d}U_i}|G_{U_i}^{\theta,E}(x,w)|\cdot |\psi(w')|\\
	\nonumber&\leq \sum_{(w,w')\in \partial_{\Z^d}U_i} e^{-\gamma_{n_i}\|x-w\|_1}(1+\|w'\|_1)^d\\
	&\lesssim e^{-19l_{n_i}} l_{n_i}^{2d}\label{00}.
\end{align}
Taking $i\to\infty$  in \eqref{00} yields $\psi(x)=0$. Since $x$ is arbitrary, it follows that $\psi\equiv0$, which contradicts the  nonvanishing  assumption of $\psi$.
\end{proof}
Now we prove Theorem \ref{maina}. 
\begin{proof}[Proof of Theorem \ref{maina}] 
		Let  $n\geq \max  (N_1,N_2)=:N$. 
	Since  $ \theta\notin {\rm DR}_n$ (cf. \eqref{1615}), by the definition of ${\rm DR}_n$ (cf. \eqref{ndr})  and Lemma \ref{ALL},   we have 
	\begin{equation}\label{huan}
		   \widetilde{S}_n(\theta,E,\delta_n(E_n^{(\vec{k}_n)}))\cap  (Q_{l_n(E^{(\vec{k}_n)}_{n})^{30} }\setminus    Q_{l_n(E^{(\vec{k}_n)}_{n})^{2} }) =\emptyset. 
	\end{equation}
		By Remark \ref{gouzao},  there exists a set  $V_n$ satisfying 
	\begin{itemize}
		\item  $  ( Q_{ l_n(E^{(\vec{k}_n)}_{n})^{30}/2 }\setminus  Q_{2 l_n(E^{(\vec{k}_n)}_{n})^{2} })\subset V_n \subset (Q_{l_n(E^{(\vec{k}_n)}_{n})^{30} }\setminus    Q_{l_n(E^{(\vec{k}_n)}_{n})^{2} }).$
		\item $V_n$ is $(\theta, E_{n}^{(\vec{k}_{n})})$-$n$-regular.
	\end{itemize}
Moreover, it follows from \eqref{huan} that  $V_n$ is $E$-type-$(\theta, E,\delta_{n}(E_n^{(\vec{k}_n)}))$-$n$-nonresonant.
 By  Hypothesis \ref{h2} \textbf{(2)} together with  $E\in J_{n}^{(\vec{k}_{n})}\subset D(E_{n}^{(\vec{k}_{n})},h(\delta_{n-1}))$, we have 
\begin{equation}\label{1820}
	|G_{V_n}^{\theta,E}(x,y)|\leq e^{-\gamma_{n}\|x-y\|_1}, \quad \forall \|x-y\|_1\geq l_{n}^\frac{2}{\alpha+1}.
\end{equation}
By \eqref{jiazhu}, we have $$ l_{n+1}(E^{(\vec{k}_{n+1})}_{n+1}) \in  [l_{n}(E^{(\vec{k}_{n+1})}_{n+1})^4,l_{n}(E^{(\vec{k}_{n+1})}_{n+1})^8] = [l_{n}(E^{(\vec{k}_{n})}_{n})^4,l_{n}(E^{(\vec{k}_{n})}_{n})^8].$$
	 Let $x\in \Z^d$ such that $\|x\|_1>l_N(E^{(\vec{k}_N)}_{N})^3 $.  Then  there exists  some $n\geq N$ such that    
	 $$x\in  (Q_{ l_{n+1}(E^{(\vec{k}_{n+1})}_{n+1})^3 
	 }\setminus Q_{l_{n}(E^{(\vec{k}_{n})}_{n})^3})\subset (Q_{ l_{n}(E^{(\vec{k}_{n})}_{n})^{24} 
 }\setminus Q_{l_{n}(E^{(\vec{k}_{n})}_{n})^3}) .$$
It follows from the  Poisson formula that 
	\begin{equation}\label{f}
		\psi(x)=-\varepsilon \sum_{(w,w')\in \partial_{\Z^d}V_n}G_{V_n}^{\theta,E}(x,w)\psi(w').
	\end{equation}
	For  $x\in (Q_{ l_{n}(E^{(\vec{k}_{n})}_{n})^{24} 
	}\setminus Q_{l_{n}(E^{(\vec{k}_{n})}_{n})^3})$ and  $w\in \partial_{\Z^d}^-V_n$, we have 
	\begin{equation}\label{1824} 
		\|x-w\|_1\geq\min( l_n(E^{(\vec{k}_n)}_{n})^{30}/2-\|x\|_1,\|x\|_1-2 l_n(E^{(\vec{k}_n)}_{n})^{2} -1)
		\geq   \|x\|_1/2>l_{n}^\frac{2}{\alpha+1}.
	\end{equation}
Combining \eqref{1820}, \eqref{f}  and \eqref{1824} yields 
	\begin{align*}
		\nonumber|\psi(x)|&\leq \sum_{(w,w')\in \partial_{\Z^d}V_n}|G_{V_n}^{\theta,E}(x,w)|\cdot |\psi(w')|\\
		\nonumber&\leq \sum_{(w,w')\in \partial_{\Z^d}V_n} e^{-\gamma_{n}\|x-w\|_1}(1+\|w'\|_1)^d\\
		&\leq C l_n(E^{(\vec{k}_n)}_{n})^{60d}e^{-2\|x\|_1}\\
		&\leq C \|x\|_1^{20d}e^{-2\|x\|_1}\\
		&\leq e^{-\|x\|_1}.
	\end{align*}
\end{proof}
\begin{rem}
Our analysis can be modified to prove a better decay rate of $\psi$ involving  $|\ln\varepsilon|$:   $$|\psi(x)|\leq e^{-(1-o_{\varepsilon\to 0}(1))|\ln\varepsilon|\cdot\|x\|_1 },$$ where $o_{\varepsilon\to 0}(1)$  is a quantity going to $0$ when $\varepsilon$ goes to $0$. Note that when the dimension   $d=1$,   $(1-o_{\varepsilon\to 0}(1))|\ln\varepsilon|$ plays the role of a lower bound for the Lyapunov exponent of the transfer matrix  (cf. \cite[Theorem 2]{BG00}).   
 \end{rem}

\section{H\"older continuity of the IDS}\label{IDS}
 In this section, we present the proof of Theorem \ref{mainb}. 
Unlike the proof of Anderson localization, the proof of H\"older continuity of the IDS does not need transversality condition (cf.  Hypothesis \ref{h4}) to eliminate double resonance; it only requires the Green's function estimate  for a fixed $E^*$.   Thus we only require $v$ to satisfy Condition \ref{con1} so that the dependence of $\varepsilon_0$ on $c$ and $s$ is eliminated. 
  Let $\varepsilon_0$ be sufficiently small so  that  inductive Hypotheses \ref{h1}--\ref{h3} hold for all $n\geq 1$. 

Fix $\theta \in \mathbb{T}$,  $E^* \in [\inf(\operatorname{Spec}(H)),\sup(\operatorname{Spec}(H))] $ and let $\beta > 0$.  We will estimate the number of eigenvalues of $H_{Q_N}(\theta)$ belonging to $[E^*-\beta, E^*+\beta]$ as $N \rightarrow +\infty$. To this end, we first introduce the following Lemma \ref{HOl}, which provides a relation between the Green's function and the number of eigenvalues of a matrix inside the energy interval $[E^*-\beta, E^*+\beta]$.
\begin{lem}[{\cite[Lemma 6.1]{CSZ23}}]\label{HOl} 
	 Let $H$ be a self-adjoint operator on $\mathbb{Z}^d$ and $\Lambda \subset \mathbb{Z}^d$ be a finite set. Fix $E^* \in \mathbb{R}$. Assume there exist some  $\beta'>0$ and  another  set $\Lambda^{\prime} \subset \mathbb{Z}^d$ such that $\|\left(H_{\Lambda^{\prime}}-E^*\right)^{-1}\|  \leq (2 \beta')^{-1}$ and  $\left|\Lambda \triangle \Lambda^{\prime}\right| \leq L$, where  $\Lambda \triangle   \Lambda^{\prime}=(\Lambda\setminus  \Lambda^{\prime}) \cup( \Lambda^{\prime}\setminus \Lambda) $. Then the number of eigenvalues of $H_{\Lambda}$ inside $[E^*-\beta', E^*+\beta']$ does not exceed $3 L$.
\end{lem}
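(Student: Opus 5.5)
The plan is a pure counting argument based on Cauchy interlacing for compressions of Hermitian matrices, passing from $\Lambda'$ to $\Lambda$ through the intersection $\Lambda_0:=\Lambda\cap\Lambda'$. No Green's function machinery from the multi-scale analysis is needed.

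First, I record the spectral consequence of the hypothesis. Since $H$ is self-adjoint, $H_{\Lambda'}$ is a Hermitian matrix. Hence
$$\|(H_{\Lambda'}-E^*)^{-1}\|=\big(\operatorname{dist}(E^*,\operatorname{Spec}(H_{\Lambda'}))\big)^{-1}\le(2\beta')^{-1},$$
so $H_{\Lambda'}$ has no eigenvalue in $(E^*-2\beta',E^*+2\beta')$. In particular it has none in $I:=[E^*-\beta',E^*+\beta']$. Write $N_A(I)$ for the number of eigenvalues, with multiplicity, of a Hermitian matrix $A$ in $I$. Then $N_{H_{\Lambda'}}(I)=0$.

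Second, I prove the interlacing count. Let $A$ be an $m\times m$ Hermitian matrix and let $B$ be its compression $R_VAR_V$ to a coordinate subspace of codimension $k$. Cauchy interlacing gives
$$\lambda_j(A)\le\lambda_j(B)\le\lambda_{j+k}(A),$$
with eigenvalues in increasing order. Suppose $\lambda_j(A),\dots,\lambda_{j+p-1}(A)\in I=[a,b]$. Then for every index $i$ with $j\le i$ and $i+k\le j+p-1$ we get $a\le\lambda_j(A)\le\lambda_i(A)\le\lambda_i(B)\le\lambda_{i+k}(A)\le b$. This yields at least $p-k$ eigenvalues of $B$ in $I$, so $N_A(I)\le N_B(I)+k$. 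The symmetric argument, using $\lambda_{i}(A)\le\lambda_i(B)$ and $\lambda_i(B)\le \lambda_{i+k}(A)$ in the other direction, gives $N_B(I)\le N_A(I)+k$. Hence
$$|N_A(I)-N_B(I)|\le k.$$

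Third, I apply this twice. The matrix $H_{\Lambda_0}$ is the compression of $H_{\Lambda'}$ with codimension $|\Lambda'\setminus\Lambda|$, so
$$N_{H_{\Lambda_0}}(I)\le 0+|\Lambda'\setminus\Lambda|.$$
The same matrix $H_{\Lambda_0}$ is also the compression of $H_\Lambda$ with codimension $|\Lambda\setminus\Lambda'|$, so
$$N_{H_\Lambda}(I)\le N_{H_{\Lambda_0}}(I)+|\Lambda\setminus\Lambda'|\le|\Lambda\triangle\Lambda'|\le L\le 3L.$$

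There is no real obstacle here. The only point requiring care is the direction and size of the interlacing shift, namely that it is $k$ rather than $2k$ for a single interval. Even if that step is done crudely with a loss of $2k$ per compression, the bound $3L$ still has enough room, since the first compression would give at most $2|\Lambda'\setminus\Lambda|$ and the second an additional $|\Lambda\setminus\Lambda'|$ after redoing the count.
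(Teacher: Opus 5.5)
Your proof is correct, and it actually gives the sharper bound $|\Lambda\triangle\Lambda'|\le L$ rather than $3L$. The paper does not prove this lemma; it imports it from \cite{CSZ23}, so there is no in-text argument to match.

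Your route uses the resolvent hypothesis only to conclude $\operatorname{Spec}(H_{\Lambda'})\cap[E^*-\beta',E^*+\beta']=\emptyset$. It then passes through $\Lambda\cap\Lambda'$ with two Cauchy interlacing counts. Both are correctly oriented, including the index ranges and the degenerate case $\Lambda\cap\Lambda'=\emptyset$.

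A standard alternative avoids interlacing and uses a test vector. In the span of the eigenvectors of $H_\Lambda$ with eigenvalues in $[E^*-\beta',E^*+\beta']$, impose $\psi|_{\Lambda\setminus\Lambda'}=0$ and $R_{\Lambda'\setminus\Lambda}H\psi=0$; these are $|\Lambda\triangle\Lambda'|$ linear conditions. A nonzero survivor is supported in $\Lambda\cap\Lambda'$ and satisfies
$\|(H_{\Lambda'}-E^*)\psi\|=\|R_{\Lambda\cap\Lambda'}(H_\Lambda-E^*)\psi\|\le\beta'\|\psi\|<2\beta'\|\psi\|$,
which contradicts the resolvent bound.

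That argument gives the same constant $L$. It uses the hypothesis on $\Lambda'$ only as the lower bound $\|(H_{\Lambda'}-E^*)\psi\|\ge 2\beta'\|\psi\|$ and needs no spectral information about $H_{\Lambda'}$. Your argument is shorter once interlacing is available, but it relies on $H_\Lambda$ and $H_{\Lambda'}$ being principal submatrices of a common Hermitian matrix, namely $H_{\Lambda\cup\Lambda'}$.

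The constant is irrelevant for the paper anyway, since the lemma is applied with $L=\frac13N^d\beta^{\frac1{m_0(E^*)}-\mu}$. Your closing remark about a ``crude $2k$ loss'' is unnecessary and can be dropped.
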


Now we prove Theorem \ref{mainb}. 
\begin{proof}[Proof of Theorem \ref{mainb}] 
Since $E^*$ is fixed, all the   $l_n, B_n, A_n^{(i)}$ mentioned  in the proof  are associated with  this  $E^*$. Let\begin{equation}\label{1902}
	\delta=	\beta^{\frac{1}{m_0(E^*)}-\frac{\mu}{2}}.
\end{equation}
Let $N$ be  sufficiently large (depending on $\delta$).    We discuss two cases. 

\underline{\textbf{Case 1:} $\beta<h(\delta_0)$.} We choose an $n$ such that \begin{equation}\label{1916}
	\beta\in [h(\delta_n),h(\delta_{n-1})]. 
\end{equation}
 By Remark \ref{gouzao},  $Q_N$ has  an extension $\widetilde{Q_N}$ such that $\widetilde{Q_N}$ is $(\theta,E^*)$-$n$-regular and 
	\begin{equation}\label{568}
		 Q_N \subset\widetilde{Q_N} \subset Q_{N+50M^2l_n}.
	\end{equation}
Since $\widetilde{Q_N}$ is $(\theta,E^*)$-$n$-regular, using  the same argument as in the proof of Claim  \ref{1goc} yields  that    for any $x\in \widetilde{Q_N}\cap S_{n-1}(\theta,E^*,\delta_{n-1})$,  there exists a $y(x)$ such that $\|x-y(x)\|_1\leq l_n^{1/\alpha}$,  $\theta+y(x)\cdot\omega\in  \widetilde{\mathcal{N}}_{n-1}^{(i_{y(x)})}$ and  $(B_n+y(x))\subset \widetilde{Q_N}$. 
Define 
\begin{align}\label{588}
	\mathcal{S}:= \left\{y(x):\  x\in \widetilde{Q_N}\cap S_{n-1}(\theta,E^*,\delta_{n-1}) \right\}\bigcap  S_n(\theta,E^*,\delta)
\end{align}
and 
\begin{equation}\label{5678}
		Q_N':= \widetilde{Q_N}\setminus \bigcup_{y\in \mathcal{S}}\left(A_n^{(i_y)}+y\right). 
\end{equation}
Using the same argument as in the proofs of \eqref{r2} and \eqref{n1}, together with \eqref{123321} with $n+1$ replaced by $n$ and the bound   $|\mathcal{C}_{n-1}^{(i_y)}|\leq m_0(E^*)$, we obtain  
\begin{itemize}
	\item  $Q_N'$ is $(\theta,E^*)$-$n$-regular. 
	\item  For any $x\in Q_N'\cap S_{n-1}(\theta,E^*,\delta_{n-1})$,  there exists a block $B_{n}(y)(=B_{n}+y)$  such that  $x\in \operatorname{Core}(B_{n}(y))\subset B_{n}(y) \subset Q_N'$ and    \begin{equation*}
		\|G_{B_{n}(y)}^{\theta,E^*}\|\leq \delta_{n-1}^{-3}\delta^{-m_0(E^*)}. 
	\end{equation*}  
\end{itemize}
Since $\delta\geq h(\delta_n)$ by \eqref{1902} and \eqref{1916},    it follows from the resolvent identity  as in the proof of  Proposition \ref{1gon} that 
\begin{equation}\label{shengli}
		\|G_{Q_N'}^{\theta,E^*}\|\leq \delta_{n-1}^{-5}\delta^{-m_0(E^*)}\leq(2\beta)^{-1} 
\end{equation} provided $ \beta^{m_0(E^*)\mu/2} \leq \frac{1}{2}g(\beta)^5 $.
It follows from  the uniform distribution of $\{x\cdot\omega\}_{x\in \Z^d}$ by    $\omega\in 	{\rm DC}_{\tau, \gamma}$,    together with  \eqref{588} and the definition of  $S_n(\theta,E^*,\delta )$ (cf. Definition  \ref{swan}) that   $|\mathcal{S}|\lesssim  N^d\delta $  for all sufficiently large $N$. It follows from \eqref{1916},  \eqref{1902},  \eqref{568} and \eqref{5678} that 
\begin{equation}\label{shengli1}
|	Q_N'\triangle  Q_N|\leq C (M^n  N^d\delta+ l_nN^{d-1}) \leq C      M^n  N^d \beta^{\frac{1}{m_0(E^*)}-\frac{\mu}{2}}\leq \frac13 N^d \beta^{\frac{1}{m_0(E^*)}-\mu}
\end{equation}
provided  $|\ln\beta|\leq \beta^{-\mu/2}$ and  $N$ sufficiently large. 
 
It follows from \eqref{shengli}, \eqref{shengli1} and Lemma \ref{HOl} (setting $\beta'= \beta , L= \frac13 N^d \beta^{\frac{1}{m_0(E^*)}-\mu}$) that 
$$
\mathcal{N}_{Q_N}(E^*+\beta  , \theta)- \mathcal{N}_{Q_N}(E^*-\beta  , \theta)\leq  \beta^{\frac{1}{m_0(E^*)}-\mu }$$
provided $\beta<\beta_0(\mu)$ and $N$ sufficiently large.  Letting   $N\to +\infty$ yields  the conclusion. 

 \underline{\textbf{Case 2:} $h(\delta_0)\leq \beta\leq1$.} This case is easier. 
 Define 
 \begin{equation}\label{5880}
 	\mathcal{S}:= Q_N \bigcap  S_0(\theta,E^*,\delta)\ \   \text{ and }\ \ 	Q_N':= \widetilde{Q_N}\setminus \mathcal{S}.  
 \end{equation}
 By \eqref{1902}, we have $\delta\geq \beta\geq h(\delta_0)= g(\varepsilon_0)\gg \varepsilon^{\frac1M}.$
 It follows from 	Condition \ref{con1}  that  
\begin{equation}\label{idsxiajie}
	|v(\theta+x\cdot\omega)-E^*|\geq \widetilde{C}_v^{-1} \delta^{m_0(E^*)}>10d\varepsilon, \quad \forall x\in  Q_N' .
\end{equation}
It follows from \eqref{idsxiajie} and the Neumann series argument that 
\begin{equation}\label{shengli0}
	\|G_{Q_N'}^{\theta,E^*}\|\leq 2\widetilde{C}_v\delta^{-m_0(E^*)}\leq(2\beta)^{-1} 
\end{equation}
provided $ \beta^{m_0(E^*)\mu/2} \leq (4\widetilde{C}_v)^{-1} $.  It follows from  the uniform distribution of $\{x\cdot\omega\}_{x\in \Z^d}$ together with \eqref{1902} and  \eqref{5880} that 
 \begin{equation}\label{shengli10}
 	|	Q_N'\triangle  Q_N|=|\mathcal{S}| \leq 10m_0(E^*)  |Q_N|   \beta^{\frac{1}{m_0(E^*)}-\frac{\mu}{2}} \leq \frac13   |Q_N|  \beta^{\frac{1}{m_0(E^*)}-\mu}
 \end{equation}
 provided  $30m_0(E^*) \leq \beta^{-\mu/2}$ and  $N$ sufficiently large. 
  It follows from \eqref{shengli0}, \eqref{shengli10} and Lemma \ref{HOl}  that 
  $$
  \mathcal{N}_{Q_N}(E^*+\beta  , \theta)- \mathcal{N}_{Q_N}(E^*-\beta  , \theta)\leq  \beta^{\frac{1}{m_0(E^*)}-\mu }$$
  provided $\beta<\beta_0(\widetilde{C}_v , m_0(E^*), \mu)$  and $N$ sufficiently large.  Letting   $N\to +\infty$ yields  the conclusion. 
  
\end{proof}

\begin{rem}
The argument in the proof above originates from    \cite{Bou00}, where	Bourgain studied the regularity of the IDS for AMO by means of this method. 
\end{rem}

\begin{proof}[Proof of Corollary \ref{coor}] 
Using the same argument as in the proofs of Lemma  \ref{genz}, we can prove that there exists some $\varepsilon_0=\varepsilon_0(v_0,u)$ such that  for any $0\leq \varepsilon_2\leq \varepsilon_0$, the analytic function $v=v_0+\varepsilon_2u$ satisfies \eqref{tiao1} and Condition \ref{con1} with constants $C_v$, $\eta$, $\widetilde{C}_v$ and $M=2m$ that are uniform in $\varepsilon_2$. So we can apply Theorem \ref{mainb} to $v$.

\end{proof}

\appendix

\section{Proof of Condition \ref{con1}}
\begin{lem}\label{gen}
	Let $v$ be a non-constant analytic function in   $\T_{2\eta}$ with  $\sup_{\theta\in \T_{\eta}}|v(\theta)|\leq C_v$. Then there exists some integer  $M\in [2,+\infty)$  such that for any $E^*\in \{v(\theta):\  \theta\in \T_{\eta/4} \} $, $v(\theta)-E^*$ has no more than $M
	$ roots (counted with multiplicity) in $\T_{\eta/2}$. We denote  these roots by  $\theta_0^{(1)}(E^*),\cdots,\theta_0^{(m_0)}(E^*)$, where $m_0=m_0(E^*)\in [0,M]$ is the number of roots of $v(\theta)-E^*$ in $\theta\in \T_{\eta/2}$.  Moreover,  there exists some $\widetilde{C}_v\geq 1$ such that for any $E^*\in \{v(\theta):\  \theta\in \T_{\eta/4} \} $, 
	\begin{equation*}
		\widetilde{C}_v^{-1}\prod_{i=1}^{m_0}\|\theta- \theta_0^{(i)}(E^*)\|_\T \leq |v(\theta)-E^*|\leq \widetilde{C}_v\prod_{i=1}^{m_0}\|\theta- \theta_0^{(i)}(E^*)\|_\T, \quad  \forall    \theta\in \T_{\eta/4}.    
	\end{equation*}
\end{lem}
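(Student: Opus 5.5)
We would pass to the annulus picture via $z=e^{2\pi i\theta}$. This sends $\T_{2\eta}$ to the closed annulus $\mathcal{A}_{2\eta}=\{e^{-4\pi\eta}\le|z|\le e^{4\pi\eta}\}$, and $v$ becomes a function $f(z)=v(\theta)$ holomorphic on a neighbourhood of $\mathcal{A}_{\eta}$. Write $V:=\{v(\theta):\theta\in\T_{\eta/4}\}$, which lies in the compact disc $\{|E|\le C_v\}$.

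\textbf{Step 1: uniform bound $M$ on the number of roots.} We use Jensen's formula, or equivalently the argument principle combined with compactness. For $E^*\in V$, the function $f-E^*$ is not identically zero, since $v$ is nonconstant. We fix slightly larger annuli $\mathcal{A}_{\eta/2}\subset\mathcal{A}_{3\eta/4}\subset\mathcal{A}_{\eta}$. Consider $F(E^*):=\max_{\mathcal{A}_{3\eta/4}}|f-E^*|$; it is continuous and positive in $E^*$. Cover the annulus by finitely many discs, each with a concentric doubled disc inside $\mathcal{A}_{\eta}$ and each centred at a point where $|f-E^*|$ is not too small. Jensen's formula then bounds the zeros in the inner discs by $\log(2C_v/|f(z_0)-E^*|)/\log 2$.

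To make this uniform in $E^*$, we argue by contradiction. If the zero counts were unbounded along a sequence $E_k\to E_\infty$, then Hurwitz's theorem applied to $f-E_k\to f-E_\infty\not\equiv0$ on a slightly larger compact annulus would show that the counts converge to the finite number of zeros of $f-E_\infty$. There is a caveat about zeros lying on the boundary $|\Im\theta|=\eta/2$. We avoid it by counting zeros in $\T_{\eta/2}$ with the count dominated by the count in $\T_{5\eta/8}$, to which Hurwitz applies on compact subsets. Taking $M:=\max(2,\sup)$ gives Step 1.

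\textbf{Step 2: the two-sided product bound.} Let $\theta_0^{(i)}$ be the roots in $\T_{\eta/2}$. Define
\[
g_{E^*}(\theta)=\frac{v(\theta)-E^*}{\prod_i \sin\bigl(\pi(\theta-\theta_0^{(i)})\bigr)}.
\]
This is $1$-periodic up to sign, holomorphic on $\T_{5\eta/8}$, and free of zeros on $\T_{\eta/2}$. On $\T_{\eta/4}$ the factor $|\sin\pi(\theta-\theta_0^{(i)})|$ is comparable to $\|\theta-\theta_0^{(i)}\|_\T$ with constants depending only on $\eta$. It therefore suffices to show that $|g_{E^*}|$ is bounded above and below on $\T_{\eta/4}$ uniformly in $E^*\in V$.

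The upper bound comes from the maximum principle on $\T_{\eta/2+\rho}$, for a margin $\rho$ chosen below. On the boundary, the roots in $\T_{\eta/2}$ are at distance $\ge\rho$, so the denominator is $\gtrsim\rho^{M}$, while the numerator is $\le 2C_v$.

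The lower bound is the main obstacle. Roots of $v-E^*$ lying just outside $\T_{\eta/2}$, but close to $\T_{\eta/4}$, could make $g_{E^*}$ small. This is prevented by the gap between $\eta/4$ and $\eta/2$: any root outside $\T_{\eta/2}$ is at distance $\ge\eta/4$ from $\T_{\eta/4}$. We then proceed by compactness and contradiction. Suppose $\inf_{\T_{\eta/4}}|g_{E_k}|\to0$ along $E_k\to E_\infty$. Pass to a subsequence along which the roots in $\T_{\eta/2}$ converge; their number is constant after a further subsequence. Then $g_{E_k}\to g_\infty$ locally uniformly on the interior of $\T_{\eta/2}$, where $g_\infty$ is $v-E_\infty$ divided by the limiting product. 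The limiting roots are exactly roots of $v-E_\infty$ in the closed strip, so $g_\infty$ has no zeros in $\T_{3\eta/8}$: any zero of $v-E_\infty$ there would, by Hurwitz, be a limit of roots of $v-E_k$ in $\T_{\eta/2}$, hence already cancelled. Hence $|g_\infty|$ is bounded below on the compact set $\T_{\eta/4}$, which contradicts the assumption.

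The same Hurwitz bookkeeping also settles the boundary issue in the upper bound, by picking $\rho$ uniformly. Combining the two bounds gives $\widetilde{C}_v$.
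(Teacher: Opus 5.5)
Your proof is correct, and it follows a genuinely different route from the paper's for the two-sided bound.

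For the root count you do essentially what the paper does: argue by contradiction, use compactness of the set of energies, and apply the argument principle (you use Hurwitz) to the limit energy $E_\infty$. The Jensen covering you sketch first is never actually used.

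For the product bound, the paper works on the annulus with $\varphi_E=(u-E)/\prod_i(z-z^{(i)})$ and relies only on well-chosen level sets.
\begin{itemize}
\item Its upper bound uses a pigeonhole choice of $\delta\in[\eta/4,\eta/2]$ so that $\partial A_\delta$ stays $\gtrsim 1/M$ away from all roots, followed by the maximum modulus principle.
\item Its lower bound takes $E_n\to E_\infty$ and picks $\delta\in[\eta/4,\eta/2]$ with $u-E_\infty$ zero-free on $\partial A_\delta$. Then $|u-E_n|\ge\epsilon/2$ there, and the minimum modulus principle applied to the zero-free $\varphi_{E_n}$ on $A_\delta$ finishes. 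The paper never has to track where the roots go.
\end{itemize}

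Your upper bound is simpler than the paper's. On $\T_{\eta/2+\rho}$ every cancelled root is automatically at imaginary distance $\ge\rho$ from the boundary, and $|\sin\pi w|\ge\sinh(\pi|\Im w|)$. So no pigeonhole is needed, and the ``boundary issue'' you mention for the upper bound does not actually arise.

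Your lower bound is heavier than the paper's. You need to do three things:
\begin{itemize}
\item pass to convergent root configurations;
\item justify that $g_{E_k}\to g_\infty$ locally uniformly across the cancelled zeros, for instance by the maximum principle on small circles around the limit roots;
\item use Hurwitz's multiplicity count to show that exactly $\mathrm{mult}_w(v-E_\infty)$ of the roots converge to each zero $w\in\T_{3\eta/8}$, so that $g_\infty$ is holomorphic and zero-free there.
\end{itemize}
All of this is standard and goes through as you outline; the paper's minimum-modulus trick on a single good level set simply avoids that bookkeeping.
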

Since Lemma \ref{gen} is equivalent to the following Lemma \ref{genz} via  a  identification  $\theta\mapsto e^{2\pi i \theta}$, we will prove Lemma \ref{genz} instead of Lemma \ref{gen}. 
\begin{lem}\label{genz}
	For  $\eta\geq 0$, we denote 
	$$A_\eta:=\{z\in \C:\  e^{-2\pi\eta} \leq |z|\leq e^{2\pi\eta} \}. $$
	Let $u$ be a non-constant analytic function in   $A_{2\eta}$ with  $\sup_{z\in A_{\eta}}|u(z)|\leq C_u$. Then there exists some integer  $M\in [2,+\infty)$  such that for any $E^*\in \{u(z):\  z\in A_{\eta/4} \} $, $u(z)-E^*$ has no more than $M
	$ roots (counted with  multiplicity) in $A_{\eta/2}$. We denote  these roots by  $z^{(1)}(E^*),\cdots,z^{(m)}(E^*)$, where $m=m(E^*)\in [0,M]$ is the number of roots of $u(z)-E^*$ in $z\in A_{\eta/2}$.  Moreover,  there exists some $\widetilde{C}_u\geq 1$ such that for any $E^*\in \{u(z):\  z\in A_{\eta/4} \} $,  
	\begin{equation*}
		\widetilde{C}_u^{-1}\prod_{i=1}^{m}|z- z^{(i)}(E^*)| \leq |u(z)-E^*|\leq \widetilde{C}_u\prod_{i=1}^{m}|z- z^{(i)}(E^*)|, \quad  \forall    z \in  A_{\eta/4}.  	\end{equation*}
\end{lem}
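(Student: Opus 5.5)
The plan is a compactness argument on the parameter $E^*$, combined with Jensen's formula and Rouch\'e's theorem on a fixed annulus. Set $K:=\{u(z):\ z\in A_{\eta/4}\}$, which is compact. Since $u$ is analytic on $A_{2\eta}$ and bounded by $C_u$ on $A_\eta$, it is natural to work on the intermediate annulus $A_{3\eta/4}$, whose boundary circles lie strictly between $A_{\eta/2}$ and $A_\eta$.

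\textbf{Step 1: uniform bound on the number of roots.} For each fixed $E^*\in K$, the function $u-E^*$ is nonconstant, so it has finitely many zeros in the compact set $A_{\eta/2}$. To make the count uniform, I would argue as follows. The map $E^*\mapsto \min_{|z|=e^{\pm 3\pi\eta/2}}|u(z)-E^*|$ may vanish for some $E^*$, so I would pick, for each $E^*$, a radius pair $r_\pm(E^*)$ between the $\eta/2$ and $3\eta/4$ circles on which $u-E^*$ has no zeros. By continuity this choice persists on a neighborhood of $E^*$, and the argument principle then makes the number of zeros inside the chosen annulus locally constant. Compactness of $K$ gives finitely many neighborhoods, hence a uniform bound $M$; taking the maximum with $2$ gives $M\ge 2$. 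Alternatively, a Jensen-type bound in terms of $\sup|u-E^*|\le 2C_u$ and a lower bound for $|u(z_0)-E^*|$ at some point would work, but the local constancy argument is cleaner.

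\textbf{Step 2: the two-sided product estimate.} Write $u(z)-E^*=\prod_{i=1}^m (z-z^{(i)}(E^*))\cdot f_{E^*}(z)$, where $f_{E^*}$ is analytic and zero-free on $A_{\eta/2}$. The upper bound on $|u-E^*|$ on $A_{\eta/4}$ reduces to bounding $|f_{E^*}|$ from above. On the boundary of $A_{\eta/2}$ the product has modulus $\ge c(\eta)^{M}$ only if the roots stay away from that boundary, and they need not. To handle this, I would again use the locally chosen radii $r_\pm(E^*)$ from Step 1. On those circles the product is bounded below, since the roots are at positive distance, uniformly on a neighborhood; the circles still contain $A_{\eta/4}$. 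Roots lying between $r_\pm$ and the $\eta/2$ boundary contribute factors $|z-z^{(i)}|\asymp 1$ on $A_{\eta/4}$, because they are at distance $\gtrsim\eta$ from it, so they may be absorbed into the constants. The maximum principle then bounds $|f_{E^*}|$ from above. For the lower bound, $1/f_{E^*}$ is analytic on the sub-annulus, and I need $|f_{E^*}|$ bounded below on the circles $r_\pm$. This holds because $|u-E^*|$ is bounded below there by the Step 1 choice, while the product is bounded above by $(2e^{2\pi\eta})^M$. The maximum principle again finishes.

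\textbf{Main obstacle.} The main difficulty is making all constants uniform in $E^*$. Everything above is uniform on a neighborhood of each $E^*$ through continuity of $u-E^*$ in $E^*$ on the fixed circles and continuity of the root set (Rouch\'e). A finite subcover of the compact set $K$ then produces a single $\widetilde{C}_u$.
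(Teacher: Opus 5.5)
Your proposal is correct and is essentially the paper's argument. You use compactness in $E^*$ together with the argument principle on a slightly larger annulus for the uniform root bound, and then the maximum and minimum modulus principles applied to $(u-E^*)/\prod_i(z-z^{(i)}(E^*))$ for the two-sided estimate. The paper differs only in phrasing the compactness through convergent sequences $E_n\to E_\infty$, and in getting the upper bound from a pigeonhole choice of radii in $[\eta/4,\eta/2]$ rather than a finite subcover.

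The one step to tighten is the lower bound. You divide out only the roots in $A_{\eta/2}$ but place $r_\pm$ outside $A_{\eta/2}$, so $1/f_{E^*}$ can have poles inside the $r_\pm$-annulus. You must therefore divide out all roots in that annulus before applying the minimum modulus principle, and then absorb the extra factors as you indicate. Alternatively, and more simply, choose the radii between the $\eta/4$ and $\eta/2$ circles as the paper does, where $f_{E^*}$ is automatically zero-free.
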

\begin{proof}	We argue by the contradiction.
	\begin{itemize}
		\item First we prove $m(E^*)$ is bounded for all $E^*\in \{u(z):\  z\in A_{\eta/4} \}$.  
		Suppose the conclusion fails. Then for each $n\in \N$, there exists some $E_n\in \{u(z):\  z\in A_{\eta/4} \}$ such that  $m(E_n)\geq n$.
		Since 	$\{u(z):\  z\in A_{\eta/4} \}$ is compact, there exists a subsequence (still denoted by $E_n$) converging to some point $E_\infty\in \{u(z):\  z\in A_{\eta/4} \}$. 
		Since $u$ is  analytic and   non-constant,  $u(z)-E_\infty$ has finitely many roots in  $A_\eta$. Thus there exists some $\delta\in [\frac34 \eta , \eta ]$ such that $u(z)-E_\infty$ has no root on $\partial A_\delta$. It follows from the  argument principle that 
\begin{equation}\label{1956}
			 |\{\text{roots of $u(z)-E_\infty$ in $A_{\delta}$}\}|=\frac{1}{2 \pi i} \oint_{\partial A_\delta}\frac{ \partial_z  (u(z)-E_\infty)}{u(z)-E_\infty} d z<+\infty. 
\end{equation}
		On the other hand, since $E_n\to E_\infty$,  we have for all  sufficiently large $n$,
		$$	n\leq \frac{1}{2 \pi i} \oint_{\partial A_\delta}\frac{ \partial_z  (u(z)-E_n)}{u(z)-E_n} d z=	\frac{1}{2 \pi i} \oint_{\partial A_\delta}\frac{ \partial_z  (u(z)-E_\infty)}{u(z)-E_\infty} d z,$$
	which contradicts \eqref{1956}. This proves that   $m(E^*)$ is bounded by some $M$.  
		
		Now for any $E^*$,  the function 
		$$\varphi_{E^*}(z):=\frac{u(z)-E^*}{\prod_{i=1}^{m(E^*)}(z- z^{(i)}(E^*))} $$ becomes an analytic function in $A_{\eta}$, which has  no  root  in $A_{\eta/2}$. 
		\item 	Assume that for any $n\in \N$,  there exist some $E_n\in \{u(z):\  z\in A_{\eta/4} \}$ and $z_n\in A_{\eta/4}$ such that  $|\varphi_{E_n}(z_n)|> n$. Since $m(E_n)\leq M$, by the pigeonhole principle,  there exists  some $\delta\in [ \eta/4, \eta/2 ]$ (depending on $E_n$)  such that $$\min_{1\leq i\leq m(E_n) }\min_{z\in \partial A_\delta}|z-z^{(i)}(E_n)|\geq (e^{-\pi\eta /2 }-e^{-\pi \eta})/(10M).$$ By the maximum modulus principle, we have 
		$$n<|\varphi_{E_n}(z_n)|\leq \sup_{z\in \partial A_\delta} |\varphi_{E_n}(z)|\leq\frac{2C_v}{((e^{-\pi\eta /2 }-e^{-\pi \eta})/(10M))^M}.$$
		Letting $n\to +\infty$ yields  a contradiction. 
		\item 	Assume that for any $n\in \N$, there exist some $E_n\in \{u(z):\  z\in A_{\eta/4} \}$ and $z_n\in A_{\eta/4}$ such that  $|\varphi_{E_n}(z_n)|<1/n$. We assume that $E_n\to E_\infty$. Since $u(z)-E_\infty$ has finitely many roots in  $A_\eta $,  there exist some $\delta\in [ \eta/4 , \eta/2 ]$ and $\epsilon>0$ such that 
		$$\min_{z\in \partial A_\delta} |u(z)-E_\infty|>\epsilon.$$
		Thus for all  sufficiently large $n$, we have  
		$$\min_{z\in \partial A_\delta} |u(z)-E_n|>\epsilon/2$$
		hence  			 $$\min_{z\in \partial A_\delta} |\varphi_{E_n}(z)|>\frac{\epsilon/2}{(2e^{\pi  \eta})^M }.  $$ On the other hand, since $\varphi_{E_n}(z)$   has  no  root  in $A_{\eta/2}$,  by the maximum modulus principle together with $z_n\in A_{\eta/4}$ and $\delta\in [ \eta/4 , \eta/2 ]$, for all sufficiently large $n$  we have  
		$$1/n> \varphi_{E_n}(z_n)\geq \min_{z\in \partial A_\delta} |\varphi_{E_n}(z)|>\frac{\epsilon/2}{(2e^{\pi  \eta})^M },  $$ which yields  a contradiction.
	\end{itemize}	
\end{proof}

\section{Proof of Condition \ref{con2}}

	\begin{lem}\label{tran}
	Let $v$ be a non-constant analytic function on $\T$ with $1$ as its shortest period. Then there exist $s\in \N, c>0$  such that 
	$$\max _{0 \leq l \leq s}\left|\partial_\theta^l(v(\theta+\phi)-v(\theta))\right| \geq c \|\phi\|_\T,  \quad \forall \theta , \phi\in \T .$$
\end{lem}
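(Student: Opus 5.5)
We plan to argue by compactness in $(\theta,\phi)\in\T\times\T$, treating separately the region where $\phi$ stays away from $0$ and a small neighbourhood of $\phi=0$.

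For $\phi$ near $0$, Taylor's theorem gives
\[
v(\theta+\phi)-v(\theta)=\phi\,v'(\theta)+O(\phi^2),\qquad \partial_\theta^l\bigl(v(\theta+\phi)-v(\theta)\bigr)=\phi\,v^{(l+1)}(\theta)+O(\phi^2),
\]
where the error is uniform in $\theta$ because $v$ is analytic on a strip, so all derivatives are bounded.

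\textbf{Step 1 (no common zero of the derivatives).} Since $v$ is non-constant and analytic, $v'$ is not identically zero. Hence for each $\theta_0\in\T$ some derivative $v^{(k)}(\theta_0)$ with $k\ge1$ is nonzero; otherwise the Taylor series of $v-v(\theta_0)$ at $\theta_0$ would vanish and $v$ would be constant. Then $\max_{1\le k\le s_0}|v^{(k)}(\theta)|$ is a continuous function of $\theta$. For each $\theta$ it is positive once $s_0$ is large enough, and by compactness of $\T$ and openness one finite $s_0$ works for all $\theta$. This yields
\[
\min_{\theta\in\T}\ \max_{1\le k\le s_0}|v^{(k)}(\theta)|=:c_1>0.
\]

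\textbf{Step 2 (small $\phi$).} From the Taylor expansion in Step 1,
\[
\max_{0\le l\le s_0-1}\bigl|\partial^l_\theta\bigl(v(\theta+\phi)-v(\theta)\bigr)\bigr|\ \ge\ c_1|\phi|-C|\phi|^2\ \ge\ \tfrac{c_1}{2}\|\phi\|_\T
\]
for $\|\phi\|_\T\le\phi_0$ with $\phi_0$ small.

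\textbf{Step 3 (large $\phi$).} For $\|\phi\|_\T\ge\phi_0$, set
\[
w_\phi:=v(\cdot+\phi)-v,
\]
and consider the continuous function $F_s(\theta,\phi)=\max_{0\le l\le s}|\partial^l_\theta w_\phi(\theta)|$ on the compact set $\T\times\{\|\phi\|_\T\ge\phi_0\}$. Fix such a $\phi$. Since $1$ is the shortest period, $w_\phi\not\equiv0$, and $w_\phi$ is analytic. So at each $\theta$ some derivative $\partial^l w_\phi(\theta)\neq0$, and hence $F_{s}(\theta,\phi)>0$ for $s\ge s(\theta,\phi)$. By continuity, positivity persists on a neighbourhood of $(\theta,\phi)$. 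A finite subcover then gives one $s_1$ with $F_{s_1}>0$ everywhere on this compact set, and therefore $F_{s_1}\ge c_2>0$ there. Consequently $F_{s_1}\ge c_2\ge c_2\|\phi\|_\T$, since $\|\phi\|_\T\le 1/2$.

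\textbf{Conclusion.} Take $s=\max(s_0,s_1)$ and $c=\min(c_1/2,c_2)$. This proves the stated inequality for all $\theta,\phi\in\T$.

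The main obstacle is the uniformity in Step 2. The compactness argument of Step 3 alone fails at $\phi=0$, where $w_\phi\equiv0$ and the linear lower bound $c\|\phi\|_\T$ degenerates. That regime has to be handled by dividing by $\phi$ and invoking Step 1, with a careful uniform Taylor remainder bound, which Cauchy estimates on the strip $\T_{2\eta}$ supply.
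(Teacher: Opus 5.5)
Your proof is correct and follows essentially the same route as the paper. The paper argues by contradiction: it takes a limit $(\theta_\infty,\phi_\infty)$ of near-counterexamples and splits into the case $\|\phi_\infty\|_\T\neq0$, where minimal period plus the identity theorem give the contradiction (your Step 3), and the case $\|\phi_\infty\|_\T=0$, where difference quotients converge uniformly to $\partial_\theta^{l+1}v$ and non-constancy gives the contradiction (your Steps 1--2). Your version is the direct finite-subcover form of the same compactness argument, and the uniform Taylor bound in Step 2 needs only continuity of finitely many derivatives on the compact $\T$, not Cauchy estimates on a strip.
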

\begin{proof}
	We argue by the contradiction. Suppose the conclusion fails. Then for each $s\in\mathbb{N}$, taking $c=1/s$, there exist $\theta_s,\phi_s\in\mathbb{T}$ such that
	\begin{equation}\label{trf}
		\max_{0\leq l\leq s}\left|\partial^l_{\theta} v(\theta_s+\phi_s)-\partial^l_{\theta} v(\theta_s)\right|<\frac{1}{s}\|\phi_s\|_{\mathbb{T}}.	
	\end{equation}
	Since $\mathbb{T}$ is compact, there exists a subsequence (still denoted by $\{(\theta_s,\phi_s)\}$) converging to some point $(\theta_\infty,\phi_\infty)\in\mathbb{T}\times\mathbb{T}$.  
There are two cases. 
	
	\underline{\textbf{Case 1:} $\|\phi_\infty\|_\T\neq 0$.} Then there exists some  $\delta>0$ such that $\|\phi_s\|_{\mathbb{T}}\geq\delta/2$ for all sufficiently large $s$. For any fixed $l\geq 0$, when $s\geq l$, from \eqref{trf} we obtain
	$$	\left|\partial^l_{\theta} v(\theta_s+\phi_s)-\partial^l_{\theta} v(\theta_s)\right|<\frac{1}{s}\|\phi_s\|_{\mathbb{T}}\to 0 \text{ as } s\to\infty .$$
	By the continuity of $\partial^l_{\theta} v(\theta)$,  letting $s\to\infty$ yields $\partial^l_{\theta} v(\theta_\infty+\phi_\infty)=\partial^l_{\theta} v(\theta_\infty)$ for every $l\geq 0$. Hence the analytic functions $v(\cdot)$ and $v(\cdot+\phi_\infty)$ have all derivatives equal at $\theta_\infty$. By analyticity, $v(\theta+\phi_\infty)=v(\theta)$ for all $\theta\in\mathbb{T}.$ But $\|\phi_\infty\|_\T\neq 0$, contradicting the assumption that $v$ has shortest period $1$.
	
	\underline{\textbf{Case 2:} $\|\phi_\infty\|_\T=0$.}  We may assume $\phi_s\to 0$ and  $\phi_s\neq 0$ (otherwise the left-hand side of \eqref{trf} is non-negative while the right-hand side is zero, which is impossible). For any fixed $l$, by the continuity of $\partial^{l+1}_\theta v(\theta)$,   we have 
	\begin{equation}\label{uni}
		\text{ $\frac{\partial^l_{\theta} v(\theta+\phi_s)-\partial^l_{\theta} v(\theta)}{\phi_s }\to \partial^{l+1}_\theta v(\theta)$ uniformly for $\theta\in \T$ as $\phi_s\to 0$.}
	\end{equation}
	It follows from  \eqref{trf} and \eqref{uni}  that for any $\epsilon>0$ and   all sufficiently large $s>s_0(\epsilon)$, we have  $|\partial^{l+1}_\theta v(\theta_s)|<\epsilon, $ 
	which implies $\partial^{l+1}_\theta v(\theta_\infty)=0. $
	This holds for every $l\geq 0$,  which implies $v(\theta)$ is a  constant by analyticity, contradicting the assumption. 
\end{proof}
\section{Some conclusions on Schur complement}
\begin{lem}\label{Su}
	Let $A$ be the matrix
	$$A=\begin{pmatrix}
		A_1 & A_2^{\text{T}} \\
		A_2 & A_3
	\end{pmatrix},$$
	where $A_3$ is an invertible $n \times n$ matrix, $A_2$ is an $n \times k$ matrix and $A_1$ is a $k \times k$ matrix. Let $S$ be  the Schur complement 
	$$
	S:=A_1-A_2^{\text{T}} A_3^{-1} A_2  .
	$$
	Then we have 
	$$\operatorname{det}A= \operatorname{det}S \cdot \operatorname{det}A_3 $$ 
	and 
	$$	\left\|S^{-1}\right\| \leq\left\|A^{-1}\right\| \leq (1+\|A_2\|)^2\left(1+\left\|A_3^{-1}\right\|\right)^2\left(1+\left\|S^{-1}\right\|\right). $$
	In particular,  $\operatorname{det}A=0$  if and only if $\operatorname{det}S=0$.
\end{lem}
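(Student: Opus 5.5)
The determinant identity comes from block Gaussian elimination, and the norm bounds come from the explicit block formula for $A^{-1}$. Since $A_3$ is invertible, I would write the factorization
$$A=\begin{pmatrix} I & A_2^{\rm T}A_3^{-1}\\ 0 & I\end{pmatrix}\begin{pmatrix} S & 0\\ 0 & A_3\end{pmatrix}\begin{pmatrix} I & 0\\ A_3^{-1}A_2 & I\end{pmatrix},$$
which can be checked by multiplying out the blocks. The outer factors are unipotent block-triangular matrices, so their determinants equal $1$. This gives $\operatorname{det}A=\operatorname{det}S\cdot\operatorname{det}A_3$. Because $\operatorname{det}A_3\neq0$, it follows at once that $\operatorname{det}A=0$ if and only if $\operatorname{det}S=0$.

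For the norm estimates I may assume $S$ is invertible; otherwise $A$ is not invertible and the inequalities are understood in the trivial sense that both sides are infinite. Inverting the three factors gives
$$A^{-1}=\begin{pmatrix} I & 0\\ -A_3^{-1}A_2 & I\end{pmatrix}\begin{pmatrix} S^{-1} & 0\\ 0 & A_3^{-1}\end{pmatrix}\begin{pmatrix} I & -A_2^{\rm T}A_3^{-1}\\ 0 & I\end{pmatrix}.$$
Expanding this product shows that the upper-left $k\times k$ block of $A^{-1}$ is exactly $S^{-1}$. The norm of a compression is bounded by the norm of the whole operator, so $\|S^{-1}\|\le\|A^{-1}\|$.

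For the upper bound I would estimate each factor separately. Each outer factor has norm at most $1+\|A_3^{-1}\|\|A_2\|$, using $\|A_2^{\rm T}\|=\|A_2\|$, and this is at most $(1+\|A_2\|)(1+\|A_3^{-1}\|)$. The middle block-diagonal factor has norm $\max(\|S^{-1}\|,\|A_3^{-1}\|)$, which is at most $(1+\|S^{-1}\|)(1+\|A_3^{-1}\|)$. Multiplying the three bounds produces one extra factor of $(1+\|A_3^{-1}\|)$ compared with the claim. To remove it, I would not bound the middle factor crudely but expand $A^{-1}$ block by block:
$$A^{-1}=\begin{pmatrix} S^{-1} & -S^{-1}A_2^{\rm T}A_3^{-1}\\ -A_3^{-1}A_2S^{-1} & A_3^{-1}+A_3^{-1}A_2S^{-1}A_2^{\rm T}A_3^{-1}\end{pmatrix}.$$
Each block is bounded by $(1+\|A_2\|)^2(1+\|A_3^{-1}\|)^2\|S^{-1}\|$, except for the extra $\|A_3^{-1}\|$ term in the lower-right block. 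Combining the blocks gives $\|A^{-1}\|\le(1+\|A_2\|)^2(1+\|A_3^{-1}\|)^2(1+\|S^{-1}\|)$.

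The only delicate point is this bookkeeping of constants so that the exponents match the statement. For the paper's applications any polynomial bound in these quantities would suffice.
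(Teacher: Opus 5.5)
The paper records this lemma in its appendix without proof, treating it as standard, so there is no argument of the authors to compare with. Your block $LDU$ factorization is the standard proof, and most of it is correct as written: the factorization itself, the determinant identity, the identification of the upper-left block of $A^{-1}$ with $S^{-1}$, and the compression bound $\|S^{-1}\|\le\|A^{-1}\|$.

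The final ``combining the blocks'' step needs to be written out. As you phrase it, each of the four blocks is bounded by $(1+\|A_2\|)^2(1+\|A_3^{-1}\|)^2\|S^{-1}\|$, plus $\|A_3^{-1}\|$. Adding these with the triangle inequality gives a factor $4$, which the stated constant does not allow. Use the actual block bounds instead. Write $a=\|A_2\|$, $b=\|A_3^{-1}\|$, $\sigma=\|S^{-1}\|$, and bound the operator norm by the sum of the block norms:
\[
\|A^{-1}\|\le \sigma+2ab\,\sigma+a^2b^2\sigma+b=(1+ab)^2\sigma+b .
\]
Since $1+ab\le(1+a)(1+b)$ and $b\le(1+a)^2(1+b)^2$, this is at most $(1+a)^2(1+b)^2(1+\sigma)$.

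A more direct route is the identity
\[
A^{-1}=\begin{pmatrix} I\\ -A_3^{-1}A_2\end{pmatrix}S^{-1}\begin{pmatrix} I & -A_2^{\rm T}A_3^{-1}\end{pmatrix}+\begin{pmatrix}0&0\\0&A_3^{-1}\end{pmatrix}.
\]
It gives the sharper bound $\|A^{-1}\|\le(1+a^2b^2)\sigma+b$ in one line.

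As you note, the factor $4$ would be harmless where the paper uses the lemma, since there it only enters as $\|G\|\lesssim\|S^{-1}\|\cdot\|G_{\widehat B}\|^2$. The lemma as stated, however, requires the exact arithmetic above. With that made explicit, your proof is complete.
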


\normalem


\begin{thebibliography}{CSZ24b}


\bibitem[AJ10]{AJ10}
A.~Avila and S.~Jitomirskaya.
\newblock Almost localization and almost reducibility.
\newblock {\em J. Eur. Math. Soc. (JEMS)}, 12(1):93--131, 2010.

\bibitem[Amo09]{Amo09}
S.~Amor.
\newblock H\"{o}lder continuity of the rotation number for quasi-periodic
  co-cycles in {${\rm SL}(2,\Bbb R)$}.
\newblock {\em Comm. Math. Phys.}, 287(2):565--588, 2009.

\bibitem[AYZ17]{AYZ17}
A.~Avila, J.~You, and Q.~Zhou.
\newblock Sharp phase transitions for the almost {M}athieu operator.
\newblock {\em Duke Math. J.}, 166(14):2697--2718, 2017.

\bibitem[BEK90]{rao}
R.~Bhatia, L.~Elsner, and G.~Krause.
\newblock Bounds for the variation of the roots of a polynomial and the
  eigenvalues of a matrix.
\newblock {\em Linear Algebra Appl.}, 142:195--209, 1990.

\bibitem[BG00]{BG00}
J.~Bourgain and M.~Goldstein.
\newblock On nonperturbative localization with quasi-periodic potential.
\newblock {\em Ann. of Math. (2)}, 152(3):835--879, 2000.

\bibitem[BGS02]{BGS02}
J.~Bourgain, M.~Goldstein, and W.~Schlag.
\newblock Anderson localization for {S}chr\"{o}dinger operators on {$\bold
  Z^2$} with quasi-periodic potential.
\newblock {\em Acta Math.}, 188(1):41--86, 2002.

\bibitem[BJ00]{BJ00}
J.~Bourgain and S.~Jitomirskaya.
\newblock Anderson localization for the band model.
\newblock {\em Geometric aspects of functional analysis}, 1745:67--79, 2000.


\bibitem[BJ02]{BJ02}
J.~Bourgain and S.~Jitomirskaya.
\newblock Absolutely continuous spectrum for 1{D} quasiperiodic operators.
\newblock {\em Invent. Math.}, 148(3):453--463, 2002.

\bibitem[BK13]{BK13}
J.~Bourgain and A.~Klein.
\newblock Bounds on the density of states for {S}chr\"{o}dinger operators.
\newblock {\em Invent. Math.}, 194(1):41--72, 2013.


\bibitem[BK19]{BK19}
J.~Bourgain and I.~Kachkovskiy.
\newblock Anderson localization for two interacting quasiperiodic
particles.
\newblock {\em Geom. Funct. Anal.}, 29(1):3--43, 2019.



\bibitem[Bou00]{Bou00}
J.~Bourgain.
\newblock H\"{o}lder regularity of integrated density of states for the almost
  {M}athieu operator in a perturbative regime.
\newblock {\em Lett. Math. Phys.}, 51(2):83--118, 2000.

\bibitem[Bou02]{Bou02}
J.~Bourgain.
\newblock On the spectrum of lattice {S}chr\"{o}dinger operators with
  deterministic potential. {II}.
\newblock {\em J. Anal. Math.}, 88:221--254, 2002.
\newblock Dedicated to the memory of Tom Wolff.




\bibitem[Bou07]{Bou07}
J.~Bourgain.
\newblock Anderson localization for quasi-periodic lattice {S}chr\"{o}dinger
  operators on {$\Bbb Z^d$}, {$d$} arbitrary.
\newblock {\em Geom. Funct. Anal.}, 17(3):682--706, 2007.

\bibitem[CD93]{CD93}
V.~A. Chulaevsky and E.~I. Dinaburg.
\newblock Methods of {KAM}-theory for long-range quasi-periodic operators on
  {${\bf Z}^\nu$}. {P}ure point spectrum.
\newblock {\em Comm. Math. Phys.}, 153(3):559--577, 1993.

\bibitem[CS83a]{CS832}
W.~Craig and B.~Simon.
\newblock Log {H}\"{o}lder continuity of the integrated density of states for
  stochastic {J}acobi matrices.
\newblock {\em Comm. Math. Phys.}, 90(2):207--218, 1983.

\bibitem[CS83b]{CS831}
W.~Craig and B.~Simon.
\newblock Subharmonicity of the {L}yaponov index.
\newblock {\em Duke Math. J.}, 50(2):551--560, 1983.

\bibitem[CSZ23]{CSZ23}
H.~Cao, Y.~Shi, and Z.~Zhang.
\newblock Localization and regularity of the integrated density of states for
  {S}chr\"{o}dinger operators on {$\Bbb Z^d$} with {$C^2$}-cosine like
  quasi-periodic potential.
\newblock {\em Comm. Math. Phys.}, 404(1):495--561, 2023.

\bibitem[CSZ24a]{CSZ24}
H.~Cao, Y.~Shi, and Z.~Zhang.
\newblock On the spectrum of quasi-periodic {S}chr\"{o}dinger operators on
  {$\Bbb Z^d$} with {$C^2$}-cosine type potentials.
\newblock {\em Comm. Math. Phys.}, 405(8):Paper No. 174, 84, 2024.

\bibitem[CSZ24b]{CSZ22}
H.~Cao, Y.~Shi, and Z.~Zhang.
\newblock Quantitative {G}reen's function estimates for lattice quasi-periodic
  {S}chr\"{o}dinger operators.
\newblock {\em Sci. China Math.}, 67(5):1011--1058, 2024.

\bibitem[CSZ25]{CSZ25}
H.~Cao, Y.~Shi, and Z.~Zhang.
\newblock Localization for {L}ipschitz monotone quasi-periodic
  {S}chr\"{o}dinger operators on {$\Bbb Z^d$} via {R}ellich functions analysis.
\newblock {\em Comm. Math. Phys.}, 406(5):Paper No. 102, 63, 2025.

\bibitem[Eli97]{Eli97}
L.~H. Eliasson.
\newblock Discrete one-dimensional quasi-periodic {S}chr\"{o}dinger operators
  with pure point spectrum.
\newblock {\em Acta Math.}, 179(2):153--196, 1997.

\bibitem[FS83]{FS83}
J.~Fr\"{o}hlich and T.~Spencer.
\newblock Absence of diffusion in the {A}nderson tight binding model for large
  disorder or low energy.
\newblock {\em Comm. Math. Phys.}, 88(2):151--184, 1983.

\bibitem[FSW90]{FSW90}
J.~Fr\"{o}hlich, T.~Spencer, and P.~Wittwer.
\newblock Localization for a class of one-dimensional quasi-periodic
  {S}chr\"{o}dinger operators.
\newblock {\em Comm. Math. Phys.}, 132(1):5--25, 1990.

\bibitem[FV25]{FV25}
Y.~Forman and T.~VandenBoom.
\newblock Localization and {C}antor spectrum for quasiperiodic discrete
  {S}chr\"{o}dinger operators with asymmetric, smooth, cosine-like sampling
  functions.
\newblock {\em Mem. Amer. Math. Soc.}, 312(1583):v+86, 2025.

\bibitem[GJ24]{GJ24}
L.~Ge and S.~Jitomirskaya.
\newblock Hidden subcriticality, symplectic structure, and universality of
  sharp arithmetic spectral results for type {I} operators.
\newblock {\em arXiv:2407.08866}, 2024.

\bibitem[GS01]{GS01}
M.~Goldstein and W.~Schlag.
\newblock H\"{o}lder continuity of the integrated density of states for
  quasi-periodic {S}chr\"{o}dinger equations and averages of shifts of
  subharmonic functions.
\newblock {\em Ann. of Math. (2)}, 154(1):155--203, 2001.

\bibitem[GS08]{GS08}
M.~Goldstein and W.~Schlag.
\newblock Fine properties of the integrated density of states and a
  quantitative separation property of the {D}irichlet eigenvalues.
\newblock {\em Geom. Funct. Anal.}, 18(3):755--869, 2008.

\bibitem[GY20]{GY20}
L.~Ge and J.~You.
\newblock Arithmetic version of {A}nderson localization via reducibility.
\newblock {\em Geom. Funct. Anal.}, 30(5):1370--1401, 2020.

\bibitem[GYZ22]{GYZ22}
L.~Ge, J.~You, and X.~Zhao.
\newblock H\"{o}lder regularity of the integrated density of states for
  quasi-periodic long-range operators on {$\ell^2(\Bbb Z^d)$}.
\newblock {\em Comm. Math. Phys.}, 392(2):347--376, 2022.

\bibitem[GYZ23]{GYZ23}
L.~Ge, J.~You, and Q.~Zhou.
\newblock Exponential dynamical localization: criterion and applications.
\newblock {\em Ann. Sci. \'{E}c. Norm. Sup\'{e}r. (4)}, 56(1):91--126, 2023.


\bibitem[Han24]{Han24}
R.~Han.
\newblock Sharp localization on the first supercritical stratum for {L}iouville frequencies.
\newblock {\em arXiv:2405.07810}, 2024.


\bibitem[HS26]{HS26}
R.~Han and W.~Schlag.
\newblock Avila's acceleration via zeros of determinants and applications to
  {S}chr\"{o}dinger cocycles.
\newblock {\em Comm. Pure Appl. Math.}, 79(3):729--761, 2026.

\bibitem[Jit94]{Jit94}
S.~Jitomirskaya.
\newblock Anderson localization for the almost {M}athieu equation: a
  nonperturbative proof.
\newblock {\em Comm. Math. Phys.}, 165(1):49--57, 1994.

\bibitem[Jit99]{Jit99}
S.~Jitomirskaya.
\newblock Metal-insulator transition for the almost {M}athieu operator.
\newblock {\em Ann. of Math. (2)}, 150(3):1159--1175, 1999.

\bibitem[Jit23]{Jit23}
S.~Jitomirskaya.
\newblock One-dimensional quasiperiodic operators: global theory, duality, and
  sharp analysis of small denominators.
\newblock In {\em I{CM}---{I}nternational {C}ongress of {M}athematicians.
  {V}ol. 2. {P}lenary lectures}, pages 1090--1120. EMS Press, Berlin, [2023]
  \copyright 2023.
  
\bibitem[JK16]{JK16}
S.~Jitomirskaya and I.~Kachkovskiy.
\newblock {$L^2$}-reducibility and localization for quasiperiodic operators.
\newblock {\em Math. Res. Lett.}, 23(2):431--444, 2016.

\bibitem[JL18]{JL18}
S.~Jitomirskaya and W.~Liu.
\newblock Universal hierarchical structure of quasiperiodic eigenfunctions.
\newblock {\em Ann. of Math. (2)}, 187(3):721--776, 2018.

\bibitem[JL24]{JL24}
S.~Jitomirskaya and W.~Liu.
\newblock Universal reflective-hierarchical structure of quasiperiodic
  eigenfunctions and sharp spectral transition in phase.
\newblock {\em J. Eur. Math. Soc. (JEMS)}, 26(8):2797--2836, 2024.

\bibitem[JLS20]{JLS20}
S.~Jitomirskaya, W.~Liu, and Y.~Shi.
\newblock Anderson localization for multi-frequency quasi-periodic operators on
  {${\Bbb Z}^D$}.
\newblock {\em Geom. Funct. Anal.}, 30(2):457--481, 2020.

\bibitem[Kle05]{Kle05}
S.~Klein.
\newblock Anderson localization for the discrete one-dimensional quasi-periodic
  {S}chr\"{o}dinger operator with potential defined by a {G}evrey-class
  function.
\newblock {\em J. Funct. Anal.}, 218(2):255--292, 2005.

\bibitem[Liu22]{Liu20}
W.~Liu.
\newblock Quantitative inductive estimates for {G}reen's functions of
  non-self-adjoint matrices.
\newblock {\em Anal. PDE}, 15(8):2061--2108, 2022.

\bibitem[Sch01]{Sch01}
W.~Schlag.
\newblock On the integrated density of states for {S}chr\"{o}dinger operators
  on {$\Bbb Z^2$} with quasi periodic potential.
\newblock {\em Comm. Math. Phys.}, 223(1):47--65, 2001.

\bibitem[Sin87]{Sin87}
Y.~G. Sinai.
\newblock Anderson localization for one-dimensional difference
  {S}chr\"{o}dinger operator with quasiperiodic potential.
\newblock {\em J. Statist. Phys.}, 46(5-6):861--909, 1987.

\bibitem[Sun96]{line}
J.~Sun.
\newblock On the variation of the spectrum of a normal matrix.
\newblock {\em Linear Algebra Appl.}, 246:215--223, 1996.


\bibitem[Sur96]{Sur96}
S.~Surace.
\newblock A finite range operator with a quasi-periodic potential.
\newblock {\em Differential Integral Equations}, 9(1):213--237, 1996.


\bibitem[WYZ26]{WYZ26}
Z.~Wang, J.~You, and Q.~Zhou.
\newblock Anderson localization of long-range quasi-periodic operators via {D}ynamical {R}igidity
\newblock {\em arXiv:2603.10649}, 2026.


\bibitem[You18]{You18}
J.~You.
\newblock Quantitative almost reducibility and its applications.
\newblock In {\em I{CM}---{I}nternational {C}ongress of {M}athematicians.
	{V}ol. 3. {I}nvited lectures}, pages 2113--2135. World Sci. Publ., Hackensack, NJ, [2018]
\copyright 2018.

\end{thebibliography}
\end{document}